\newtheorem{theorem}{Theorem}[section]
\newtheorem{proposition}[theorem]{Proposition}
\newtheorem{corollary}[theorem]{Corollary}
\theoremstyle{definition}
\newtheorem{definition}[theorem]{Definition}
\newtheorem{example}[theorem]{Example}
 \numberwithin{equation}{section}
\newcommand{\di}{\displaystyle}
\newcommand{\ip}[2]{\left\langle{#1},{#2}\right\rangle}
\newcommand{\norm}[1]{\left\lVert{#1}\right\rVert}
\renewcommand{\Re}[1]{\mathrm{Re}\left({#1}\right)}
\newcommand{\diag}[1]{\mathrm{diag}\left(#1\right)}
\definecolor{orange}{rgb}{1,0.5,0}
\begin{document}
	
\title[Rotated CMV Operators and OPUCs]{On Rotated CMV Operators and Orthogonal Polynomials On The Unit Circle}
\author{Ryan C.H. Ang}
\address{Department of Mathematics, Xiamen University Malaysia\\ Jalan Sunsuria, Bandar Sunsuria\\ 43900 Sepang, Selangor Darul Ehsan\\ Malaysia.}
\email{ryanangch@gmail.com}
\keywords{CMV matrix, orthogonal polynomials on the unit circle, unitary operator, quantum walk}
\subjclass[2020]{Primary 42C05, 47A68, 47B36; Secondary 81P45} 

\begin{abstract}
	Split-step quantum walk operators can be expressed as a generalized version of CMV operators with complex transmission coefficients, which we call rotated CMV operators. Following the idea of Cantero, Moral and Velazquez's original construction of the original CMV operators from the theory of orthogonal polynomials on the unit circle (OPUC), we show that rotated CMV operators can be constructed similarly via a rotated version of OPUCs with respect to the same measure, and admit an analogous $\mathcal{LM}$-factorisation as the original CMV operators. We also develop the rotated second kind polynomials corresponding to the rotated OPUCs. We then use the $\mathcal{LM}$-factorisation of rotated alternate CMV operators to compute the Gesztesy-Zinchenko transfer matrices for rotated CMV operators.
\end{abstract}

\maketitle

\tableofcontents


\section{Introduction}

\subsection{Background of CMV Operators}

Let $\mathbb{D}$, $\overline{\mathbb{D}}$ and $\partial\mathbb{D}$, respectively, be the open unit disk, closed unit disk and the unit circle in the complex plane:
\begin{align*}
	\mathbb{D}&=\{z\in\mathbb{C}:|z|<1\},\\
	\overline{\mathbb{D}}&=\{z\in\mathbb{C}:|z|\leq 1\},\\
	\partial\mathbb{D}&=\{z\in\mathbb{C}:|z|=1\}.
\end{align*}
A \emph{(standard, one-sided) CMV operator}, named after Maria J. Cantero, Leandro Moral and Luis Vel\'{a}zquez \cite{Cantero2003}, is a unitary operator $\mathcal{C}$ on $\ell^2(\mathbb{N}_0)$ whose matrix representation is of the form
\begin{equation}\label{eq_cmv_matrix_onesided_intro}
	\mathcal{C}\left(\{\alpha_n,\rho_n\}_{n=0}^\infty\right)=\mathcal{C}\left(\{\alpha_n\}_{n=0}^\infty\right)=\begin{bmatrix}
		\overline{\alpha_0} & \overline{\alpha_1}\rho_0 & \rho_0\rho_1 & 0 & 0 & \cdots\\
		\rho_{0} & -\overline{\alpha_1}\alpha_0 & -\alpha_0\rho_1 & 0 & 0 & \cdots \\
		0 & \overline{\alpha_2}\rho_1 & -\overline{\alpha_2}\alpha_1 & \overline{\alpha_3}\rho_2 & \rho_3\rho_2 & \cdots\\
		0 & \rho_2\rho_1 & -\alpha_1\rho_2 & -\overline{\alpha_3}\alpha_2 & -\alpha_2\rho_3 & \cdots\\
		0 & 0 & 0 & \overline{\alpha_4}\rho_3 & -\overline{\alpha_4}\alpha_3 & \cdots \\
		\vdots & \vdots & \vdots & \vdots & \vdots & \ddots
	\end{bmatrix},
\end{equation}
where $\{\alpha_n\}_{n\in\mathbb{Z}}\subseteq\overline{\mathbb{D}}$ and $\{\rho_n\}_{n\in\mathbb{Z}}\subseteq[0,1]\subseteq\mathbb{R}$ are given by
\begin{equation} \label{eq_rho_real}
	\rho_n=\left(1-|\alpha_n|^2\right)^{\frac{1}{2}}.
\end{equation} 
Notable features of CMV operators on $\ell^2(\mathbb{N}_0)$ include:
\begin{enumerate}
	\item The matrix representation of a CMV operator is five-diagonal with finite rows and columns. In particular, denoting all possible nonzero entries by $*$, CMV matrices on $\ell^2(\mathbb{N}_0)$ have a block structure
	\begin{equation} \label{eq_cmv_block_onesided}
		\begin{bmatrix}
			* & * & * \\
			* & * & * \\
			& * & * & * & * & \\
			& * & * & * & * & \\
			& & & * & * & * & * & \\
			& & & * & * & * & * & \\
			& & & & & & & \ddots
		\end{bmatrix}
	\end{equation}
	consisting of a single $2\times 3$ block on the top left and the rest $2\times 4$ blocks along the diagonal.
	
	\item Any unitary operator on $\ell^2(\mathbb{N}_0)$ with the above five-diagonal structure is unitarily equivalent to a CMV matrix via conjugation by the unitary diagonal matrix
	\begin{equation}
		D=\diag{1,e^{is_1},e^{is_2},\cdots}
	\end{equation}
	for some $\{s_j\}_{j=1}^\infty\subseteq(-\pi,\pi]$ (see \cite{Cantero2005,Cantero2012}).
	
	\item CMV operators give a matrix representation of multiplication by $z$ in terms of orthogonal polynomials on the unit circle (OPUC) in $L^2(\partial\mathbb{D},\mu)$, where $\mu$ is a nontrivial probability measure on $\partial\mathbb{D}$. In particular, entries of the CMV matrix are precisely given by the $L^2(\partial\mathbb{D},\mu)$-inner product
	\begin{equation} \label{eq_cmv_matrix_def}
		\mathcal{C}_{k\ell}=\ip{\chi_k}{z\chi_\ell}.
	\end{equation}
 	The sequence $\{\chi_n\}_{n=0}^\infty$ is called the \textit{CMV basis}, which is obtained by orthonormalizing the ordered basis $\{1,z,z^{-1},z^2,z^{-2},\cdots\}$ of Laurent polynomials in $\partial\mathbb{D}$. They are related to the OPUCs by
	\begin{align}
		\label{eq_cmv_basis_opuc_odd}
		\chi_{2n-1}(z)&=z^{-n+1}\varphi_{2n-1}(z),\\
		\label{eq_cmv_basis_opuc_even}
		\chi_{2n}(z)&=z^{-n}\varphi_{2n}^*(z),
	\end{align}
	where $\{\varphi_n\}_{n=0}^\infty$ are the orthonormal polynomials obtained by orthonormalizing the linearly independent set $\{1,z,z^2,\cdots\}$ in $L^2(\partial\mathbb{D},\mu)$, and the \textit{reverse orthonormal polynomials} $\{\varphi_n^*\}_{n=0}^\infty$ are defined by
	\begin{align}
		\varphi_n^*(z)=z^n\overline{\varphi_n\left(\dfrac{1}{\overline{z}}\right)}.
	\end{align}
	
	\item CMV operators admit a $\mathcal{L}\mathcal{M}$-factorisation
	\begin{equation}
		\mathcal{C}=\mathcal{L}\mathcal{M}
	\end{equation}
	with
	\begin{gather}
		\mathcal{L}=\begin{bmatrix}
			\Theta_0 & & & 0 \\ & \Theta_2 & & \\ & & \Theta_4 & \\ 0 & & & \ddots
		\end{bmatrix},\quad 
		\mathcal{M}=\begin{bmatrix}
			1 & & & 0 \\ & \Theta_1 & & \\ & & \Theta_3 & \\ 0 & & & \ddots
		\end{bmatrix},\quad
		\Theta_n=\begin{bmatrix}
			\overline{\alpha_n} & \rho_n \\ \rho_n & -\alpha_n
		\end{bmatrix},
	\end{gather}
	where $1$ denotes the $1\times 1$ block with entry $1$.
\end{enumerate}

Instead of orthonormalizing the ordered basis $\{1,z,z^{-1},z^2,z^{-2},\cdots\}$, one could also orthonormalize the ordered basis $\{1,z^{-1},z,z^{-2},z^2,\cdots\}$ to obtain the \textit{alternate CMV basis} $\{\tilde{\chi}_n\}_{n=0}^\infty$, which are related to the OPUCs by
\begin{align}
	\label{eq_cmv_alt_basis_opuc_odd}
	\tilde{\chi}_{2n-1}(z)&=z^{-n}\varphi_{2n-1}^*(z),\\
	\label{eq_cmv_alt_basis_opuc_even}
	\tilde{\chi}_{2n}(z)&=z^{-n}\varphi_{2n}(z).
\end{align}
The \textit{alternate CMV matrix} on $\ell^2(\mathbb{N}_0)$ is then defined by
\begin{equation} \label{eq_cmv_alt_matrix_def}
	\tilde{\mathcal{C}}_{k\ell}=\ip{\tilde{\chi}_k}{z\tilde{\chi}_\ell}.
\end{equation}
One can easily verify that
\begin{equation}\label{eq_cmv_alt_matrix_onesided}
	\tilde{\mathcal{C}}=\mathcal{C}^T=\mathcal{M}\mathcal{L},
\end{equation}
where $\mathcal{C}^T$ is the transpose of $\mathcal{C}$.

Barry Simon \cite{Simon2005a} then used the $\mathcal{LM}$-factorisation of $\mathcal{C}$ to define the CMV operator $\mathcal{E}$ on $\ell^2(\mathbb{Z})$ by
\begin{equation}
	\mathcal{E}=\tilde{\mathcal{L}}\tilde{\mathcal{M}},
\end{equation}
where
\begin{gather}
	\tilde{\mathcal{L}}=\begin{bmatrix}
		\ddots & & & 0 & \\ & \Theta_0 & & &  \\ & & \Theta_2 & & \\  & & & \Theta_4 & \\ & 0 & & & \ddots
	\end{bmatrix},\quad 
	\tilde{\mathcal{M}}=\begin{bmatrix}
		\ddots & & & 0 & \\ & \Theta_1 & & &  \\ & & \Theta_3 & & \\  & & & \Theta_5 & \\ & 0 & & & \ddots
	\end{bmatrix},
\end{gather}
which gives $\mathcal{E}$ a matrix representation of the form
\begin{equation}
	\mathcal{E}\left(\{\alpha_n,\rho_n\}_{n=0}^\infty\right)=\begin{bmatrix}
		\ddots & \vdots & \vdots & \vdots & \vdots & \vdots & \vdots & \vdots\\
		\hdots & -\overline{\alpha_{-1}}\alpha_{-2} & -\alpha_{-2}\rho_{-1} & 0 & 0 & 0 & 0 & \hdots\\
		\hdots & \overline{\alpha_0}\rho_{-1} & \boxed{-\overline{\alpha_0}\alpha_{-1}} & \overline{\alpha_1}\rho_0 & \rho_0\rho_1 & 0 & 0 & \hdots\\
		\hdots & \rho_0\rho_{-1} & -\alpha_{-1}\rho_{0} & -\overline{\alpha_1}\alpha_0 & -\alpha_0\rho_1 & 0 & 0 & \hdots \\
		\hdots & 0 & 0 & \overline{\alpha_2}\rho_1 & -\overline{\alpha_2}\alpha_1 & \overline{\alpha_3}\rho_2 & \rho_3\rho_2 & \hdots\\
		\hdots & 0 & 0 & \rho_2\rho_1 & -\alpha_1\rho_2 & -\overline{\alpha_3}\alpha_2 & -\alpha_2\rho_3 & \hdots\\
		\hdots & 0 & 0 & 0 & 0 & \overline{\alpha_4}\rho_3 & -\overline{\alpha_4}\alpha_3 & \hdots \\
		\vdots & \vdots & \vdots & \vdots & \vdots & \vdots & \vdots & \ddots
	\end{bmatrix},
\end{equation}
where the $(0,0)^{th}$ entry is boxed. So $\mathcal{E}$ has the structure of $2\times 4$ blocks along the diagonal extending in both directions:
\begin{equation}
	\begin{bmatrix}
		\ddots \\
		& * & * & * & * \\
		& * & * & * & * \\
		& & & * & * & * & * & \\
		& & & * & * & * & * & \\
		& & & & & * & * & * & * & \\
		& & & & & * & * & * & * & \\
		& & & & & & & & & \ddots
	\end{bmatrix}.
\end{equation}
One can define the extended alternate CMV matrix $\tilde{\mathcal{E}}$ similarly by
\begin{equation}
	\tilde{\mathcal{E}}=\tilde{\mathcal{M}}\tilde{\mathcal{L}}.
\end{equation}

CMV theory originated from the problem of finding matrix representations of multiplication by $z$ (up to unitary equivalence) in $L^2(\partial\mathbb{D},\mu)$. A natural choice of an orthonormal basis of $L^2(\partial\mathbb{D},\mu)$ would be the orthonormal polynomials $\{\varphi_n\}_{n=0}^\infty$, which gives the GGT representation after Ya. L. Geronimus \cite{Geronimus1944}, William B. Gragg \cite{Gragg1993} and Alexander V. Teplyaev \cite{Teplyaev1992}. However the GGT representation has the following limitations:
\begin{enumerate}
	\item The orthonormal polynomials $\{\varphi_n\}_{n=0}^\infty$ form a basis of $L^2(\partial\mathbb{D},\mu)$ (and hence the GGT matrix is unitary) if and only if the Verblunsky coefficients $\{\alpha_n\}_{n=0}^\infty$ satisfy $$\sum_{n=0}^\infty |\alpha_n|^2=\infty;$$
	
	\item The rows of the GGT matrix may be infinite;
	
	\item There is no GGT representation in the $\ell^2(\mathbb{Z})$ case if $\log\mu'\notin L^1(\partial\mathbb{D},d\theta/2\pi)$.
\end{enumerate}
Cantero, Moral and Vel\'{a}zquez then subsequently found the correct basis which produced a five-diagonal matrix representation \cite{Cantero2003} overcoming the above issues. More details of the GGT and CMV representations can be read in Section 4.1 and 4.2 of \cite{Simon2005a}, respectively.

One can frequently discover analogues from the theories of orthogonal polynomials on the unit circle (OPUC) and orthogonal polynomials on the real line (OPRL). A canonical example is Favard's theorem from OPRL and Verblunsky's theorem \cite{Verblunsky1935} from OPUC, which establishes bijections between measures $\mu$ supported on $\mathbb{R}$ (resp. $\partial\mathbb{D}$) and the recurrence relation coefficients $\{a_n,b_n\}_{n=0}^\infty$ for the OPRLs (resp. $\{\alpha_n\}_{n=0}^\infty$ for the OPUCs). Another pair of analogues would be Jacobi matrices from OPRL and CMV matrices from OPUC, which are central in the study of self-adjoint and unitary operators, respectively. Studies in perturbation theory are interested in how changes in the recurrence relation coefficients affect changes in the measures. In the OPRL case, the measures $\mu$ are simply the spectral measures of the Jacobi matrix generated by $\{a_n,b_n\}_{n=0}^\infty$, which is unitarily equivalent to multiplication by $z$ in $L^2(\mathbb{R},\mu)$. CMV matrices are the OPUC analogue of Jacobi matrices - that is, $\mu$ can be realized as the spectral measure of the CMV matrices generated by $\{\alpha_n\}_{n=0}^\infty$, which is also unitarily equivalent to multiplication by $z$ in $L^2(\partial\mathbb{D},\mu)$.

\subsection{Motivation, Content and Organization}

Since quantum walks are always modelled as unitary operators on $\ell^2$, it is unsurprising that CMV operators are often relevant in studies of quantum walks, whose connection was first discovered by Maria J. Cantero, F. Alberto Grünbaum, Leandro Moral and Luis Vel\'{a}zquez \cite{Cantero2012}, known as the CGMV connection. 

Recent works \cite{Cedzich2023a,Yang2022,Cedzich2023} involve the study of quantum walk operators $W:\ell^2(\mathbb{Z})\otimes\mathbb{C}^2\to \ell^2(\mathbb{Z})\otimes\mathbb{C}^2$ of the form
\begin{equation}
	W=SQ
\end{equation}
for some unitary operators $S$ and $Q$. Such quantum walk operators $W$ are called split-step quantum walks, which are used to study magnetic quantum walks \cite{Cedzich2023a,Yang2022}. Using the fact that $\ell^2(\mathbb{Z})\otimes\mathbb{C}^2\simeq\ell^2(\mathbb{Z},\mathbb{C}^2)$, one can show that $W$ can be expressed as an operator $\mathcal{E}^\angle$ on $\ell^2(\mathbb{Z})$ (see \cite[Appendix A]{Cedzich2023a} for details) with matrix representation
\begin{equation}
	\mathcal{E}^\angle\left(\{\alpha_n,\rho_n\}_{n=0}^\infty\right)=\begin{bmatrix}
		\ddots & \vdots & \vdots & \vdots & \vdots & \vdots & \vdots & \vdots\\
		\hdots & -\overline{\alpha_{-1}}\alpha_{-2} & -\alpha_{-2}\rho_{-1} & 0 & 0 & 0 & 0 & \hdots\\
		\hdots & \overline{\alpha_0\rho_{-1}} & \boxed{-\overline{\alpha_0}\alpha_{-1}} & \overline{\alpha_1}\rho_0 & \rho_0\rho_1 & 0 & 0 & \hdots\\
		\hdots & \overline{\rho_0\rho_{-1}} & -\alpha_{-1}\overline{\rho_0} & -\overline{\alpha_1}\alpha_0 & -\alpha_0\rho_1 & 0 & 0 & \hdots \\
		\hdots & 0 & 0 & \overline{\alpha_2\rho_1} & -\overline{\alpha_2}\alpha_1 & \overline{\alpha_3}\rho_2 & \rho_3\rho_2 & \hdots\\
		\hdots & 0 & 0 & \overline{\rho_2\rho_1} & -\alpha_1\overline{\rho_2} & -\overline{\alpha_3}\alpha_2 & -\alpha_2\rho_3 & \hdots\\
		\hdots & 0 & 0 & 0 & 0 & \overline{\alpha_4\rho_3} & -\overline{\alpha_4}\alpha_3 & \hdots \\
		\vdots & \vdots & \vdots & \vdots & \vdots & \vdots & \vdots & \ddots
	\end{bmatrix}.
\end{equation}
The $(0,0)^{th}$ entry is boxed as before. One immediately notices that $\mathcal{E}^\angle$ looks exactly like $\mathcal{E}$, where instead of $\{\rho_n\}_{n\in\mathbb{Z}}\subseteq[0,1]\subseteq\mathbb{R}$, one now has $\{\rho_n\}_{n\in\mathbb{Z}}\subseteq\overline{\mathbb{D}}\subseteq\mathbb{C}$ satisfying
\begin{equation}
	|\rho_n|^2+|\alpha_n|^2=1.
\end{equation}
Hence $\mathcal{E}^\angle$ is a generalized version of the (extended) CMV matrix $\mathcal{E}$, which we shall call (extended) \textit{rotated CMV matrices}, for reasons that will become obvious when we construct these matrices later. 

While property (2) on page 2 implies that every rotated CMV matrix is unitarily conjugate to a standard CMV matrix by a diagonal of the form
\begin{equation}
	\tilde{D}= \diag{\cdots,e^{is_{-3}},e^{is_{-2}},e^{is_{-1}},\boxed{1},e^{is_1},e^{is_2},e^{is_3},\cdots},
\end{equation}
Cedzich, Fillman, Li, Ong and Zhou proved that a stronger result is obtained: this unitary conjugation takes the rotated CMV matrix $\mathcal{E}^\angle\left(\{\alpha_n,\rho_n\}_{n=0}^\infty\right)$ into the standard CMV matrix $\mathcal{E}\left(\{\alpha_n,|\rho_n|\}_{n=0}^\infty\right)$, thus the Verblunsky coefficients $\{\alpha_n\}_{n=0}^\infty$ are invariant under this conjugation \cite[Corollary 2.3]{Cedzich2023}.

One may now ask: how are the $\ell^2(\mathbb{N}_0)$ versions of rotated CMV operators linked to OPUC theory? Moreover, how similar would the results be in comparison to the non-rotated case?

Clearly the $\ell^2(\mathbb{N}_0)$ version of $\mathcal{E}^\angle$ also satisfies properties (1) and (2) from page 2. In this paper, we focus on deducing analogues of properties (3) and (4) from page 2. Indeed by following a similar procedure presented in \cite{Simon2005a}, we show that rotated CMV matrices can be generated by a rotated sequence of orthonormal polynomials $\{\varphi_n^\angle\}_{n=0}^\infty=\{c_n\varphi_n\}_{n=0}^\infty$ on the unit circle with $\{c_n\}_{n=0}^\infty\subseteq\partial\mathbb{D}$, which gives an analogue of (3). Moreover, we show that rotated CMV matrices admit a rotated $\mathcal{L}\mathcal{M}$-decomposition
\begin{equation}
	\mathcal{C}^\angle=\mathcal{L}^\angle\mathcal{M}^\angle,
\end{equation}
where
\begin{gather}
	\mathcal{L}^\angle=\begin{bmatrix}
		\Theta_0^\angle & & & 0 \\ & \Theta_2^\angle & & \\ & & \Theta_4^\angle & \\ 0 & & & \ddots
	\end{bmatrix},\quad 
	\mathcal{M}^\angle=\begin{bmatrix}
		1 & & & 0 \\ & \Theta_1^\angle & & \\ & & \Theta_3^\angle & \\ 0 & & & \ddots
	\end{bmatrix},\quad
	\Theta_n^\angle=\begin{bmatrix}
		\overline{\alpha_n} & \rho_n \\ \overline{\rho_n} & -\alpha_n
	\end{bmatrix},
\end{gather}
thus giving an analogue of (4). Then following Simon's idea in \cite{Simon2005b}, we define the rotated CMV operator on $\ell^2(\mathbb{Z})$ similarly by
\begin{equation} \label{eq_rcmv_extended_def}
	\mathcal{E}^\angle=\tilde{\mathcal{L}}^\angle\tilde{\mathcal{M}}^\angle,
\end{equation}
where
\begin{gather}
	\tilde{\mathcal{L}}^\angle=\begin{bmatrix}
		\ddots & & & 0 & \\ & \Theta_0^\angle & & &  \\ & & \Theta_2^\angle & & \\  & & & \Theta_4^\angle & \\ & 0 & & & \ddots
	\end{bmatrix},\quad 
	\tilde{\mathcal{M}}^\angle=\begin{bmatrix}
		\ddots & & & 0 & \\ & \Theta_1^\angle & & &  \\ & & \Theta_3^\angle & & \\  & & & \Theta_5^\angle & \\ & 0 & & & \ddots
	\end{bmatrix}.
\end{gather}
We also develop the rotated second kind polynomials $\{\psi_n\}_{n=0}^\infty$ and show that for any fixed $z\in\partial\mathbb{D}$, the sequence
\begin{equation}
	\left\{\begin{bmatrix}
		\psi_n^\angle(z,\mu) \\ -z^n\overline{\psi_n^{\angle,*}(z,\mu)}
	\end{bmatrix}+r\begin{bmatrix}
		\varphi_n^\angle(z,\mu) \\ z^n\overline{\varphi_n^{\angle,*}(z,\mu)}
	\end{bmatrix}\right\}_{n=0}^\infty\in\ell^2(\mathbb{N}_0)^2
\end{equation}
if and only if $r=F(z)$, where $F$ is the Carath\'{e}odory function of $\mu$.

Studies of quantum walks often involve studying the spectrum of the quantum walk operator, which determines the limiting behaviour of particles in the walk. In particular, a useful element from the spectral theory of CMV operators is the Gesztesy-Zinchenko transfer matrix $T_n$ associated to the extended CMV operator $\mathcal{E}$, given by
\begin{equation} \label{eq_Tn_GZ_rCMV_intro}
	T_n(z)=\begin{cases}
		\dfrac{1}{\rho_n}\begin{bmatrix}
			-\overline{\alpha_n} & z \\ z^{-1} & -\alpha_n
		\end{bmatrix} & n\text{ even};\\
		~\\
		\dfrac{1}{\rho_n}\begin{bmatrix}
			-\alpha_n & 1 \\ 1 & -\overline{\alpha_n}
		\end{bmatrix} & n\text{ odd}.
	\end{cases}
\end{equation}
The Gesztesy-Zinchenko transfer matrix was introduced by Fritz Gesztesy and Maxim Zinchenko in their study of Weyl-Titchmarsh theory for CMV operators \cite{Gesztesy2006}. To explain the role of the Gesztesy-Zinchenko transfer matrix in the spectral theory of CMV operators, we first quote the following result from \cite{Gesztesy2006}:
\begin{theorem} \cite[Lemma 2.2]{Gesztesy2006} \label{thm_cmv_extended_GZ_lem2.2_intro}
	Let $z\in\mathbb{C}-\{0\}$ and $f=\{f_n(z)\}_{n\in\mathbb{Z}}$, $g=\{g_n(z)\}_{n\in\mathbb{Z}}$ be sequences of complex functions. Define the unitary operator $\mathcal{U}$ on $\ell^2(\mathbb{Z})^2$ by
	\begin{equation*} \label{eq_GZ_transfer_matrix_intro}
		\mathcal{U}=\begin{bmatrix}
			\mathcal{E} & 0 \\ 0 & \tilde{\mathcal{E}}
		\end{bmatrix}
		=\begin{bmatrix}
			\tilde{\mathcal{L}}\tilde{\mathcal{M}} & 0 \\ 
			0 & \tilde{\mathcal{M}}\tilde{\mathcal{L}}
		\end{bmatrix}
		=\begin{bmatrix}
			0 & \tilde{\mathcal{L}} \\ \tilde{\mathcal{M}} & 0
		\end{bmatrix}^2.
	\end{equation*}
	Then the following statements are equivalent:
	\begin{enumerate}
		\item $(\mathcal{E} f)(z)=zf(z)$ and $(\tilde{\mathcal{M}} f)(z)=zg(z)$;
		
		\item $(\mathcal{E}^T g)(z)=zg(z)$ and $(\tilde{\mathcal{L}} g)(z)=f(z)$;
		
		\item $(\tilde{\mathcal{M}} f)(z)=zg(z)$ and $(\tilde{\mathcal{L}} g)(z)=f(z)$;
		
		\item $\mathcal{U}\begin{bmatrix}
			f(z) \\ g(z)
		\end{bmatrix}=z\begin{bmatrix}
			f(z) \\ g(z)
		\end{bmatrix}$ and $(\tilde{\mathcal{M}}f)(z)=zg(z)$;
		
		\item $\mathcal{U}\begin{bmatrix}
			f(z) \\ g(z)
		\end{bmatrix}=z\begin{bmatrix}
			f(z) \\ g(z)
		\end{bmatrix}$ and $(\tilde{\mathcal{L}}g)(z)=f(z)$;
		
		\item for each $n\in\mathbb{Z}$,
		\begin{equation*} 
			\begin{bmatrix}
				f_{n+1}(z) \\ g_{n+1}(z)
			\end{bmatrix}=T_{n+1}(z)\begin{bmatrix}
				f_n(z) \\ g_n(z)
			\end{bmatrix}.
		\end{equation*}
	\end{enumerate}
\end{theorem}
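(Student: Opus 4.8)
The plan is to organize the six conditions around statement (3), which is the cleanest since it involves only the unitary factors $\tilde{\mathcal{L}},\tilde{\mathcal{M}}$ and no composite operator. First I would record two structural facts that make everything go through. Each block $\Theta_n$ satisfies $\Theta_n\Theta_n^*=I$ (because $|\alpha_n|^2+\rho_n^2=1$) and, since $\rho_n\in\mathbb{R}$, is symmetric; hence $\tilde{\mathcal{L}},\tilde{\mathcal{M}}$ are unitary (so invertible) and obey $\tilde{\mathcal{L}}^T=\tilde{\mathcal{L}}$, $\tilde{\mathcal{M}}^T=\tilde{\mathcal{M}}$. Consequently $\mathcal{E}^T=(\tilde{\mathcal{L}}\tilde{\mathcal{M}})^T=\tilde{\mathcal{M}}\tilde{\mathcal{L}}=\tilde{\mathcal{E}}$, so the $\mathcal{E}^T$ appearing in (2) is exactly $\tilde{\mathcal{E}}$. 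I would also keep $z\neq 0$ in force throughout, since cancelling the eigenvalue is needed repeatedly.

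Next I would prove the core cycle $(1)\Leftrightarrow(3)\Leftrightarrow(2)$. Assuming (3), i.e. $\tilde{\mathcal{M}}f=zg$ and $\tilde{\mathcal{L}}g=f$, I compute $\mathcal{E}f=\tilde{\mathcal{L}}\tilde{\mathcal{M}}f=\tilde{\mathcal{L}}(zg)=z\tilde{\mathcal{L}}g=zf$, which together with $\tilde{\mathcal{M}}f=zg$ is (1); and $\tilde{\mathcal{E}}g=\tilde{\mathcal{M}}\tilde{\mathcal{L}}g=\tilde{\mathcal{M}}f=zg$, which together with $\tilde{\mathcal{L}}g=f$ is (2). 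For the converses, from (1) I substitute $\tilde{\mathcal{M}}f=zg$ into $\tilde{\mathcal{L}}\tilde{\mathcal{M}}f=zf$ to get $z\tilde{\mathcal{L}}g=zf$ and cancel $z$ to recover $\tilde{\mathcal{L}}g=f$; from (2) I use $\tilde{\mathcal{L}}g=f$ in $\tilde{\mathcal{M}}\tilde{\mathcal{L}}g=zg$ to recover $\tilde{\mathcal{M}}f=zg$. Both land back in (3).

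The statements (4) and (5) are then almost free. Because $\mathcal{U}=\mathrm{diag}(\mathcal{E},\tilde{\mathcal{E}})$, the eigenvalue equation $\mathcal{U}(f,g)^T=z(f,g)^T$ is literally the pair $\{\mathcal{E}f=zf,\ \tilde{\mathcal{E}}g=zg\}$. Thus (4) is (1) together with $\tilde{\mathcal{E}}g=zg$, and (5) is (2) together with $\mathcal{E}f=zf$. Since the core cycle already shows that any one of (1),(2),(3) implies both eigen-equations, these extra conditions are redundant, giving $(1)\Leftrightarrow(4)$ and $(2)\Leftrightarrow(5)$.

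Finally, for $(3)\Leftrightarrow(6)$ I would pass to components. Writing out $\tilde{\mathcal{L}}g=f$ and $\tilde{\mathcal{M}}f=zg$ block by block — each $\Theta_n$ couples one consecutive coordinate pair, with $z$ entering only through the $\tilde{\mathcal{M}}$-equations — gives, for every block, two scalar relations. Solving each $2\times2$ block for the outgoing pair $(f_{n+1},g_{n+1})$ in terms of $(f_n,g_n)$ and simplifying with $|\alpha_n|^2+\rho_n^2=1$ produces exactly the matrices $T_{n+1}(z)$, the parity of the block index selecting between the two cases of \eqref{eq_Tn_GZ_rCMV_intro}; running this over all $n\in\mathbb{Z}$ is (6), and reading the computation backwards reassembles the block identities, hence (3). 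I expect this last step to be the main obstacle: it is the only genuinely computational part, and the care lies in matching the even/odd block parity to the two forms of $T_{n+1}$, tracking the placement of the factors $z$ and $z^{-1}$, and noting that the solved blocks require $\rho_n\neq0$ — that is, the transfer-matrix formulation (6) is available precisely when no Verblunsky coefficient is unimodular.
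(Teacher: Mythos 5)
Your proposal is correct and follows essentially the same route as the paper: the paper cites this statement from Gesztesy--Zinchenko without reproving it, but its detailed proof of the rotated analogue (Theorem \ref{thm_rcmv_extended_GZ_lem2.2}) is structured exactly as you describe --- centering on (3), cycling through (1) and (2) via the $\tilde{\mathcal{L}}\tilde{\mathcal{M}}$ factorisation and cancellation of $z\neq 0$, observing that (4) and (5) are redundant conjunctions, and establishing $(3)\Leftrightarrow(6)$ by the even/odd $2\times 2$ block computation. Your preliminary remark that $\mathcal{E}^T=\tilde{\mathcal{E}}$ (from the symmetry of the real-$\rho_n$ blocks $\Theta_n$) is precisely the transpose symmetry the paper emphasises as the feature that is lost in the rotated setting, and your closing caveat that (6) requires $\rho_n\neq 0$ matches the paper's treatment of unimodular $\alpha_K$ via the split half-lattice operators.
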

From a spectral-theoretical point of view, one can easily read from Theorem \ref{thm_cmv_extended_GZ_lem2.2_intro} that the Gesztesy-Zinchenko transfer matrix directly describes the relationship between entries of the generalized eigensolutions without reference to OPUCs. Moreover, unlike operators on $\ell^2(\mathbb{N}_0)$, one does not have OPUCs to work with in the $\ell^2(\mathbb{Z})$ case.

We prove an analogue of Theorem \ref{thm_cmv_extended_GZ_lem2.2_intro} for the rotated CMV operator $\mathcal{E}^\angle$, which looks largely identical to Theorem \ref{thm_cmv_extended_GZ_lem2.2_intro}, except that we replace all operators by their rotated counterparts. In particular, we show that the rotated Gesztesy-Zinchenko transfer matrix also carries an identical form to (\ref{eq_GZ_transfer_matrix_intro}), where the $\rho_n$'s are now complex numbers in $\mathbb{D}$, which implies that unlike the original case, the inverse of the rotated Gesztesy-Zinchenko transfer matrix would carry $\overline{\rho_n}$'s instead of $\rho_n$'s. The key obstacle to overcome was that we could not directly replace $\mathcal{E}^T$ with $\left(\mathcal{E}^\angle\right)^T$, since rotating the standard CMV matrices causes a loss of transpose symmetry. We bypass the issue of transpose symmetry by observing that since $\mathcal{E}^T=\tilde{\mathcal{M}}\tilde{\mathcal{L}}$, a similar proof of Theorem \ref{thm_cmv_extended_GZ_lem2.2_intro} can be applied to the operator $\tilde{\mathcal{E}}^\angle=\tilde{\mathcal{M}}^\angle\tilde{\mathcal{L}}^\angle$, where $\tilde{\mathcal{L}}^\angle$ and $\tilde{\mathcal{M}}^\angle$ are rotated versions of $\tilde{\mathcal{L}}$ and $\tilde{\mathcal{M}}$, respectively. This fact motivates the definition of a rotated extended CMV matrix as mentioned.

The paper is structured as follows: in section 2 we first review relevant results from OPUC theory and Gesztesy-Zinchenko theory for CMV operators, before proceeding to present our new results in section 3.


\addtocontents{toc}{\protect\setcounter{tocdepth}{0}}

\section*{Acknowledgments}
The author is supported in part by Xiamen University Malaysia Research Fund (grant numbers XMUMRF/2020-C5/IMAT/0011 and XMUMRF/2023C11/IMAT/0024). This research is also submitted as a M.Sc thesis at Xiamen University Malaysia. The author would like to thank his advisor Darren C. Ong for his guidance throughout the project; the anonymous reviewers for useful suggestions and improvements to the manuscript; and Maxim Zinchenko for helpful discussions regarding the initial conditions of the Gesztesy-Zinchenko difference equation and second kind polynomials.

\addtocontents{toc}{\protect\setcounter{tocdepth}{2}}



\section{Review of OPUCs and Gesztesy-Zinchenko Theory}

We review some key and relevant concepts from OPUC theory and Gesztesy-Zinchenko theory \cite{Gesztesy2006}. For more details, we refer the reader to an encyclopedia of two comprehensive volumes \cite{Simon2005a,Simon2005b} by Barry Simon, who also published a condensed summary \cite{Simon2005c} of these two volumes. We largely follow the definitions, notations and language from \cite{Simon2005a} and \cite{Simon2005b}, and cite their original sources whenever possible. 

\subsection{Orthogonal Polynomials On The Unit Circle (OPUC)}

We review the basic theory of orthogonal polynomials on the unit circle, in particular selected results from Sections 1.1 and 1.5 of \cite{Simon2005a}.

Let $\mu$ be a \textit{nontrivial} probability measure $\mu$ on $\partial\mathbb{D}$, that is, a measure $\mu$ supported on an infinite set with $\mu(\partial\mathbb{D})=1$. We consider the complex Hilbert space $L^2(\partial\mathbb{D},\mu)$ equipped with the Hermitian inner product which is conjugate linear in the first entry:
\begin{equation} \label{eq_L2_ip}
	\ip{f}{g}=\int_{\partial\mathbb{D}} \overline{f(z)}g(z)\,d\mu(z).
\end{equation}
The $L^2$-\textit{norm} of any $f\in L^2(\partial\mathbb{D},\mu)$ is then given by
\begin{equation} \label{eq_L2_norm}
	\norm{f}_{L^2}=\norm{f}=\ip{f}{f}^{\frac{1}{2}}.
\end{equation}
Since $\mu$ is a nontrivial probability measure on $\partial\mathbb{D}$, the set $\{1,z,z^2,\cdots\}$ is linearly independent in $L^2(\partial\mathbb{D},\mu)$, which allows us to apply the Gram-Schmidt process to obtain a set of \textit{monic} orthogonal polynomials $\{\Phi_n\}_{n=0}^\infty$, which we call the \textit{standard OPUCs}. The corresponding \textit{orthonormal} polynomials $\{\varphi_n\}_{n=1}^\infty$ are then given by
\begin{equation} \label{eq_opuc_orthonormal}
	\varphi_n(z)=\dfrac{\Phi_n(z)}{\norm{\Phi_n(z)}}.
\end{equation}
Define the anti-unitary map $R_n:L^2(\partial\mathbb{D},\mu)\to L^2(\partial\mathbb{D},\mu)$ by
\begin{equation} \label{eq_opuc_reverse_map}
	R_n(f)(z):=f^*(z)=z^n\overline{f\left(\dfrac{1}{\overline{z}}\right)}.
\end{equation}
If $p_n(z)=\di\sum_{j=0}^n c_jz^j$ is a polynomial of degree $n$, then
\begin{equation}
	p_n^*(z)=R_n(p_n)(z)=\sum_{j=0}^n \overline{c_j}z^{n-j}=\sum_{j=0}^n \overline{c_{n-j}}z^j.
\end{equation}
In particular, for the monic OPUCs $\{\Phi_n(z)\}_{n=0}^\infty$, we have
\begin{equation}
	\Phi_n^*(z)=z^n\left(\overline{z^n+\text{lower order terms}}\right)=1+\text{higher order terms},
\end{equation}
and hence for every $n\in\mathbb{N}_0$, we have
\begin{equation} \label{eq_opuc_reverse_eval_0}
	\Phi_n^*(0)=1.
\end{equation}
The polynomials $\{\Phi_n^*\}_{n=0}^\infty$ and $\{\varphi_n^*\}_{n=0}^\infty$ are called the \textit{standard reverse polynomials} or \textit{Szeg\H{o} duals} of $\{\Phi_n\}_{n=0}^\infty$ and $\{\varphi_n\}_{n=0}^\infty$, respectively.

The most significant property of the reverse OPUCs is that they are the unique polynomials orthogonal to $\{z,z^2,\cdots,z^n\}$ in $L^2(\partial\mathbb{D},\mu)$ up to a scalar multiple, which is a direct consequence of the fact that $\Phi_n$ is the unique monic polynomial of degree $n$ orthogonal to $\{1,z,\cdots,z^{n-1}\}$ in $L^2(\partial\mathbb{D},\mu)$. Moreover, the reverse OPUCs and the OPUCs have equal $L^2(\partial\mathbb{D},\mu)$-norms, which can be simply computed by integrating the reverse OPUCs with respect to the measure $\mu$.

\pagebreak

\begin{theorem}\cite[Lemma 1.5.1]{Simon2005a} \label{thm_opuc_reverse_orthogonal_to_z_z^n}
	Let $n\in\mathbb{N}$. If $p$ is a polynomial in $L^2(\partial\mathbb{D},\mu)$ such that
	\begin{enumerate}
		\item $\deg p(z)\leq n$; and
		\item $\ip{p(z)}{z^j}=0$ for all $j\in\{1,2,\cdots,n\}$,
	\end{enumerate}
	where $\ip{\cdot}{\cdot}$ is the $L^2(\partial\mathbb{D},\mu)$-inner product given by (\ref{eq_L2_ip}), then $$p(z)=c\Phi_n^*(z)$$ for some constant $c\in\mathbb{C}$. Moreover,
	\begin{equation}
		\norm{\Phi_n^*}^2=\norm{\Phi_n}^2=\int_{\partial\mathbb{D}}\Phi_n^*(z)\,d\mu(z).
	\end{equation}
\end{theorem}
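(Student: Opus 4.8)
The statement has two parts. First I would prove the characterization of $p$, then the norm identity.

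The plan is to establish the characterization by the standard trick of translating orthogonality against $\{z,\dots,z^n\}$ into orthogonality against $\{1,\dots,z^{n-1}\}$ via the reversal map $R_n$. Concretely, suppose $p$ satisfies (1) and (2). Write $q(z)=p^*(z)=R_n(p)(z)=z^n\overline{p(1/\overline z)}$; since $\deg p\le n$, $q$ is again a polynomial of degree $\le n$. The key computation is to show that orthogonality of $p$ to each $z^j$ for $j\in\{1,\dots,n\}$ is equivalent to orthogonality of $q$ to each $z^{n-j}$, i.e. to $\{1,z,\dots,z^{n-1}\}$. This follows from the antiunitarity of $R_n$: on $\partial\mathbb{D}$ we have $|z|=1$, so $1/\overline z=z$, and a direct change of variables in the integral $\int_{\partial\mathbb{D}}\overline{p(z)}z^j\,d\mu$ shows it equals the conjugate of $\int_{\partial\mathbb{D}}\overline{q(z)}z^{\,n-j}\,d\mu$ (up to the appropriate relabeling). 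Since $q$ has degree $\le n$ and is orthogonal to $\{1,z,\dots,z^{n-1}\}=\Span\{1,\dots,z^{n-1}\}$, and since $\Phi_n$ is by construction the unique monic polynomial of degree $n$ orthogonal to that span, $q$ must be a scalar multiple of $\Phi_n$, say $q=\overline{c}\,\Phi_n$. Applying $R_n$ once more (using $R_n^2=\mathrm{id}$ on polynomials of degree $\le n$) returns $p=R_n(q)=c\,\Phi_n^*$, as claimed.

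For the norm identity, I would argue that $R_n$ is an anti-unitary (norm-preserving) map on the relevant finite-dimensional subspace, which immediately gives $\norm{\Phi_n^*}=\norm{\Phi_n}$, hence $\norm{\Phi_n^*}^2=\norm{\Phi_n}^2$. To obtain the integral formula, I would use that $\Phi_n^*$ has constant term $\Phi_n^*(0)=1$ (established in the excerpt) together with the orthogonality just proved: expand $\norm{\Phi_n^*}^2=\ip{\Phi_n^*}{\Phi_n^*}$ and use that $\Phi_n^*$ is orthogonal to $\{z,\dots,z^n\}$, so only the constant-term contribution survives. Writing $\Phi_n^*(z)=1+\sum_{k=1}^n a_k z^k$ and using $\ip{\Phi_n^*}{z^k}=0$ for $k\ge 1$, we get $\ip{\Phi_n^*}{\Phi_n^*}=\ip{\Phi_n^*}{1}+\sum_{k\ge1}\overline{a_k}\ip{\Phi_n^*}{z^k}=\ip{1}{\Phi_n^*}$, and since the inner product integrates $\overline{1}\cdot\Phi_n^*$ this equals $\int_{\partial\mathbb{D}}\Phi_n^*(z)\,d\mu(z)$.

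The main obstacle is the bookkeeping in the first part: verifying precisely that the reversal map converts orthogonality against the $n$-dimensional span $\{z,\dots,z^n\}$ into orthogonality against $\{1,\dots,z^{n-1}\}$, with the degree constraint preserved. Once the antiunitarity of $R_n$ and the identity $R_n^2=\mathrm{id}$ (on degree-$\le n$ polynomials) are in hand, everything reduces to the uniqueness of $\Phi_n$, which is already available. I expect the norm identity to be essentially immediate given the orthogonality and the normalization $\Phi_n^*(0)=1$.
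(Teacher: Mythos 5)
Your argument is correct, and it is the standard one: the paper only quotes this result from Simon's Lemma 1.5.1 without proof, but the surrounding remark (that the claim is ``a direct consequence of the fact that $\Phi_n$ is the unique monic polynomial of degree $n$ orthogonal to $\{1,z,\cdots,z^{n-1}\}$'') indicates exactly your route via the anti-unitary involution $R_n$, and your norm computation via $\Phi_n^*(0)=1$ is likewise the intended one. The only blemish is a bookkeeping slip in the last display, where you expand $\ip{\Phi_n^*}{\Phi_n^*}$ in the first (conjugate-linear) slot but write the resulting inner products with $\Phi_n^*$ in the first position; since $\ip{\Phi_n^*}{z^k}=\overline{\ip{z^k}{\Phi_n^*}}=0$ for $k\geq 1$ and the norm is real, the conclusion $\norm{\Phi_n^*}^2=\int_{\partial\mathbb{D}}\Phi_n^*(z)\,d\mu(z)$ is unaffected.
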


The recurrence relation governing the standard OPUCs comprises of the five terms $\Phi_{n+1}$, $\Phi_n$, $\Phi_n^*$, $z\Phi_n$ and $z\Phi_n^*$, first proved by Gabor Szeg\H{o} \cite{Szego1975} as follows:
\begin{theorem}[Forward Szeg\H{o} recursion for monic OPUCs] \label{thm_szego_recursion}
	Let $\mu$ be a nontrivial probability measure on $\partial\mathbb{D}$ and $\{\Phi_n(z,\mu)\}_{n=0}^\infty$ be the monic orthogonal polynomials as defined in Section 2.1. There exists a sequence $\{\alpha_n\}_{n=0}^\infty\subseteq\mathbb{D}$, called the \textit{Verblunsky coefficients} of the measure $\mu$, such that
	\begin{align}
		\label{eq_szegorecursion_monic}
		\Phi_{n+1}(z)=z\Phi_n(z)-\overline{\alpha_n}\Phi_{n}^*(z),\\
		\label{eq_szegorecursion_monic_reverse}
		\Phi_{n+1}^*(z)=\Phi_n^*(z)-\alpha_nz\Phi_{n}(z),
	\end{align}
	and hence
	\begin{equation} \label{eq_opuc_norm}
		\norm{\Phi_{n+1}^*}=\norm{\Phi_{n+1}}=\prod_{j=0}^n \left(1-|\alpha_j|^2\right)^{\frac{1}{2}}.
	\end{equation}
\end{theorem}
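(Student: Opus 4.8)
The plan is to first establish the monic recursion \eqref{eq_szegorecursion_monic} using the characterization of reverse polynomials already recorded in Theorem \ref{thm_opuc_reverse_orthogonal_to_z_z^n}. Set $Q_n(z) := \Phi_{n+1}(z) - z\Phi_n(z)$. Since $\Phi_{n+1}$ and $z\Phi_n$ are both monic of degree $n+1$, the leading terms cancel and $\deg Q_n \leq n$. I would then check that $Q_n$ is orthogonal to $z^j$ for each $j \in \{1,\dots,n\}$: the term $\ip{\Phi_{n+1}}{z^j}$ vanishes because $\Phi_{n+1} \perp \{1,z,\dots,z^n\}$, and $\ip{z\Phi_n}{z^j} = \ip{\Phi_n}{z^{j-1}}$ (shifting the $|z|=1$ factor) vanishes because $j-1 \in \{0,\dots,n-1\}$ and $\Phi_n \perp \{1,\dots,z^{n-1}\}$. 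By Theorem \ref{thm_opuc_reverse_orthogonal_to_z_z^n}, $Q_n = c\,\Phi_n^*$ for some $c \in \C$. I define the Verblunsky coefficient by $c =: -\overline{\alpha_n}$, which gives \eqref{eq_szegorecursion_monic}; evaluating at $z=0$ and using $\Phi_n^*(0)=1$ from \eqref{eq_opuc_reverse_eval_0} identifies $\overline{\alpha_n} = -\Phi_{n+1}(0)$.

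For the reverse recursion \eqref{eq_szegorecursion_monic_reverse}, I would apply the anti-unitary reverse map $R_{n+1}$ of \eqref{eq_opuc_reverse_map} to both sides of \eqref{eq_szegorecursion_monic}. By anti-linearity, $\Phi_{n+1}^* = R_{n+1}(z\Phi_n) - \alpha_n R_{n+1}(\Phi_n^*)$. The two pieces are computed directly from the definition $R_{n+1}(f)(z) = z^{n+1}\overline{f(1/\overline z)}$: one finds $R_{n+1}(z\Phi_n) = \Phi_n^*$, while $R_{n+1}(\Phi_n^*) = z\,R_n(\Phi_n^*) = z\Phi_n$, where the index drop from $n+1$ to $n$ supplies the factor $z$ and $R_n(\Phi_n^*) = \Phi_n$ because $R_n$ is an involution. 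Substituting yields \eqref{eq_szegorecursion_monic_reverse}. The main bookkeeping obstacle here is tracking the index shift in the reverse map together with the conjugate-linearity convention of the inner product; getting the factor of $z$ and the conjugation on $\alpha_n$ correct is the only delicate point.

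Finally, for the norm formula \eqref{eq_opuc_norm}, I would first extract the key cross term. Taking the $L^2$-inner product of \eqref{eq_szegorecursion_monic} against $\Phi_n^*$ and using $\ip{\Phi_n^*}{\Phi_{n+1}} = 0$ (as $\deg \Phi_n^* \leq n$) together with $\norm{\Phi_n^*} = \norm{\Phi_n}$ from Theorem \ref{thm_opuc_reverse_orthogonal_to_z_z^n} gives $\ip{\Phi_n^*}{z\Phi_n} = \overline{\alpha_n}\norm{\Phi_n}^2$. Then I expand $\norm{\Phi_{n+1}}^2 = \norm{z\Phi_n - \overline{\alpha_n}\Phi_n^*}^2$, using $\norm{z\Phi_n} = \norm{\Phi_n}$ (since $|z|=1$ on $\partial\D$), $\norm{\Phi_n^*} = \norm{\Phi_n}$, and the cross term just computed; the four resulting terms collapse to $\norm{\Phi_{n+1}}^2 = (1-|\alpha_n|^2)\norm{\Phi_n}^2$. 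Since $\norm{\Phi_0} = 1$ for a probability measure, induction produces the product formula, and the equality $\norm{\Phi_{n+1}^*} = \norm{\Phi_{n+1}}$ is again Theorem \ref{thm_opuc_reverse_orthogonal_to_z_z^n}. Nontriviality of $\mu$ forces $\norm{\Phi_n} > 0$ for every $n$, so the relation $\norm{\Phi_{n+1}}^2 = (1-|\alpha_n|^2)\norm{\Phi_n}^2$ compels $|\alpha_n| < 1$, establishing $\{\alpha_n\}_{n=0}^\infty \subseteq \D$ as claimed.
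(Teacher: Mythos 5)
Your proof is correct and complete. The paper itself states this theorem as a classical result of Szeg\H{o} and supplies no proof, so there is nothing internal to compare against line by line; but your argument is exactly the standard modern one (the proof in Simon's book), and it is the argument the paper's Theorem \ref{thm_opuc_reverse_orthogonal_to_z_z^n} is positioned to enable: the difference $\Phi_{n+1}-z\Phi_n$ has degree at most $n$ and is orthogonal to $z,\dots,z^n$, hence is a multiple of $\Phi_n^*$. All the delicate points check out: the shift $\ip{z\Phi_n}{z^j}=\ip{\Phi_n}{z^{j-1}}$ uses $|z|=1$ correctly; applying the anti-linear $R_{n+1}$ turns $-\overline{\alpha_n}$ into $-\alpha_n$ and the identities $R_{n+1}(z\Phi_n)=\Phi_n^*$, $R_{n+1}(\Phi_n^*)=z\Phi_n$ are right; the cross term $\ip{\Phi_n^*}{z\Phi_n}=\overline{\alpha_n}\norm{\Phi_n}^2$ makes the four terms in the norm expansion collapse to $(1-|\alpha_n|^2)\norm{\Phi_n}^2$; and nontriviality of $\mu$ gives $\norm{\Phi_n}>0$, forcing $|\alpha_n|<1$. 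The only remark worth adding is that the paper points out Szeg\H{o}'s original proof ran in the opposite direction, deriving the recursion from the Christoffel--Darboux formula; your route avoids that machinery entirely at the cost of needing the uniqueness characterization of $\Phi_n^*$, which the paper has already recorded.
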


We shall write $\alpha_n(\mu)$ in place of $\alpha_n$ when we want to emphasise that $\{\alpha_n\}_{n=0}^\infty$ are the Verblunsky coefficients associated to the measure $\mu$.

The reader is warned that in this paper, we adopt Simon's convention for $\{\alpha_n\}_{n=0}^\infty$ used in \cite{Simon2005a}, \cite{Simon2005b} and \cite{Simon2005c}. Another common alternative convention differs from Simon's convention by a negative complex conjugate, which was adopted by \cite{Cantero2003} and \cite{Gesztesy2006}, whose $a_n$'s and $\alpha_n$'s, respectively, correspond to our $-\overline{\alpha_n}$'s. 

By substituting $z=0$ and (\ref{eq_opuc_reverse_eval_0}) into (\ref{eq_szegorecursion_monic}), one can show that the Verblunsky coefficients $\{\alpha_n\}_{n=0}^\infty$ are precisely given by
\begin{equation} \label{eq_verblunsky_coeff_monic}
	\alpha_n=-\overline{\Phi_{n+1}(0)}
\end{equation}
for all $n=0,1,2,\ldots$. 

Since $\Phi_0(0)=1$ by definition (as a result of orthogonalizing $\{1,z,z^2,\ldots\}$ via the Gram-Schmidt process), it is natural for one to extend (\ref{eq_verblunsky_coeff_monic}) to $n=-1$ by defining
\begin{equation} \label{eq_verblunsky_coeff_-1}
	\alpha_{-1}=-1.
\end{equation}

Once the forward Szeg\H{o} recursion for the monic OPUCs is established, one can easily derive the forward Szeg\H{o} recursion for the orthonormal polynomials by dividing both sides of (\ref{eq_szegorecursion_monic}) and (\ref{eq_szegorecursion_monic_reverse}) by $\norm{\Phi_{n+1}^*}$:

\begin{corollary}[Forward Szeg\H{o} recursion for orthonormal polynomials]
	For every $n\in\mathbb{N}_0$, we have
	\begin{align} 
		\label{eq_szegorecursion_orthonormal}
		\varphi_{n+1}(z)=\rho_n^{-1}\left(z\varphi_n(z)-\overline{\alpha_n}\varphi_n^*(z)\right),\\
		\label{eq_szegorecursion_orthonormal_reverse}
		\varphi_{n+1}^*(z)=\rho_n^{-1}\left(\varphi_n^*(z)-\alpha_nz\varphi_n(z)\right).
	\end{align}
\end{corollary}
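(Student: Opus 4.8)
The plan is to obtain both identities purely algebraically: divide the monic Szeg\H{o} recursions (\ref{eq_szegorecursion_monic}) and (\ref{eq_szegorecursion_monic_reverse}) through by the common norm $\norm{\Phi_{n+1}}=\norm{\Phi_{n+1}^*}$ and then rewrite each monic polynomial as its orthonormal counterpart. No new analytic input is needed beyond the recursion and the norm formula already established.

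First I would record the two facts about norms that the reduction rests on. From (\ref{eq_opuc_norm}), together with $\norm{\Phi_0}=1$ (which holds since $\Phi_0\equiv 1$ and $\mu$ is a probability measure), one reads off the one-step relation $\norm{\Phi_{n+1}}=\rho_n\norm{\Phi_n}$ for every $n\in\mathbb{N}_0$, where $\rho_n=(1-|\alpha_n|^2)^{1/2}$. Next I must check that the star operation commutes with normalization, namely $\varphi_n^*=\Phi_n^*/\norm{\Phi_n}$. This is where the reverse map (\ref{eq_opuc_reverse_map}) enters: since $R_n$ is anti-linear and $\norm{\Phi_n}$ is a positive real scalar, $\varphi_n^*=R_n\!\left(\Phi_n/\norm{\Phi_n}\right)=\Phi_n^*/\norm{\Phi_n}$, and the norm equality $\norm{\Phi_n^*}=\norm{\Phi_n}$ from Theorem~\ref{thm_opuc_reverse_orthogonal_to_z_z^n} confirms that this is genuinely the unit-norm reverse polynomial.

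With these in hand the computation is immediate. Dividing (\ref{eq_szegorecursion_monic}) by $\norm{\Phi_{n+1}}=\rho_n\norm{\Phi_n}$ gives
\[
	\varphi_{n+1}(z)=\frac{z\Phi_n(z)-\overline{\alpha_n}\Phi_n^*(z)}{\rho_n\norm{\Phi_n}}=\rho_n^{-1}\left(z\varphi_n(z)-\overline{\alpha_n}\varphi_n^*(z)\right),
\]
which is (\ref{eq_szegorecursion_orthonormal}); dividing (\ref{eq_szegorecursion_monic_reverse}) by the same quantity and substituting $\varphi_n=\Phi_n/\norm{\Phi_n}$ and $\varphi_n^*=\Phi_n^*/\norm{\Phi_n}$ yields (\ref{eq_szegorecursion_orthonormal_reverse}) in exactly the same fashion.

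I expect no substantial obstacle here: the argument is essentially bookkeeping on the already-proved monic recursion. The only point that deserves explicit care is the identity $\varphi_n^*=\Phi_n^*/\norm{\Phi_n}$, which is precisely where the norm equality $\norm{\Phi_n^*}=\norm{\Phi_n}$ and the anti-linearity of $R_n$ are used; everything else is a single division by the common normalizing constant.
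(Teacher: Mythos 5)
Your proposal is correct and is exactly the paper's argument: the paper derives the corollary by dividing both monic recursions by $\norm{\Phi_{n+1}^*}=\norm{\Phi_{n+1}}=\rho_n\norm{\Phi_n}$, which is precisely your computation. The extra care you take with $\varphi_n^*=\Phi_n^*/\norm{\Phi_n}$ (via anti-linearity of $R_n$ and positivity of the norm) is a detail the paper leaves implicit, but it does not change the route.
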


Frequently one may find it useful to express the forward Szeg\H{o} recursions (\ref{eq_szegorecursion_orthonormal}) and (\ref{eq_szegorecursion_orthonormal_reverse}) in matrix form:
\begin{equation} \label{eq_szegorecursion_opuc_matrix}
	\begin{bmatrix}
		\varphi_{n+1}(z) \\ \varphi_{n+1}^*(z)
	\end{bmatrix}=A_n(z)\begin{bmatrix}
		\varphi_n(z) \\ \varphi_n^*(z)
	\end{bmatrix},
\end{equation}
where
\begin{equation} \label{eq_szegorecursion_opuc_An}
	A_n(z)=\rho_n^{-1}\begin{bmatrix}
		z & -\overline{\alpha_n} \\ -\alpha_n z & 1
	\end{bmatrix}.
\end{equation}
$A_n$ is called the \textit{Szeg\H{o} transfer matrix} for the orthonormal polynomials $\{\varphi_n\}_{n=0}^\infty$. 

If $\{\alpha_n\}_{n=0}^\infty\subseteq\mathbb{D}$ so that $A_n(z)$ is invertible, one may easily use (\ref{eq_szegorecursion_opuc_matrix}) to compute 
\begin{equation} \label{eq_szegorecursion_opuc_An_inverse}
	A_n^{-1}(z)=\dfrac{1}{z\rho_n}\begin{bmatrix}
		1 & \overline{\alpha_n} \\ \alpha_n z & z
	\end{bmatrix}
\end{equation}
and hence obtain the \textit{backward Szeg\H{o} recursion} for the orthonormal polynomials:

\begin{theorem}[Backward Szeg\H{o} recursion for orthonormal polynomials] \label{thm_szego_recursion_inverse}
	For every $n\in\mathbb{N}_0$, we have
	\begin{align} 
		\label{eq_szegorecursion_inverse_orthonormal}
		z\varphi_n(z)=\rho_n^{-1}\left(\varphi_{n+1}(z)+\overline{\alpha_n}\varphi_{n+1}^*(z)\right),\\
		\label{eq_szegorecursion_inverse_orthonormal_reverse}
		\varphi_n^*(z)=\rho_n^{-1}\left(\varphi_{n+1}^*(z)+\alpha_n\varphi_{n+1}(z)\right).
	\end{align}
\end{theorem}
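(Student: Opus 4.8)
The plan is to invert the matrix form \eqref{eq_szegorecursion_opuc_matrix} of the forward Szeg\H{o} recursion, exactly as the expository paragraph preceding the theorem anticipates. First I would verify invertibility of the Szeg\H{o} transfer matrix $A_n(z)$ from \eqref{eq_szegorecursion_opuc_An} by a direct determinant computation,
\begin{equation*}
	\det A_n(z)=\rho_n^{-2}\left(z-(-\overline{\alpha_n})(-\alpha_n z)\right)=\rho_n^{-2}z\left(1-|\alpha_n|^2\right)=z,
\end{equation*}
using $\rho_n^2=1-|\alpha_n|^2$ from \eqref{eq_rho_real}. Since the hypothesis $\{\alpha_n\}_{n=0}^\infty\subseteq\mathbb{D}$ forces $\rho_n>0$, the matrix $A_n(z)$ is invertible precisely for $z\neq 0$, and its inverse is the expression $A_n^{-1}(z)$ already recorded in \eqref{eq_szegorecursion_opuc_An_inverse}.

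Next, for $z\neq 0$ I would left-multiply both sides of \eqref{eq_szegorecursion_opuc_matrix} by $A_n^{-1}(z)$ to obtain
\begin{equation*}
	\begin{bmatrix} \varphi_n(z) \\ \varphi_n^*(z) \end{bmatrix}=\dfrac{1}{z\rho_n}\begin{bmatrix} 1 & \overline{\alpha_n} \\ \alpha_n z & z \end{bmatrix}\begin{bmatrix} \varphi_{n+1}(z) \\ \varphi_{n+1}^*(z) \end{bmatrix}.
\end{equation*}
Reading off the top row and clearing the factor $z$ yields $z\varphi_n(z)=\rho_n^{-1}\left(\varphi_{n+1}(z)+\overline{\alpha_n}\varphi_{n+1}^*(z)\right)$, which is exactly \eqref{eq_szegorecursion_inverse_orthonormal}; the bottom row gives $z\varphi_n^*(z)=\rho_n^{-1}z\left(\alpha_n\varphi_{n+1}(z)+\varphi_{n+1}^*(z)\right)$, and cancelling $z$ produces \eqref{eq_szegorecursion_inverse_orthonormal_reverse}.

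There is no genuine obstacle here; the single point needing a word of care is that the inversion is a priori valid only for $z\neq 0$, whereas the identities are asserted for all $z$. This is harmless, since both sides of \eqref{eq_szegorecursion_inverse_orthonormal} and \eqref{eq_szegorecursion_inverse_orthonormal_reverse} are polynomials in $z$, so an equality holding on $\mathbb{C}\setminus\{0\}$ extends to all of $\mathbb{C}$ by continuity. Alternatively, to avoid dividing by $z$ altogether, I would eliminate directly from the forward recursions \eqref{eq_szegorecursion_orthonormal} and \eqref{eq_szegorecursion_orthonormal_reverse}: the combination $\rho_n\varphi_{n+1}+\overline{\alpha_n}\rho_n\varphi_{n+1}^*$ cancels the $\varphi_n^*$ terms and leaves $\left(1-|\alpha_n|^2\right)z\varphi_n=\rho_n^2\,z\varphi_n$, while $\alpha_n\rho_n\varphi_{n+1}+\rho_n\varphi_{n+1}^*$ collapses to $\rho_n^2\varphi_n^*$; dividing by $\rho_n$ then recovers both backward recursions as honest polynomial identities with no exceptional point. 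Either route reduces the theorem to routine linear algebra.
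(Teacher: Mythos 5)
Your proposal is correct and follows exactly the route the paper indicates: computing $\det A_n(z)=z$, inverting the Szeg\H{o} transfer matrix to obtain \eqref{eq_szegorecursion_opuc_An_inverse}, and reading off the two rows. The extra remarks on the exceptional point $z=0$ and the direct elimination from \eqref{eq_szegorecursion_orthonormal}--\eqref{eq_szegorecursion_orthonormal_reverse} are sound but do not change the argument in any essential way.
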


\begin{theorem}[Christoffel-Darboux Formula] \label{thm_opuc_cd_formula}
	For any $n\in\mathbb{N}_0$ and any $\xi,z\in\mathbb{C}$ such that $\overline{\xi}{z}\neq 1$, we have
	\begin{equation} \label{eq_opuc_cd}
		\sum_{k=0}^n \overline{\varphi_k(\xi)}\varphi_k(z)=\dfrac{\overline{\varphi_{n+1}^*(\xi)}\varphi_{n+1}^*(z)-\overline{\varphi_{n+1}(\xi)}\varphi_{n+1}(z)}{1-\overline{\xi}z}.
	\end{equation}
\end{theorem}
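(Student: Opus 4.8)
The plan is to prove the identity by establishing a single-step relation and summing it (a telescoping argument), rather than splitting into a separate base case and induction step. First I would clear the denominator: since the only excluded case is $\overline{\xi}z=1$, it suffices to prove the equivalent identity
\[
(1-\overline{\xi}z)\sum_{k=0}^n \overline{\varphi_k(\xi)}\varphi_k(z)=\overline{\varphi_{n+1}^*(\xi)}\varphi_{n+1}^*(z)-\overline{\varphi_{n+1}(\xi)}\varphi_{n+1}(z)
\]
for every $n\in\mathbb{N}_0$, and then divide by $1-\overline{\xi}z$ at the very end. Writing $P_n:=\overline{\varphi_n^*(\xi)}\varphi_n^*(z)-\overline{\varphi_n(\xi)}\varphi_n(z)$ for the right-hand numerator, the target reduces to showing $(1-\overline{\xi}z)\sum_{k=0}^n \overline{\varphi_k(\xi)}\varphi_k(z)=P_{n+1}$.

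The heart of the argument is the single-step identity
\[
P_{n+1}-P_n=(1-\overline{\xi}z)\,\overline{\varphi_n(\xi)}\varphi_n(z).
\]
To prove it I would substitute the forward Szeg\H{o} recursions (\ref{eq_szegorecursion_orthonormal}) and (\ref{eq_szegorecursion_orthonormal_reverse}) for $\varphi_{n+1}(z)$ and $\varphi_{n+1}^*(z)$, together with their conjugated-at-$\xi$ counterparts, into $P_{n+1}$, using that $\rho_n$ is real so that $\overline{\rho_n^{-1}}=\rho_n^{-1}$ and that conjugation at $\xi$ interchanges $\alpha_n$ with $\overline{\alpha_n}$ and replaces $z$ by $\overline{\xi}$. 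Expanding $\overline{\varphi_{n+1}^*(\xi)}\varphi_{n+1}^*(z)$ and $\overline{\varphi_{n+1}(\xi)}\varphi_{n+1}(z)$ each produces four terms; the two mixed cross terms (those proportional to $\overline{\varphi_n^*(\xi)}\varphi_n(z)$ and to $\overline{\varphi_n(\xi)}\varphi_n^*(z)$) cancel when the two products are subtracted, leaving
\[
P_{n+1}=\rho_n^{-2}\left[(1-|\alpha_n|^2)\overline{\varphi_n^*(\xi)}\varphi_n^*(z)-(1-|\alpha_n|^2)\overline{\xi}z\,\overline{\varphi_n(\xi)}\varphi_n(z)\right].
\]
Invoking $\rho_n^2=1-|\alpha_n|^2$ from (\ref{eq_rho_real}) collapses the prefactor, giving $P_{n+1}=\overline{\varphi_n^*(\xi)}\varphi_n^*(z)-\overline{\xi}z\,\overline{\varphi_n(\xi)}\varphi_n(z)$; subtracting $P_n$ and simplifying then yields the single-step identity above.

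Finally I would sum the single-step identity over $k=0,1,\dots,n$. The right-hand side telescopes to $P_{n+1}-P_0$, and since $\mu$ is a probability measure we have $\Phi_0\equiv 1$ with $\norm{\Phi_0}=1$, so $\varphi_0\equiv\varphi_0^*\equiv 1$ and hence $P_0=1-1=0$. This gives $(1-\overline{\xi}z)\sum_{k=0}^n\overline{\varphi_k(\xi)}\varphi_k(z)=P_{n+1}$, and dividing by $1-\overline{\xi}z$ (valid precisely when $\overline{\xi}z\neq 1$) produces the stated formula. I expect the main obstacle to be the bookkeeping in the single-step identity: one must conjugate the recursions correctly at $\xi$ and then verify that exactly the two cross terms cancel, so that the common factor $1-|\alpha_n|^2=\rho_n^2$ can be pulled out and cancelled against $\rho_n^{-2}$. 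Everything else is formal telescoping and the trivial evaluation $P_0=0$.
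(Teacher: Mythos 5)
Your proof is correct. Note that the paper itself does not prove this theorem: it appears in the review section (Section 2.1) as a quoted standard result from Simon's OPUC volumes, so there is no in-paper argument to compare against. Your telescoping argument is the classical proof of the Christoffel--Darboux formula via the Szeg\H{o} recursion, and every step checks out: substituting (\ref{eq_szegorecursion_orthonormal}) and (\ref{eq_szegorecursion_orthonormal_reverse}) (and their conjugates at $\xi$) into $P_{n+1}$ does make the two cross terms $-\alpha_n z\,\overline{\varphi_n^*(\xi)}\varphi_n(z)$ and $-\overline{\alpha_n}\,\overline{\xi}\,\overline{\varphi_n(\xi)}\varphi_n^*(z)$ appear identically in both expanded products, so they cancel under subtraction; the surviving terms carry the common factor $1-|\alpha_n|^2=\rho_n^2$, which cancels the prefactor $\rho_n^{-2}$ and yields $P_{n+1}=\overline{\varphi_n^*(\xi)}\varphi_n^*(z)-\overline{\xi}z\,\overline{\varphi_n(\xi)}\varphi_n(z)$, hence $P_{n+1}-P_n=(1-\overline{\xi}z)\overline{\varphi_n(\xi)}\varphi_n(z)$. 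The base evaluation $P_0=0$ from $\varphi_0\equiv\varphi_0^*\equiv 1$ (valid because $\mu$ is a probability measure) and the final division by $1-\overline{\xi}z$ are both fine, and the algebra is valid for all complex $\xi,z$, matching the stated hypotheses. It is worth remarking, as the paper itself does after the theorem, that Szeg\H{o} historically ran this implication in the opposite direction, deriving the recursion from the Christoffel--Darboux formula; your direction is the one standard in modern treatments.
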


Some applications of the Christoffel-Darboux formula are documented in \cite{Simon2008}, which in fact discusses Christoffel-Darboux kernels in more generality. In particular, the Christoffel-Darboux kernel for OPUCs is defined by the left hand side of (\ref{eq_opuc_cd}). Another interesting fact is that in \cite{Szego1975}, Szeg\H{o} used the Christoffel-Darboux formula (\ref{eq_opuc_cd}) to prove the recursions (\ref{eq_szegorecursion_monic}) and (\ref{eq_szegorecursion_monic_reverse}).

\begin{definition}[Carath\'eodory function] \label{def_caratheodory_function}
	A \textit{Carath\'eodory function} is an analytic function \\ $F:\mathbb{D}\to\mathbb{C}$ such that $F(0)=1$ and $\Re{F(z)}>0$.
\end{definition}
Given a probability measure $\mu$ on $\partial\mathbb{D}$, one may easily verify that
\begin{equation} \label{eq_caratheodory_function_measure}
	F(z)=\int_{\partial\mathbb{D}}\dfrac{\xi+z}{\xi-z}\,d\mu(\xi)
\end{equation}
defines a Carath\'eodory function. Moreover, the Herglotz-Riesz representation theorem \cite{Herglotz1911,Riesz1911} implies that for every Carath\'eodory function $F$, there exists a unique probability measure $\mu$ on $\partial\mathbb{D}$ such that $F$ has a representation of this form.

By combining the \hyperref[thm_szego_recursion]{Szeg\H{o} recursion}, Verblunsky's theorem and the Herglotz-Riesz representation theorem \cite{Herglotz1911,Riesz1911}, one establishes one-to-one correspondences between
\begin{center}
	$\left\{\text{probability measures $\mu$ on $\partial\mathbb{D}$}\right\}$\\
	$\updownarrow$\\ $\left\{\text{sequences of monic OPUCs $\{\Phi_n\}_{n=0}^\infty$ }\right\}$\\
	$\updownarrow$\\ $\left\{\text{sequences of complex numbers $\{\alpha_n\}_{n=0}^\infty$ in $\mathbb{D}$}\right\}$\\
	$\updownarrow$\\ $\left\{\text{Carath\'eodory functions $F$}\right\}$.
\end{center}

	\subsection{Second Kind Polynomials}
	
	We review selected results from \cite{Golinskii2001}, which can also be found in Section 3.2 of \cite{Simon2005a}.
	
	Let $\mu$ be a probability measure on $\partial\mathbb{D}$ and $\{\alpha_n(\mu)\}_{n=0}^\infty\subseteq\mathbb{D}$ be its Verblunsky coefficients. The \textit{second kind measure} $\mu_{-1}$ associated to $\mu$ is defined by
	\begin{equation} \label{eq_second_kind_measure}
		\alpha_n(\mu_{-1})=-\alpha_n(\mu).
	\end{equation}
	The (monic) \textit{second kind polynomials} $\{\Psi_n(z,\mu)\}_{n=0}^\infty$ associated to $\mu$ are defined to be the monic OPUCs associated to the second kind measure $\mu_{-1}$:
	\begin{equation} \label{eq_opuc_second_kind}
		\Psi_n(z,\mu)=\Phi_n(z,\mu_{-1}).
	\end{equation}	
	The \textit{second kind orthonormal polynomials} $\{\psi_n(z)\}_{n=0}^\infty$ are given by
	\begin{equation}
		\psi_n(z)=\dfrac{\Psi_n(z)}{\norm{\Psi_n(z)}}.
	\end{equation}
	\hyperref[thm_szego_recursion]{Szeg\H{o}'s recursion} for the orthonormal second kind OPUCs is given by
	\begin{align}
		\psi_{n+1}(z)&=\rho_n^{-1}\Bigl(z\psi_n(z)-\left(\overline{-\alpha_n}\right)\psi_n^*(z)\Bigr),\\
		\psi_{n+1}^*(z)&=\rho_n^{-1}\Bigl(\psi_n^*(z)-\left(-\alpha_n\right)z\psi_n(z)\Bigr),
	\end{align}
	in matrix form by
	\begin{equation} \label{eq_szegorecursion_orthonormal_lambda}
		\begin{bmatrix}
			\psi_{n+1}(z) \\ -\psi_{n+1}^*(z)
		\end{bmatrix}=A_n(z)\begin{bmatrix}
			\psi_n(z) \\ -\psi_n^*(z)
		\end{bmatrix},
	\end{equation}
	where $A_n(z)$ is given by (\ref{eq_szegorecursion_opuc_An}). 
	
	One immediately observes that the standard OPUCs $\begin{bmatrix}
		\varphi_n(z) \\ \varphi_n^*(z)
	\end{bmatrix}$ and the second kind polynomials $\begin{bmatrix}
		\psi_n(z) \\ -\psi_n^*(z)
	\end{bmatrix}$ satisfy the \textit{Szeg\H{o} difference equation}
	\begin{equation} \label{eq_opuc_lambda_differenceeq}
		u_{n+1}(z)=A_n(z)u_n(z)=\dfrac{1}{\rho_n}\begin{bmatrix}
			z & -\overline{\alpha_n} \\ -\alpha_n z & 1
		\end{bmatrix}u_n(z).
	\end{equation}
	with initial conditions $u_0(z)=\begin{bmatrix}
	1 \\ 1
	\end{bmatrix}$
	and  
	$u_0(z)=\begin{bmatrix}
		1 \\ -1
	\end{bmatrix}$,
	respectively.
	
	One may discuss, in more generality, the Aleksandrov family $\{\mu_\lambda\}_{\lambda\in\partial\mathbb{D}}$ of measures \cite{Aleksandrov1987} associated to $\mu$, which satisfy 
	\begin{equation}
		\alpha_n(\mu_\lambda)=\lambda\alpha_n(\mu),
	\end{equation}
 	and their associated OPUCs $\{\varphi_n^\lambda\}_{n=0}^\infty$ with boundary condition $\lambda$, which are defined similarly. Golinskii and Nevai
 	\cite{Golinskii2001} drew the links between the Aleksandrov family and OPUC theory. Since we are primarily interested in the boundary condition $\lambda=-1$, we do not pursue this direction in further detail.
 	
 	\begin{example}
 		Consider the probability measures $\mu$ and $\nu$ on $\partial\mathbb{D}$ given by
 		$$d\mu=(1-\cos\theta)\dfrac{d\theta}{2\pi},\qquad
 		d\nu=\dfrac{1}{2}\dfrac{d\theta}{2\pi}+\dfrac{1}{2}d\delta_0(\theta),$$
 		where $d\theta$ represents the Lebesgue measure on $\mathbb{R}$ and $\delta_0$ represents the Dirac delta measure centred at $\theta=0$ (equivalently $z=1$). Their Verblunsky coefficients satisfy
 		\begin{equation}
 			\alpha_n(\mu)=-\dfrac{1}{n+2}=-\alpha_n(\nu),
 		\end{equation}
 		so $\nu=\mu_{-1}$ and $\mu=\nu_{-1}$, that is, $\mu$ and $\nu$ are, respectively, the second kind measures of each other. Their corresponding orthonormal polynomials, or equivalently, the second kind polynomials of their respective counterparts, are given by
 		\begin{gather}
 			\varphi_n(z,\mu)=\psi_n(z,\nu)=\left(\dfrac{2}{(n+1)(n+2)}\right)^{\frac{1}{2}}\sum_{k=0}^n(k+1)z^k,\\
 			\varphi_n(z,\nu)=\psi_n(z,\mu)=\left(1-\dfrac{1}{2}n\alpha_{n-1}\right)^{-\frac{1}{2}}\Big(z^n-\alpha_{n-1}(z^{n-1}+z^{n-2}+\cdots+1)\Big),
 		\end{gather}
 		respectively. The reader is referred to \cite[Examples 1.6.3 and 1.6.4]{Simon2005a} for detailed computations.
 	\end{example}
	
	\begin{proposition}\cite[Proposition 3.2.2]{Simon2005a}
		The second kind polynomials satisfy
		\begin{align}
			\label{eq_opuc_with_boundary_condition_monic_z^n}
			\Psi_n^*(z)\Phi_n(z)+\Phi_n^*(z)\Psi_n(z)&=2z^n\prod_{j=0}^{n-1}\rho_j^2,\\
			\label{eq_opuc_with_boundary_condition_orthonormal_z^n}
			\psi_n^*(z)\varphi_n(z)+\varphi_n^*(z)\psi_n(z)&=2z^n.
		\end{align}
	\end{proposition}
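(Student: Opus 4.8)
The plan is to prove the monic identity (\ref{eq_opuc_with_boundary_condition_monic_z^n}) by induction on $n$, and then to deduce the orthonormal identity (\ref{eq_opuc_with_boundary_condition_orthonormal_z^n}) from it by dividing through by the appropriate norms.

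For the base case $n=0$, note that $\Phi_0=\Phi_0^*=\Psi_0=\Psi_0^*=1$, so the left-hand side of (\ref{eq_opuc_with_boundary_condition_monic_z^n}) equals $2$; this matches the right-hand side, whose product is empty. For the inductive step the crucial observation is that, by definition (\ref{eq_opuc_second_kind}), $\Psi_n=\Phi_n(\,\cdot\,,\mu_{-1})$, and $\mu_{-1}$ has Verblunsky coefficients $\alpha_n(\mu_{-1})=-\alpha_n$. Hence $\{\Psi_n\}$ satisfies the forward Szeg\H{o} recursion of Theorem \ref{thm_szego_recursion} with $-\alpha_n$ in place of $\alpha_n$:
\begin{align*}
	\Psi_{n+1}(z)&=z\Psi_n(z)+\overline{\alpha_n}\Psi_n^*(z),\\
	\Psi_{n+1}^*(z)&=\Psi_n^*(z)+\alpha_n z\Psi_n(z),
\end{align*}
while $\Phi_{n+1}$ and $\Phi_{n+1}^*$ obey the usual recursions (\ref{eq_szegorecursion_monic}) and (\ref{eq_szegorecursion_monic_reverse}).

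Next I would substitute these four expressions into $\Psi_{n+1}^*\Phi_{n+1}+\Phi_{n+1}^*\Psi_{n+1}$ and expand. The eight resulting products organise into three groups: the two terms proportional to $\overline{\alpha_n}\Phi_n^*\Psi_n^*$ cancel, the two terms proportional to $\alpha_n z^2\Phi_n\Psi_n$ cancel, and the remaining four collapse to $(1-|\alpha_n|^2)\,z\,(\Phi_n\Psi_n^*+\Phi_n^*\Psi_n)$. Since $\rho_n^2=1-|\alpha_n|^2$ and $\Phi_n\Psi_n^*+\Phi_n^*\Psi_n$ is precisely the left-hand side of (\ref{eq_opuc_with_boundary_condition_monic_z^n}) at stage $n$, the inductive hypothesis yields
$$\Psi_{n+1}^*\Phi_{n+1}+\Phi_{n+1}^*\Psi_{n+1}=\rho_n^2\,z\cdot 2z^n\prod_{j=0}^{n-1}\rho_j^2=2z^{n+1}\prod_{j=0}^{n}\rho_j^2,$$
which is exactly (\ref{eq_opuc_with_boundary_condition_monic_z^n}) for $n+1$, completing the induction.

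Finally, to pass to (\ref{eq_opuc_with_boundary_condition_orthonormal_z^n}) I would divide (\ref{eq_opuc_with_boundary_condition_monic_z^n}) by $\norm{\Phi_n}\norm{\Psi_n}$. Writing $\varphi_n=\Phi_n/\norm{\Phi_n}$ and $\varphi_n^*=\Phi_n^*/\norm{\Phi_n}$ (using $\norm{\Phi_n^*}=\norm{\Phi_n}$ from Theorem \ref{thm_opuc_reverse_orthogonal_to_z_z^n}), and likewise for $\psi_n,\psi_n^*$, the left-hand side becomes $\psi_n^*\varphi_n+\varphi_n^*\psi_n$. Moreover $\rho_j(\mu_{-1})=(1-|-\alpha_j|^2)^{1/2}=\rho_j$, so by (\ref{eq_opuc_norm}) both $\norm{\Phi_n}$ and $\norm{\Psi_n}$ equal $\prod_{j=0}^{n-1}\rho_j$; thus $\norm{\Phi_n}\norm{\Psi_n}=\prod_{j=0}^{n-1}\rho_j^2$ cancels the product on the right, leaving $2z^n$. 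The only delicate point in the whole argument is the sign flip induced by the $-\alpha_n$ substitution in the recursion for $\Psi_n$: it is exactly this flip that makes the mixed products cancel rather than reinforce, so the expansion in the inductive step is where care is needed, although it remains routine bookkeeping.
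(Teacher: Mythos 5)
Your argument is correct. The base case, the sign-flipped Szeg\H{o} recursion for $\Psi_n,\Psi_n^*$ coming from $\alpha_n(\mu_{-1})=-\alpha_n$, the cancellation pattern in the eight-term expansion (the $\overline{\alpha_n}\Phi_n^*\Psi_n^*$ and $\alpha_n z^2\Phi_n\Psi_n$ pairs cancel, the rest give $(1-|\alpha_n|^2)z$ times the stage-$n$ expression), and the normalization step using $\|\Psi_n\|_{L^2(\mu_{-1})}=\|\Phi_n\|_{L^2(\mu)}=\prod_{j=0}^{n-1}\rho_j$ all check out. Note that the paper itself offers no proof here: this proposition is quoted verbatim from \cite[Proposition 3.2.2]{Simon2005a}, so there is no in-paper argument to compare against. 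Your induction is the standard one; it is equivalent to observing that $\varphi_n\psi_n^*+\varphi_n^*\psi_n$ is, up to sign, the determinant of the matrix whose columns are the two solutions $\bigl(\varphi_n,\varphi_n^*\bigr)^T$ and $\bigl(\psi_n,-\psi_n^*\bigr)^T$ of the Szeg\H{o} difference equation (\ref{eq_opuc_lambda_differenceeq}), which gets multiplied by $\det A_n(z)=z$ at each step starting from the initial value $2$ --- a packaging worth knowing since it generalizes immediately to other boundary conditions $\lambda$.
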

	
	For $z\in\partial\mathbb{D}$, substituting (\ref{eq_opuc_reverse_map}) into (\ref{eq_opuc_with_boundary_condition_orthonormal_z^n}) gives
	\begin{align*}
		2z^n=z^n\cdot 2\Re{\overline{\psi_n(z)}\varphi_n(z)}
	\end{align*}
	and hence
	\begin{equation}
		\Re{\overline{\psi_n(z)}\varphi_n(z)}=1,
	\end{equation}
	which implies that
	\begin{equation}
		\left|\psi_n(z)\right|\left|\varphi_n(z)\right|\geq 1.
	\end{equation}
	
	\begin{theorem}[Mixed Christoffel-Darboux Formula] \label{thm_opuc_cd_formula_mixed}
		For any $n\in\mathbb{N}_0$ and any $\xi,z\in\mathbb{C}$ such that $\overline{\xi}{z}\neq 1$,
		\begin{equation} \label{eq_opuc_cd_mixed}
			\sum_{k=0}^n \overline{\varphi_k(\xi)}\psi_k(z)=\dfrac{2-\overline{\varphi_{n+1}^*(\xi)}\psi_{n+1}^*(z)-\overline{\varphi_{n+1}(\xi)}\psi_{n+1}(z)}{1-\overline{\xi}z}.
		\end{equation}
	\end{theorem}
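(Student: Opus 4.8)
The plan is to mimic the telescoping argument that establishes the ordinary Christoffel--Darboux formula (Theorem \ref{thm_opuc_cd_formula}), exploiting the crucial fact that the first kind vector $\begin{bmatrix}\varphi_n(z)\\ \varphi_n^*(z)\end{bmatrix}$ and the second kind vector $\begin{bmatrix}\psi_n(z)\\ -\psi_n^*(z)\end{bmatrix}$ solve the \emph{same} Szeg\H{o} difference equation (\ref{eq_opuc_lambda_differenceeq}) with the common transfer matrix $A_n(z)$ of (\ref{eq_szegorecursion_opuc_An}). I would track the mixed bilinear quantity
\begin{equation*}
T_n:=\begin{bmatrix}\overline{\varphi_n(\xi)} & \overline{\varphi_n^*(\xi)}\end{bmatrix}\begin{bmatrix}-1 & 0\\ 0 & 1\end{bmatrix}\begin{bmatrix}\psi_n(z)\\ -\psi_n^*(z)\end{bmatrix}=-\overline{\varphi_n(\xi)}\psi_n(z)-\overline{\varphi_n^*(\xi)}\psi_n^*(z),
\end{equation*}
chosen precisely so that $T_{n+1}$ is the negative of the two product terms appearing in the numerator of (\ref{eq_opuc_cd_mixed}).

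The engine of the argument is a bilinear identity for the Szeg\H{o} transfer matrix. Writing $J=\begin{bmatrix}-1 & 0\\ 0 & 1\end{bmatrix}$, a direct computation using $\rho_n^2=1-|\alpha_n|^2$ and the reality of $\rho_n$ yields
\begin{equation*}
\overline{A_n(\xi)}^{\,T}\,J\,A_n(z)=\begin{bmatrix}-\overline{\xi}z & 0\\ 0 & 1\end{bmatrix}.
\end{equation*}
Combining this with the row-vector form $\begin{bmatrix}\overline{\varphi_{n+1}(\xi)} & \overline{\varphi_{n+1}^*(\xi)}\end{bmatrix}=\begin{bmatrix}\overline{\varphi_n(\xi)} & \overline{\varphi_n^*(\xi)}\end{bmatrix}\overline{A_n(\xi)}^{\,T}$ and the second kind recursion (\ref{eq_szegorecursion_orthonormal_lambda}) collapses $T_{n+1}$ to $-\overline{\xi}z\,\overline{\varphi_n(\xi)}\psi_n(z)-\overline{\varphi_n^*(\xi)}\psi_n^*(z)$. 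After the $\overline{\varphi^*}\psi^*$ terms cancel, this leaves the telescoping difference
\begin{equation*}
T_{n+1}-T_n=(1-\overline{\xi}z)\,\overline{\varphi_n(\xi)}\psi_n(z).
\end{equation*}

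I would then sum over $k=0,\dots,n$ to obtain $T_{n+1}-T_0=(1-\overline{\xi}z)\sum_{k=0}^n\overline{\varphi_k(\xi)}\psi_k(z)$. Since $\varphi_0=\varphi_0^*=\psi_0=\psi_0^*\equiv 1$, the boundary value is $T_0=-2$; moving it across supplies exactly the additive constant $2$ in the numerator of (\ref{eq_opuc_cd_mixed}), and dividing by $1-\overline{\xi}z$ (valid since $\overline{\xi}z\neq 1$) completes the proof. The part requiring care---rather than any genuine difficulty---is the sign bookkeeping: the second kind polynomials enter through the vector $\begin{bmatrix}\psi_n\\ -\psi_n^*\end{bmatrix}$, and the indefinite metric $J=\diag{-1,1}$ must be matched to these signs so that the off-diagonal contributions vanish in the transfer identity while the surviving diagonal simultaneously produces the factor $1-\overline{\xi}z$ and the constant $2$. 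This is also where the argument must diverge from any attempt to read the result off Theorem \ref{thm_opuc_cd_formula} directly, since the left-hand side genuinely mixes the orthonormal polynomials of $\mu$ with the second kind polynomials---the OPUCs of $\mu_{-1}$---and so cannot be reduced to a single-measure kernel.
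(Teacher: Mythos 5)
Your proof is correct. Note that the paper itself states this theorem without proof in its review section, quoting it from Golinskii--Nevai and Section 3.2 of Simon's book, so there is no in-paper argument to compare against; but your argument is a clean version of the standard one. The central identity checks out: with $J=\begin{bmatrix}-1 & 0\\ 0 & 1\end{bmatrix}$ one computes
\begin{equation*}
\overline{A_n(\xi)}^{\,T} J A_n(z)=\rho_n^{-2}\begin{bmatrix}\overline{\xi} & -\overline{\alpha_n}\overline{\xi}\\ -\alpha_n & 1\end{bmatrix}\begin{bmatrix}-z & \overline{\alpha_n}\\ -\alpha_n z & 1\end{bmatrix}=\begin{bmatrix}-\overline{\xi}z & 0\\ 0 & 1\end{bmatrix},
\end{equation*}
using $\rho_n^2=1-|\alpha_n|^2$ and $\rho_n\in\mathbb{R}$, and the off-diagonal cancellation is exactly what makes the telescoping work. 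From there $T_{n+1}-T_n=(1-\overline{\xi}z)\,\overline{\varphi_n(\xi)}\psi_n(z)$, the boundary value $T_0=-2$ follows from $\varphi_0=\varphi_0^*=\psi_0=\psi_0^*\equiv 1$, and summing and dividing by $1-\overline{\xi}z$ gives (\ref{eq_opuc_cd_mixed}) exactly. Your closing remark is also apt: because the left-hand side mixes the OPUCs of $\mu$ with those of $\mu_{-1}$, the result cannot be obtained by specializing the single-measure kernel of Theorem \ref{thm_opuc_cd_formula}; the fact that both vectors satisfy the same difference equation (\ref{eq_opuc_lambda_differenceeq}) is the essential input.
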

	
	\begin{proposition}\cite[Proposition 3.2.8]{Simon2005a}
		For any $n\in\mathbb{N}_0$,
		\begin{align}
			\label{eq_opuc_second_kind_caratheodory}
			\psi_n(z)&=\int_{\partial\mathbb{D}}\left(\varphi_n(\xi)-\varphi_n(z)\right)\dfrac{\xi+z}{\xi-z}\,d\mu(\xi),\\
			\label{eq_opuc_reverse_second_kind_caratheodory}
			\psi_n^*(z)&=\int_{\partial\mathbb{D}}\left(\varphi_n^*(z)-\varphi_n^*(\xi)\right)\dfrac{\xi+z}{\xi-z}\,d\mu(\xi).
		\end{align}
	\end{proposition}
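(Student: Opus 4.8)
The plan is to establish the first identity (\ref{eq_opuc_second_kind_caratheodory}) directly and then deduce the reverse identity (\ref{eq_opuc_reverse_second_kind_caratheodory}) from it by applying the reverse map $R_n$ of (\ref{eq_opuc_reverse_map}). Writing $K(\xi,z)=\frac{\xi+z}{\xi-z}$ for the Carath\'eodory kernel, I set $G_n(z)=\int_{\partial\mathbb{D}}K(\xi,z)\varphi_n(\xi)\,d\mu(\xi)$, which is analytic on $\mathbb{D}$. Since $\int_{\partial\mathbb{D}}K(\xi,z)\,d\mu(\xi)=F(z)$ by (\ref{eq_caratheodory_function_measure}), the right-hand side of (\ref{eq_opuc_second_kind_caratheodory}) is exactly $I_n(z):=G_n(z)-F(z)\varphi_n(z)$. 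The key structural observation I would record first is that, because $\varphi_n$ has degree $n$, the divided difference $\frac{\varphi_n(\xi)-\varphi_n(z)}{\xi-z}$ is a polynomial of degree $n-1$ in each variable; hence after multiplying by $(\xi+z)$ and integrating $d\mu(\xi)$, the function $I_n$ is a genuine polynomial in $z$ of degree at most $n$, the non-polynomial part $F\varphi_n$ of $G_n$ having cancelled. The problem thus reduces to identifying this polynomial with the second kind polynomial $\psi_n$.

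To make that identification I would argue by induction on $n$ using the forward Szeg\H{o} recursions (\ref{eq_szegorecursion_orthonormal}) and (\ref{eq_szegorecursion_orthonormal_reverse}). Substituting the recursion for $\varphi_{n+1}$ under the integral sign, together with the elementary kernel identity $K(\xi,z)(\xi-z)=\xi+z$ and the splitting $\xi\varphi_n(\xi)-z\varphi_n(z)=z(\varphi_n(\xi)-\varphi_n(z))+(\xi-z)\varphi_n(\xi)$, produces a recursion expressing $I_{n+1}$ in terms of $I_n$, the corresponding reverse integral, and a boundary term $\int_{\partial\mathbb{D}}(\xi+z)\varphi_n(\xi)\,d\mu(\xi)$. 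This last integral I would evaluate using $\int_{\partial\mathbb{D}}\varphi_n\,d\mu=0$ (for $n\geq 1$) and $\int_{\partial\mathbb{D}}\xi\varphi_n(\xi)\,d\mu=\overline{\alpha_n}\prod_{j=0}^{n-1}\rho_j$, the latter following from rearranging (\ref{eq_szegorecursion_orthonormal}) into $\xi\varphi_n=\rho_n\varphi_{n+1}+\overline{\alpha_n}\varphi_n^*$ and the norm identity $\int_{\partial\mathbb{D}}\varphi_n^*\,d\mu=\prod_{j=0}^{n-1}\rho_j$ supplied by Theorem \ref{thm_opuc_reverse_orthogonal_to_z_z^n} together with (\ref{eq_opuc_norm}). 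Matching the resulting recursion against the second kind recursion (\ref{eq_szegorecursion_orthonormal_lambda}) — the Szeg\H{o} recursion with $\alpha_n$ replaced by $-\alpha_n$ — and checking the base case then yields $I_n=\psi_n$. Equivalently, one may phrase the whole argument as verifying that the vector assembled from the two integral transforms solves the Szeg\H{o} difference equation (\ref{eq_opuc_lambda_differenceeq}) with the second kind initial data.

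For the reverse identity (\ref{eq_opuc_reverse_second_kind_caratheodory}) I would apply $R_n$ to the formula just proved, using its anti-unitarity and the behaviour of the data under inversion: for $\xi\in\partial\mathbb{D}$ one has $\overline{K(\xi,1/\overline{z})}=-K(\xi,z)$ and $\overline{\varphi_n(\xi)}=\xi^{-n}\varphi_n^*(\xi)$, so conjugating (\ref{eq_opuc_second_kind_caratheodory}) and multiplying by $z^n$ transports it into an integral expression for $\psi_n^*$. I expect this step to be the main obstacle: the reverse map turns the clean factor $\varphi_n(\xi)-\varphi_n(z)$ into one carrying the weights $\xi^{-n}$ and $z^{-n}$, so the bookkeeping of the resulting constant and lowest-order terms must be carried out carefully, again leaning on $\int_{\partial\mathbb{D}}\varphi_n^*\,d\mu=\prod_{j=0}^{n-1}\rho_j$ to reorganise the integrand into the stated form. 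The endpoint $n=0$ also deserves separate attention, since the divided-difference argument degenerates there and the unstarred transform becomes singular in its leading behaviour; I would dispose of it by direct computation from $\varphi_0=\varphi_0^*=\psi_0=\psi_0^*=1$ and $F(0)=1$.
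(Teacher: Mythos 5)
The paper itself offers no proof of this proposition (it is quoted from \cite[Proposition 3.2.8]{Simon2005a}), so I can only assess your plan on its own terms. The overall shape — show that the two integral transforms solve the Szeg\H{o} difference equation with second kind data, then transport the first identity through $R_n$ — is the natural one, but carrying it out exposes a concrete gap in your induction, and that gap in turn reveals that the second displayed identity cannot be proved in the form stated. Write $K(\xi,z)=\tfrac{\xi+z}{\xi-z}$, $I_n(z)=\int(\varphi_n(\xi)-\varphi_n(z))K(\xi,z)\,d\mu(\xi)$ and $J_n(z)=\int(\varphi_n^*(\xi)-\varphi_n^*(z))K(\xi,z)\,d\mu(\xi)$, so the claim is $I_n=\psi_n$, $J_n=-\psi_n^*$. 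Your substitution gives $I_{n+1}=\rho_n^{-1}\bigl(zI_n-\overline{\alpha_n}J_n+B_n\bigr)$ and $J_{n+1}=\rho_n^{-1}\bigl(J_n-\alpha_n zI_n-\alpha_n B_n\bigr)$ with $B_n=\int(\xi+z)\varphi_n(\xi)\,d\mu(\xi)=\overline{\alpha_n}\prod_{j=0}^{n-1}\rho_j$ for $n\ge 1$, exactly as you computed. But this is \emph{not} the recursion (\ref{eq_szegorecursion_orthonormal_lambda}): the boundary term does not cancel, so the pair $(I_n,J_n)$ does not propagate like $(\psi_n,-\psi_n^*)$, and the base case fails as well ($(I_0,J_0)=(0,0)$, not $(1,-1)$). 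What does satisfy the Szeg\H{o} difference equation (\ref{eq_opuc_lambda_differenceeq}) is the shifted pair $(I_n,\,J_n-\pi_n)$ with $\pi_n=\prod_{j=0}^{n-1}\rho_j$, because $\pi_{n+1}=\rho_n\pi_n$ absorbs $B_n$; starting at $n=1$, where $I_1=\rho_0^{-1}(z+\overline{\alpha_0})=\psi_1$ and $J_1-\pi_1=-\rho_0^{-1}(1+\alpha_0 z)=-\psi_1^*$, one gets $I_n=\psi_n$ but $J_n=-\psi_n^*+\pi_n$ for all $n\ge 1$.

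The consequence is that (\ref{eq_opuc_reverse_second_kind_caratheodory}) as printed is false: the correct identity carries the extra constant, $\psi_n^*(z)=\int(\varphi_n^*(z)-\varphi_n^*(\xi))K(\xi,z)\,d\mu(\xi)+\prod_{j=0}^{n-1}\rho_j$. A one-line check: for $d\mu=d\theta/2\pi$ one has $\varphi_n^*=\psi_n^*=1$, so the stated right-hand side is $0$ while the left-hand side is $1$. The same example shows both formulas fail at $n=0$, so the proposition must be restricted to $n\ge 1$; your plan to ``dispose of $n=0$ by direct computation'' would in fact uncover this rather than settle it, and the restriction matters because your induction needs the (corrected) starred identity at step $n$ to advance the unstarred one to $n+1$. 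Your $R_n$-conjugation route for the starred identity hits the same wall for a structural reason you already half-identified: on $\partial\mathbb{D}$ one has $\overline{\varphi_n(\xi)}=\xi^{-n}\varphi_n^*(\xi)$, so applying $R_n$ to (\ref{eq_opuc_second_kind_caratheodory}) yields $\psi_n^*(z)=\int\bigl(\varphi_n^*(z)-z^n\xi^{-n}\varphi_n^*(\xi)\bigr)K(\xi,z)\,d\mu(\xi)$ — a true formula — and the weight $z^n\xi^{-n}$ cannot be dropped; removing it costs exactly the constant $\pi_n$. In short, your method does prove (\ref{eq_opuc_second_kind_caratheodory}) for $n\ge 1$ once the shifted pair is used, but the recursion does not close in the form you describe, and the second identity (hence also its rotated versions (\ref{eq_ropuc_reverse_second_kind_caratheodory}) and (\ref{eq_ropuc_reverse_second_kind_caratheodory_rewrite}) later in the paper) needs the additional term $\prod_{j=0}^{n-1}\rho_j$.
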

	In fact, Geronimus \cite{Geronimus1954} first defined the second kind polynomials by (\ref{eq_opuc_second_kind_caratheodory}), which Geronimo \cite{Geronimo1993} first used to prove the following theorem:
	
	\begin{theorem}[Geronimo \cite{Geronimo1993}, Golinskii-Nevai \cite{Golinskii2001}] \label{thm_geronimo_golinskii_nevai}
		Let $\mu$ be a probability measure on $\partial\mathbb{D}$, $z\in\mathbb{D}$ and $r\in\mathbb{C}$. Then the sequence
		\begin{equation}
			\left\{\begin{bmatrix}
				\psi_n(z,\mu) \\ -\psi_n^*(z,\mu)
			\end{bmatrix}+r\begin{bmatrix}
				\varphi_n(z,\mu) \\ \varphi_n^*(z,\mu)
			\end{bmatrix}\right\}_{n=0}^\infty\in\ell^2(\mathbb{N}_0,\mathbb{C}^2)
		\end{equation}
		if and only if $r=F(z)$, where $F$ is the Carath\'{e}odory function of $\mu$.
	\end{theorem}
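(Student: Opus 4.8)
The statement is a biconditional, so the plan is to prove the two implications separately. The analytic engine is the pair of Carath\'eodory integral representations of the second kind polynomials in (\ref{eq_opuc_second_kind_caratheodory})--(\ref{eq_opuc_reverse_second_kind_caratheodory}) for the ``if'' direction, and the Christoffel--Darboux formula (\ref{eq_opuc_cd}) for the ``only if'' direction. Throughout I write $u_n$ for the column vector with entries $\varphi_n(z),\varphi_n^*(z)$, write $v_n$ for the column vector with entries $\psi_n(z),-\psi_n^*(z)$, and set $g_z(\xi)=\dfrac{\xi+z}{\xi-z}$.

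\textbf{Sufficiency.} Fix $z\in\D$. Since $\supp{\mu}\subseteq\partial\D$ and $|z|<1$, one has $|\xi-z|\geq 1-|z|>0$ on $\partial\D$, so $g_z$ is bounded there and in particular $g_z\in L^2(\partial\D,\mu)$; moreover $F(z)=\int_{\partial\D}g_z\,d\mu$ by (\ref{eq_caratheodory_function_measure}). Substituting (\ref{eq_opuc_second_kind_caratheodory}) into the first entry of $v_n+F(z)u_n$, the two occurrences of $\varphi_n(z)F(z)$ cancel and the entry collapses to $\int_{\partial\D}\varphi_n(\xi)g_z(\xi)\,d\mu(\xi)$. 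Applying the reversal map (\ref{eq_opuc_reverse_map}) to this identity — equivalently using (\ref{eq_opuc_reverse_second_kind_caratheodory}) together with the on-circle identity $\varphi_n^*(\xi)=\xi^n\overline{\varphi_n(\xi)}$ — the second entry collapses to $z^n\ip{\varphi_n}{g_z}$. Because the convention (\ref{eq_L2_ip}) is conjugate-linear in the first slot, $\int\varphi_n g_z\,d\mu=\overline{\ip{\varphi_n}{\overline{g_z}}}$, so both entries are, up to the factor $z^n$ with $|z|^n\le 1$ and complex conjugation, Fourier coefficients of the fixed functions $g_z,\overline{g_z}\in L^2(\partial\D,\mu)$ against the orthonormal system $\{\varphi_n\}$. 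Bessel's inequality then gives $\sum_n|\ip{\varphi_n}{g_z}|^2\leq\norm{g_z}^2<\infty$ and likewise for $\overline{g_z}$, and the weights $|z|^{2n}\leq 1$ only improve convergence; summing the two entrywise square-sums shows $\{v_n+F(z)u_n\}\in\ell^2(\mathbb{N}_0)^2$. (Completeness of $\{\varphi_n\}$ is irrelevant, as Bessel holds for any orthonormal set.)

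\textbf{Necessity.} Suppose $\{v_n+ru_n\}\in\ell^2$ for some $r\in\C$. By the sufficiency direction $\{v_n+F(z)u_n\}\in\ell^2$ as well, so subtracting, $(r-F(z))\{u_n\}\in\ell^2$. If $r\neq F(z)$ this forces $\{u_n\}\in\ell^2$, i.e. both $\sum_n|\varphi_n(z)|^2<\infty$ and $\sum_n|\varphi_n^*(z)|^2<\infty$. I would contradict this with the Christoffel--Darboux formula (\ref{eq_opuc_cd}) specialized to $\xi=z$,
\begin{equation*}
	(1-|z|^2)\sum_{k=0}^n|\varphi_k(z)|^2=|\varphi_{n+1}^*(z)|^2-|\varphi_{n+1}(z)|^2 ,
\end{equation*}
valid since $|z|<1$: convergence of both series forces each term on the right-hand side to $0$ as $n\to\infty$, while the left-hand side tends to $(1-|z|^2)\sum_{k=0}^\infty|\varphi_k(z)|^2$; hence $\sum_k|\varphi_k(z)|^2=0$, contradicting $\varphi_0\equiv 1$. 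Therefore $r=F(z)$.

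\textbf{Main obstacle.} The delicate point is not $\ell^2$-membership per se but pinning the constant to \emph{exactly} $F(z)$. Conceptually, for $z\in\D$ the Szeg\H{o} difference equation (\ref{eq_opuc_lambda_differenceeq}) is in the limit-point case: its solution space is two-dimensional with nonvanishing Wronskian (the transfer matrices (\ref{eq_szegorecursion_opuc_An}) satisfy $\det A_n=z$, so $u_n$ and $v_n$ remain independent), yet the $\ell^2$ solutions form only a one-dimensional subspace. The Christoffel--Darboux computation above is precisely the self-contained way to certify that the ``free'' solution $u_n$ is \emph{not} $\ell^2$, which is what upgrades ``the $\ell^2$ space is one-dimensional'' into the exact value $r=F(z)$. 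A minor bookkeeping hazard is the conjugate-linear-in-the-first-slot convention (\ref{eq_L2_ip}): the integrals must be rewritten as genuine inner products $\ip{\varphi_n}{\,\cdot\,}$ before Bessel's inequality is applied.
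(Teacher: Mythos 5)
Your proof is correct, and it is worth noting that the paper does not actually prove this statement: it is quoted from Geronimo and Golinskii--Nevai, and the only internal argument touching it is the proof of the rotated analogue, Theorem \ref{thm_geronimo_golinskii_nevai_rot}, whose sufficiency half simply invokes the present theorem and whose necessity half is the same linear-combination-plus-Christoffel--Darboux device you use (the paper applies (\ref{eq_opuc_cd}) to the $\psi$'s to conclude $|\psi_n^{\angle,*}(z)|^2\ge 1-|z|^2$, you apply it to the $\varphi$'s; both work because $\varphi_0=\psi_0=1$). Your sufficiency half is essentially Geronimo's original route --- the integral identity $\psi_n+F\varphi_n=\int_{\partial\mathbb{D}}\varphi_n(\xi)g_z(\xi)\,d\mu(\xi)$ followed by Bessel's inequality --- which is a genuine, self-contained addition relative to the paper; the alternative the paper alludes to (Golinskii--Nevai) would instead run the estimate through the mixed Christoffel--Darboux formula (\ref{eq_opuc_cd_mixed}).

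One caution on the second entry. The displayed identity (\ref{eq_opuc_reverse_second_kind_caratheodory}) cannot be taken at face value: for Lebesgue measure one has $\varphi_n^*\equiv\psi_n^*\equiv 1$, so its right-hand side vanishes while its left-hand side equals $1$; the correct identity carries an extra factor $(z/\xi)^n$ multiplying $\varphi_n^*(\xi)$ inside the integral. Literal substitution of (\ref{eq_opuc_reverse_second_kind_caratheodory}) would give $-\psi_n^*(z)+F(z)\varphi_n^*(z)=\int_{\partial\mathbb{D}}\varphi_n^*(\xi)g_z(\xi)\,d\mu(\xi)=\ip{\varphi_n}{\xi^n g_z}$, to which Bessel's inequality does not apply, since the test function varies with $n$ (equivalently, the $\varphi_n^*$ are not mutually orthogonal, cf.\ (\ref{eq_ropuc_ip_j*k*})). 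Your primary derivation --- applying $R_n$ to the first-entry identity and using $g_{1/\bar z}=-\overline{g_z}$ on $\partial\mathbb{D}$ --- lands exactly on the correct $z^n\ip{\varphi_n}{g_z}$, so the argument stands; just do not lean on the printed (\ref{eq_opuc_reverse_second_kind_caratheodory}) as the ``equivalent'' justification. (Likewise (\ref{eq_opuc_second_kind_caratheodory}) holds only for $n\ge 1$; the $n=0$ term is of course harmless for $\ell^2$ membership.)
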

	Hence $\left\{\begin{bmatrix}
		\psi_n(z,\mu) \\ -\psi_n^*(z,\mu)
	\end{bmatrix}+r\begin{bmatrix}
		\varphi_n(z,\mu) \\ \varphi_n^*(z,\mu)
	\end{bmatrix}\right\}_{n=0}^\infty$ is a $\ell^2$-solution of the difference equation (\ref{eq_opuc_lambda_differenceeq}), called the Weyl solution, for the reason that the Carath\'{e}odory function $F$ plays an analogous role to the Weyl $m$-function in OPRL.
	
	While Theorem \ref{thm_geronimo_golinskii_nevai} first appeared in \cite{Geronimo1993}, Golinskii-Nevai provided a second proof in \cite{Golinskii2001} using the Christoffel-Darboux formulae (\ref{eq_opuc_cd}) and (\ref{eq_opuc_cd_mixed}).

	\subsection{Gesztesy-Zinchenko Formulation for Extended CMV Operators}
	
	Due to different conventions used here compared to those in \cite{Gesztesy2006}, for the reader's convenience, here we review and translate relevant results from \cite{Gesztesy2006} into our notations. The reader is gently reminded that our $\alpha_n$'s are $-\overline{\alpha_n}$'s in \cite{Gesztesy2006}, and the indices in (\ref{eq_Tn_GZ_CMV}) differ by one: our $n+1$ is $n$ in \cite{Gesztesy2006}.
	
	In developing Weyl-Titchmarsh theory for CMV operators in \cite{Gesztesy2006}, Fritz Gesztesy and Maxim Zinchenko discovered that by studying the solutions $f(z)=\{f_n(z)\}_{n\in\mathbb{Z}}\in\ell^2(\mathbb{Z})$ of the equation $$\mathcal{E}f(z)=zf(z)$$
	together with another sequence $g(z)=\{g_n(z)\}_{n\in\mathbb{Z}}\in\ell^2(\mathbb{Z})$ gives elegant symmetries when coupled appropriately, producing the said Gesztesy-Zinchenko transfer matrix in the process. The result is stated precisely in the following theorem: 
	\begin{theorem}\cite[Lemma 2.2]{Gesztesy2006} \label{thm_cmv_extended_GZ_lem2.2}
		Let $z\in\mathbb{C}-\{0\}$ and $f=\{f_n(z)\}_{n\in\mathbb{Z}}$, $g=\{g_n(z)\}_{n\in\mathbb{Z}}$ be sequences of complex functions. Define the unitary operator $\mathcal{U}$ on $\ell^2(\mathbb{Z})^2$ by
		\begin{equation}
			\mathcal{U}=\begin{bmatrix}
				\mathcal{E} & 0 \\ 0 & \tilde{\mathcal{E}}
			\end{bmatrix}
			=\begin{bmatrix}
				\tilde{\mathcal{L}}\tilde{\mathcal{M}} & 0 \\ 
				0 & \tilde{\mathcal{M}}\tilde{\mathcal{L}}
			\end{bmatrix}
			=\begin{bmatrix}
				0 & \tilde{\mathcal{L}} \\ \tilde{\mathcal{M}} & 0
			\end{bmatrix}^2.
		\end{equation}
		Then the following statements are equivalent:
		\begin{enumerate}
			\item $(\mathcal{E} f)(z)=zf(z)$ and $(\tilde{\mathcal{M}} f)(z)=zg(z)$;
			
			\item $(\tilde{\mathcal{E}} g)(z)=zg(z)$ and $(\tilde{\mathcal{L}} g)(z)=f(z)$;
			
			\item $(\tilde{\mathcal{M}} f)(z)=zg(z)$ and $(\tilde{\mathcal{L}} g)(z)=f(z)$;
			
			\item $\mathcal{U}\begin{bmatrix}
				f(z) \\ g(z)
			\end{bmatrix}=z\begin{bmatrix}
				f(z) \\ g(z)
			\end{bmatrix}$ and $(\tilde{\mathcal{M}}f)(z)=zg(z)$;
			
			\item $\mathcal{U}\begin{bmatrix}
				f(z) \\ g(z)
			\end{bmatrix}=z\begin{bmatrix}
				f(z) \\ g(z)
			\end{bmatrix}$ and $(\tilde{\mathcal{L}}g)(z)=f(z)$;
			
			\item for each $n\in\mathbb{Z}$,
			\begin{equation} \label{eq_GZ_transition_cmv}
				\begin{bmatrix}
					f_{n+1}(z) \\ g_{n+1}(z)
				\end{bmatrix}=T_{n+1}(z)\begin{bmatrix}
					f_n(z) \\ g_n(z)
				\end{bmatrix},
			\end{equation}
			where
			\begin{equation} \label{eq_Tn_GZ_CMV}
				T_n(z)=\begin{cases}
					\dfrac{1}{\rho_n}\begin{bmatrix}
						-\overline{\alpha_n} & z \\ z^{-1} & -\alpha_n
					\end{bmatrix} & n\text{ even};\\
					~\\
					\dfrac{1}{\rho_n}\begin{bmatrix}
						-\alpha_n & 1 \\ 1 & -\overline{\alpha_n}
					\end{bmatrix} & n\text{ odd}.
				\end{cases}
			\end{equation}
		\end{enumerate}
	\end{theorem}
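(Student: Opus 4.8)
The plan is to treat statement (3), namely the pair of coupling relations $\tilde{\mathcal{M}}f=zg$ and $\tilde{\mathcal{L}}g=f$, as a central hub: I would first show that the operator-theoretic statements (1), (2), (4), (5) are each equivalent to (3) by elementary manipulations of the factorisations $\mathcal{E}=\tilde{\mathcal{L}}\tilde{\mathcal{M}}$ and $\tilde{\mathcal{E}}=\tilde{\mathcal{M}}\tilde{\mathcal{L}}$, and then establish $(3)\Leftrightarrow(6)$ by unfolding the block-diagonal structure of $\tilde{\mathcal{L}}$ and $\tilde{\mathcal{M}}$ coordinatewise. The hypothesis $z\neq 0$ is used precisely to cancel a factor of $z$ in the first half, and the relation $\rho_n^2+|\alpha_n|^2=1$ is used to collapse the $2\times 2$ block inversions in the second half.

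For the first half, suppose (3) holds. Then $\mathcal{E}f=\tilde{\mathcal{L}}\tilde{\mathcal{M}}f=\tilde{\mathcal{L}}(zg)=z\tilde{\mathcal{L}}g=zf$ and $\tilde{\mathcal{E}}g=\tilde{\mathcal{M}}\tilde{\mathcal{L}}g=\tilde{\mathcal{M}}f=zg$, so both eigenvalue equations hold; since (3) also supplies both coupling relations, each of (1), (2), (4), (5) follows at once. Conversely, from $\mathcal{E}f=zf$ together with $\tilde{\mathcal{M}}f=zg$ one computes $z\tilde{\mathcal{L}}g=\tilde{\mathcal{L}}(zg)=\tilde{\mathcal{L}}\tilde{\mathcal{M}}f=zf$, and cancelling $z\neq 0$ recovers $\tilde{\mathcal{L}}g=f$, giving $(1)\Rightarrow(3)$; the symmetric computation from $\tilde{\mathcal{E}}g=zg$ and $\tilde{\mathcal{L}}g=f$ gives $(2)\Rightarrow(3)$. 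Finally, reading $\mathcal{U}$ through its block-diagonal form shows that the eigenvalue equation for $\mathcal{U}$ is exactly the conjunction $\mathcal{E}f=zf$ and $\tilde{\mathcal{E}}g=zg$, so (4) contains (1) and (5) contains (2) as sub-statements; hence $(4)\Rightarrow(1)\Rightarrow(3)$ and $(5)\Rightarrow(2)\Rightarrow(3)$, closing the loop among (1)--(5).

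The substantive step is $(3)\Leftrightarrow(6)$. Here I would write out $\tilde{\mathcal{L}}g=f$ and $\tilde{\mathcal{M}}f=zg$ block by block, using that $\tilde{\mathcal{L}}$ consists of the blocks $\Theta_{2k}$ acting on the coordinate pairs $\{2k,2k+1\}$ and $\tilde{\mathcal{M}}$ of the blocks $\Theta_{2k+1}$ acting on $\{2k+1,2k+2\}$. For the $\tilde{\mathcal{L}}$-block this reads $f_{2k}=\overline{\alpha_{2k}}g_{2k}+\rho_{2k}g_{2k+1}$ and $f_{2k+1}=\rho_{2k}g_{2k}-\alpha_{2k}g_{2k+1}$; solving the first for $g_{2k+1}$, substituting into the second, and simplifying with $\rho_{2k}^2+|\alpha_{2k}|^2=1$ produces a relation of the form $\begin{bmatrix}f_{2k+1}\\g_{2k+1}\end{bmatrix}=\frac{1}{\rho_{2k}}\begin{bmatrix}-\alpha_{2k}&1\\1&-\overline{\alpha_{2k}}\end{bmatrix}\begin{bmatrix}f_{2k}\\g_{2k}\end{bmatrix}$, which has precisely the form of the odd-index case of (\ref{eq_Tn_GZ_CMV}). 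The identical manipulation on the $\tilde{\mathcal{M}}$-block yields $\begin{bmatrix}f_{2k+2}\\g_{2k+2}\end{bmatrix}=\frac{1}{\rho_{2k+1}}\begin{bmatrix}-\overline{\alpha_{2k+1}}&z\\z^{-1}&-\alpha_{2k+1}\end{bmatrix}\begin{bmatrix}f_{2k+1}\\g_{2k+1}\end{bmatrix}$, the even-index form. Since each $n\in\Z$ has exactly one parity, these two families of one-step relations assemble into the single recurrence (6) for every $n$; and because each block is invertible (equivalently $\det\Theta_n=-1$, using $\rho_n\neq 0$), reading (6) off for even and odd $n$ and re-expanding reconstructs the two block equations, hence (3).

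I expect the main obstacle to be index bookkeeping rather than any analytic difficulty. One must track carefully which factor ($\tilde{\mathcal{L}}$ versus $\tilde{\mathcal{M}}$) and which block $\Theta_m$ controls each parity transition, and then reconcile the parameter index emerging from the block computation with the subscript on $T_{n+1}$ in (6); this is exactly the one-step index shift flagged in the remarks on conventions (our $n+1$ being the index $n$ of \cite{Gesztesy2006}), and mishandling it silently interchanges the two cases of $T_n$ or displaces every Verblunsky coefficient by one. A secondary point to state cleanly is that the factor $\rho_n\neq 0$, guaranteed by $\alpha_n\in\D$, is what makes each $\Theta_n$-block solvable and each $T_n$ well defined, so that the forward and backward directions of $(3)\Leftrightarrow(6)$ are genuine equivalences and not merely implications.
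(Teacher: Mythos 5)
Your plan coincides with the proof the paper actually gives (it proves the rotated analogue, Theorem \ref{thm_rcmv_extended_GZ_lem2.2}, by exactly this scheme, and the present statement is the special case $\zeta_n\equiv 1$): statement (3) is the hub, (1), (2), (4), (5) are reduced to it by the same one-line manipulations of $\mathcal{E}=\tilde{\mathcal{L}}\tilde{\mathcal{M}}$ and $\tilde{\mathcal{E}}=\tilde{\mathcal{M}}\tilde{\mathcal{L}}$ with a single cancellation of $z\neq 0$, and $(3)\Leftrightarrow(6)$ is obtained by unfolding the $\Theta$-blocks by parity. The operator-algebra half is complete and correct, as are your block eliminations, $\det\Theta_n=-1$, and the observation that $\rho_n\neq 0$ is what makes both directions of $(3)\Leftrightarrow(6)$ run.

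The one item you cannot leave as ``bookkeeping to be reconciled'' is the coefficient subscript in the last step, because as your two displayed matrices stand, the recurrence you derive is \emph{not} the recurrence (6) with $T_{n+1}$ given by (\ref{eq_Tn_GZ_CMV}). With the block placement you use --- $\Theta_{2k}$ on coordinates $\{2k,2k+1\}$ and $\Theta_{2k+1}$ on $\{2k+1,2k+2\}$, which is the placement forced by $\mathcal{M}=1\oplus\Theta_1\oplus\Theta_3\oplus\cdots$ and by $\mathcal{E}_{00}=-\overline{\alpha_0}\alpha_{-1}$ --- the one-step matrix carrying $(f_n,g_n)$ to $(f_{n+1},g_{n+1})$ comes out with $\alpha_n,\rho_n$, the \emph{source} index, exactly as you computed. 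The matrix $T_{n+1}$ of (\ref{eq_Tn_GZ_CMV}) read literally carries $\alpha_{n+1},\rho_{n+1}$, the \emph{destination} index. The shapes agree (constant entries when $n$ is even, $z$ and $z^{-1}$ when $n$ is odd), but the two recurrences differ by a unit shift of every Verblunsky coefficient, so ``has precisely the form of'' is not yet ``equals''. This is exactly the Gesztesy--Zinchenko versus Simon relabeling: in \cite{Gesztesy2006} the block with subscript $m$ sits on coordinates $\{m-1,m\}$, so their transfer matrix naturally carries the destination index, and the paper's Tables \ref{table_GZ_cmv_even}--\ref{table_GZ_cmv_odd} (where the step out of $K$, with $\rho_K=0$, must divide by $\rho_{K+1}$ and not $\rho_K$) are written in that convention. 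To close the proof you must commit to one block placement, verify it against a matrix entry of $\mathcal{E}$, and then either rederive your one-step relations in the destination-index form or record explicitly that your $T_{n+1}$ means $\rho_n^{-1}(\cdots\alpha_n\cdots)$; until you do, the identification of your derived recurrence with (\ref{eq_Tn_GZ_CMV}) is the one genuinely unestablished claim in the proposal.
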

	$T_n$ is called the \textit{Gesztesy-Zinchenko transfer matrix} associated to $\mathcal{E}$.
	
	If there exists $K\in\mathbb{Z}$ such that $\alpha_K=e^{is}\in\partial\mathbb{D}$ for some $s\in(-\pi,\pi]$, then $\rho_K=0$, then $\mathcal{E}$ can be expressed as a direct sum
	\begin{equation}
		\mathcal{E}=\mathcal{E}_{K^-}(s)\oplus\mathcal{E}_{K^+}(s),
	\end{equation}
	where $\mathcal{E}_{K^-}(s)$ and $\mathcal{E}_{K^+}(s)$ are operators on $\ell^2(\mathbb{Z}\cap\left(-\infty,K]\right)$ and $\ell^2\left(\mathbb{Z}\cap[K+1,\infty)\right)$, respectively. One may check that $\mathcal{E}_{K^+}$ looks almost like $\mathcal{C}$, whereas $\mathcal{E}_{K^-}$ also looks almost like $\mathcal{C}$ but extending in the opposite direction. Hence following a similar argument to the proof of the $\mathcal{LM}$-decomposition of CMV matrices, one may show that $\mathcal{E}_{K^\pm}(s)$ also admit $\mathcal{L}\mathcal{M}$-decompositions
	\begin{equation} \label{eq_cmv_split}
		\mathcal{E}_{K^\pm}(s)=\mathcal{L}_{K^\pm}(s)\mathcal{M}_{K^\pm}(s).
	\end{equation}
	We are particularly interested in the case where $s=\pi$, which gives $\alpha_K=-1$ following \cite{Simon2005a}'s convention. Hence one may simply verify that $\mathcal{E}_{K^+}(\pi)$ looks exactly like $\mathcal{C}$. Subsequently, we shall simply denote $\mathcal{E}_{K^\pm}(\pi)$ by $\mathcal{E}_{K^\pm}$. The operators $\mathcal{L}_{K^\pm}$ and $\mathcal{M}_{K^\pm}$ are also defined similarly. 
	
	One can also prove analogues of Theorem \ref{thm_cmv_extended_GZ_lem2.2} for the half-lattice operators $\mathcal{E}_{K^+}$ and $\mathcal{E}_{K^-}$:
	
	\begin{theorem} \cite[Lemma 2.3]{Gesztesy2006} \label{thm_cmv_extended_GZ_lem2.3_right}
		Let $z\in\mathbb{C}-\{0\}$ and $f_{K^+}=\{f_n(z)\}_{n=K+1}^\infty$, $g_{K^+}=\{g_n(z)\}_{n=K+1}^\infty$ be sequences of complex functions. Define the operator $\mathcal{U}_{K^+}$ on $\ell^2(\mathbb{Z}\cap[K+1,\infty))^2$ by
		\begin{equation}
			\mathcal{U}_{K^+}=\begin{bmatrix}
				\mathcal{E}_{K^+} & 0 \\ 0 & \tilde{\mathcal{E}}_{K^+}
			\end{bmatrix}
			=\begin{bmatrix}
				\tilde{\mathcal{L}}_{K^+}\tilde{\mathcal{M}}_{K^+} & 0 \\ 
				0 & \tilde{\mathcal{M}}_{K^+}\tilde{\mathcal{L}}_{K^+}
			\end{bmatrix}
			=\begin{bmatrix}
				0 & \tilde{\mathcal{L}}_{K^+} \\ \tilde{\mathcal{M}}_{K^+} & 0
			\end{bmatrix}^2,
		\end{equation}
		where $K\in\mathbb{Z}$ is such that $\alpha_K\in\partial\mathbb{D}$. Then the following statements are equivalent:
		\begin{enumerate}
			\item $(\mathcal{E}_{K^+} f_{K^+})(z)=zf_{K^+}(z)$ and $(\tilde{\mathcal{M}}_{K^+} f_{K^+})(z)=zg_{K^+}(z)$;
			
			\item $(\tilde{\mathcal{E}}_{K^+} g_{K^+})(z)=zg_{K^+}(z)$ and $(\tilde{\mathcal{L}}_{K^+} g_{K^+})(z)=f_{K^+}(z)$;
			
			\item $(\tilde{\mathcal{M}}_{K^+} f_{K^+})(z)=zg_{K^+}(z)$ and $(\tilde{\mathcal{L}}_{K^+} g_{K^+})(z)=f_{K^+}(z)$;
			
			\item $\mathcal{U}_{K^+}\begin{bmatrix}
				f_{K^+}(z) \\ g_{K^+}(z)
			\end{bmatrix}=z\begin{bmatrix}
				f_{K^+}(z) \\ g_{K^+}(z)
			\end{bmatrix}$ and $(\tilde{\mathcal{M}}_{K^+}f_{K^+})(z)=zg_{K^+}(z)$;
			
			\item $\mathcal{U}_{K^+}\begin{bmatrix}
				f_{K^+}(z) \\ g_{K^+}(z)
			\end{bmatrix}=z\begin{bmatrix}
				f_{K^+}(z) \\ g_{K^+}(z)
			\end{bmatrix}$ and $(\tilde{\mathcal{L}}_{K^+}g_{K^+})(z)=f_{K^+}(z)$;
			
			\item for each $n\in\mathbb{Z}\cap[K+1,\infty)$,
			\begin{equation} \label{eq_GZ_transition_cmv_+}
				\begin{bmatrix}
					f_{n+1}(z) \\ g_{n+1}(z)
				\end{bmatrix}=T_{n+1}(z)\begin{bmatrix}
					f_n(z) \\ g_n(z)
				\end{bmatrix},
			\end{equation}
			where $T_n(z)$ is given by (\ref{eq_Tn_GZ_CMV}) and
			\begin{equation} \label{eq_GZ_CMV_fK}
				f_{K+1}(z)=\begin{cases}
					zg_{K+1}(z) & K \text{ even}; \\ g_{K+1}(z) & K \text{ odd}.
				\end{cases}
			\end{equation}
		\end{enumerate}
	\end{theorem}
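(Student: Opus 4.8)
The plan is to mirror the proof of the full-lattice result (Theorem \ref{thm_cmv_extended_GZ_lem2.2}) verbatim for the equivalences that are purely algebraic, and then to isolate the single genuinely new feature of the half-lattice, namely the boundary relation (\ref{eq_GZ_CMV_fK}) at the endpoint $n=K+1$.

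First I would dispose of the equivalences among (1)--(5). These are formal consequences of the factorisations $\mathcal{E}_{K^+}=\tilde{\mathcal{L}}_{K^+}\tilde{\mathcal{M}}_{K^+}$ and $\tilde{\mathcal{E}}_{K^+}=\tilde{\mathcal{M}}_{K^+}\tilde{\mathcal{L}}_{K^+}$ together with the block-diagonal definition of $\mathcal{U}_{K^+}$, and they carry over unchanged from the full-lattice argument once every operator is restricted to $\ell^2(\mathbb{Z}\cap[K+1,\infty))$. For instance, granting $(\tilde{\mathcal{M}}_{K^+}f_{K^+})(z)=zg_{K^+}(z)$, the identity $\mathcal{E}_{K^+}f_{K^+}=\tilde{\mathcal{L}}_{K^+}\tilde{\mathcal{M}}_{K^+}f_{K^+}$ shows that $(\mathcal{E}_{K^+}f_{K^+})(z)=zf_{K^+}(z)$ holds iff $(\tilde{\mathcal{L}}_{K^+}g_{K^+})(z)=f_{K^+}(z)$, giving (1)$\Leftrightarrow$(3); the same computation run through $\tilde{\mathcal{E}}_{K^+}=\tilde{\mathcal{M}}_{K^+}\tilde{\mathcal{L}}_{K^+}$ gives (2)$\Leftrightarrow$(3); and since $\mathcal{U}_{K^+}$ is block-diagonal, its eigenvalue equation splits into the pair $\mathcal{E}_{K^+}f_{K^+}=zf_{K^+}$ and $\tilde{\mathcal{E}}_{K^+}g_{K^+}=zg_{K^+}$, collapsing (4) and (5) onto (1)--(3). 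That $\tilde{\mathcal{L}}_{K^+}$ and $\tilde{\mathcal{M}}_{K^+}$ are unitary—each being a direct sum of unitary blocks $\Theta_n$ together with one unimodular $1\times 1$ block $[-\alpha_K]$ (recall $|\alpha_K|=1$)—guarantees $\mathcal{U}_{K^+}$ is unitary, consistent with its square-root description.

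Next I would establish (3)$\Leftrightarrow$(6) on the interior. Writing the two operator equations $\tilde{\mathcal{M}}_{K^+}f=zg$ and $\tilde{\mathcal{L}}_{K^+}g=f$ componentwise through the $2\times 2$ blocks $\Theta_n$, and solving each resulting pair for the ``next'' components using the invertibility of $\Theta_n$ (valid since $\rho_n\neq 0$ for $n>K$) together with $\rho_n^2+|\alpha_n|^2=1$, produces exactly the recursion (\ref{eq_GZ_transition_cmv_+}) with $T_{n+1}$ as in (\ref{eq_Tn_GZ_CMV}). This elimination is identical to the full-lattice computation and need not be repeated.

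The genuinely new point, and the one I expect to be the main obstacle, is the boundary relation (\ref{eq_GZ_CMV_fK}). Because $\alpha_K\in\partial\mathbb{D}$ forces $\rho_K=0$, the straddling block $\Theta_K=\mathrm{diag}\left(\overline{\alpha_K},-\alpha_K\right)$ is diagonal; after the splitting $\mathcal{E}=\mathcal{E}_{K^-}\oplus\mathcal{E}_{K^+}$ its surviving $(K+1,K+1)$ entry $-\alpha_K$ becomes an isolated $1\times 1$ block of $\mathcal{E}_{K^+}$. Depending on the parity of $K$ this leftover block belongs either to $\tilde{\mathcal{L}}_{K^+}$ or to $\tilde{\mathcal{M}}_{K^+}$, so exactly one of the two operator equations in statement (3) degenerates at $n=K+1$ into a scalar identity, either $-\alpha_K g_{K+1}=f_{K+1}$ or $-\alpha_K f_{K+1}=zg_{K+1}$; substituting $\alpha_K=-1$ (the case $s=\pi$) yields the two cases of (\ref{eq_GZ_CMV_fK}). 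The delicate part is the bookkeeping: one must verify that the truncations defining $\tilde{\mathcal{L}}_{K^+}$ and $\tilde{\mathcal{M}}_{K^+}$ are compatible with $\mathcal{E}_{K^+}=\tilde{\mathcal{L}}_{K^+}\tilde{\mathcal{M}}_{K^+}$, track which parity of $K$ routes the leftover block into which operator so as to recover the precise matching in (\ref{eq_GZ_CMV_fK}), and confirm that imposing (\ref{eq_GZ_CMV_fK}) as the initial datum makes the transfer recursion (\ref{eq_GZ_transition_cmv_+}) \emph{equivalent} to the full operator equations rather than merely necessary—that is, that no constraint is silently dropped at the endpoint.
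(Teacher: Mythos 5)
Your proposal matches the approach the paper takes: the statement itself is only quoted from Gesztesy--Zinchenko, but the paper's proof of its rotated analogue (Theorem \ref{thm_rcmv_extended_GZ_lem2.3_right}) proceeds exactly as you describe --- the equivalences (1)--(5) and the interior part of (3)$\Leftrightarrow$(6) are carried over verbatim from the full-lattice argument, and only the boundary identity (\ref{eq_GZ_CMV_fK}) is obtained separately from the degenerate $1\times 1$ block left over from $\Theta_K$. The one piece of bookkeeping you defer is settled in the paper by the explicit direct sums (\ref{eq_rcmv_split_L+_even})--(\ref{eq_rcmv_split_M+_odd}): the leftover entry $-\alpha_K=1$ sits in $\mathcal{M}_{K^+}$ when $K$ is even (so $\tilde{\mathcal{M}}_{K^+}f_{K^+}=zg_{K^+}$ degenerates to $f_{K+1}=zg_{K+1}$) and in $\mathcal{L}_{K^+}$ when $K$ is odd (so $\tilde{\mathcal{L}}_{K^+}g_{K^+}=f_{K^+}$ degenerates to $f_{K+1}=g_{K+1}$), and your final concern about equivalence rather than mere necessity is addressed by the paper's remark that the implication (6)$\Rightarrow$(3) does not depend on the representation of $f_{K+1}$.
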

	
	\begin{theorem} \cite[Lemma 2.3]{Gesztesy2006}  \label{thm_cmv_extended_GZ_lem2.3_left}
		Let $z\in\mathbb{C}-\{0\}$ and $f_{K^-}=\{f_n(z)\}_{n=-\infty}^K$, $g_{K^-}=\{g_n(z)\}_{n=-\infty}^K$ be sequences of complex functions. Define the operator $\mathcal{U}_{K^-}$ on $\ell^2(\mathbb{Z}\cap(-\infty,K])^2$ by
		\begin{equation}
			\mathcal{U}_{K^-}=\begin{bmatrix}
				\mathcal{E}_{K^-} & 0 \\ 0 & \tilde{\mathcal{E}}_{K^-}
			\end{bmatrix}
			=\begin{bmatrix}
				\tilde{\mathcal{L}}_{K^-}\tilde{\mathcal{M}}_{K^-} & 0 \\ 
				0 & \tilde{\mathcal{M}}_{K^-}\tilde{\mathcal{L}}_{K^-}
			\end{bmatrix}
			=\begin{bmatrix}
				0 & \tilde{\mathcal{L}}_{K^-} \\ \tilde{\mathcal{M}}_{K^-} & 0
			\end{bmatrix}^2,
		\end{equation}
		where $K\in\mathbb{Z}$ is such that $\alpha_K\in\partial\mathbb{D}$. Then the following statements are equivalent:
		\begin{enumerate}
			\item $(\mathcal{E}_{K^-} f_{K^-})(z)=zf_{K^-}(z)$ and $(\tilde{\mathcal{M}}_{K^-} f_{K^-})(z)=zg_{K^-}(z)$;
			
			\item $(\tilde{\mathcal{E}}_{K^-} g_{K^-})(z)=zg_{K^-}(z)$ and $(\tilde{\mathcal{L}}_{K^+-} g_{K^-})(z)=f_{K^-}(z)$;
			
			\item $(\tilde{\mathcal{M}}_{K^-} f_{K^-})(z)=zg_{K^-}(z)$ and $(\tilde{\mathcal{L}}_{K^-} g_{K^-})(z)=f_{K^-}(z)$;
			
			\item $\mathcal{U}_{K^-}\begin{bmatrix}
				f_{K^-}(z) \\ g_{K^-}(z)
			\end{bmatrix}=z\begin{bmatrix}
				f_{K^-}(z) \\ g_{K^-}(z)
			\end{bmatrix}$ and $(\tilde{\mathcal{M}}_{K^-}f_{K^-})(z)=zg_{K^-}(z)$;
			
			\item $\mathcal{U}_{K^-}\begin{bmatrix}
				f_{K^-}(z) \\ g_{K^-}(z)
			\end{bmatrix}=z\begin{bmatrix}
				f_{K^-}(z) \\ g_{K^-}(z)
			\end{bmatrix}$ and $(\tilde{\mathcal{L}}_{K^-}g_{K^-})(z)=f_{K^-}(z)$;
			
			\item for each $n\in\mathbb{Z}\cap(-\infty,K]$,
			\begin{equation} \label{eq_GZ_transition_cmv_-}
				\begin{bmatrix}
					f_{n-1}(z) \\ g_{n-1}(z)
				\end{bmatrix}=T_{n}^{-1}(z)\begin{bmatrix}
					f_n(z) \\ g_n(z)
				\end{bmatrix},
			\end{equation}
			where $T_n(z)$ is given by (\ref{eq_Tn_GZ_CMV}) and
			\[f_K(z)=\begin{cases}
				-g_K(z) & K \text{ even}; \\ -zg_K(z) & K \text{ odd}.
			\end{cases}\]
		\end{enumerate}
	\end{theorem}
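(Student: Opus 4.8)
The plan is to follow the template of Theorem \ref{thm_cmv_extended_GZ_lem2.2} and its right half-lattice counterpart Theorem \ref{thm_cmv_extended_GZ_lem2.3_right}, transplanting the argument to the truncated lattice $\Z\cap(-\infty,K]$. The two structural facts I would lean on are the $\mathcal{LM}$-factorisation (\ref{eq_cmv_split}), namely $\mathcal{E}_{K^-}=\tilde{\mathcal{L}}_{K^-}\tilde{\mathcal{M}}_{K^-}$ and $\tilde{\mathcal{E}}_{K^-}=\tilde{\mathcal{M}}_{K^-}\tilde{\mathcal{L}}_{K^-}$, together with the unitarity (hence invertibility) of the factors $\tilde{\mathcal{L}}_{K^-}$ and $\tilde{\mathcal{M}}_{K^-}$, each being a direct sum of the unitary $2\times2$ blocks $\Theta_n$ plus a single degenerate $1\times1$ block at the endpoint $K$. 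The hypothesis $z\neq0$ will be used repeatedly to cancel factors of $z$.

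First I would dispatch the equivalences among (1)--(5), which are insensitive to the boundary. Taking the coupled pair $(\tilde{\mathcal{M}}_{K^-}f)(z)=zg(z)$ and $(\tilde{\mathcal{L}}_{K^-}g)(z)=f(z)$ as the hub (statement (3)), one direction is immediate: $\mathcal{E}_{K^-}f=\tilde{\mathcal{L}}_{K^-}\tilde{\mathcal{M}}_{K^-}f=\tilde{\mathcal{L}}_{K^-}(zg)=z\tilde{\mathcal{L}}_{K^-}g=zf$ gives (1), and symmetrically $\tilde{\mathcal{E}}_{K^-}g=\tilde{\mathcal{M}}_{K^-}\tilde{\mathcal{L}}_{K^-}g=\tilde{\mathcal{M}}_{K^-}f=zg$ gives (2). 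Conversely, (1) together with $(\tilde{\mathcal{M}}_{K^-}f)=zg$ yields $\tilde{\mathcal{L}}_{K^-}(zg)=zf$, and cancelling $z$ recovers $(\tilde{\mathcal{L}}_{K^-}g)=f$; the symmetric manipulation recovers (3) from (2). Statements (4) and (5) are then just the repackagings of (1) and (2) through the block-diagonal form $\mathcal{U}_{K^-}=\mathrm{diag}(\mathcal{E}_{K^-},\tilde{\mathcal{E}}_{K^-})$, and follow with no extra work.

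The substance is the equivalence (3)$\Leftrightarrow$(6). Writing the coupled relations entrywise and using that $\tilde{\mathcal{L}}_{K^-},\tilde{\mathcal{M}}_{K^-}$ act through the blocks $\Theta_{2j},\Theta_{2j+1}$ on consecutive coordinate pairs, each scalar equation ties $(f_n,g_n)$ to its neighbour; solving the resulting $2\times2$ systems (and invoking $\rho_n^2+|\alpha_n|^2=1$) reproduces the transfer matrices $T_n$ of (\ref{eq_Tn_GZ_CMV}). Because the lattice terminates at $K$ and runs off to $-\infty$, the natural propagation is downward, which is why (6) is phrased through $T_n^{-1}$ as in (\ref{eq_GZ_transition_cmv_-}). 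The one genuinely new ingredient, and the step I expect to be the main obstacle, is pinning down the boundary condition at $K$ with the correct sign and $z$-power. Since $\alpha_K=-1$ forces $\rho_K=0$, the block $\Theta_K=\mathrm{diag}(\overline{\alpha_K},-\alpha_K)=\mathrm{diag}(-1,1)$ degenerates; this is precisely what produces the direct-sum splitting (\ref{eq_cmv_split}) and, simultaneously, what makes $T_K$ ill-defined, so the recursion cannot cross $K$ and a boundary relation must be imposed. For the left half-lattice one retains the \emph{lower} endpoint $K$ of the pair straddled by $\Theta_K$, so the surviving diagonal entry is the top-left $\overline{\alpha_K}=-1$. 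Reading off the truncated $m=K$ component then gives $-g_K=f_K$ from $(\tilde{\mathcal{L}}_{K^-}g)=f$ when $K$ is even (no factor of $z$), and $-f_K=zg_K$ from $(\tilde{\mathcal{M}}_{K^-}f)=zg$ when $K$ is odd (a factor of $z$ from the right-hand side), i.e.\ exactly $f_K=-g_K$ and $f_K=-zg_K$. The care here is entirely bookkeeping: tracking which of the two coupled relations is truncated according to the parity of $K$, and noting that the minus signs arise from retaining $\overline{\alpha_K}$ rather than $-\alpha_K$ (the entry kept in the mirror-image $K^+$ computation, which is why those signs are positive). Finally I would verify that the transfer recursion together with this single boundary relation regenerates every coupled equation, closing (6)$\Rightarrow$(3).
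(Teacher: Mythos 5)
Your proposal is correct and follows essentially the same route the paper takes (the statement itself is quoted from Gesztesy--Zinchenko without proof, but the paper's proof of its rotated analogue, Theorem \ref{thm_rcmv_extended_GZ_lem2.3_left}, is exactly this argument: reduce (1)--(5) to the $\mathcal{LM}$-factorisation and $z\neq0$, get (3)$\Leftrightarrow$(6) from the blockwise action of the $\Theta$'s, and read the boundary relation off the surviving $1\times1$ block $\overline{\alpha_K}=-1$, whose location in $\tilde{\mathcal{L}}_{K^-}$ or $\tilde{\mathcal{M}}_{K^-}$ is dictated by the parity of $K$, cf.\ (\ref{eq_rcmv_split_L-_even}) and (\ref{eq_rcmv_split_M-_odd})). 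One small wording slip: only one of $\tilde{\mathcal{L}}_{K^-}$, $\tilde{\mathcal{M}}_{K^-}$ carries the degenerate $1\times1$ block at $K$ (the other ends with the full block $\Theta_{K-1}$), which your later parity analysis in fact handles correctly.
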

	
	Suppose $z\in\mathbb{C}-\{0\}$, $\left\{\begin{bmatrix}
		f_n^+(z) \\ g_n^+(z)
	\end{bmatrix}\right\}_{n=K}^\infty$ and $\left\{\begin{bmatrix}
		p_n^+(z) \\ q_n^+(z)
	\end{bmatrix}\right\}_{n=K}^\infty$ are linearly independent solutions of (\ref{eq_GZ_transition_cmv_+}) satisfying the initial conditions
	\begin{align}
		\label{eq_GZ_cmv_transition_init_fg_pq_+_even}
		\begin{bmatrix} 
			f_K^+(z) \\ g_K^+(z)
		\end{bmatrix}=\begin{bmatrix}
			z \\ 1
		\end{bmatrix},\qquad
		\begin{bmatrix}
			p_K^+(z) \\ q_K^+(z)
		\end{bmatrix}=\begin{bmatrix}
			z \\ -1
		\end{bmatrix}
	\end{align}
	when $K$ is even, and
	\begin{align}
		\label{eq_GZ_cmv_transition_init_fg_pq_+_odd}
		\begin{bmatrix}
			f_K^+(z) \\ g_K^+(z)
		\end{bmatrix}=\begin{bmatrix}
			1 \\ 1
		\end{bmatrix},\qquad
		\begin{bmatrix}
			p_K^+(z) \\ q_K^+(z)
		\end{bmatrix}=\begin{bmatrix}
			-1 \\ 1
		\end{bmatrix}
	\end{align}
	when $K$ is odd. Similarly, also suppose $\left\{\begin{bmatrix}
		f_n^-(z) \\ g_n^-(z)
	\end{bmatrix}\right\}_{n=-\infty}^K$ and $\left\{\begin{bmatrix}
		p_n^-(z) \\ q_n^-(z)
	\end{bmatrix}\right\}_{n=-\infty}^K$ are linearly independent solutions of (\ref{eq_GZ_transition_cmv_-}) satisfying the initial conditions
	\begin{align}
		\label{eq_GZ_cmv_transition_init_fg_-_even}
		\begin{bmatrix} 
			f_K^-(z) \\ g_K^-(z)
		\end{bmatrix}=\begin{bmatrix}
			1 \\ -1
		\end{bmatrix},\qquad
		\begin{bmatrix}
			p_K^-(z) \\ q_K^-(z)
		\end{bmatrix}=\begin{bmatrix}
			1 \\ 1
		\end{bmatrix}
	\end{align}
	when $K$ is even, and
	\begin{align}
		\label{eq_GZ_cmv_transition_init_fg_-_odd}
		\begin{bmatrix}
			f_K^-(z) \\ g_K^-(z)
		\end{bmatrix}=\begin{bmatrix}
			-z \\ 1
		\end{bmatrix},\qquad
		\begin{bmatrix}
			p_K^-(z) \\ q_K^-(z)
		\end{bmatrix}=\begin{bmatrix}
			z \\ 1
		\end{bmatrix}
	\end{align}
	when $K$ is odd. 
	
	To explain why the $^+$ sequences here start from index $K$ instead of $K+1$, one may assume that we are discussing the half lattice CMV matrices by their own rights - not as a consequence of a split of a full lattice CMV matrix on $\ell^2(\mathbb{Z})$. If these half lattice operators are obtained from a split, then one may simply replace the indices accordingly.
	
	These initial conditions are chosen to be the simplest choice such that the $\left\{\begin{bmatrix}
		f_n^+(z) \\ g_n^+(z)
	\end{bmatrix}\right\}_{n=K}^\infty$ sequence satisfies (\ref{eq_GZ_CMV_fK}) up to a scalar multiple, and the $\left\{\begin{bmatrix}
		p_n^+(z) \\ q_n^+(z)
	\end{bmatrix}\right\}_{n=K}^\infty$ sequence becomes the analogue of the second kind polynomial sequence $\left\{\begin{bmatrix}
		\psi_n(z) \\ -\psi_n^*(z)
	\end{bmatrix}\right\}_{n=K}^\infty$. This formulation allows one to prove an analogue of Theorem \ref{thm_geronimo_golinskii_nevai} for $\mathcal{E}_{K^+}$:
	\begin{theorem} \cite[Corollary 2.14]{Gesztesy2006}
		Let $z\in\mathbb{D}$ and $r\in\mathbb{C}$. Then the sequence
		\begin{equation}
			\left\{\begin{bmatrix}
				p_n^+(z) \\ q_n^+(z)
			\end{bmatrix}+r\begin{bmatrix}
				f_n^+(z) \\ g_n^+(z)
			\end{bmatrix}\right\}_{n=K}^\infty\in \ell^2\left(\mathbb{Z}\cap[K+1,\infty),\mathbb{C}^2\right)
		\end{equation}
		if and only if $r=F(z)$, where $F$ is the Carath\'{e}odory function of the spectral measure of $\mathcal{E}_{K^+}$.
	\end{theorem}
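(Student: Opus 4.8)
The plan is to prove the statement by reducing it, through an explicit dictionary, to the orthogonal-polynomial Weyl solution theorem (Theorem \ref{thm_geronimo_golinskii_nevai}). Since $\alpha_K\in\partial\mathbb{D}$ forces $\rho_K=0$, the operator $\mathcal{E}_{K^+}$ decouples into a genuine half-line CMV operator unitarily equivalent to $\mathcal{C}$, whose spectral measure I call $\mu$; after reindexing so that the Verblunsky coefficients of $\mathcal{E}_{K^+}$ become $\beta_m=\alpha_{K+1+m}$, I obtain the associated orthonormal polynomials $\varphi_m,\varphi_m^*$, the second kind polynomials $\psi_m,\psi_m^*$, and the Carath\'eodory function $F$ of $\mu$. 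The whole point of the choice of initial conditions (\ref{eq_GZ_cmv_transition_init_fg_pq_+_even})--(\ref{eq_GZ_cmv_transition_init_fg_pq_+_odd}) is that, under this identification, the Gesztesy--Zinchenko sequences are nothing but the OPUC data dressed with powers of $z$.

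First I would establish the dictionary by induction on $n$. Writing $m=n-K$, I expect relations of the shape $\big(f_n^+,g_n^+\big)=B_m\big(\varphi_m,\varphi_m^*\big)$ and $\big(p_n^+,q_n^+\big)=B_m\big(\psi_m,-\psi_m^*\big)$, where $B_m$ is a parity-dependent transformation equal to $z^{-\lfloor m/2\rfloor}$ times a bounded, boundedly invertible $2\times 2$ matrix that either scales or swaps the two components (mirroring the CMV basis formulas (\ref{eq_cmv_basis_opuc_odd})--(\ref{eq_cmv_basis_opuc_even})). The base case is exactly the initial data, using $\varphi_0=\varphi_0^*=\psi_0=\psi_0^*=1$; the inductive step is a direct check that one application of the Gesztesy--Zinchenko transfer matrix (\ref{eq_Tn_GZ_CMV}) reproduces one step of the Szeg\H{o} recursions (\ref{eq_szegorecursion_orthonormal})--(\ref{eq_szegorecursion_orthonormal_reverse}). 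I would carry this out separately for $K$ even and $K$ odd, since the parity of $K$ shifts which transfer matrix is applied first. The crucial structural feature, recorded for later use, is that $\|B_m v\|$ is comparable to $|z|^{-\lfloor m/2\rfloor}\|v\|$ uniformly in $m$.

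By linearity the dictionary carries the OPUC combination $w_m=\big(\psi_m+r\varphi_m,\,-\psi_m^*+r\varphi_m^*\big)$ to the combination in the statement, so the $\ell^2$ membership in question is equivalent to $\sum_m|z|^{-m}\|w_m\|^2<\infty$. If $r\neq F(z)$ then $\{w_m\}\notin\ell^2(\N_0)$ by Theorem \ref{thm_geronimo_golinskii_nevai}, and since $|z|^{-\lfloor m/2\rfloor}\geq 1$ the weighted sum is even larger, so the combination fails to be $\ell^2$. If $r=F(z)$ I must instead show the weighted sum converges, i.e.\ that the geometric decay of the Weyl solution beats the amplification $|z|^{-m/2}$. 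Here I would use the Carath\'eodory representation (\ref{eq_caratheodory_function_measure}) together with (\ref{eq_opuc_second_kind_caratheodory}) to rewrite the top component as $\psi_m+F\varphi_m=\int_{\partial\mathbb{D}}\varphi_m(\xi)\tfrac{\xi+z}{\xi-z}\,d\mu(\xi)$ and exploit the orthogonality of $\varphi_m$ to $1,z,\dots,z^{m-1}$, which after expanding the kernel as a geometric series yields $|\psi_m+F\varphi_m|\leq C|z|^m$; an analogous bound handles the bottom component. Thus $\|w_m\|=O(|z|^m)$, so $\sum_m|z|^{-m}\|w_m\|^2=O\!\big(\sum_m|z|^m\big)<\infty$ and the combination is $\ell^2$. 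Finally I conclude with Theorem \ref{thm_geronimo_golinskii_nevai}, noting that $F$ is precisely the Carath\'eodory function of the spectral measure $\mu$ of $\mathcal{E}_{K^+}$.

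The main obstacle is this last $\ell^2$-transfer in the case $r=F(z)$: because the dictionary necessarily introduces the factors $z^{-\lfloor m/2\rfloor}$, which blow up for $z\in\mathbb{D}$, the equivalence of the two $\ell^2$ conditions is not formal and rests on the quantitative decay estimate for the OPUC Weyl solution. Alternatively, one may sidestep the explicit rate: the limit-point nature of $\mathcal{E}_{K^+}$ at $+\infty$ gives a one-dimensional space of $\ell^2$ Gesztesy--Zinchenko solutions, and since $B_m^{-1}$ is contracting it maps any such solution to an $\ell^2$ OPUC solution, which Theorem \ref{thm_geronimo_golinskii_nevai} forces to be the $r=F$ Weyl solution; matching then yields both directions at once.
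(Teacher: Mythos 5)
Your proposal cannot be compared against an in-paper argument because the paper does not prove this statement: it is quoted from Gesztesy--Zinchenko \cite[Corollary 2.14]{Gesztesy2006}, whose own derivation runs through resolvent formulas and half-lattice Weyl--Titchmarsh functions rather than through OPUC. Your reduction to Theorem \ref{thm_geronimo_golinskii_nevai} is therefore a genuinely different and more elementary route, and it is sound. The dictionary you posit is exactly what iterating (\ref{eq_Tn_GZ_CMV}) from the initial data produces: for $K$ even and $\beta_m=\alpha_{K+1+m}$ one finds $(f^+_{K+1},g^+_{K+1})=(\varphi_1^*,\varphi_1)$, $(f^+_{K+2},g^+_{K+2})=(\varphi_2,z^{-1}\varphi_2^*)$, $(f^+_{K+3},g^+_{K+3})=(z^{-1}\varphi_3^*,z^{-1}\varphi_3)$, and so on, with $(p^+,q^+)$ given by the same expressions with $(\psi_m,-\psi_m^*)$ in place of $(\varphi_m,\varphi_m^*)$; this is consistent with Table \ref{table_GZ_cmv_even}. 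You also correctly isolate the one nontrivial point, that the factors $z^{-\lfloor m/2\rfloor}$ make the two $\ell^2$ conditions inequivalent a priori, and your treatment of both directions (weights $\geq 1$ for $r\neq F(z)$; the bound $|\psi_m+F\varphi_m|\leq 2(1-|z|)^{-1}|z|^m$ from orthogonality for $r=F(z)$) is correct.

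One caveat on the phrase ``an analogous bound handles the bottom component.'' If you take (\ref{eq_opuc_reverse_second_kind_caratheodory}) at face value you obtain $F(z)\varphi_m^*(z)-\psi_m^*(z)=\int_{\partial\mathbb{D}}\varphi_m^*(\xi)\frac{\xi+z}{\xi-z}\,d\mu(\xi)$, whose zeroth Fourier coefficient is $\int\varphi_m^*\,d\mu=\norm{\Phi_m}$; this does not decay like $|z|^m$ and need not tend to $0$ at all, so the literal analogue of your top-component estimate breaks. In fact (\ref{eq_opuc_reverse_second_kind_caratheodory}) as printed is off by an additive $\norm{\Phi_m}$ (test it at $n=1$, or on Lebesgue measure where the right-hand side vanishes while $\psi_n^*\equiv 1$); with the corrected identity the constant term cancels and one gets $|F(z)\varphi_m^*(z)-\psi_m^*(z)|\leq 2(1-|z|)^{-1}|z|^{m+1}$, which is exactly what your weighted sum requires. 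So the estimate you need is true and standard, but ``analogous'' conceals a genuine cancellation that must be displayed (or cited directly). With that made explicit the proof is complete; your closing limit-point alternative cleanly gives the ``only if'' direction, but for the ``if'' direction it still requires either the quantitative decay above or a separate existence argument for a nontrivial $\ell^2$ solution.
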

	The choices of initial conditions of the sequences $\left\{\begin{bmatrix}
		f_n^-(z) \\ g_n^-(z)
	\end{bmatrix}\right\}_{n=-\infty}^K$ and $\left\{\begin{bmatrix}
		p_n^-(z) \\ q_n^-(z)
	\end{bmatrix}\right\}_{n=-\infty}^K$ and their consequences can be explained similarly.
	
	One may use (\ref{eq_GZ_transition_cmv}) to extend the sequences $\left\{\begin{bmatrix}
		f_n^+(z) \\ g_n^+(z)
	\end{bmatrix}\right\}_{n=K}^\infty$ and $\left\{\begin{bmatrix}
		p_n^+(z) \\ q_n^+(z)
	\end{bmatrix}\right\}_{n=K}^\infty$ to all\\ $n<K$, and similarly extend $\left\{\begin{bmatrix}
		f_n^-(z) \\ g_n^-(z)
	\end{bmatrix}\right\}_{n=-\infty}^K$ and $\left\{\begin{bmatrix}
		p_n^-(z) \\ q_n^-(z)
	\end{bmatrix}\right\}_{n=-\infty}^K$ to all $n>K$.
	
	\vspace{2\baselineskip}
	
	\begin{table}[ht]
		\centering
		\begin{tabular}{||c||c|c|c||}
			\hhline{|=|=|=|=|} & & & \\ [-2ex]
			$n$ & $K-1$ & $K$ even & $K+1$\\ [0.5ex]
			\hhline{||=||=|=|=|} & & & \\ [-1ex]
			$\begin{bmatrix}
				f_n^+(z) \\ g_n^+(z)
			\end{bmatrix}$ & $\dfrac{1}{\rho_K}\begin{bmatrix}
				\left(1+\alpha_K\right)z \\ 1+\overline{\alpha_K}
			\end{bmatrix}$ & $\begin{bmatrix}
				z \\ 1
			\end{bmatrix}$ & $\dfrac{1}{\rho_{K+1}}\begin{bmatrix}
				1-\alpha_{K+1}z \\ z-\overline{\alpha_{K+1}}
			\end{bmatrix}$ \\ [3ex]
			\hline & & & \\ [-1ex]
			$\begin{bmatrix}
				p_n^+(z) \\ q_n^+(z)
			\end{bmatrix}$ & $\dfrac{1}{\rho_K}\begin{bmatrix}
				\left(-1+\alpha_K\right)z \\ 1-\overline{\alpha_K}
			\end{bmatrix}$ & $\begin{bmatrix}
				z \\ -1
			\end{bmatrix}$ & $\dfrac{1}{\rho_{K+1}}\begin{bmatrix}
				-1-\alpha_{K+1}z \\ z+\overline{\alpha_{K+1}}
			\end{bmatrix}$ \\ [3ex]
			\hline & & & \\ [-1ex]
			$\begin{bmatrix}
				f_n^-(z) \\ g_n^-(z)
			\end{bmatrix}$ & $\dfrac{1}{\rho_K}\begin{bmatrix}
				\alpha_K-z \\ z^{-1}-\overline{\alpha_K}
			\end{bmatrix}$ & $\begin{bmatrix}
				1 \\ -1
			\end{bmatrix}$ & $\dfrac{1}{\rho_{K+1}}\begin{bmatrix}
				-1-\alpha_{K+1} \\ 1+\overline{\alpha_{K+1}}
			\end{bmatrix}$ \\ [3ex]
			\hline & & & \\ [-1ex]
			$\begin{bmatrix}
				p_n^-(z) \\ q_n^-(z)
			\end{bmatrix}$ & $\dfrac{1}{\rho_K}\begin{bmatrix}
				\alpha_K+z \\ z^{-1}+\overline{\alpha_K}
			\end{bmatrix}$ & $\begin{bmatrix}
				1 \\ 1
			\end{bmatrix}$ & $\dfrac{1}{\rho_{K+1}}\begin{bmatrix}
				1-\alpha_{K+1} \\ 1-\overline{\alpha_{K+1}}
			\end{bmatrix}$ \\ [3ex]
			\hhline{|=|=|=|=|}
		\end{tabular}
		\caption{The sequences $\begin{bmatrix}
				f_n^\pm(z) \\ g_n^\pm(z)
			\end{bmatrix}$ and $\begin{bmatrix}
				p_n^\pm(z) \\ q_n^\pm(z)
			\end{bmatrix}$ for $n=K-1,K+1$, $K$ even} \label{table_GZ_cmv_even}
	\end{table}
	
	
	\begin{table}[ht]
		\centering
		\begin{tabular}{||c||c|c|c||}
			\hhline{|=|=|=|=|} & & & \\ [-2ex]
			$n$ & $K-1$ & $K$ odd & $K+1$\\ [0.5ex]
			\hhline{||=||=|=|=|} & & & \\ [-1ex]
			$\begin{bmatrix}
				f_n^+(z) \\ g_n^+(z)
			\end{bmatrix}$ & $\dfrac{1}{\rho_K}\begin{bmatrix}
				1+\overline{\alpha_K} \\ 1+\alpha_K
			\end{bmatrix}$ & $\begin{bmatrix}
				1 \\ 1
			\end{bmatrix}$ & $\dfrac{1}{\rho_{K+1}}\begin{bmatrix}
				z-\overline{\alpha_{K+1}} \\ z^{-1}-\alpha_{K+1}
			\end{bmatrix}$ \\ [3ex]
			\hline & & & \\ [-1ex]
			$\begin{bmatrix}
				p_n^+(z) \\ q_n^+(z)
			\end{bmatrix}$ & $\dfrac{1}{\rho_K}\begin{bmatrix}
				1-\overline{\alpha_K} \\ -1+\alpha_K
			\end{bmatrix}$ & $\begin{bmatrix}
				-1 \\ 1
			\end{bmatrix}$ & $\dfrac{1}{\rho_{K+1}}\begin{bmatrix}
				z+\overline{\alpha_{K+1}} \\ -z^{-1}-\alpha_{K+1}
			\end{bmatrix}$ \\ [3ex]
			\hline & & & \\ [-1ex]
			$\begin{bmatrix}
				f_n^-(z) \\ g_n^-(z)
			\end{bmatrix}$ & $\dfrac{1}{\rho_K}\begin{bmatrix}
				1-\overline{\alpha_K}z \\ \alpha_K-z
			\end{bmatrix}$ & $\begin{bmatrix}
				-z \\ 1
			\end{bmatrix}$ & $\dfrac{1}{\rho_{K+1}}\begin{bmatrix}
				\left(1+\overline{\alpha_{K+1}}\right)z \\ -1-\alpha_{K+1}
			\end{bmatrix}$ \\ [3ex]
			\hline & & & \\ [-1ex]
			$\begin{bmatrix}
				p_n^-(z) \\ q_n^-(z)
			\end{bmatrix}$ & $\dfrac{1}{\rho_K}\begin{bmatrix}
				1+\overline{\alpha_K}z \\ z+\alpha_K
			\end{bmatrix}$ & $\begin{bmatrix}
				z \\ 1
			\end{bmatrix}$ & $\dfrac{1}{\rho_{K+1}}\begin{bmatrix}
				\left(1-\overline{\alpha_{K+1}}\right)z \\ 1-\alpha_{K+1}
			\end{bmatrix}$ \\ [3ex]
			\hhline{|=|=|=|=|}
		\end{tabular}
		\caption{The sequences $\begin{bmatrix}
				f_n^\pm(z) \\ g_n^\pm(z)
			\end{bmatrix}$ and $\begin{bmatrix}
				p_n^\pm(z) \\ q_n^\pm(z)
			\end{bmatrix}$ for $n=K-1,K+1$, $K$ odd} \label{table_GZ_cmv_odd}
	\end{table}

\vfill


\section{Rotated CMV Operators}

\subsection{Rotated Orthogonal Polynomials On The Unit Circle}

The Verblunsky coefficients $\{\alpha_n\}_{n=0}^\infty$ is a sequence of complex numbers in $\mathbb{D}$, which implies that the sequence $\{\rho_n\}_{n=0}^\infty$ defined by (\ref{eq_rho_real}) are real numbers. We are interested in a slightly generalized variation of standard CMV matrices whose matrix representation looks similar to $\mathcal{C}$ as given in (\ref{eq_cmv_matrix_onesided_intro}), but instead of $\{\rho_n\}_{n=0}^\infty\subseteq[0,1]$, we now want $\{\rho_n\}_{n=0}^\infty\subseteq\overline{\mathbb{D}}$. That is, $\{\rho_n\}_{n=0}^\infty\subseteq[0,1]$ is now a sequence of complex numbers satisfying
\begin{equation}
	|\rho_n|^2+|\alpha_n|^2=1
\end{equation}
and we want our new operator, which we will call \textit{rotated CMV operators}, to have matrix representation of the form

\begin{equation}\label{eq_rcmv_matrix_onesided_intro}
	\begin{bmatrix}
		\overline{\alpha_0} & \overline{\alpha_1}\rho_0 & \rho_0\rho_1 & 0 & 0 & \cdots\\
		\overline{\rho_0} & -\overline{\alpha_1}\alpha_0 & -\alpha_0\rho_1 & 0 & 0 & \cdots \\
		0 & \overline{\alpha_2}\overline{\rho_1} & -\overline{\alpha_2}\alpha_1 & \overline{\alpha_3}\rho_2 & \rho_3\rho_2 & \cdots\\
		0 & \overline{\rho_2\rho_1} & -\alpha_1\overline{\rho_2} & -\overline{\alpha_3}\alpha_2 & -\alpha_2\rho_3 & \cdots\\
		0 & 0 & 0 & \overline{\alpha_4}\overline{\rho_3} & -\overline{\alpha_4}\alpha_3 & \cdots \\
		\vdots & \vdots & \vdots & \vdots & \vdots & \ddots
	\end{bmatrix}.
\end{equation}
For each $n\in\mathbb{N}_0$, since $\rho_n\in\overline{\mathbb{D}}$, we can write $\rho_n=|\rho_n|e^{is_n}$ for some $s_n\in(-\pi,\pi]$. Hence we obtain a new sequence $\{s_n\}_{n=0}^\infty\subseteq(-\pi,\pi]$ of arguments. By using the fact that when $\rho_n\in[0,1]$, then $s_n=0$ for all $n\in\mathbb{N}_0$ so $|\rho_n|=\rho_n$, and observing that
\begin{align*}
	\varphi_n(z)=\dfrac{\Phi_n(z)}{\norm{\Phi_n}}=\dfrac{\Phi_n(z)}{\di\prod_{j=0}^{n-1}|\rho_j|},
\end{align*}
we notice that by simply dividing the orthonormal polynomials $\varphi_n$ by $\di\prod_{j=0}^{n-1}e^{is_j}$, we effectively transform the denominator consisting of real $\rho_j$'s to complex $\rho_j$'s while preserving their moduli. This process can be simply thought of assigning an argument $s_j$ to each $|\rho_j|$, which is akin to rotating the real $\rho_j$'s by angle $s_j$. 

The process of rotating the real $\rho_n$'s in the standard CMV matrix explains why we would want to call operators with matrix representation given by (\ref{eq_rcmv_matrix_onesided_intro}) \textit{rotated CMV operators}. 

For sake of convenience, we define the sequence $\{\zeta_n\}_{n=0}^\infty\subseteq\partial\mathbb{D}$ by
\begin{equation} \label{eq_ropuc_zeta}
	\zeta_n=e^{is_n}.
\end{equation}
Following this idea, we now define the \textit{rotated polynomials} $\{\varphi_n^{\angle}\}_{n=0}^\infty$ and their reverse polynomials $\{\varphi_n^{\angle,*}\}_{n=0}^\infty$ as follows:
\begin{definition}[Rotated polynomials]
	Let $\mu$ be a nontrivial probability measure on $\partial\mathbb{D}$, \\ $\{\varphi_n(z,\mu)\}_{n=0}^\infty$ be the standard orthonormal polynomials defined by (\ref{eq_opuc_orthonormal}) and $\{\varphi_n^*(z,\mu)\}_{n=0}^\infty$ be their reverse polynomials.
	Given a sequence $\{\zeta_n\}_{n=0}^\infty\subseteq\partial\mathbb{D}$, the \textit{rotated polynomials} $\{\varphi_n^\angle\}_{n=0}^\infty$ associated to the sequence $\{\zeta_n\}_{n=0}^\infty$ are defined by
	\begin{equation} \label{eq_ropuc_def}
		\varphi_{n}^\angle(z)=\left(\prod_{j=0}^{n-1}\zeta_j\right)\varphi_n(z)=\zeta_0\zeta_1\cdots\zeta_{n-1}\varphi_{n}(z),
	\end{equation}
	and the \textit{rotated reverse polynomials} $\{\varphi_n^{\angle,*}\}_{n=0}^\infty$ are similarly defined by
	\begin{equation} \label{eq_ropuc_reverse_def}
		\varphi_{n}^{\angle,*}(z)=\left(\prod_{j=0}^{n-1}\zeta_j\right)\varphi_n^*(z)=\zeta_0\zeta_1\cdots\zeta_{n-1}\varphi_{n}^*(z).
	\end{equation}
\end{definition}
Suppose $\varphi_n^\angle(z)$ has representation $\varphi_n^\angle(z)=\di\sum_{j=0}^n c_{n,j}z^j$. Then one can show that the rotated orthonormal polynomials $\varphi_n^\angle$ are related to the monic polynomials $\Phi_n$ by
\allowdisplaybreaks
\begin{align*}
	\varphi_n^\angle(z)
	&=\left(\prod_{j=0}^{n-1}\overline{\rho_j}^{-1}\right)\Phi_n(z).
\end{align*}
Since $\Phi_n$ is monic, equating coefficients of $z^n$ gives us $$c_{n,n}=\di\prod_{j=0}^{n-1}\overline{\rho_j}^{-1}=\prod_{j=0}^{n-1}\left(\left|\rho_j\right|\zeta_j^{-1}\right)^{-1}$$
so we can obtain the monic OPUCs $\{\Phi_n(z)\}_{n=0}^\infty$ by dividing the rotated OPUCs by $\di\prod_{j=0}^{n-1}\overline{\rho_j}^{-1}$. 

Again if we consider the special case where $\zeta_n=1$ for all $n\in\mathbb{N}_0$, then $\rho_n\in[0,1]$ which gives us the canonical orthonormal polynomials $\{\varphi_n\}_{n=0}^\infty$.

Dividing the orthonormal polynomials $\varphi_n$ by $\di\prod_{j=0}^{n-1}\zeta_j^{-1}$ instead of $\di\prod_{j=0}^{n-1}\zeta_j$ is simply a matter of convention. Suppose $\{\zeta_n\}_{n=0}^\infty$ is given by (\ref{eq_ropuc_zeta}). If we view division by $\zeta_j$'s as defining $$\rho_j=|\rho_j|\zeta_j=|\rho_j|e^{is_j},$$ then division by $\zeta_j^{-1}$'s may be simply viewed as defining $$\overline{\rho_j}=|\rho_j|\zeta_j^{-1}=|\rho_j|e^{-is_j}.$$
That is, we assign the angle $-s_j$ to $|\rho_j|$ instead of $s_j$, which may be simply viewed as rotation in the opposite direction. It turns out that under our definition of $L^2$-inner product which is conjugate linear in the first argument, this is the convention we need to to adopt so that the rotated polynomials $\{\varphi_n^\angle,\varphi_n^{\angle,*}\}_{n=0}^\infty$ generate an operator with matrix representation given by (\ref{eq_rcmv_matrix_onesided_intro}) with $\overline{\rho_n}$'s in the lower triangular entries and $\rho_n$'s in the upper triangular entries. If we define
\begin{align}
	\label{eq_ropuc_diffdef}
	\varphi_{n}^\angle(z)&=\left(\prod_{j=0}^{n-1}\zeta_j^{-1}\right)\varphi_n(z),\\
	\label{eq_ropuc_reverse_diffdef}
	\varphi_{n}^{\angle,*}(z)&=\left(\prod_{j=0}^{n-1}\zeta_j^{-1}\right)\varphi_n^*(z)
\end{align}
instead, by following our subsequent construction, one can show that $\{\varphi_n^\angle,\varphi_n^{\angle,*}\}_{n=0}^\infty$ generate an operator with matrix representation
\begin{equation}
	\begin{bmatrix}
		\overline{\alpha_0} & \overline{\alpha_1}\overline{\rho_0} & \overline{\rho_0}\overline{\rho_1} & 0 & 0 & \cdots\\
		\rho_0 & -\overline{\alpha_1}\alpha_0 & -\alpha_0\overline{\rho_1} & 0 & 0 & \cdots \\
		0 & \overline{\alpha_2}\rho_1 & -\overline{\alpha_2}\alpha_1 & \overline{\alpha_3}\overline{\rho_2} & \overline{\rho_3}\overline{\rho_2} & \cdots\\
		0 & \rho_2\rho_1 & -\alpha_1\rho_2 & -\overline{\alpha_3}\alpha_2 & -\alpha_2\overline{\rho_3} & \cdots\\
		0 & 0 & 0 & \overline{\alpha_4}\rho_3 & -\overline{\alpha_4}\alpha_3 & \cdots \\
		\vdots & \vdots & \vdots & \vdots & \vdots & \ddots
	\end{bmatrix}
\end{equation}
with $\overline{\rho_n}$'s in the upper triangular entries and $\rho_n$'s in the lower triangular entries instead.

However, unlike the Szeg\H{o} duals, the rotated reverse polynomials are not obtained from the rotated polynomials by applying the anti-unitary map $R_n$ as defined in (\ref{eq_opuc_reverse_map}). They are instead related by (\ref{eq_ropuc_reverse_ropuc}) as follows:
\begin{proposition}
	The rotated reverse polynomials $\{\varphi_n^{\angle,*}\}_{n=0}^\infty$ are related to the rotated orthonormal polynomials $\{\varphi_n^\angle\}_{n=0}^\infty$ by 
	\begin{equation} \label{eq_ropuc_reverse_ropuc}
		\varphi_n^{\angle,*}(z)=\left(\prod_{j=0}^{n-1}\zeta_j^2\right)z^n\overline{\varphi_n^\angle\left(\dfrac{1}{\overline{z}}\right)}.
	\end{equation}
\end{proposition}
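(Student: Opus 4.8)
The plan is to unwind both definitions and reduce the claim to the ordinary Szeg\H{o}-dual identity $\varphi_n^*(z)=z^n\overline{\varphi_n(1/\overline{z})}$ coming from the anti-unitary map $R_n$ in (\ref{eq_opuc_reverse_map}), keeping careful track of how the rotation factor $\prod_{j=0}^{n-1}\zeta_j$ behaves under complex conjugation. For brevity I would write $P_n=\prod_{j=0}^{n-1}\zeta_j$. The one structural observation driving the whole computation is that each $\zeta_j\in\partial\mathbb{D}$ is unimodular, so $\overline{\zeta_j}=\zeta_j^{-1}$ and hence $\overline{P_n}=P_n^{-1}$; this is exactly what will manufacture the factor $\prod_{j=0}^{n-1}\zeta_j^{2}=P_n^{2}$ in the final formula.

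First I would start from the right-hand side of (\ref{eq_ropuc_reverse_ropuc}). Using the definition (\ref{eq_ropuc_def}), namely $\varphi_n^\angle(w)=P_n\varphi_n(w)$, evaluate at $w=1/\overline{z}$, conjugate, and multiply by $z^n$:
\begin{equation*}
	z^n\overline{\varphi_n^\angle\left(\dfrac{1}{\overline{z}}\right)}=\overline{P_n}\left(z^n\overline{\varphi_n\left(\dfrac{1}{\overline{z}}\right)}\right)=\overline{P_n}\,\varphi_n^*(z),
\end{equation*}
where the last equality is precisely the definition (\ref{eq_opuc_reverse_map}) of the reverse polynomial $\varphi_n^*$.

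Next I would rewrite $\varphi_n^*(z)$ in terms of its rotated counterpart through (\ref{eq_ropuc_reverse_def}): since $\varphi_n^{\angle,*}(z)=P_n\varphi_n^*(z)$, we have $\varphi_n^*(z)=P_n^{-1}\varphi_n^{\angle,*}(z)$. Substituting this in and invoking $\overline{P_n}=P_n^{-1}$ gives
\begin{equation*}
	z^n\overline{\varphi_n^\angle\left(\dfrac{1}{\overline{z}}\right)}=\overline{P_n}\,P_n^{-1}\varphi_n^{\angle,*}(z)=P_n^{-2}\,\varphi_n^{\angle,*}(z),
\end{equation*}
so multiplying through by $P_n^{2}=\prod_{j=0}^{n-1}\zeta_j^{2}$ yields the asserted identity.

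There is essentially no hard step: once the two definitions are in hand the argument is a single substitution. The only point demanding genuine care — and the subtlety the proposition is implicitly flagging in the surrounding discussion — is that the rotation factor does \emph{not} pass through the conjugation in $R_n$ unchanged. Conjugating $P_n$ inverts it rather than fixing it, which is exactly why $\varphi_n^{\angle,*}$ fails to be the image of $\varphi_n^\angle$ under the anti-unitary map $R_n$ and why the correction factor $\prod_{j=0}^{n-1}\zeta_j^{2}$ is forced to appear. Thus the main thing to be vigilant about throughout is the bookkeeping of $\overline{P_n}$ versus $P_n$, and nothing more.
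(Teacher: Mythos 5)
Your argument is correct and is essentially the paper's own proof, which simply applies the definitions (\ref{eq_opuc_reverse_map}), (\ref{eq_ropuc_def}) and (\ref{eq_ropuc_reverse_def}) and lets the unimodularity of the $\zeta_j$ produce the factor $\prod_{j=0}^{n-1}\zeta_j^{2}$. You have merely written out explicitly the bookkeeping of $\overline{P_n}=P_n^{-1}$ that the paper leaves implicit.
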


\begin{proof}
	Apply (\ref{eq_opuc_reverse_map}) and (\ref{eq_ropuc_def}) to (\ref{eq_ropuc_reverse_def}).
\end{proof}

If we define the map $R_n^\angle:L^2(\partial\mathbb{D},\mu)\to L^2(\partial\mathbb{D},\mu)$ associated to the sequence $\{\zeta_n\}_{n=0}^\infty$ by
\begin{equation}
	R_n^\angle(f)(z)=\left(\prod_{j=0}^{n-1}\zeta_j^2\right)z^n\overline{f\left(\dfrac{1}{\overline{z}}\right)},
\end{equation}
then one can show that $R_n^\angle$ is also anti-unitary by simply using the fact that $\zeta_n\in\partial\mathbb{D}$. Moreover, $R_n^\angle$ is also an involution, i.e. ${R_n^\angle}^2=I$, or equivalently $R_n^\angle={R_n^\angle}^{-1}$.

It is immediately clear from our definition that the rotated polynomials also form an orthogonal set, sharing the same norm as the standard orthonormal polynomials:

\begin{proposition} \label{prop_ropuc_orthonormal}
	$\{\varphi_n^\angle\}_{n=0}^\infty $ is an orthonormal set in $L^2(\partial\mathbb{D},\mu)$.
\end{proposition}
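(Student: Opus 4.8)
The plan is to compute the $L^2(\partial\mathbb{D},\mu)$-inner product $\ip{\varphi_m^\angle}{\varphi_n^\angle}$ directly from the definition (\ref{eq_ropuc_def}) and reduce it to the already-established orthonormality of the standard polynomials $\{\varphi_n\}_{n=0}^\infty$. Since the rotated polynomials differ from the standard ones only by the scalar prefactor $\prod_{j=0}^{n-1}\zeta_j$, the entire argument hinges on extracting these scalars from the integral and exploiting that each $\zeta_j$ lies on $\partial\mathbb{D}$.

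First I would substitute (\ref{eq_ropuc_def}) into the inner product (\ref{eq_L2_ip}) to obtain
\[
	\ip{\varphi_m^\angle}{\varphi_n^\angle}=\int_{\partial\mathbb{D}} \overline{\left(\prod_{j=0}^{m-1}\zeta_j\right)\varphi_m(z)}\left(\prod_{j=0}^{n-1}\zeta_j\right)\varphi_n(z)\,d\mu(z).
\]
Because the inner product (\ref{eq_L2_ip}) is conjugate-linear in the first argument and linear in the second, the two scalar prefactors pull out of the integral as $\overline{\prod_{j=0}^{m-1}\zeta_j}$ and $\prod_{j=0}^{n-1}\zeta_j$, respectively, leaving
\[
	\ip{\varphi_m^\angle}{\varphi_n^\angle}=\overline{\left(\prod_{j=0}^{m-1}\zeta_j\right)}\left(\prod_{j=0}^{n-1}\zeta_j\right)\ip{\varphi_m}{\varphi_n}.
\]

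Next I would invoke the orthonormality of $\{\varphi_n\}_{n=0}^\infty$, so that $\ip{\varphi_m}{\varphi_n}=\delta_{mn}$. For $m\neq n$ the right-hand side vanishes immediately. For $m=n$ the two products coincide and the prefactor collapses to $\left|\prod_{j=0}^{n-1}\zeta_j\right|^2$, which equals $1$ since $|\zeta_j|=1$ for every $j$ by (\ref{eq_ropuc_zeta}). Hence $\ip{\varphi_m^\angle}{\varphi_n^\angle}=\delta_{mn}$, which is exactly the assertion.

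I expect no genuine obstacle: the statement is a one-line consequence of the unimodularity of the $\zeta_j$'s combined with the orthonormality of the unrotated polynomials. The only point requiring minor care is the bookkeeping of the conjugation convention in (\ref{eq_L2_ip}) — the first factor is conjugated while the second is not — but since a finite product of unimodular numbers is again unimodular, the diagonal prefactor has modulus $1$ regardless, and the off-diagonal entries are killed by $\delta_{mn}$ before the prefactor matters.
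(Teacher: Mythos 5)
Your proposal is correct and follows essentially the same route as the paper: the paper's proof simply cites the orthonormality of $\{\varphi_n\}_{n=0}^\infty$ together with $\left|\prod_{j=0}^{n-1}\zeta_j\right|=1$, and your argument is just an explicit unpacking of that one-line observation, with the conjugate-linearity bookkeeping made visible.
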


\begin{proof}
	Follows directly from the fact that $\{\varphi_n\}_{n=0}^\infty$ are orthonormal in $L^2(\partial\mathbb{D},\mu)$ and
	\[\left|\prod_{j=0}^{n-1}\zeta_j\right|=1. \qedhere\]
\end{proof}

Orthonormality hence permits us to call the rotated polynomials $\{\varphi_n\}_{n=0}^\infty$ the \textit{rotated OPUCs}. Moreover, we can show that the rotated OPUCs satisfy identical recurrence relations to \hyperref[thm_szego_recursion]{Szeg\H{o}'s recursion}:

\begin{proposition}[Forward Szeg\H{o} recursion for rotated OPUCs] \label{prop_ropuc_szego_recursion}
	Let $\mu$ be a nontrivial probability measure on $\partial\mathbb{D}$, $\{\varphi_n(z,\mu)\}_{n=1}^\infty$ be the standard orthonormal OPUCs given by (\ref{eq_opuc_orthonormal}), $\{\varphi_n^\angle(z,\mu)\}$ be the rotated OPUCs associated to the sequence $\{\zeta_n\}_{n=0}^\infty\subseteq\partial\mathbb{D}$, $\{\varphi_n^*(z,\mu)\}_{n=1}^\infty$ and $\{\varphi_n^{\angle,*}(z,\mu)\}_{n=1}^\infty$ be their reverse polynomials. Given the Verblunsky coefficients $\{\alpha_n\}_{n=0}^\infty\subseteq\mathbb{D}$ associated to the standard OPUCs $\{\varphi_n(z,\mu)\}$ (and $\{\varphi_n^*(z,\mu)\}$), define the sequences $\{\left|\rho_n\right|\}_{n=0}^\infty\subseteq\mathbb{R}$ and $\{\rho_n\}_{n=0}^\infty\subseteq\mathbb{D}$ by 
	\begin{gather}
		\left|\rho_n\right|=\left(1-\left|\alpha_n\right|^2\right)^{\frac{1}{2}},\\
		\label{eq_rho_complex}
		\rho_n=\left|\rho_n\right|\zeta_n.
	\end{gather}
	Then
	\begin{align} 
		\label{eq_szego_recursion_ropuc}
		\varphi_{n+1}^\angle(z)&=\overline{\rho_n}^{-1}\left(z\varphi_n^\angle(z)-\overline{\alpha_n}\varphi_n^{\angle,*}(z)\right),\\
		\label{eq_szego_recursion_ropuc_reverse}
		\varphi_{n+1}^{\angle,*}(z)&=\overline{\rho_n}^{-1}\left(\varphi_n^{\angle,*}(z)-\alpha_nz\varphi_n^{\angle}(z)\right).
	\end{align}
\end{proposition}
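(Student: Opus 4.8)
The plan is to reduce the claimed recursions directly to the standard forward Szeg\H{o} recursion for the orthonormal polynomials $\{\varphi_n\}$ already established in (\ref{eq_szegorecursion_orthonormal}) and (\ref{eq_szegorecursion_orthonormal_reverse}), using only the defining relations (\ref{eq_ropuc_def}) and (\ref{eq_ropuc_reverse_def}) together with the telescoping of the product of the $\zeta_j$'s. Since each rotated polynomial is just a unimodular scalar multiple of its standard counterpart, the argument should collapse to a one-line substitution once the $\zeta$-product bookkeeping is made explicit.

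First I would record the single scalar identity that drives everything. Since $\zeta_n\in\partial\mathbb{D}$ and $\rho_n=\left|\rho_n\right|\zeta_n$ by (\ref{eq_rho_complex}), we have $\overline{\rho_n}=\left|\rho_n\right|\overline{\zeta_n}=\left|\rho_n\right|\zeta_n^{-1}$, and therefore
\begin{equation*}
	\overline{\rho_n}^{-1}=\left|\rho_n\right|^{-1}\zeta_n.
\end{equation*}
This is precisely the factor that converts the standard real normalisation $\left|\rho_n\right|^{-1}$ appearing in the classical recursion into the complex normalisation $\overline{\rho_n}^{-1}$ appearing in (\ref{eq_szego_recursion_ropuc}).

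Next I would start from the definition $\varphi_{n+1}^\angle(z)=\left(\di\prod_{j=0}^{n}\zeta_j\right)\varphi_{n+1}(z)$ and insert (\ref{eq_szegorecursion_orthonormal}), written with the real normalisation $\left|\rho_n\right|^{-1}$. Using (\ref{eq_ropuc_def}) and (\ref{eq_ropuc_reverse_def}) to replace $\varphi_n(z)=\left(\di\prod_{j=0}^{n-1}\zeta_j\right)^{-1}\varphi_n^\angle(z)$ and likewise $\varphi_n^*$, the product telescopes as
\begin{equation*}
	\left(\prod_{j=0}^{n}\zeta_j\right)\left(\prod_{j=0}^{n-1}\zeta_j\right)^{-1}=\zeta_n,
\end{equation*}
leaving $\varphi_{n+1}^\angle(z)=\zeta_n\left|\rho_n\right|^{-1}\left(z\varphi_n^\angle(z)-\overline{\alpha_n}\varphi_n^{\angle,*}(z)\right)$. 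The scalar identity from the previous step then rewrites $\zeta_n\left|\rho_n\right|^{-1}=\overline{\rho_n}^{-1}$, which is exactly (\ref{eq_szego_recursion_ropuc}). The reverse relation (\ref{eq_szego_recursion_ropuc_reverse}) follows by the identical computation applied to (\ref{eq_szegorecursion_orthonormal_reverse}), since the $\zeta$-product bookkeeping is insensitive to whether we carry $\varphi_n$ or $\varphi_n^*$ along.

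There is no genuine analytic obstacle here; the entire content lies in keeping the two normalisations straight. The one point that requires care --- and the only place an error could creep in --- is the distinction between the real quantity $\left|\rho_n\right|=\left(1-\left|\alpha_n\right|^2\right)^{1/2}$ that appears in the classical Szeg\H{o} recursion and the complex $\rho_n=\left|\rho_n\right|\zeta_n$ used in the rotated setting; conflating the two would introduce a spurious factor of $\zeta_n$ or $\overline{\zeta_n}$. Isolating the identity $\overline{\rho_n}^{-1}=\left|\rho_n\right|^{-1}\zeta_n$ at the outset removes this hazard entirely.
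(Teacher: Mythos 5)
Your proof is correct and follows essentially the same route as the paper's: multiply the standard forward Szeg\H{o} recursion (\ref{eq_szegorecursion_orthonormal})--(\ref{eq_szegorecursion_orthonormal_reverse}) through by the appropriate product of the $\zeta_j$'s and absorb the leftover factor $\zeta_n$ into the normalisation via $\overline{\rho_n}^{-1}=\left|\rho_n\right|^{-1}\zeta_n$. Your version merely makes explicit the telescoping and the real-versus-complex $\rho_n$ bookkeeping that the paper leaves implicit.
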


\begin{proof}
	\allowdisplaybreaks
	To obtain (\ref{eq_szego_recursion_ropuc}), we multiply both sides of (\ref{eq_szegorecursion_orthonormal}) by $\di\prod_{j=0}^{n-1}\zeta_j$, or equivalently, apply (\ref{eq_szegorecursion_orthonormal}) to (\ref{eq_ropuc_def}).
	Equation (\ref{eq_szego_recursion_ropuc_reverse}) is obtained similarly by applying (\ref{eq_szegorecursion_orthonormal_reverse}) to (\ref{eq_ropuc_reverse_def}).
\end{proof}

In fact, (\ref{eq_szego_recursion_ropuc}) and (\ref{eq_szego_recursion_ropuc_reverse}) are obtained from (\ref{eq_szegorecursion_orthonormal}) and (\ref{eq_szegorecursion_orthonormal_reverse}) by simply replacing the real $\rho_n=|\rho_n|$'s by $|\rho_n|\zeta_n^{-1}$, which become complex $\rho_n$'s as defined in (\ref{eq_rho_complex}). Moreover, if $\rho_n\in[0,1]\subseteq\mathbb{R}$, then we obtain back exactly the original forward Szeg\H{o} recursion given by Theorem \ref{thm_szego_recursion}. Since the sequence $\{\zeta_n\}_{n=0}^\infty\subseteq\partial\mathbb{D}$ is allowed to be arbitrary, by simply dividing $\varphi_n(z)$ by the complex number $\di\prod_{j=0}^{n-1}\zeta_j^{-1}\in\partial\mathbb{D}$, we have effectively transformed the real $\rho_j$'s from (\ref{eq_rho_real}) into the complex numbers $\overline{\rho_j}$'s, as we wished.

Again we can rewrite the forward Szeg\H{o} recursion in terms of the \textit{rotated Szeg\H{o} transfer matrix} $A_n^\angle(z)$ by
\begin{equation} \label{eq_szego_recursion_ropuc_transition_matrix}
	A_n^\angle(z)=\overline{\rho_n}^{-1}\begin{bmatrix}
		z & -\overline{\alpha_n} \\ -\alpha_n z & 1
	\end{bmatrix}.
\end{equation} 
Then (\ref{eq_szego_recursion_ropuc}) and (\ref{eq_szego_recursion_ropuc_reverse}) become
\begin{equation} \label{eq_szego_recursion_ropuc_matrix_form}
	\begin{bmatrix}
		\varphi_{n+1}^\angle(z) \\ \varphi_{n+1}^{\angle,*}(z)
	\end{bmatrix}=\overline{\rho_n}^{-1}\begin{bmatrix}
		z & -\overline{\alpha_n} \\ -\alpha_n z & 1
	\end{bmatrix}\begin{bmatrix}
		\varphi_{n}^\angle(z) \\ \varphi_{n}^{\angle,*}(z)
	\end{bmatrix}=A_n^\angle(z)\begin{bmatrix}
		\varphi_{n}^\angle(z) \\ \varphi_{n}^{\angle,*}(z)
	\end{bmatrix}.
\end{equation}

Again since the Verblunsky coefficients $\{\alpha_n(\mu)\}_{n=0}^\infty$ associated to the probability measure $\mu$ satisfy $|\alpha_n(\mu)|\neq 1$ for all $n\in\mathbb{N}_0$, then $A_n^\angle$ is invertible and we can obtain a rotated reverse forward Szeg\H{o} recursion:

\begin{corollary}[Backward Szeg\H{o} recursion for rotated OPUCs] \label{prop_ropuc_inverse_szego_recursion}
	Let $\mu$ be a nontrivial probability measure on $\partial\mathbb{D}$ such that $\{\alpha_n(\mu)\}_{n=0}^\infty\subseteq\mathbb{D}$. If $\{\varphi_n^\angle(z,\mu)\}_{n=0}^\infty$ are the rotated OPUCs associated to the sequence $\{\zeta_n\}_{n=0}^\infty\subseteq\partial\mathbb{D}$, $\{\varphi_n^{\angle,*}(z,\mu)\}_{n=0}^\infty$ be their reverse polynomials, and $\{\rho_n\}_{n=0}^\infty$ be defined by (\ref{eq_rho_complex}). Then
	\begin{align} 
		\label{eq_inverse_szego_recursion_ropuc}
		z\varphi_{n}^\angle(z)&=\rho_n^{-1}\left(\varphi_{n+1}^\angle(z)+\overline{\alpha_n}\varphi_{n+1}^{\angle,*}(z)\right),\\
		\label{eq_inverse_szego_recursion_ropuc_reverse}
		\varphi_{n}^{\angle,*}(z)&=\rho_n^{-1}\left(\varphi_{n+1}^{\angle,*}(z)+\alpha_n\varphi_{n+1}^{\angle}(z)\right).
	\end{align}
\end{corollary}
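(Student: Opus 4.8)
The plan is to mirror the derivation of the standard backward Szeg\H{o} recursion (Theorem \ref{thm_szego_recursion_inverse}) from its forward counterpart. Since the forward recursion for the rotated OPUCs has already been cast in matrix form in (\ref{eq_szego_recursion_ropuc_matrix_form}) via the rotated Szeg\H{o} transfer matrix $A_n^\angle(z)$, it suffices to invert $A_n^\angle(z)$ and left-multiply both sides of (\ref{eq_szego_recursion_ropuc_matrix_form}) by $\left(A_n^\angle\right)^{-1}(z)$, then read off the two scalar identities (\ref{eq_inverse_szego_recursion_ropuc}) and (\ref{eq_inverse_szego_recursion_ropuc_reverse}) from the two rows of the resulting vector equation.

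First I would record invertibility and compute the inverse. Since $\{\alpha_n\}_{n=0}^\infty\subseteq\mathbb{D}$, we have $|\alpha_n|<1$, so $|\rho_n|>0$ and $\rho_n,\overline{\rho_n}\neq 0$, which guarantees $A_n^\angle(z)$ is invertible for $z\neq 0$. Using $1-|\alpha_n|^2=|\rho_n|^2=\rho_n\overline{\rho_n}$, the inner matrix in (\ref{eq_szego_recursion_ropuc_transition_matrix}) has determinant $z|\rho_n|^2$, and a direct adjugate computation gives
\begin{equation*}
	\left(A_n^\angle\right)^{-1}(z)=\overline{\rho_n}\cdot\frac{1}{z|\rho_n|^2}\begin{bmatrix} 1 & \overline{\alpha_n} \\ \alpha_n z & z \end{bmatrix}=\frac{1}{z\rho_n}\begin{bmatrix} 1 & \overline{\alpha_n} \\ \alpha_n z & z \end{bmatrix}.
\end{equation*}
Applying this to (\ref{eq_szego_recursion_ropuc_matrix_form}) and clearing the factor $z^{-1}$ from the first row yields exactly (\ref{eq_inverse_szego_recursion_ropuc}) and (\ref{eq_inverse_szego_recursion_ropuc_reverse}); since all four functions involved are polynomials in $z$, the identities extend from $z\neq 0$ to all $z\in\mathbb{C}$.

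The one point that must be handled carefully — and the only genuine departure from the non-rotated computation in (\ref{eq_szegorecursion_opuc_An_inverse}) — is the appearance of $\rho_n^{-1}$ (unconjugated) in the backward recursion, even though the forward transfer matrix $A_n^\angle$ carries the prefactor $\overline{\rho_n}^{-1}$. This conjugation flip arises precisely because the prefactor $\overline{\rho_n}^{-1}$ inverts to $\overline{\rho_n}$, which then combines with the $|\rho_n|^{-2}=(\rho_n\overline{\rho_n})^{-1}$ coming from the determinant of the inner matrix to leave a single $\rho_n^{-1}$. This mirrors the remark in the introduction that inverting the rotated Gesztesy-Zinchenko transfer matrix replaces $\rho_n$'s by $\overline{\rho_n}$'s, and it is the only place where the modulus of $\rho_n$ interacts nontrivially with its argument. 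I expect no further obstacle, as everything else is routine matrix algebra.
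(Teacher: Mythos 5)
Your proposal is correct and follows essentially the same route as the paper's own proof: compute $\left(A_n^\angle\right)^{-1}(z)=\dfrac{1}{z\rho_n}\begin{bmatrix} 1 & \overline{\alpha_n} \\ \alpha_n z & z \end{bmatrix}$ and left-multiply the matrix form of the forward recursion. Your explicit tracking of how the prefactor $\overline{\rho_n}^{-1}$ combines with $|\rho_n|^{-2}$ to produce the unconjugated $\rho_n^{-1}$ is a nice clarification, but the argument is the same.
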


\begin{proof}
	Let $A_n^\angle(z)$ be defined by (\ref{eq_szego_recursion_ropuc_transition_matrix}). Then
	\[\det A_n^\angle(z)= \overline{\rho_n}^{-1}(z-z|\alpha_n|^2)=\overline{\rho_n}^{-1}z|\rho_n|^2=z\rho_n.\]
	Since $z\in\partial\mathbb{D}$, we have $z\neq 0$. Moreover $\alpha_n\in\mathbb{D}$ implies that $\rho_n\neq 0$, so $A_n^\angle$ is invertible. Hence we can compute
	\begin{equation}
		\left(A_n^\angle\right)^{-1}(z)=\dfrac{1}{z\rho_n}
		\begin{bmatrix}
			1 & \overline{\alpha_n} \\ \alpha_n z & z
		\end{bmatrix}.
	\end{equation}
	Left multiplication by $\left(A_n^\angle\right)^{-1}(z)$ to both sides of (\ref{eq_szego_recursion_ropuc_matrix_form}) then gives
	\begin{equation}
		\begin{bmatrix}
			\varphi_{n}^\angle(z) \\ \varphi_{n}^{\angle,*}(z)
		\end{bmatrix}=\left(A_n^\angle\right)^{-1}(z)\begin{bmatrix}
			\varphi_{n+1}^\angle(z) \\ \varphi_{n+1}^{\angle,*}(z)
		\end{bmatrix}=\dfrac{1}{z\rho_n}\begin{bmatrix}
			1 & \overline{\alpha_n} \\ \alpha_nz & z
		\end{bmatrix}\begin{bmatrix}
			\varphi_{n}^\angle(z) \\ \varphi_{n}^{\angle,*}(z)
		\end{bmatrix},
	\end{equation}
	which can be recovered into (\ref{eq_inverse_szego_recursion_ropuc}) and (\ref{eq_inverse_szego_recursion_ropuc_reverse}).
\end{proof}

It is sometimes useful to rewrite the forward Szeg\H{o} recursion (\ref{eq_szego_recursion_ropuc}) and (\ref{eq_szego_recursion_ropuc_reverse}) as 
\begin{align}
	\label{eq_szego_recursion_ropuc_rewrite}
	z\varphi_n^\angle(z)&=\overline{\rho_n}\varphi_{n+1}^\angle(z)+\overline{\alpha_n}\varphi_n^{\angle,*}(z),\\
	\label{eq_szego_recursion_ropuc_reverse_rewrite}
	\varphi_n^{\angle,*}(z)&=\overline{\rho_n}\varphi_{n+1}^{\angle,*}(z)+\alpha_nz\varphi_n^\angle(z),
\end{align}
and similarly the backward Szeg\H{o} recursion (\ref{eq_inverse_szego_recursion_ropuc}) and (\ref{eq_inverse_szego_recursion_ropuc_reverse}) as
\begin{align}
	\label{eq_inverse_szego_recursion_ropuc_rewrite}
	\varphi_{n+1}^\angle(z)&=\rho_nz\varphi_n^\angle(z)-\overline{\alpha_n}\varphi_{n+1}^{\angle,*}(z),\\
	\label{eq_inverse_szego_recursion_ropuc_reverse_rewrite}
	\varphi_{n+1}^{\angle,*}(z)&=\rho_n\varphi_n^{\angle,*}(z)-\alpha_n\varphi_{n+1}^\angle(z).
\end{align}

By a trivial computation, one can show that the Christoffel-Darboux formula is invariant under rotation of polynomials:
\begin{align} 
	\label{eq_ropuc_cd_opuc_cd}
	\sum_{k=0}^n \overline{\varphi_k^\angle(\xi)}\varphi_k^\angle(z)
	&=\sum_{k=0}^n \overline{\varphi_k(\xi)}\varphi_k(z)\\
	\label{eq_opuc_cd_ropuc}
	&=\dfrac{\overline{\varphi_{n+1}^*(\xi)}\varphi_{n+1}^*(z)-\overline{\varphi_{n+1}(\xi)}\varphi_{n+1}(z)}{1-\overline{\xi}z}\\
	\label{eq_ropuc_cd}
	&=\dfrac{\overline{\varphi_{n+1}^{\angle,*}(\xi)}\varphi_{n+1}^{\angle,*}(z)-\overline{\varphi_{n+1}^\angle(\xi)}\varphi_{n+1}^\angle(z)}{1-\overline{\xi}z}.
\end{align}
In other words, there is no \say{rotated Christoffel-Darboux formula} - the Christoffel-Darboux formula for standard and rotated OPUCs are identical.

\subsection{Inner Products Involving Rotated OPUCs}

We compute a selection of inner products involving the rotated OPUCs $\{\varphi_n^\angle\}_{n=0}^\infty$ and their reverses $\{\varphi_n^{\angle,*}\}_{n=0}^\infty$ which are involved in the subsequent computations the entries of rotated CMV matrices. In particular, we follow a similar approach to Simon's presentation in Section 1.5 of \cite{Simon2005a} - by proving rotated analogues of Propositions 1.5.8, 1.5.9 and 1.5.10 in \cite{Simon2005a}.

\begin{proposition} \label{prop_ropuc_ip_general}
	Let $\mu$ be a nontrivial probability measure on $\partial\mathbb{D}$, $\{\varphi_n^\angle(z,\mu)\}_{n=0}^\infty$ be the rotated OPUCs associated to the sequence $\{\zeta_n\}_{n=0}^\infty$ as defined in (\ref{eq_ropuc_def}), $\{\alpha_n\}_{n=0}^\infty$ be their Verblunsky coefficients, and the sequence $\{\rho_n\}_{n=0}^\infty$ be defined by (\ref{eq_rho_complex}). Then 
	\begin{align}
		\label{eq_ropuc_ip_jk}
		\ip{\varphi^\angle_j}{\varphi^\angle_k}&=\delta_{jk},\\
		\label{eq_ropuc_ip_j*k}
		\ip{\varphi_j^{\angle,*}}{\varphi_k^\angle}
		&=\begin{cases}
			-\overline{\alpha_{k-1}}\di\prod_{\ell=k}^{j-1}\overline{\rho_\ell} & 0\leq k\leq j-1;\\
			-\overline{\alpha_{k-1}} & k=j; \\ 0 & k\geq j+1,
		\end{cases}\\
		\label{eq_ropuc_ip_jk*}
		\ip{\varphi_j^\angle}{\varphi_k^{\angle,*}}
		&=\begin{cases}
			-\alpha_{j-1}\di\prod_{\ell=j}^{k-1}\rho_\ell & 0\leq j\leq k-1;\\	
			-\alpha_{j-1} & j=k; \\ 0 & j\geq k+1,
		\end{cases}\\
		\label{eq_ropuc_ip_j*k*}
		\ip{\varphi_j^{\angle,*}}{\varphi_k^{\angle,*}}
		&=\begin{cases}
			\di\prod_{\ell=k}^{j-1}\overline{\rho_\ell} & k\leq j-1;\\
			1 & k=j;\\
			\di\prod_{\ell=j}^{k-1}\rho_\ell & k\geq j+1.\\
		\end{cases}
	\end{align}
\end{proposition}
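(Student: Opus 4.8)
The plan is to reduce every inner product to its non-rotated counterpart by peeling off the scalar factors $\prod_j\zeta_j$ that the definitions attach to the rotated polynomials, and then to invoke the standard evaluations (the non-rotated analogues of Propositions 1.5.8--1.5.10 of \cite{Simon2005a}). Equation (\ref{eq_ropuc_ip_jk}) is nothing but the orthonormality already recorded in Proposition \ref{prop_ropuc_orthonormal}, so the real work lies entirely in the three mixed formulas (\ref{eq_ropuc_ip_j*k})--(\ref{eq_ropuc_ip_j*k*}).

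First I would use conjugate-linearity of the inner product in its first slot, (\ref{eq_L2_ip}), together with the definitions (\ref{eq_ropuc_def}) and (\ref{eq_ropuc_reverse_def}), to factor out the rotation scalars. For instance,
\[
	\ip{\varphi_j^{\angle,*}}{\varphi_k^\angle}=\overline{\left(\prod_{\ell=0}^{j-1}\zeta_\ell\right)}\left(\prod_{m=0}^{k-1}\zeta_m\right)\ip{\varphi_j^*}{\varphi_k},
\]
and likewise for the other two, where on the right the inner products are the standard (non-rotated) ones with respect to the same measure $\mu$. Crucially, the $\zeta$-extraction is identical whether or not a polynomial is starred, since (\ref{eq_ropuc_def}) and (\ref{eq_ropuc_reverse_def}) attach the same scalar $\prod_{0}^{n-1}\zeta$ to both $\varphi_n^\angle$ and $\varphi_n^{\angle,*}$. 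Because $\zeta_\ell\in\partial\mathbb{D}$ we may replace each $\overline{\zeta_\ell}$ by $\zeta_\ell^{-1}$, after which the two $\zeta$-products telescope: for $k\le j$ the surviving factor is $\prod_{\ell=k}^{j-1}\zeta_\ell^{-1}$, for $j\le k$ it is $\prod_{\ell=j}^{k-1}\zeta_\ell$, and on the diagonal $j=k$ the product is empty, hence trivial.

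Next I would substitute the standard values from \cite{Simon2005a}, in which the overlaps are given by the same shape of formula but with the \emph{real} quantities $|\rho_\ell|=(1-|\alpha_\ell|^2)^{1/2}$ in place of $\rho_\ell$ and $\overline{\rho_\ell}$, and with the same Verblunsky prefactors $-\overline{\alpha_{k-1}}$ and $-\alpha_{j-1}$; these prefactors carry over unchanged. The concluding step is the conversion $|\rho_\ell|\zeta_\ell=\rho_\ell$ and $|\rho_\ell|\zeta_\ell^{-1}=\overline{\rho_\ell}$, which is immediate from (\ref{eq_rho_complex}) and $\overline{\zeta_\ell}=\zeta_\ell^{-1}$. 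Wherever the telescoped factor is $\zeta_\ell^{-1}$ (the range $k\le j$, as in the lower branch of (\ref{eq_ropuc_ip_j*k}) and the $k\le j-1$ branch of (\ref{eq_ropuc_ip_j*k*})) each real $|\rho_\ell|$ becomes $\overline{\rho_\ell}$; wherever it is $\zeta_\ell$ (the range $j\le k$, as in (\ref{eq_ropuc_ip_jk*}) and the $k\ge j+1$ branch of (\ref{eq_ropuc_ip_j*k*})) it becomes $\rho_\ell$. This yields exactly the claimed expressions.

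The main obstacle I anticipate is purely bookkeeping rather than conceptual: keeping the conjugation direction straight so that the asymmetry between $\rho_\ell$ and $\overline{\rho_\ell}$ emerges correctly. This asymmetry is genuinely a feature of the rotated setting --- it is invisible in the standard case where $\rho_\ell$ is real --- and here it is dictated entirely by the sign of the surviving $\zeta$-exponent, equivalently by whether $k<j$ or $k>j$. As a self-contained safeguard I would also be prepared to argue by induction on $\max(j,k)$ using the forward and backward rotated Szeg\H{o} recursions of Proposition \ref{prop_ropuc_szego_recursion} and Corollary \ref{prop_ropuc_inverse_szego_recursion}, mirroring Simon's original inductive arguments and avoiding any import of the non-rotated evaluations; there the diagonal base cases reduce to Proposition \ref{prop_ropuc_orthonormal} and to $\ip{\varphi_n^{\angle,*}}{\varphi_n^\angle}=-\overline{\alpha_{n-1}}$, which follows from a single application of the recursion together with orthonormality.
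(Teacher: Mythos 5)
Your proposal is correct. The reduction $\ip{\varphi_j^{\angle,*}}{\varphi_k^\angle}=\bigl(\prod_{\ell=0}^{j-1}\zeta_\ell^{-1}\bigr)\bigl(\prod_{m=0}^{k-1}\zeta_m\bigr)\ip{\varphi_j^*}{\varphi_k}$ is legitimate because (\ref{eq_ropuc_def}) and (\ref{eq_ropuc_reverse_def}) attach the same unimodular scalar to starred and unstarred polynomials, the telescoping leaves $\prod_{\ell=k}^{j-1}\zeta_\ell^{-1}$ or $\prod_{\ell=j}^{k-1}\zeta_\ell$ according to the sign of $j-k$, and the conversion $|\rho_\ell|\zeta_\ell^{\pm 1}=\rho_\ell$ or $\overline{\rho_\ell}$ reproduces exactly the asymmetry in (\ref{eq_ropuc_ip_j*k})--(\ref{eq_ropuc_ip_j*k*}). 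The route is, however, not the one the paper takes. The paper does not quote the non-rotated inner-product table of \cite[Proposition 1.5.8]{Simon2005a} and rescale it; instead it reproves the table directly in the rotated setting by working at the monic level: from Theorem \ref{thm_opuc_reverse_orthogonal_to_z_z^n} it derives the single identity $\ip{\varphi_j^{\angle,*}}{P}=P(0)\prod_{\ell=0}^{j-1}\overline{\rho_\ell}$ for any polynomial $P$ of degree at most $j$, and then evaluates $P=\varphi_k^\angle$ and $P=\varphi_k^{\angle,*}$ at $z=0$ using $\Phi_k(0)=-\overline{\alpha_{k-1}}$ and $\Phi_k^*(0)=1$, with (\ref{eq_ropuc_ip_jk*}) obtained by conjugation. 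The two arguments are of comparable length; yours is more economical if one is willing to import Simon's standard evaluations wholesale and makes the mechanism behind the $\rho_\ell$ versus $\overline{\rho_\ell}$ asymmetry transparent, while the paper's is self-contained and produces the ``$P(0)$ times a product of $\overline{\rho_\ell}$'' identity (\ref{eq_ropuc_reverse_ip_proof}), which it reuses later (for instance to show $\ip{\varphi_{2n+2}^{\angle,*}}{z\varphi_{2n+1}^{\angle,*}}=0$). Your fallback induction via the rotated Szeg\H{o} recursions would also work but is not needed.
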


\begin{proof}
	Equation (\ref{eq_ropuc_ip_jk}) is simply a restatement of Proposition \ref{prop_ropuc_orthonormal}.
	For (\ref{eq_ropuc_ip_j*k}), the case $k\geq j+1$ is a direct consequence of the fact that $\varphi_k^\angle$ is orthogonal to any polynomial $P$ with $\deg P\leq k$. If $k\leq j$, then for any polynomial $P$ with $\deg P\leq j$, we have
	\begin{align*}
		\ip{\Phi_j^*(z)}{P(z)-P(0)}=0.
	\end{align*}
	By Theorem \ref{thm_opuc_reverse_orthogonal_to_z_z^n}, we have
	\begin{equation}
		\ip{\Phi_j^*}{P}=\ip{\Phi_j^*}{P(0)}=P(0)\norm{\Phi_j}^2=P(0)\prod_{\ell=0}^{j-1}\left|\rho_\ell\right|^2=P(0)\prod_{\ell=0}^{j-1}\overline{\rho_\ell}\rho_\ell
	\end{equation}
	and hence
	\begin{equation} \label{eq_ropuc_reverse_ip_proof}
		\ip{\varphi_j^{\angle,*}}{P}
		=\di\overline{\prod_{\ell=0}^{j-1}\overline{\rho_\ell}^{-1}}\ip{\Phi_j^*}{P}
		=\prod_{\ell=0}^{j-1}\rho_\ell^{-1}\left(P(0)\prod_{\ell=0}^{j-1}\overline{\rho_\ell}\rho_\ell\right)
		=P(0)\prod_{\ell=0}^{j-1}\overline{\rho_\ell}.
	\end{equation}
	Putting $P(z)=\varphi_k^\angle(z)$ then gives
	\begin{align*}
		\ip{\varphi_j^{\angle,*}}{\varphi_k^\angle}
		=\varphi_k^\angle(0)\prod_{\ell=0}^{j-1}\overline{\rho_\ell}
		=\left(\prod_{\ell=0}^{k-1}\overline{\rho_\ell}^{-1}\Phi_k(0)\right)\prod_{\ell=0}^{j-1}\overline{\rho_\ell}
		=\begin{cases}
			-\overline{\alpha_{k-1}}\di\prod_{\ell=k}^{j-1}\overline{\rho_\ell} & k\leq j-1;\\
			-\overline{\alpha_{k-1}} & k=j.
		\end{cases}
	\end{align*}
	
	Equation (\ref{eq_ropuc_ip_jk*}) is simply the complex conjugate of (\ref{eq_ropuc_ip_j*k}).
	
	To prove (\ref{eq_ropuc_ip_j*k*}), we first suppose $k\leq j$. Putting $P(z)=\varphi_k^{\angle,*}(z)$ into (\ref{eq_ropuc_reverse_ip_proof}) gives
	\begin{align*}
		\ip{\varphi_j^{\angle,*}}{\varphi_k^{\angle,*}}
		=\varphi_k^{\angle,*}(0)\prod_{\ell=0}^{j-1}\overline{\rho_\ell}
		=\left(\prod_{\ell=0}^{k-1}\overline{\rho_\ell}^{-1}\Phi_k^*(0)\right)\prod_{\ell=0}^{j-1}\overline{\rho_\ell}
		=\begin{cases}
			\di\prod_{\ell=k}^{j-1}\overline{\rho_\ell} & k\leq j-1;\\
			1 & k=j.
		\end{cases}
	\end{align*}
	If $k>j$, then
	\[\ip{\varphi_j^{\angle,*}}{\varphi_k^{\angle,*}}
	=\overline{\ip{\varphi_k^{\angle,*}}{\varphi_j^{\angle,*}}}
	=\prod_{\ell=j}^{k-1}\rho_\ell.\qedhere\]
\end{proof}

We observe that the above inner products between the rotated OPUCs are essentially identical to those for the standard orthonormal polynomials respectively (cf. \cite[Proposition 1.5.8]{Simon2005a}). The differences observed in (\ref{eq_ropuc_ip_j*k}) and (\ref{eq_ropuc_ip_j*k*}) are simply down to the single fact that when $\rho_n\in\mathbb{R}$, we have $\overline{\rho_n}=\rho_n$, but this is generally not true if $\rho_n\in\mathbb{C}$.

\begin{proposition}
	Let $\mu$ be a nontrivial probability measure on $\partial\mathbb{D}$, $\{\varphi_n^\angle(z,\mu)\}_{n=0}^\infty$ be the rotated OPUCs associated to the sequence $\{\zeta_n\}_{n=0}^\infty$ as defined in (\ref{eq_ropuc_def}), $\{\alpha_n\}_{n=0}^\infty$ be their Verblunsky coefficients, and the sequence $\{\rho_n\}_{n=0}^\infty$ be defined by (\ref{eq_rho_complex}). Then 
	\begin{align}
		\label{eq_ropuc_ip_j_z.k}
		\ip{\varphi_j^\angle}{z\varphi_k^\angle}
		&=\begin{cases}
			-\overline{\alpha_k}\alpha_{j-1}\di\prod_{\ell=j}^{k-1}\rho_\ell & 0\leq j\leq k-1;\\
			-\overline{\alpha_k}\alpha_{j-1} & j=k; \\ \overline{\rho_k} & j=k+1; \\ 0 & j\geq k+2,
		\end{cases}\\
		\label{eq_ropuc_ip_j*_z.k}
		\ip{\varphi_j^{\angle,*}}{z\varphi_k^\angle}
		&=\begin{cases}
			\overline{\alpha_k}\di\prod_{\ell=j}^{k-1}\rho_\ell & 0\leq j\leq k-1;\\
			\overline{\alpha_k} & j=k; \\ 0 & j\geq k+1.
		\end{cases}
	\end{align}
\end{proposition}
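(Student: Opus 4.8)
The plan is to collapse both inner products onto the ones already established in Proposition \ref{prop_ropuc_ip_general} by substituting the rewritten forward Szeg\H{o} recursion (\ref{eq_szego_recursion_ropuc_rewrite}), namely
\[
z\varphi_k^\angle(z)=\overline{\rho_k}\,\varphi_{k+1}^\angle(z)+\overline{\alpha_k}\,\varphi_k^{\angle,*}(z),
\]
into the second slot of each inner product. Since the $L^2$-inner product (\ref{eq_L2_ip}) is linear in its second argument, the scalars $\overline{\rho_k}$ and $\overline{\alpha_k}$ factor out unconjugated, leaving
\[
\ip{\varphi_j^\angle}{z\varphi_k^\angle}=\overline{\rho_k}\ip{\varphi_j^\angle}{\varphi_{k+1}^\angle}+\overline{\alpha_k}\ip{\varphi_j^\angle}{\varphi_k^{\angle,*}},
\]
and the analogous identity for $\ip{\varphi_j^{\angle,*}}{z\varphi_k^\angle}$ with $\varphi_j^\angle$ replaced by $\varphi_j^{\angle,*}$. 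Each of the four resulting inner products is supplied verbatim by (\ref{eq_ropuc_ip_jk}), (\ref{eq_ropuc_ip_jk*}), (\ref{eq_ropuc_ip_j*k}) and (\ref{eq_ropuc_ip_j*k*}), so all that remains is to substitute and organise by case.

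For (\ref{eq_ropuc_ip_j_z.k}) this is immediate: the first term contributes $\overline{\rho_k}\,\delta_{j,k+1}$, accounting for the isolated value $\overline{\rho_k}$ at $j=k+1$, while the second term $\overline{\alpha_k}\ip{\varphi_j^\angle}{\varphi_k^{\angle,*}}$ supplies the remaining three cases directly from (\ref{eq_ropuc_ip_jk*}). The two contributions never collide, since $\ip{\varphi_{k+1}^\angle}{\varphi_k^{\angle,*}}=0$ by the $j\geq k+1$ branch of (\ref{eq_ropuc_ip_jk*}).

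The one place demanding genuine care is (\ref{eq_ropuc_ip_j*_z.k}) in the range $j\geq k+1$, where both terms are nonzero and must cancel to yield $0$. For $j=k+1$ the first term evaluates to $\overline{\rho_k}\cdot(-\overline{\alpha_k})$ through the $j=k+1$ branch of (\ref{eq_ropuc_ip_j*k}) read at index $k+1$, while the second term is $\overline{\alpha_k}\cdot\overline{\rho_k}$ through the $k\leq j-1$ branch of (\ref{eq_ropuc_ip_j*k*}); these are exact negatives. For $j\geq k+2$ the same mechanism gives $-\overline{\alpha_k}\prod_{\ell=k}^{j-1}\overline{\rho_\ell}$ and $+\overline{\alpha_k}\prod_{\ell=k}^{j-1}\overline{\rho_\ell}$ once the leading $\overline{\rho_k}$ is absorbed into the product, so the cancellation is again exact. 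For $j\leq k$ the first inner product vanishes and only the second survives, giving $\overline{\alpha_k}\prod_{\ell=j}^{k-1}\rho_\ell$ (and $\overline{\alpha_k}$ when $j=k$) as claimed. The main obstacle, such as it is, is bookkeeping: tracking the index shift $k\mapsto k+1$ and keeping the products $\prod\overline{\rho_\ell}$ and $\prod\rho_\ell$ distinct, so that the cancellation throughout $j>k$ comes out exactly.
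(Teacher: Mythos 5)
Your proposal is correct and follows the paper's own route: substitute the rewritten forward Szeg\H{o} recursion (\ref{eq_szego_recursion_ropuc_rewrite}) into the second slot, use linearity, and reduce all four resulting inner products to Proposition \ref{prop_ropuc_ip_general}. The only divergence is the $j\geq k+1$ branch of (\ref{eq_ropuc_ip_j*_z.k}), where you verify the exact cancellation of the two surviving terms by direct computation (which checks out), whereas the paper short-circuits this by invoking Theorem \ref{thm_opuc_reverse_orthogonal_to_z_z^n} to conclude that $\varphi_j^{\angle,*}$ is orthogonal to $z\varphi_k^\angle$ outright.
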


\begin{proof}
	Taking inner product of $\varphi_j^\angle$ with both sides of (\ref{eq_szego_recursion_ropuc_rewrite}) with $n=k$ gives
	\begin{equation} \label{eq_ropuc_ip_j_z.k_proof}
		\ip{\varphi_j^\angle}{z\varphi_k^\angle}=\overline{\rho_k}\ip{\varphi_j^\angle}{\varphi_{k+1}^\angle}+\overline{\alpha_k}\ip{\varphi_j^\angle}{\varphi_k^{\angle,*}}.
	\end{equation}
	By (\ref{eq_ropuc_ip_jk}), the first term on the right hand side of (\ref{eq_ropuc_ip_j_z.k_proof}) always vanishes except when $j=k+1$, where $\ip{\varphi_j^\angle}{\varphi_{k+1}}=1$. Using this result and substituting (\ref{eq_ropuc_ip_jk*}) into (\ref{eq_ropuc_ip_j_z.k_proof}) then gives (\ref{eq_ropuc_ip_j_z.k}).
	On the other hand, taking inner product of $\varphi_j^{\angle,*}$ with both sides of (\ref{eq_szego_recursion_ropuc_rewrite}) gives
	\begin{equation} \label{eq_ropuc_ip_j*_z.k_proof}
		\ip{\varphi_j^{\angle,*}}{z\varphi_k^\angle}=\overline{\rho_k}\ip{\varphi_j^{\angle,*}}{\varphi_{k+1}^\angle}+\overline{\alpha_k}\ip{\varphi_j^{\angle,*}}{\varphi_k^{\angle,*}}.
	\end{equation}
	If $j\geq k+1$, then Theorem \ref{thm_opuc_reverse_orthogonal_to_z_z^n} implies that $\varphi_j^{\angle,*}$ is orthogonal to any polynomial $P(z)$ with $1\leq\deg P(z)\leq j$, thus we have $\ip{\varphi_j^{\angle,*}}{z\varphi_k^\angle}=0$. If $j\leq k$, then using the fact that $\varphi_{k+1}$ (and hence $\varphi_{k+1}^\angle$) is orthogonal to $\{1,z,z^2,\cdots,z^{n-1}\}$ implies that $\ip{\varphi_j^{\angle,*}}{\varphi_{k+1}^\angle}=0$. Substituting (\ref{eq_ropuc_ip_j*k*}) into (\ref{eq_ropuc_ip_j*_z.k_proof}) together with these results then gives (\ref{eq_ropuc_ip_j*_z.k}).
\end{proof}

Now we are ready to compute the eight special cases of inner products which constitute the entries of our rotated CMV matrices:

\begin{proposition}
	Let $\mu$ be a nontrivial probability measure on $\partial\mathbb{D}$, $\{\varphi_n^\angle(z,\mu)\}_{n=0}^\infty$ be the rotated OPUCs associated to the sequence $\{\zeta_n\}_{n=0}^\infty$ as defined in (\ref{eq_ropuc_def}), $\{\alpha_n\}_{n=0}^\infty$ be their Verblunsky coefficients, and the sequence $\{\rho_n\}_{n=0}^\infty$ be defined by (\ref{eq_rho_complex}). Then 
	\begin{align}
		\label{eq_ropuc_ip_2n+1_z.2n+1}
		\ip{\varphi_{2n+1}^\angle}{z\varphi_{2n+1}^\angle}&=-\overline{\alpha_{2n+1}}\alpha_{2n},\\
		\label{eq_ropuc_ip_2n+3_z2.2n+1}
		\ip{\varphi_{2n+3}^\angle}{z^2\varphi_{2n+1}^\angle}&=\overline{\rho_{2n+1}\rho_{2n+2}},\\
		\label{eq_ropuc_ip_2n*_z.2n+1}
		\ip{\varphi_{2n}^{\angle,*}}{z\varphi_{2n+1}^\angle}&=\overline{\alpha_{2n+1}}\overline{\rho_{2n}},\\
		\label{eq_ropuc_ip_2n+2*_z2.2n+1}
		\ip{\varphi_{2n+2}^{\angle,*}}{z^2\varphi_{2n+1}^\angle}&=\overline{\alpha_{2n+2}}\overline{\rho_{2n+1}},\\
		\label{eq_ropuc_ip_2n*_z.2n*}
		\ip{\varphi_{2n}^{\angle,*}}{z\varphi_{2n}^{\angle,*}}&=-\overline{\alpha_{2n}}\alpha_{2n-1},\\
		\label{eq_ropuc_ip_2n-2*_2n*}
		\ip{\varphi_{2n-2}^{\angle,*}}{\varphi_{2n}^{\angle,*}}&=\rho_{2n-2}\rho_{2n-1},\\
		\label{eq_ropuc_ip_2n+1_z.2n*}
		\ip{\varphi_{2n+1}^\angle}{z\varphi_{2n}^{\angle,*}}&=-\alpha_{2n-1}\overline{\rho_{2n}},\\
		\label{eq_ropuc_ip_2n-1_2n*}
		\ip{\varphi_{2n-1}^\angle}{\varphi_{2n}^{\angle,*}}&=-\alpha_{2n-2}\rho_{2n-1}.
	\end{align}
\end{proposition}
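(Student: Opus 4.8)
The plan is to obtain all eight identities from the two preceding propositions, grouped by how much work each requires. Four of them are immediate specialisations. Identity (\ref{eq_ropuc_ip_2n+1_z.2n+1}) is the $j=k$ case of (\ref{eq_ropuc_ip_j_z.k}) at $j=k=2n+1$, giving $-\overline{\alpha_{2n+1}}\alpha_{2n}$; identity (\ref{eq_ropuc_ip_2n*_z.2n+1}) is the $0\le j\le k-1$ case of (\ref{eq_ropuc_ip_j*_z.k}) at $(j,k)=(2n,2n+1)$; and (\ref{eq_ropuc_ip_2n-2*_2n*}), (\ref{eq_ropuc_ip_2n-1_2n*}) are read straight off (\ref{eq_ropuc_ip_j*k*}) and (\ref{eq_ropuc_ip_jk*}) at $(j,k)=(2n-2,2n)$ and $(2n-1,2n)$. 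For these I would substitute indices and collapse the resulting one-factor products.

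For the two $z^2$-identities (\ref{eq_ropuc_ip_2n+3_z2.2n+1}) and (\ref{eq_ropuc_ip_2n+2*_z2.2n+1}), I would apply the rewritten forward recursion (\ref{eq_szego_recursion_ropuc_rewrite}) once, so that $z^2\varphi_{2n+1}^\angle=\overline{\rho_{2n+1}}\,z\varphi_{2n+2}^\angle+\overline{\alpha_{2n+1}}\,z\varphi_{2n+1}^{\angle,*}$. Pairing this against $\varphi_{2n+3}^\angle$ and against $\varphi_{2n+2}^{\angle,*}$ respectively, the $z\varphi_{2n+1}^{\angle,*}$ term drops out: in the first case because $\varphi_{2n+3}^\angle$ is orthogonal to every polynomial of degree $\le 2n+2$, and in the second because $z\varphi_{2n+1}^{\angle,*}\in\mathrm{span}\{z,\dots,z^{2n+2}\}$, to which $\varphi_{2n+2}^{\angle,*}$ (a scalar multiple of $\Phi_{2n+2}^*$) is orthogonal by Theorem \ref{thm_opuc_reverse_orthogonal_to_z_z^n}. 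The surviving term is then a single entry of (\ref{eq_ropuc_ip_j_z.k}) or (\ref{eq_ropuc_ip_j*_z.k}).

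The remaining two, (\ref{eq_ropuc_ip_2n*_z.2n*}) and (\ref{eq_ropuc_ip_2n+1_z.2n*}), are the ones not covered directly by the earlier propositions, since they pair against $z\varphi^{\angle,*}$. For (\ref{eq_ropuc_ip_2n*_z.2n*}) I would expand the \emph{first} argument via the rewritten reverse recursion (\ref{eq_szego_recursion_ropuc_reverse_rewrite}); using $\ip{zf}{zg}=\ip{f}{g}$ on $\partial\mathbb{D}$ and the same reverse-polynomial orthogonality to kill one term, the identity reduces to $\overline{\alpha_{2n}}\ip{\varphi_{2n}^\angle}{\varphi_{2n}^{\angle,*}}$, evaluated by the $j=k$ case of (\ref{eq_ropuc_ip_jk*}). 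For (\ref{eq_ropuc_ip_2n+1_z.2n*}) I would expand $\varphi_{2n+1}^\angle$ by the forward recursion (\ref{eq_szego_recursion_ropuc}) and reduce the pairing to $\ip{\varphi_{2n}^\angle}{\varphi_{2n}^{\angle,*}}$ together with the freshly proved (\ref{eq_ropuc_ip_2n*_z.2n*}); the final step is the simplification $\rho_{2n}^{-1}\bigl(1-|\alpha_{2n}|^2\bigr)=\rho_{2n}^{-1}|\rho_{2n}|^2=\overline{\rho_{2n}}$, which is precisely where the complex (rather than real) modulus $\rho_n$ turns a factor into its conjugate. I expect (\ref{eq_ropuc_ip_2n+1_z.2n*}) to be the main obstacle, as it chains two recursions and hinges on this $\rho^{-1}|\rho|^2=\overline{\rho}$ identity; throughout, the recurring care is in distinguishing $\rho_n$ from $\overline{\rho_n}$ under the conjugate-linear inner product.
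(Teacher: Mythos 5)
Your proposal is correct: every one of the eight identities is established, and the five \say{immediate specialisation} cases plus (\ref{eq_ropuc_ip_2n+2*_z2.2n+1}) are handled exactly as in the paper. Where you diverge is in (\ref{eq_ropuc_ip_2n+3_z2.2n+1}), (\ref{eq_ropuc_ip_2n*_z.2n*}) and (\ref{eq_ropuc_ip_2n+1_z.2n*}): you work entirely with the Szeg\H{o} recursions and degree/orthogonality arguments, whereas the paper exploits the anti-unitary structure, i.e.\ the identity $\varphi_n^{\angle,*}(z)=\bigl(\prod_{j<n}\zeta_j^2\bigr)z^n\overline{\varphi_n^\angle(z)}$ on $\partial\mathbb{D}$, to convert $\ip{\varphi_{2n+3}^\angle}{z^2\varphi_{2n+1}^\angle}$ and $\ip{\varphi_{2n+1}^\angle}{z\varphi_{2n}^{\angle,*}}$ directly into inner products of reverse polynomials (with explicit $\zeta$-prefactors that cancel), and uses anti-unitarity of $R_{2n}^\angle$ to get $\ip{\varphi_{2n}^{\angle,*}}{z\varphi_{2n}^{\angle,*}}=\ip{\varphi_{2n}^\angle}{z\varphi_{2n}^\angle}$ in one line. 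The trade-off is real but minor: the paper's route produces the conjugated $\overline{\rho}$'s automatically from the $\zeta^{\pm 2}$ bookkeeping, while yours is more uniform (one toolkit for all eight items) at the cost of chaining two recursions for (\ref{eq_ropuc_ip_2n+1_z.2n*}) and invoking $\rho_{2n}^{-1}\bigl(1-|\alpha_{2n}|^2\bigr)=\overline{\rho_{2n}}$ by hand --- a step you correctly identify as the place where the complex $\rho_n$'s genuinely matter. I verified the two cancellations you rely on (orthogonality of $\varphi_{2n+3}^\angle$ to degree $\le 2n+2$, and of $\varphi_{2n+1}^{\angle,*}$ to $\Span\{z,\dots,z^{2n+1}\}$) and the final constants; all agree with the stated right-hand sides.
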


\begin{proof}
	\allowdisplaybreaks
	Equation (\ref{eq_ropuc_ip_2n+1_z.2n+1}) is a special case of (\ref{eq_ropuc_ip_j_z.k}). To prove (\ref{eq_ropuc_ip_2n+3_z2.2n+1}), we can write
	\begin{align*}
		\ip{\varphi_{2n+3}^\angle}{z^2\varphi_{2n+1}^\angle}
		&=\int_{\partial\mathbb{D}} \overline{\varphi_{2n+3}^\angle(z)}\cdot z^2\varphi_{2n+1}^\angle(z)\,d\mu(z)\\
		&=\prod_{j=2n+1}^{2n+2} \zeta_j^{-2}\int_{\partial\mathbb{D}}\overline{\prod_{j=0}^{2n}\zeta_j^{2} z^{2n+1}\overline{\varphi_{2n+1}^\angle(z)}}\cdot \prod_{j=0}^{2n+2}\zeta_j^{2}z^{2n+3}\overline{\varphi_{2n+3}^\angle(z)}\,d\mu(z)\\
		&=\zeta_{2n+1}^2\zeta_{2n+2}^2\ip{\varphi_{2n+1}^{\angle,*}(z)}{\varphi_{2n+3}^{\angle,*}(z)}\\
		&=\zeta_{2n+1}^{-2}\zeta_{2n+2}^{-2}\overline{\rho_{2n+1}}\overline{\rho_{2n+2}}\\
		&=\overline{\rho_{2n+1}\rho_{2n+2}}
	\end{align*}
	so (\ref{eq_ropuc_ip_2n+3_z2.2n+1}) is a consequence of (\ref{eq_ropuc_ip_j*k*}). Equation (\ref{eq_ropuc_ip_2n*_z.2n+1}) is a special case of (\ref{eq_ropuc_ip_j*_z.k}). To prove (\ref{eq_ropuc_ip_2n+2*_z2.2n+1}), multiplying both sides of (\ref{eq_szego_recursion_ropuc_rewrite}) by $z$ and replacing $n$ with $2n+1$ gives
	\[z^2\varphi_{2n+1}^\angle(z)=\overline{\rho_{2n+1}}z\varphi_{2n+2}^\angle(z)+\overline{\alpha_{2n+1}}z\varphi_{2n+1}^{\angle,*}(z)\] and hence
	\[\ip{\varphi_{2n+2}^{\angle,*}}{z^2\varphi_{2n+1}^\angle}=\overline{\rho_{2n+1}}\ip{\varphi_{2n+2}^{\angle,*}}{z\varphi_{2n+2}^\angle}+\overline{\alpha_{2n+1}}\ip{\varphi_{2n+2}^{\angle,*}}{z\varphi_{2n+1}^{\angle,*}}.\]
	Since $\varphi_{2n+2}^{\angle,*}$ is orthogonal to $\{z,z^2,\cdots,z^{2n+2}\}$, by Theorem \ref{thm_opuc_reverse_orthogonal_to_z_z^n}, we have $\ip{\varphi_{2n+2}^{\angle,*}}{z\varphi_{2n+1}^{\angle,*}}=0$. Alternatively, we can also prove $\ip{\varphi_{2n+2}^{\angle,*}}{z\varphi_{2n+1}^{\angle,*}}=0$ by applying (\ref{eq_ropuc_reverse_ip_proof}) with $j=2n+2$ and $P=z\varphi_{2n+1}^{\angle,*}$. Thus (\ref{eq_ropuc_ip_2n+2*_z2.2n+1}) follows from (\ref{eq_ropuc_ip_j*_z.k}).
	By anti-unitarity of $R_{2n}^\angle$, we can write 
	\[\ip{\varphi_{2n}^{\angle,*}}{z\varphi_{2n}^{\angle,*}}=\ip{\varphi_{2n}^\angle}{z\varphi_{2n}^\angle}\]
	so (\ref{eq_ropuc_ip_2n*_z.2n*}) follows from (\ref{eq_ropuc_ip_j_z.k}). Equation (\ref{eq_ropuc_ip_2n-2*_2n*}) is a special case of (\ref{eq_ropuc_ip_j*k*}). One may show that (\ref{eq_ropuc_ip_2n+1_z.2n*}) follows from (\ref{eq_ropuc_ip_jk*}) by a similar technique used in the proof of (\ref{eq_ropuc_ip_2n+3_z2.2n+1}).
	Equation (\ref{eq_ropuc_ip_2n-1_2n*}) is a special case of (\ref{eq_ropuc_ip_jk*}).
\end{proof}

\subsection{Rotated CMV Operators}

Following a similar idea to the definition of the rotated OPUCs, given a sequence $\{\zeta_n\}_{n=0}^\infty\subseteq\partial\mathbb{D}$, we also adopt a similar definition to define the \textit{rotated CMV basis} as follows:

\begin{definition}[Rotated CMV Basis and Operator] \label{def_rcmv}
	Let $\mu$ be a nontrivial probability measure on $\partial\mathbb{D}$ and $\{\chi_n(z,\mu)\}_{n=0}^\infty$ be the standard CMV basis associated to $\mu$. The \textit{rotated CMV basis} $\{\chi_n^\angle(z,\mu)\}_{n=0}^\infty$ associated to the sequence $\{\zeta_n\}_{n=0}^\infty\subseteq\partial\mathbb{D}$ is defined by
	\begin{equation} \label{eq_rcmv_basis_def}
		\chi_n^\angle(z)=\left(\prod_{j=0}^{n-1}\zeta_j\right)\chi_n(z).
	\end{equation}
	The \textit{rotated CMV matrix} $\mathcal{C^\angle}\left(\mu,\{\zeta_n\}_{n=0}^\infty\right)=[\mathcal{C}_{k\ell}^\angle]$ associated to $\{\zeta_n\}_{n=0}^\infty\subseteq\partial\mathbb{D}$ is defined by
	\begin{equation} \label{eq_rcmv_matrix_def}
		\mathcal{C}_{k\ell}^\angle=\ip{\chi_k^\angle}{z\chi_\ell^\angle}.
	\end{equation}
\end{definition}

We can show that the relationship between the rotated CMV basis and the rotated OPUCs is analogous to (\ref{eq_cmv_basis_opuc_odd}) and (\ref{eq_cmv_basis_opuc_even}) in the following way:

\begin{proposition} \label{prop_rcmv_basis_ropuc}
	Let $\mu$ be a nontrivial probability measure on $\partial\mathbb{D}$, $\{\varphi_n^\angle(z,\mu)\}_{n=0}^\infty$ be the rotated OPUCs as defined in (\ref{eq_ropuc_def}) and $\{\chi_n^\angle(z,\mu)\}_{n=0}^\infty$ be the rotated CMV basis defined in (\ref{eq_rcmv_basis_def}). Then
	\begin{align}
		\label{eq_rcmv_basis_ropuc_odd}
		\chi_{2n-1}^\angle(z)&=z^{-n+1}\varphi_{2n-1}^\angle(z),\\
		\label{eq_rcmv_basis_ropuc_even}
		\chi_{2n}^\angle(z)&=z^{-n}\varphi_{2n}^{\angle,*}(z).
	\end{align}
\end{proposition}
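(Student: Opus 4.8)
The plan is to prove both identities by a direct substitution, exploiting the fact that the rotations applied to the CMV basis, the rotated OPUCs, and the rotated reverse OPUCs are all governed by the \emph{same} product of unimodular factors $\prod_{j=0}^{n-1}\zeta_j$ indexed by $n$. Since the standard CMV basis already satisfies the relations (\ref{eq_cmv_basis_opuc_odd}) and (\ref{eq_cmv_basis_opuc_even}), the entire content of the proposition reduces to checking that the prefactors match once the definitions are unwound.

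First I would handle the odd case. Starting from the definition (\ref{eq_rcmv_basis_def}) with index $2n-1$, the prefactor is $\prod_{j=0}^{2n-2}\zeta_j$, and I would substitute the standard relation (\ref{eq_cmv_basis_opuc_odd}), namely $\chi_{2n-1}(z)=z^{-n+1}\varphi_{2n-1}(z)$. Pulling the scalar $z^{-n+1}$ to the front, this leaves $z^{-n+1}\left(\prod_{j=0}^{2n-2}\zeta_j\right)\varphi_{2n-1}(z)$. The key observation is that this very product $\prod_{j=0}^{2n-2}\zeta_j$ is exactly the prefactor appearing in the definition (\ref{eq_ropuc_def}) of $\varphi_{2n-1}^\angle$, so the bracketed expression collapses to $\varphi_{2n-1}^\angle(z)$, yielding (\ref{eq_rcmv_basis_ropuc_odd}).

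For the even case I would proceed identically, starting from (\ref{eq_rcmv_basis_def}) with index $2n$, whose prefactor is $\prod_{j=0}^{2n-1}\zeta_j$, and inserting the standard relation (\ref{eq_cmv_basis_opuc_even}), namely $\chi_{2n}(z)=z^{-n}\varphi_{2n}^*(z)$. After extracting $z^{-n}$, the remaining factor $\left(\prod_{j=0}^{2n-1}\zeta_j\right)\varphi_{2n}^*(z)$ is precisely the definition (\ref{eq_ropuc_reverse_def}) of the rotated reverse polynomial $\varphi_{2n}^{\angle,*}(z)$, giving (\ref{eq_rcmv_basis_ropuc_even}). The only point requiring care is the bookkeeping of the product indices: one must verify that the number of $\zeta_j$ factors attached to $\chi_{2n-1}$ agrees with the number attached to $\varphi_{2n-1}$, and likewise for $\chi_{2n}$ versus $\varphi_{2n}^*$. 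Both checks succeed because the rotated CMV basis and the rotated (reverse) OPUCs are, by Definition \ref{def_rcmv} and (\ref{eq_ropuc_def})--(\ref{eq_ropuc_reverse_def}), rotated by the identical product $\prod_{j=0}^{m-1}\zeta_j$ at a common index $m$; there is no genuine analytic obstacle, and the result is purely formal.
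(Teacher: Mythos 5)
Your proof is correct and is essentially identical to the paper's own argument, which likewise unwinds Definition \ref{def_rcmv}, substitutes the standard relations (\ref{eq_cmv_basis_opuc_odd})--(\ref{eq_cmv_basis_opuc_even}), and recognizes the resulting prefactors as the definitions (\ref{eq_ropuc_def}) and (\ref{eq_ropuc_reverse_def}). The index bookkeeping you highlight is the only thing to check, and you have done it correctly.
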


\begin{proof}
	We prove (\ref{eq_rcmv_basis_ropuc_odd}) by simply invoking (\ref{eq_rcmv_basis_def}) and applying (\ref{eq_cmv_basis_opuc_odd}) and (\ref{eq_ropuc_def}).
	The proof of (\ref{eq_rcmv_basis_ropuc_even}) follows a similar argument by applying (\ref{eq_cmv_basis_opuc_even}) and (\ref{eq_ropuc_reverse_def}).
\end{proof}

The nonzero entries of the rotated CMV matrix $\mathcal{C}^\angle\left(\mu,\{\zeta_n\}_{n=0}^\infty\right)$ are computed as follows:

\begin{proposition} \label{prop_rcmv_basis_ip}
	Let $\mu$ be a nontrivial probability measure on $\partial\mathbb{D}$, $\{\alpha_n(\mu)\}_{n=0}^\infty$ be the Verblunsky coefficients associated to $\mu$, and the sequence $\{\rho_n\}_{n=0}^\infty$ be defined by (\ref{eq_rho_complex}). Then the nonzero entries of $\mathcal{C}^\angle\left(\mu,\{\zeta_n\}_{n=0}^\infty\right)$ are given by
	\begin{align}
		\label{eq_rcmv_basis_ip_2n-2_z*2n}
		\ip{\chi_{2n-2}^\angle}{z\chi_{2n}^\angle}&=\rho_{2n-2}\rho_{2n-1},\\
		\label{eq_rcmv_basis_ip_2n-1_z*2n}
		\ip{\chi_{2n-1}^\angle}{z\chi_{2n}^\angle}&=-\alpha_{2n-2}\rho_{2n-1},\\
		\label{eq_rcmv_basis_ip_2n_z*2n}
		\ip{\chi_{2n}^\angle}{z\chi_{2n}^\angle}&=-\overline{\alpha_{2n}}\alpha_{2n-1},\\
		\label{eq_rcmv_basis_ip_2n+1_z*2n}
		\ip{\chi_{2n+1}^\angle}{z\chi_{2n}^\angle}&=-\alpha_{2n-1}\overline{\rho_{2n}},\\
		\label{eq_rcmv_basis_ip_2n-2_z*2n-1}
		\ip{\chi_{2n-2}^\angle}{z\chi_{2n-1}^\angle}&=\overline{\alpha_{2n-1}}\rho_{2n-2},\\
		\label{eq_rcmv_basis_ip_2n-1_z*2n-1}
		\ip{\chi_{2n-1}^\angle}{z\chi_{2n-1}^\angle}&=-\overline{\alpha_{2n-1}}\alpha_{2n-2},\\
		\label{eq_rcmv_basis_ip_2n_z*2n-1}
		\ip{\chi_{2n}^\angle}{z\chi_{2n-1}^\angle}&=\overline{\alpha_{2n}}\rho_{2n-1},\\
		\label{eq_rcmv_basis_ip_2n+1_z*2n-1}
		\ip{\chi_{2n+1}^\angle}{z\chi_{2n-1}^\angle}&=\overline{\rho_{2n}\rho_{2n-1}}.
	\end{align}
\end{proposition}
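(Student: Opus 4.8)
The plan is to evaluate each of the eight entries by translating it, through the dictionary of Proposition~\ref{prop_rcmv_basis_ropuc}, into one of the eight rotated-OPUC inner products already computed in \eqref{eq_ropuc_ip_2n+1_z.2n+1}--\eqref{eq_ropuc_ip_2n-1_2n*}. Concretely, for a given entry $\mathcal{C}_{k\ell}^\angle=\ip{\chi_k^\angle}{z\chi_\ell^\angle}$ I would replace each of $\chi_k^\angle$ and $\chi_\ell^\angle$ by the appropriate line of \eqref{eq_rcmv_basis_ropuc_odd}--\eqref{eq_rcmv_basis_ropuc_even}, so that each becomes $z^{-m}$ times a $\varphi^\angle$ or a $\varphi^{\angle,*}$, the choice and the exponent being dictated by the parity of the index alone.

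The device that makes the reduction clean is that $\mu$ is supported on $\partial\mathbb{D}$, where $\overline{z^{-m}}=z^{m}$. Hence, inside the integral \eqref{eq_L2_ip}, the conjugated prefactor of the first argument, the prefactor of the second argument, and the explicit factor $z$ all collect into a single power $z^{p}$ multiplying the second polynomial, while the prefactor attached to the first polynomial disappears. For instance, for the even--even entry \eqref{eq_rcmv_basis_ip_2n-2_z*2n} I would write $\chi_{2n-2}^\angle=z^{-(n-1)}\varphi_{2n-2}^{\angle,*}$ and $\chi_{2n}^\angle=z^{-n}\varphi_{2n}^{\angle,*}$; the exponents contribute $z^{\,n-1}\cdot z\cdot z^{-n}=z^{0}$, so the entry collapses to $\ip{\varphi_{2n-2}^{\angle,*}}{\varphi_{2n}^{\angle,*}}$, which is \eqref{eq_ropuc_ip_2n-2*_2n*}. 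Depending on the parities, the collected power $z^{p}$ equals $z^{0}$, $z^{1}$, or $z^{2}$; the two entries \eqref{eq_rcmv_basis_ip_2n_z*2n-1} and \eqref{eq_rcmv_basis_ip_2n+1_z*2n-1} produce $z^{2}$ and so land on the two $z^{2}$-inner products \eqref{eq_ropuc_ip_2n+2*_z2.2n+1} and \eqref{eq_ropuc_ip_2n+3_z2.2n+1}.

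Carrying this out index by index, the four even-column entries \eqref{eq_rcmv_basis_ip_2n-2_z*2n}--\eqref{eq_rcmv_basis_ip_2n+1_z*2n} reduce, with no index shift, to \eqref{eq_ropuc_ip_2n-2*_2n*}, \eqref{eq_ropuc_ip_2n-1_2n*}, \eqref{eq_ropuc_ip_2n*_z.2n*}, and \eqref{eq_ropuc_ip_2n+1_z.2n*} respectively, while the four odd-column entries \eqref{eq_rcmv_basis_ip_2n-2_z*2n-1}--\eqref{eq_rcmv_basis_ip_2n+1_z*2n-1} reduce, after the shift $n\mapsto n-1$, to \eqref{eq_ropuc_ip_2n*_z.2n+1}, \eqref{eq_ropuc_ip_2n+1_z.2n+1}, \eqref{eq_ropuc_ip_2n+2*_z2.2n+1}, and \eqref{eq_ropuc_ip_2n+3_z2.2n+1} respectively. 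I expect no conceptual difficulty here; the step demanding genuine care---and the only place where the rotated computation departs from the classical one---is keeping track of which factors appear as $\rho_j$ and which as $\overline{\rho_j}$, since $\rho_j$ is now genuinely complex and $\overline{\rho_j}\neq\rho_j$ in general. Getting these conjugations right is exactly what reproduces the asymmetric placement of $\rho_j$ in the upper triangle and $\overline{\rho_j}$ in the lower triangle of the target matrix \eqref{eq_rcmv_matrix_onesided_intro}.
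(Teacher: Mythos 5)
Your proposal is correct and follows essentially the same route as the paper's own proof: each entry is rewritten via \eqref{eq_rcmv_basis_ropuc_odd}--\eqref{eq_rcmv_basis_ropuc_even}, the powers of $z$ are absorbed using $\overline{z^{-m}}=z^{m}$ on $\partial\mathbb{D}$, and the result is matched to the eight precomputed inner products \eqref{eq_ropuc_ip_2n+1_z.2n+1}--\eqref{eq_ropuc_ip_2n-1_2n*}, with exactly the correspondences and the reindexing $n\mapsto n-1$ for the odd-column entries that the paper uses. No gap.
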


\begin{proof}
	We observe that (\ref{eq_rcmv_basis_ip_2n-2_z*2n}) is exactly (\ref{eq_ropuc_ip_2n-2*_2n*}) by
	$$\ip{\chi_{2n-2}^\angle}{z\chi_{2n}^\angle}=\ip{z^{-(n-1)}\varphi_{2n-2}^{\angle,*}}{z\cdot z^{-n}\varphi_{2n}^{\angle,*}}=\ip{\varphi_{2n-2}^{\angle,*}}{\varphi_{2n}^{\angle,*}}=\rho_{2n-2}\rho_{2n-1}.$$
	Similarly, we can show that 
	\begin{align*}
		\ip{\chi_{2n-1}^\angle}{z\chi_{2n}^\angle}
		&=\ip{z^{-n+1}\varphi_{2n-1}^\angle}{z\cdot z^{-n}\varphi_{2n}^{\angle,*}}
		=\ip{\varphi_{2n-1}^\angle}{\varphi_{2n}^{\angle,*}}
		=-\alpha_{2n-2}\rho_{2n-1},\\
		\ip{\chi_{2n}^\angle}{z\chi_{2n}^\angle}
		&=\ip{z^{-n}\varphi_{2n}^{\angle,*}}{z\cdot z^{-n}\varphi_{2n}^{\angle,*}}
		=\ip{\varphi_{2n}^{\angle,*}}{z\varphi_{2n}^{\angle,*}}
		=-\overline{\alpha_{2n}}\alpha_{2n-1},\\
		\ip{\chi_{2n+1}^\angle}{z\chi_{2n}^\angle}
		&=\ip{z^{-(n+1)+1}\varphi_{2n+1}^\angle}{z\cdot z^{-n}\varphi_{2n}^{\angle,*}}
		=\ip{\varphi_{2n+1}^\angle}{z\varphi_{2n}^{\angle,*}}
		=-\alpha_{2n-1}\overline{\rho_{2n}},
	\end{align*}
	so (\ref{eq_rcmv_basis_ip_2n-1_z*2n}), (\ref{eq_rcmv_basis_ip_2n_z*2n}) and (\ref{eq_rcmv_basis_ip_2n+1_z*2n}) are reduced to exactly (\ref{eq_ropuc_ip_2n-1_2n*}), (\ref{eq_ropuc_ip_2n*_z.2n*}) and (\ref{eq_ropuc_ip_2n+1_z.2n*}), respectively. On the other hand, 
	\begin{align*}
		\ip{\chi_{2n-2}^\angle}{z\chi_{2n-1}^\angle}
		&=\ip{z^{-(n-1)}\varphi_{2n-2}^{\angle,*}}{z\cdot z^{-n+1}\varphi_{2n-1}^\angle}
		=\ip{\varphi_{2n-2}^{\angle,*}}{z\varphi_{2n-1}^\angle}
		=\overline{\alpha_{2n-1}\rho_{2n-2}},\\
		\ip{\chi_{2n-1}^\angle}{z\chi_{2n-1}^\angle}
		&=\ip{z^{-n+1}\varphi_{2n-1}^\angle}{z\cdot z^{-n+1}\varphi_{2n-1}^\angle}
		=\ip{\varphi_{2n-1}^\angle}{z\varphi_{2n-1}^\angle}
		=-\overline{\alpha_{2n-1}}\alpha_{2n-2},\\
		\ip{\chi_{2n}^\angle}{z\chi_{2n-1}^\angle}
		&=\ip{z^{-n}\varphi_{2n}^{\angle,*}}{z\cdot z^{-n+1}\varphi_{2n-1}^\angle}
		=\ip{\varphi_{2n}^{\angle,*}}{z^2\varphi_{2n-1}^\angle}
		=\overline{\alpha_{2n}\rho_{2n-1}},\\
		\ip{\chi_{2n+1}^\angle}{z\chi_{2n-1}^\angle}
		&=\ip{z^{-(n+1)+1}\varphi_{2n+1}^\angle}{z\cdot z^{-n+1}\varphi_{2n-1}^\angle}
		=\ip{\varphi_{2n+1}^\angle}{z^2\varphi_{2n-1}^\angle}
		=\overline{\rho_{2n}\rho_{2n-1}},
	\end{align*}
	so (\ref{eq_rcmv_basis_ip_2n-2_z*2n-1}), (\ref{eq_rcmv_basis_ip_2n-1_z*2n-1}), (\ref{eq_rcmv_basis_ip_2n_z*2n-1}) and (\ref{eq_rcmv_basis_ip_2n+1_z*2n-1}) are simply (\ref{eq_ropuc_ip_2n*_z.2n+1}), (\ref{eq_ropuc_ip_2n+1_z.2n+1}), (\ref{eq_ropuc_ip_2n+2*_z2.2n+1}) and (\ref{eq_ropuc_ip_2n+3_z2.2n+1}), respectively, with a reindexing.
\end{proof}

All other entries of the rotated CMV matrix are zero by an identical reasoning as in the standard case.

Hence the rotated CMV matrix $\mathcal{C}^\angle\left(\mu,\{\zeta_n\}_{n=0}^\infty\right)$ is given by
\begin{equation}\label{eq_rcmv_matrix_onesided}
	\mathcal{C}^\angle\left(\mu,\{\zeta_n\}_{n=0}^\infty\right)=\begin{bmatrix}
		\overline{\alpha_0} & \overline{\alpha_1}\rho_0 & \rho_0\rho_1 & 0 & 0 & \cdots\\
		\overline{\rho_0} & -\overline{\alpha_1}\alpha_0 & -\alpha_0\rho_1 & 0 & 0 & \cdots\\
		0 & \overline{\alpha_2}\overline{\rho_1} & -\overline{\alpha_2}\alpha_1 & \overline{\alpha_3}\rho_2 & \rho_3\rho_2 & \cdots\\
		0 & \overline{\rho_2\rho_1} & -\alpha_1\overline{\rho_2} & -\overline{\alpha_3}\alpha_2 & -\alpha_2\rho_3 & \cdots\\
		0 & 0 & 0 & \overline{\alpha_4}\overline{\rho_3} & -\overline{\alpha_4}\alpha_3 & \cdots \\
		\vdots & \vdots & \vdots & \vdots & \vdots & \ddots
	\end{bmatrix},
\end{equation}
which looks exactly like (\ref{eq_rcmv_matrix_onesided_intro}) as we desired.

While the standard CMV matrix $\mathcal{C}$ depends solely on the Verblunsky coefficients $\{\alpha_n\}_{n=0}^\infty$ since $\rho_n$ depends solely on $\alpha_n$, now that our complex $\rho_n$'s depend on both $\{\alpha_n\}_{n=0}^\infty$ and $\{\zeta_n\}_{n=0}^\infty$, the rotated CMV matrix depends on the pairs $\{(\alpha_n,\rho_n)\}_{n=0}^\infty\subseteq\mathbb{D}\times\mathbb{D}$. Hence we shall write $\mathcal{C}^\angle\left(\{\alpha_n,\rho_n\}_{n=0}^\infty\right)$ to emphasise this dependency.

Comparing the rotated CMV matrix $\mathcal{C}^\angle\left(\mu,\{\zeta_n\}_{n=0}^\infty\right)$ in (\ref{eq_rcmv_matrix_onesided}) and the standard CMV matrix $\mathcal{C}(\mu)$ in (\ref{eq_cmv_matrix_onesided_intro}) with respect to the same probability measure $\mu$, we observe that the diagonal and lower triangular entries are identical, and the $\rho_n$'s in the upper triangular entries differ by a complex conjugate.
Furthermore, if $\{\rho_n\}_{n=0}^\infty\subseteq\mathbb{R}$, or if we replace $\{\rho_n\}_{n=0}^\infty\subseteq\mathbb{C}$ by $\{|\rho_n|\}_{n=0}^\infty\subseteq\mathbb{R}$ in the entries of the rotated CMV matrix $\mathcal{C^\angle}$, using the fact that $\overline{|\rho_n|}=|\rho_n|$, we obtain back the standard CMV matrix $\mathcal{C}$ as in (\ref{eq_cmv_matrix_onesided_intro}).

Cedzich, Fillman, Li, Ong and Zhou \cite{Cedzich2023} proved that one can always conjugate a rotated CMV matrix $\mathcal{C}^\angle\left(\{\alpha_n,\rho_n\}_{n=0}^\infty\right)$ into a standard CMV matrix $\mathcal{C}\left(\{\alpha_n,|\rho_n|\}_{n=0}^\infty\right)$ by a unitary diagonal matrix. In particular, one may verify that by defining the unitary $R$ to be 
\begin{equation} \label{eq_unitary_equiv_R}
	R=\diag{1,\zeta_0,\zeta_0\zeta_1,\zeta_0\zeta_1\zeta_2,\cdots},
\end{equation}
then $\mathcal{C}^\angle\left(\{\alpha_n,\rho_n\}_{n=0}^\infty\right)$ and $\mathcal{C}\left(\{\alpha_n,|\rho_n|\}_{n=0}^\infty\right)$ are unitarily equivalent to each other by
\begin{equation} \label{eq_rcmv_onesided_unitary_equivalence}
	\mathcal{C}\left(\{\alpha_n,|\rho_n|\}_{n=0}^\infty\right)=R\mathcal{C}^\angle\left(\{\alpha_n,\rho_n\}_{n=0}^\infty\right)R^{-1}.
\end{equation}
While the unitary equivalence is expected by Cantero, Moral and Vel\'azquez's result (property (2) in the introduction) \cite{Cantero2005,Cantero2012}, the emphasis here is the invariance of the $\alpha_n$'s under the unitary conjugation. That is, conjugation by $R$ carries a rotated CMV matrix with entries depending on $\{\alpha_n,\rho_n\}_{n=0}^\infty$ into a standard CMV matrix with entries in terms of $\{\alpha_n,|\rho_n|\}_{n=0}^\infty$.

Similarly, we can define the \textit{rotated alternate CMV basis and matrix} as follows:

\begin{definition}[Rotated Alternate CMV Basis and Operator] \label{def_rcmv_alt}
	Let $\mu$ be a nontrivial probability measure on $\partial\mathbb{D}$ and $\{\tilde{\chi}_n(z,\mu)\}_{n=0}^\infty$ be the standard alternate CMV basis associated to $\mu$. The \textit{rotated alternate CMV basis} $\{\tilde{\chi}_n^\angle(z,\mu)\}_{n=0}^\infty$ associated to the sequence $\{\zeta_n\}_{n=0}^\infty\subseteq\partial\mathbb{D}$ is defined by
	\begin{equation} \label{eq_rcmv_alt_basis_def}
		\tilde{\chi}_n^\angle(z)=\left(\prod_{j=0}^{n-1}\zeta_j\right)\tilde{\chi}_n(z).
	\end{equation}
	The \textit{rotated CMV matrix} $\tilde{\mathcal{C}}^\angle\left(\mu,\{\zeta_n\}_{n=0}^\infty\right)=[\tilde{\mathcal{C}}_{k\ell}^\angle]$ associated to $\{\zeta_n\}_{n=0}^\infty\subseteq\partial\mathbb{D}$ is defined by
	\begin{equation} \label{eq_rcmv_alt_matrix_def}
		\tilde{\mathcal{C}}_{k\ell}^\angle=\ip{\tilde{\chi}_k^\angle}{z\tilde{\chi}_\ell^\angle}.
	\end{equation}
\end{definition}

\begin{proposition} \label{prop_rcmv_alt_basis_opuc}
	Let $\mu$ be a nontrivial probability measure on $\partial\mathbb{D}$, $\{\varphi^\angle_n(z,\mu)\}_{n=0}^\infty$ be the rotated OPUCs as defined in (\ref{eq_ropuc_def}) and $\{\tilde{\chi}_n(z,\mu)\}_{n=0}^\infty$ be the rotated alternate CMV basis defined in (\ref{eq_rcmv_alt_basis_def}). Then
	\begin{align}
		\label{eq_rcmv_alt_basis_ropuc_odd}
		\tilde{\chi}_{2n-1}^\angle(z)&=z^{-n}\varphi_{2n-1}^{\angle,*}(z),\\
		\label{eq_rcmv_alt_basis_ropuc_even}
		\tilde{\chi}_{2n}^\angle(z)&=z^{-n}\varphi_{2n}^\angle(z),
	\end{align}
	and hence for all $n\in\mathbb{N}_0$ and $z\in\partial\mathbb{D}$,
	\begin{equation} \label{eq_rcmv_basis_alt_basis}
		\tilde{\chi}_n^\angle(z)=\overline{\chi_n^\angle\left(\dfrac{1}{\overline{z}}\right)}.
	\end{equation}
\end{proposition}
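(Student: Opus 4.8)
The plan is to establish the two basis identities (\ref{eq_rcmv_alt_basis_ropuc_odd}) and (\ref{eq_rcmv_alt_basis_ropuc_even}) first by direct substitution, in exact parallel with the proof of Proposition \ref{prop_rcmv_basis_ropuc}, and only then to deduce the conjugation formula (\ref{eq_rcmv_basis_alt_basis}) from them. For the odd identity I would begin with the definition (\ref{eq_rcmv_alt_basis_def}) of $\tilde{\chi}_{2n-1}^\angle$, insert the standard relation (\ref{eq_cmv_alt_basis_opuc_odd}) for $\tilde{\chi}_{2n-1}$, and then recognise the resulting prefactor $\prod_{j=0}^{2n-2}\zeta_j$ multiplying $\varphi_{2n-1}^*$ as precisely $\varphi_{2n-1}^{\angle,*}$ by (\ref{eq_ropuc_reverse_def}), the monomial $z^{-n}$ being carried along unchanged. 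The even identity is entirely analogous, using (\ref{eq_cmv_alt_basis_opuc_even}) and the definition (\ref{eq_ropuc_def}) of $\varphi_{2n}^\angle$ in place of its reverse. Both are one-line substitutions requiring no estimates.

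For (\ref{eq_rcmv_basis_alt_basis}) I would argue case by case on the parity of $n$. In the even case $n=2m$, the freshly proved (\ref{eq_rcmv_alt_basis_ropuc_even}) gives $\tilde{\chi}_{2m}^\angle(z)=z^{-m}\varphi_{2m}^\angle(z)$; separately I would evaluate the right-hand side $\overline{\chi_{2m}^\angle(1/\overline{z})}$ by feeding $1/\overline{z}$ into the CMV-to-OPUC relation (\ref{eq_rcmv_basis_ropuc_even}) and then applying (\ref{eq_ropuc_reverse_ropuc}) to rewrite $\varphi_{2m}^{\angle,*}(1/\overline{z})$ as a conjugate of $\varphi_{2m}^\angle$ evaluated back at $z$. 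The odd case $n=2m-1$ proceeds identically, starting from (\ref{eq_rcmv_alt_basis_ropuc_odd}) and (\ref{eq_rcmv_basis_ropuc_odd}). In each case the powers of $z$ telescope cleanly: the $z^n$ injected by (\ref{eq_ropuc_reverse_ropuc}) combines with the $z^{-n+1}$ or $z^{-n}$ of the CMV normalisation and with the $\overline{z}^{\,\pm m}$ produced by the substitution $z\mapsto 1/\overline{z}$ to leave exactly the exponent demanded by the alternate basis.

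The step I expect to be the main obstacle — and essentially the only place genuine care is needed — is the bookkeeping of the unimodular phases $\zeta_j$ under complex conjugation. Because $\zeta_j\in\partial\mathbb{D}$ one has $\overline{\zeta_j}=\zeta_j^{-1}$ rather than $\zeta_j$, so the conjugation in $\overline{\chi_n^\angle(1/\overline{z})}$ \emph{inverts} every factor of $\prod_{j=0}^{n-1}\zeta_j$ that $\chi_n^\angle$ carries through (\ref{eq_rcmv_basis_def}), while (\ref{eq_ropuc_reverse_ropuc}) simultaneously injects a factor $\prod_{j=0}^{n-1}\zeta_j^2$ and the definition (\ref{eq_rcmv_alt_basis_def}) of $\tilde{\chi}_n^\angle$ supplies yet another $\prod_{j=0}^{n-1}\zeta_j$. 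Keeping all three phase contributions aligned so that they reconcile to the claimed identity is the crux: a single mismatched exponent of $\zeta_j$ would leave a spurious unimodular prefactor rather than the clean relation (\ref{eq_rcmv_basis_alt_basis}). I would therefore perform this accounting explicitly and check it against the degenerate case $\zeta_j\equiv 1$, where (\ref{eq_rcmv_basis_alt_basis}) must collapse to the classical relation $\tilde{\chi}_n(z)=\overline{\chi_n(1/\overline{z})}$ between the standard CMV and alternate CMV bases, since the latter is the only sanity check that pins down every phase simultaneously.
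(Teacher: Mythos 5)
Your treatment of (\ref{eq_rcmv_alt_basis_ropuc_odd}) and (\ref{eq_rcmv_alt_basis_ropuc_even}) is exactly the paper's proof: substitute (\ref{eq_cmv_alt_basis_opuc_odd}) and (\ref{eq_cmv_alt_basis_opuc_even}) into (\ref{eq_rcmv_alt_basis_def}) and absorb the phase prefactor via (\ref{eq_ropuc_def}) and (\ref{eq_ropuc_reverse_def}). That part is correct and needs no further comment.

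The gap is in the third claim. You rightly single out the $\zeta_j$-bookkeeping as the crux, but you then assert that the three phase contributions reconcile to the clean relation (\ref{eq_rcmv_basis_alt_basis}); they do not. Carrying out your own plan for $n=2m$: by (\ref{eq_rcmv_basis_ropuc_even}), $\chi_{2m}^\angle(1/\overline z)=\overline z^{\,m}\,\varphi_{2m}^{\angle,*}(1/\overline z)$, and evaluating (\ref{eq_ropuc_reverse_ropuc}) at the point $1/\overline z$ gives $\varphi_{2m}^{\angle,*}(1/\overline z)=\left(\prod_{j=0}^{2m-1}\zeta_j^{2}\right)\overline z^{\,-2m}\,\overline{\varphi_{2m}^\angle(z)}$, whence
\begin{equation*}
\overline{\chi_{2m}^\angle\left(\tfrac{1}{\overline z}\right)}
=\left(\prod_{j=0}^{2m-1}\zeta_j^{-2}\right)z^{-m}\varphi_{2m}^\angle(z)
=\left(\prod_{j=0}^{2m-1}\zeta_j^{-2}\right)\tilde\chi_{2m}^\angle(z),
\end{equation*}
and the odd case leaves the analogous factor $\prod_{j=0}^{2m-2}\zeta_j^{-2}$. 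A unimodular prefactor $\prod_{j=0}^{n-1}\zeta_j^{-2}$ therefore survives, so (\ref{eq_rcmv_basis_alt_basis}) holds as written only when every $\zeta_j^2=1$; the general relation is $\tilde\chi_n^\angle(z)=\left(\prod_{j=0}^{n-1}\zeta_j^{2}\right)\overline{\chi_n^\angle(1/\overline z)}$. Your proposed sanity check $\zeta_j\equiv 1$ cannot detect this, precisely because the spurious factor degenerates to $1$ there. Note also that the paper's own proof is silent on (\ref{eq_rcmv_basis_alt_basis}) — it only proves the first two identities — so you are not missing a trick from the paper; the ``and hence'' simply does not follow without the correction factor. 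The concrete defect in your proposal is the unverified claim that the phases cancel: performing the accounting you promise shows they do not.
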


\begin{proof}
	Applying (\ref{eq_cmv_alt_basis_opuc_odd}) to (\ref{eq_rcmv_alt_basis_def}) gives (\ref{eq_rcmv_alt_basis_ropuc_odd}). 
	Similarly, (\ref{eq_rcmv_alt_basis_ropuc_even}) is obtained from (\ref{eq_rcmv_alt_basis_def}) and (\ref{eq_cmv_alt_basis_opuc_even}). 
\end{proof}

\begin{proposition} \label{prop_rcmv_alt_basis_ip}
	Let $\mu$ be a nontrivial probability measure on $\partial\mathbb{D}$, $\{\alpha_n(\mu)\}_{n=0}^\infty$ be the Verblunsky coefficients associated to $\mu$, and the sequence $\{\rho_n\}_{n=0}^\infty$ be defined by (\ref{eq_rho_complex}). Then the nonzero entries of $\tilde{\mathcal{C}}^\angle\left(\mu,\{\zeta_n\}_{n=0}^\infty\right)$ are given by
	\begin{align}
		\label{eq_rcmv_alt_basis_ip_2n_z*2n-2}
		\ip{\tilde{\chi}_{2n}^\angle}{z\tilde{\chi}_{2n-2}^\angle}&=\overline{\rho_{2n-2}\rho_{2n-1}},\\
		\label{eq_rcmv_alt_basis_ip_2n_z*2n-1}
		\ip{\tilde{\chi}_{2n}^\angle}{z\tilde{\chi}_{2n-1}^\angle}&=-\alpha_{2n-2}\overline{\rho_{2n-1}},\\
		\label{eq_rcmv_alt_basis_ip_2n_z*2n}
		\ip{\tilde{\chi}_{2n}^\angle}{z\tilde{\chi}_{2n}^\angle}&=-\overline{\alpha_{2n}}\alpha_{2n-1},\\
		\label{eq_rcmv_alt_basis_ip_2n_z*2n+1}
		\ip{\tilde{\chi}_{2n}^\angle}{z\tilde{\chi}_{2n+1}^\angle}&=-\alpha_{2n-1}\rho_{2n},\\
		\label{eq_rcmv_alt_basis_ip_2n-1_z*2n-2}
		\ip{\tilde{\chi}_{2n-1}^\angle}{z{\chi}_{2n-2}^\angle}&=\overline{\alpha_{2n-1}}\overline{\rho_{2n-2}},\\
		\label{eq_rcmv_alt_basis_ip_2n-1_z*2n-1}
		\ip{\tilde{\chi}_{2n-1}^\angle}{z\tilde{\chi}_{2n-1}^\angle}&=-\overline{\alpha_{2n-1}}\alpha_{2n-2},\\
		\label{eq_rcmv_alt_basis_ip_2n-1_z*2n}
		\ip{\tilde{\chi}_{2n-1}^\angle}{z\tilde{\chi}_{2n}^\angle}&=\overline{\alpha_{2n}}\rho_{2n-1},\\
		\label{eq_rcmv_alt_basis_ip_2n-1_z*2n+1}
		\ip{\tilde{\chi}_{2n-1}^\angle}{z\tilde{\chi}_{2n+1}^\angle}&=\rho_{2n}\rho_{2n-1}.
	\end{align}
\end{proposition}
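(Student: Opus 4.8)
The plan is to mirror the proof of Proposition~\ref{prop_rcmv_basis_ip}: substitute the explicit descriptions of the rotated alternate CMV basis from (\ref{eq_rcmv_alt_basis_ropuc_odd}) and (\ref{eq_rcmv_alt_basis_ropuc_even}) into each of the eight inner products $\ip{\tilde\chi_k^\angle}{z\tilde\chi_\ell^\angle}$, cancel the surplus powers of $z$, and match the outcome against the rotated-OPUC inner products already tabulated in Proposition~\ref{prop_ropuc_ip_general} and the two propositions following it. (I read the second slot of (\ref{eq_rcmv_alt_basis_ip_2n-1_z*2n-2}) as $z\tilde\chi_{2n-2}^\angle$, the missing tilde being a typo.)

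The one device that removes every power of $z$ is that on $\partial\mathbb{D}$ we have $\overline z = z^{-1}$, so the conjugate-linear first slot turns a prefactor $z^{-a}$ into $z^{a}$; hence $\ip{z^{-a}f}{z\cdot z^{-b}g}=\ip{f}{z^{1+a-b}g}$ for rotated OPUCs or their reverses $f,g$. Applying this collapses the two diagonal entries to $\ip{\varphi_{2n}^\angle}{z\varphi_{2n}^\angle}$ and $\ip{\varphi_{2n-1}^{\angle,*}}{z\varphi_{2n-1}^{\angle,*}}$, the two nearest-neighbour entries $\ip{\tilde\chi_{2n}^\angle}{z\tilde\chi_{2n+1}^\angle}$, $\ip{\tilde\chi_{2n-1}^\angle}{z\tilde\chi_{2n}^\angle}$ to $\ip{\varphi_{2n}^\angle}{\varphi_{2n+1}^{\angle,*}}$, $\ip{\varphi_{2n-1}^{\angle,*}}{z\varphi_{2n}^\angle}$, and $\ip{\tilde\chi_{2n-1}^\angle}{z\tilde\chi_{2n+1}^\angle}$ to $\ip{\varphi_{2n-1}^{\angle,*}}{\varphi_{2n+1}^{\angle,*}}$. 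These five then follow at once from (\ref{eq_ropuc_ip_jk*}), (\ref{eq_ropuc_ip_j*k*}), (\ref{eq_ropuc_ip_j_z.k}) and (\ref{eq_ropuc_ip_j*_z.k}); for the reverse--reverse diagonal term one first invokes the anti-unitarity of $R_{2n-1}^\angle$, exactly as in the proof of (\ref{eq_ropuc_ip_2n*_z.2n*}), to replace it by $\ip{\varphi_{2n-1}^\angle}{z\varphi_{2n-1}^\angle}$.

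The remaining three entries collapse instead to $\ip{\varphi_{2n}^\angle}{z^2\varphi_{2n-2}^\angle}$, $\ip{\varphi_{2n}^\angle}{z\varphi_{2n-1}^{\angle,*}}$ and $\ip{\varphi_{2n-1}^{\angle,*}}{z^2\varphi_{2n-2}^\angle}$, none of which appears verbatim on the earlier lists. For these I would use the reverse-polynomial identity (\ref{eq_ropuc_reverse_ropuc}), which on $\partial\mathbb{D}$ (where $1/\overline z=z$) reads $\varphi_m^{\angle,*}(z)=\big(\prod_{j=0}^{m-1}\zeta_j^{2}\big)z^{m}\overline{\varphi_m^\angle(z)}$, together with its conjugate, to trade a non-reverse factor inside the integral for a reverse one (and conversely). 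This rewrites, for instance, $\ip{\varphi_{2n}^\angle}{z^2\varphi_{2n-2}^\angle}$ as $\zeta_{2n-2}^{-2}\zeta_{2n-1}^{-2}\ip{\varphi_{2n-2}^{\angle,*}}{\varphi_{2n}^{\angle,*}}$, evaluated by (\ref{eq_ropuc_ip_2n-2*_2n*}); the other two become $\zeta_{2n-1}^{-2}\ip{\varphi_{2n-1}^\angle}{\varphi_{2n}^{\angle,*}}$ and $\zeta_{2n-2}^{-2}\ip{\varphi_{2n-2}^{\angle,*}}{z\varphi_{2n-1}^\angle}$, evaluated by (\ref{eq_ropuc_ip_2n-1_2n*}) and (\ref{eq_ropuc_ip_j*_z.k}). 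Every surviving factor is then absorbed through the single identity $\zeta_\ell^{-2}\rho_\ell=\overline{\rho_\ell}$ (immediate from $\rho_\ell=|\rho_\ell|\zeta_\ell$), which is precisely what upgrades the plain $\rho_\ell$'s of the standard computation to the conjugated $\overline{\rho_\ell}$'s demanded here.

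The main obstacle is clerical rather than conceptual: keeping the index ranges of the products $\prod\zeta_j^{\pm2}$ aligned so that exactly the intended $\rho$'s get conjugated. A cleaner route that avoids the entry-by-entry work is to establish the basis symmetry $\tilde\chi_n^\angle(z)=\big(\prod_{j=0}^{n-1}\zeta_j^{2}\big)\overline{\chi_n^\angle(z)}$ on $\partial\mathbb{D}$ once and for all from (\ref{eq_ropuc_reverse_ropuc}), from which $\tilde{\mathcal{C}}_{k\ell}^\angle=\big(\prod_{j=0}^{\ell-1}\zeta_j^{2}\big)\big(\prod_{j=0}^{k-1}\zeta_j^{-2}\big)\mathcal{C}_{\ell k}^\angle$; all eight values then drop straight out of Proposition~\ref{prop_rcmv_basis_ip}, the leftover $\zeta$-product on each band again collapsing via $\zeta_\ell^{-2}\rho_\ell=\overline{\rho_\ell}$.
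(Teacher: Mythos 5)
Your proposal is correct and follows essentially the same route as the paper's own proof: substitute (\ref{eq_rcmv_alt_basis_ropuc_odd})--(\ref{eq_rcmv_alt_basis_ropuc_even}), cancel the powers of $z$ using $\overline{z}=z^{-1}$ on $\partial\mathbb{D}$, and reduce each of the eight entries to Proposition \ref{prop_ropuc_ip_general} and the two propositions following it, invoking (\ref{eq_ropuc_reverse_ropuc}) together with $\zeta_\ell^{-2}\rho_\ell=\overline{\rho_\ell}$ for the three entries that pick up $\zeta$-factors. Your closing shortcut via $\tilde{\chi}_n^\angle=\bigl(\prod_{j=0}^{n-1}\zeta_j^{2}\bigr)\overline{\chi_n^\angle}$ on $\partial\mathbb{D}$, giving $\tilde{\mathcal{C}}_{k\ell}^\angle$ directly from $\mathcal{C}_{\ell k}^\angle$, is also sound and arguably cleaner, but it is not the route the paper takes.
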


\begin{proof}
	\allowdisplaybreaks
	Using a similar technique in the proof of (\ref{eq_ropuc_ip_2n+3_z2.2n+1}), together with (\ref{eq_L2_ip}), (\ref{eq_ropuc_reverse_ropuc}) and (\ref{eq_ropuc_ip_j*k*}) (or (\ref{eq_ropuc_ip_2n-2*_2n*})), one can show that
	\begin{align*}
		\ip{\tilde{\chi}_{2n}^\angle}{z\tilde{\chi}_{2n-2}^\angle}
		=\zeta_{2n-2}^{-2}\zeta_{2n-1}^{-2}\ip{\varphi_{2n-2}^{\angle,*}}{\varphi_{2n}^{\angle,*}}
		=\overline{\rho_{2n-2}\rho_{2n-1}}.
	\end{align*}
	which proves (\ref{eq_rcmv_alt_basis_ip_2n_z*2n-2}). To prove (\ref{eq_rcmv_alt_basis_ip_2n_z*2n-1}), we use the same technique and invoke (\ref{eq_L2_ip}) and (\ref{eq_ropuc_reverse_ropuc}), this time with (\ref{eq_ropuc_ip_jk*}) (or (\ref{eq_ropuc_ip_2n-1_2n*})):
	\allowdisplaybreaks
	\begin{align*}
		\ip{\tilde{\chi}_{2n}^\angle}{z\tilde{\chi}_{2n-1}^\angle}
		=\zeta_{2n-1}^{-2}\ip{\varphi_{2n-1}^\angle}{\varphi_{2n}^{\angle,*}}
		=-\alpha_{2n-2}\overline{\rho_{2n-1}}.
	\end{align*}
	We can reduce (\ref{eq_rcmv_alt_basis_ip_2n_z*2n}) to a special case of (\ref{eq_ropuc_ip_j_z.k}):
	\begin{align*}
		\ip{\tilde{\chi}_{2n}^\angle}{z\tilde{\chi}_{2n}}=\ip{z^{-n}\varphi_{2n}^\angle}{z\cdot z^{-n}\varphi_{2n}^\angle}=\ip{\varphi_{2n}^\angle}{z\varphi_{2n}^\angle}=-\overline{\alpha_{2n}}\alpha_{2n-1}.
	\end{align*}
	We show that (\ref{eq_rcmv_alt_basis_ip_2n_z*2n+1}) is a special case of (\ref{eq_ropuc_ip_jk*}):
	\begin{align*}
		\ip{\tilde{\chi}_{2n}^\angle}{z\tilde{\chi}_{2n+1}^\angle}
		=\ip{z^{-n}\varphi_{2n}^\angle}{z\cdot z^{-(n+1)}\varphi_{2n+1}^{\angle,*}}
		=\ip{\varphi_{2n}^\angle}{\varphi_{2n+1}^{\angle,*}}
		=\alpha_{2n-1}\rho_{2n}.
	\end{align*}
	To prove (\ref{eq_rcmv_alt_basis_ip_2n-1_z*2n-2}), we again use the technique in the proof of (\ref{eq_ropuc_ip_2n+3_z2.2n+1}) with (\ref{eq_L2_ip}), (\ref{eq_ropuc_reverse_ropuc}), and (\ref{eq_ropuc_ip_j*_z.k}):
	\begin{align*}
		\ip{\tilde{\chi}_{2n-1}^\angle}{z\tilde{\chi}_{2n-2}}
		=\zeta_{2n-2}^{-2}\ip{\varphi_{2n-2}^{\angle,*}}{z\varphi_{2n-1}^\angle}
		=\overline{\alpha_{2n-1}\rho_{2n-2}}.
	\end{align*}
	We can also reduce (\ref{eq_rcmv_alt_basis_ip_2n-1_z*2n-1}) to a special case of (\ref{eq_ropuc_ip_j_z.k}) by the same method.
	We show that (\ref{eq_rcmv_alt_basis_ip_2n-1_z*2n}) is a special case of (\ref{eq_ropuc_ip_j*_z.k}):
	\begin{align*}
		\ip{\tilde{\chi}_{2n-1}^\angle}{z\tilde{\chi}_{2n}}
		=\ip{z^{-n}\varphi_{2n-1}^{\angle,*}}{z\cdot z^{-n}\varphi_{2n}^{\angle}}
		=\ip{\varphi_{2n-1}^{\angle,*}}{z\varphi_{2n}^{\angle}}
		=\overline{\alpha_{2n}}\rho_{2n-1}.
	\end{align*}
	We show that (\ref{eq_rcmv_alt_basis_ip_2n-1_z*2n+1}) is a special case of (\ref{eq_ropuc_ip_j*k*}):
	\[\ip{\tilde{\chi}_{2n-1}^\angle}{z\tilde{\chi}_{2n+1}}
	=\ip{z^{-n}\varphi_{2n-1}^{\angle,*}}{z\cdot z^{-(n+1)}\varphi_{2n+1}^{\angle,*}}
	=\ip{\varphi_{2n-1}^{\angle,*}}{\varphi_{2n+1}^{\angle,*}}
	=\rho_{2n-1}\rho_{2n}. \qedhere
	\]
\end{proof} 

Hence we can compute the rotated alternate CMV matrix $\tilde{\mathcal{C}}^\angle\left(\mu,\{\zeta_n\}_{n=0}^\infty\right)$ to be
\begin{equation} \label{eq_rcmv_alt_matrix_onesided}
	\tilde{\mathcal{C}}^\angle\left(\mu,\{\zeta_n\}_{n=0}^\infty\right)
	=\begin{bmatrix}
		\overline{\alpha_0} & \rho_0 & 0 & 0 & 0 & \cdots\\
		\overline{\alpha_1\rho_0} & -\overline{\alpha_1}\alpha_0 & \overline{\alpha_2}\rho_1 & \rho_2\rho_1 & 0 & \cdots \\
		\overline{\rho_1\rho_0} & -\alpha_0\overline{\rho_1} & -\overline{\alpha_2}\alpha_1 & -\alpha_1\rho_2 & 0 & \cdots\\
		0 & 0 & \overline{\alpha_3\rho_2} & -\overline{\alpha_3}\alpha_2 & \overline{\alpha_4}\rho_3 & \cdots\\
		0 & 0 & \overline{\rho_3\rho_2} & -\alpha_2\overline{\rho_3} & -\overline{\alpha_4}\alpha_3 & \cdots \\
		\vdots & \vdots & \vdots & \vdots & \vdots & \ddots
	\end{bmatrix}.
\end{equation}

As suggested by Cedzich, Fillman, Li, Ong and Zhou \cite{Cedzich2023}, one may also verify that the rotated alternate CMV matrices are unitarily conjugate to their standard CMV counterparts by
\begin{equation} \label{eq_rcmv_onesided_alt_unitary_equivalence}
	\tilde{\mathcal{C}}\left(\{\alpha_n,|\rho_n|\}_{n=0}^\infty\right)=R\tilde{\mathcal{C}}^\angle\left(\{\alpha_n,\rho_n\}_{n=0}^\infty\right)R^{-1},
\end{equation}
where $R$ is given by (\ref{eq_unitary_equiv_R}). That is, one can conjugate the standard CMV matrices $\mathcal{C}$ (resp. $\tilde{\mathcal{C}}$) into their rotated counterparts $\mathcal{C}^\angle$ (resp. $\tilde{\mathcal{C}}^\angle$) by the same unitary matrix $R$ given by (\ref{eq_unitary_equiv_R}). Again the invariance of the $\alpha_n$'s are emphasized here.

However since $\overline{\rho_n}\neq\rho_n$ in $\mathbb{D}$, unlike the standard case where $\tilde{\mathcal{C}}=\mathcal{C}^T$, in the rotated case, we generally have $\tilde{\mathcal{C}}^\angle\neq\left(\mathcal{C}^\angle\right)^T$.
Instead, $\tilde{\mathcal{C}^\angle}$ is conjugate to $\left(\mathcal{C}^\angle\right)^T$ by $Q=R^2$ by
\begin{equation} \label{eq_rcmv_onesided_alt_unitary_equivalence_transpose}
	\left(\mathcal{C^\angle}\right)^T=Q\tilde{\mathcal{C}}^\angle Q^{-1},
\end{equation}
where $R$ is again given by (\ref{eq_unitary_equiv_R}), which can be verified by the straightforward computation
\begin{align*}
	Q\tilde{\mathcal{C}}^\angle Q^{-1}
	=R^2(R^{-1}\tilde{\mathcal{C}}R)R^{-2}
	=R\mathcal{C}^TR^{-1}
	=\left(R\mathcal{C}R^{-1}\right)^T
	=\left(\mathcal{C^\angle}\right)^T.
\end{align*}

\subsection{Factorisation of Rotated CMV Operators}

To show that the rotated CMV matrices admit $\mathcal{LM}$-factorisations analogous to the standard case, we begin by defining rotated versions of $\mathcal{L}$ and $\mathcal{M}$: 

\begin{definition}[Rotated $\mathcal{L}$ and $\mathcal{M}$]
	Let $\mu$ be a nontrivial probability measure on $\partial\mathbb{D}$ and $\{\chi_n^\angle(z,\mu)\}_{n=0}^\infty$, $\{\tilde{\chi}_n^\angle(z,\mu)\}_{n=0}^\infty$ be the rotated CMV basis and rotated alternate CMV basis defined in (\ref{eq_rcmv_basis_def}) and (\ref{eq_rcmv_alt_basis_def}), respectively. Define the operators $\mathcal{L}^\angle$ and $\mathcal{M}^\angle$ on $\ell^2(\mathbb{N}_0)$ by
	\begin{align}
		\label{eq_L_rcmv}
		\mathcal{L}_{k\ell}^\angle&=\ip{\chi_k^\angle}{z\tilde{\chi}_\ell^\angle},\\
		\label{eq_M_rcmv}
		\mathcal{M}_{k\ell}^\angle&=\ip{\tilde{\chi}_k^\angle}{\chi_\ell^\angle}.
	\end{align}
\end{definition}

\begin{theorem}[$\mathcal{L}\mathcal{M}$-factorisation for rotated CMV matrices] \label{thm_LM_rcmv}
	If $\mathcal{C}^\angle\left(\mu,\{\zeta_n\}_{n=0}^\infty\right)$ and \\ $\tilde{\mathcal{C}}^\angle\left(\mu,\{\zeta_n\}_{n=0}^\infty\right)$ are the rotated CMV matrix and rotated alternate CMV matrix as defined in (\ref{eq_rcmv_matrix_def}) and (\ref{eq_rcmv_alt_matrix_def}), respectively, then 
	\begin{align}
		\label{eq_rcmv_LM}
		\mathcal{C}^\angle&=\mathcal{L}^\angle\mathcal{M}^\angle, \\
		\label{eq_rcmv_alt_LM}
		\tilde{\mathcal{C}}^\angle&=\mathcal{M}^\angle\mathcal{L}^\angle.
	\end{align}
\end{theorem}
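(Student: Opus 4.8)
The plan is to prove both factorisations \emph{entrywise}, exploiting the fact that the rotated CMV basis $\{\chi_n^\angle\}_{n=0}^\infty$ and the rotated alternate CMV basis $\{\tilde{\chi}_n^\angle\}_{n=0}^\infty$ are each \emph{complete} orthonormal systems in $L^2(\partial\mathbb{D},\mu)$. Completeness of the standard CMV bases is built into the CMV construction (they arise from Gram--Schmidt applied to the Laurent polynomials, which are dense in $L^2(\partial\mathbb{D},\mu)$ for nontrivial $\mu$), and since $\chi_n^\angle$ and $\tilde{\chi}_n^\angle$ differ from $\chi_n$ and $\tilde{\chi}_n$ only by the unimodular factor $\prod_{j=0}^{n-1}\zeta_j$, the argument of Proposition \ref{prop_ropuc_orthonormal} shows that both rotated systems stay orthonormal, while rescaling basis vectors by nonzero scalars leaves the closed span unchanged, so completeness is preserved. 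The whole proof then reduces to inserting the appropriate resolution of the identity, exactly as in Simon's proof of $\mathcal{C}=\mathcal{L}\mathcal{M}$ in the non-rotated case.

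For (\ref{eq_rcmv_LM}) I would start from the defining formula (\ref{eq_rcmv_matrix_def}) and expand $\chi_\ell^\angle$ in the complete orthonormal system $\{\tilde{\chi}_m^\angle\}$. By (\ref{eq_M_rcmv}) and the convention that $\ip{\cdot}{\cdot}$ is conjugate linear in its first slot,
\begin{equation*}
	\chi_\ell^\angle=\sum_{m}\ip{\tilde{\chi}_m^\angle}{\chi_\ell^\angle}\,\tilde{\chi}_m^\angle=\sum_{m}\mathcal{M}_{m\ell}^\angle\,\tilde{\chi}_m^\angle,
\end{equation*}
with convergence in $L^2(\partial\mathbb{D},\mu)$. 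Multiplication by $z$ is unitary on $L^2(\partial\mathbb{D},\mu)$ since $|z|=1$ on $\partial\mathbb{D}$, so the series still converges after multiplication by $z$, and the bounded linear functional $\ip{\chi_k^\angle}{\cdot}$ may be moved inside the sum, giving
\begin{equation*}
	\mathcal{C}_{k\ell}^\angle=\ip{\chi_k^\angle}{z\chi_\ell^\angle}=\sum_m \mathcal{M}_{m\ell}^\angle\ip{\chi_k^\angle}{z\tilde{\chi}_m^\angle}=\sum_m \mathcal{L}_{km}^\angle\mathcal{M}_{m\ell}^\angle=(\mathcal{L}^\angle\mathcal{M}^\angle)_{k\ell},
\end{equation*}
where the third equality is the definition (\ref{eq_L_rcmv}). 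The identity (\ref{eq_rcmv_alt_LM}) is proved symmetrically: starting from (\ref{eq_rcmv_alt_matrix_def}), expand $z\tilde{\chi}_\ell^\angle$ in the complete orthonormal system $\{\chi_m^\angle\}$ as $z\tilde{\chi}_\ell^\angle=\sum_m\ip{\chi_m^\angle}{z\tilde{\chi}_\ell^\angle}\chi_m^\angle=\sum_m\mathcal{L}_{m\ell}^\angle\chi_m^\angle$ and pair against $\tilde{\chi}_k^\angle$ to obtain
\begin{equation*}
	\tilde{\mathcal{C}}_{k\ell}^\angle=\ip{\tilde{\chi}_k^\angle}{z\tilde{\chi}_\ell^\angle}=\sum_m\mathcal{L}_{m\ell}^\angle\ip{\tilde{\chi}_k^\angle}{\chi_m^\angle}=\sum_m\mathcal{M}_{km}^\angle\mathcal{L}_{m\ell}^\angle=(\mathcal{M}^\angle\mathcal{L}^\angle)_{k\ell},
\end{equation*}
using (\ref{eq_M_rcmv}).

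The algebra is entirely formal once completeness is in hand; the only points needing care are precisely the two used above — that each rotated system is a genuine orthonormal \emph{basis} (so the Fourier expansions are valid), and that the interchange of the infinite sum with the inner product is legitimate (which follows from $L^2$-convergence, continuity of the inner product, and boundedness of multiplication by $z$). I expect the main conceptual obstacle to be confirming that rotation by the unimodular factors does not destroy completeness, since this is the one place where the $\zeta_n$ genuinely enter; it is handled cleanly by noting that the rotation is implemented by the unitary diagonal operator $R$ of (\ref{eq_unitary_equiv_R}), whose conjugation already relates $\mathcal{C}^\angle$ and $\tilde{\mathcal{C}}^\angle$ to their standard counterparts. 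Finally, since $\mathcal{L}^\angle$ and $\mathcal{M}^\angle$ are banded (finitely many nonzero entries per row and column), each matrix product is a finite sum, so the entrywise identities above determine the bounded operators $\mathcal{L}^\angle\mathcal{M}^\angle$ and $\mathcal{M}^\angle\mathcal{L}^\angle$ and complete the proof.
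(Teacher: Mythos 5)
Your proof is correct and follows essentially the same route as the paper's: insert the resolution of the identity with respect to the other rotated CMV basis into the defining inner product $\ip{\chi_k^\angle}{z\chi_\ell^\angle}$ (resp. $\ip{\tilde{\chi}_k^\angle}{z\tilde{\chi}_\ell^\angle}$) and identify the resulting coefficients with the entries of $\mathcal{L}^\angle$ and $\mathcal{M}^\angle$ via (\ref{eq_L_rcmv}) and (\ref{eq_M_rcmv}). Your additional remarks on completeness of the rotated bases and the legitimacy of interchanging the sum with the inner product only make explicit what the paper leaves implicit.
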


\begin{proof}
	Since $\{\tilde{\chi}_j\}_{j=0}^\infty$ is an orthonormal basis of $L^2(\partial\mathbb{D},\mu)$, we write
	\begin{equation}
		\chi_{\ell}^\angle=\sum_{j=0}^\infty c_{\ell,j}\tilde{\chi}_j^\angle,
	\end{equation}
	where $\{c_{\ell,j}\}_{j=0}^\infty\subseteq\mathbb{C}$. Taking inner product of $\tilde{\chi}_k^\angle$ with both sides then gives
	\begin{equation}
		\ip{\tilde{\chi}_k^\angle}{\chi_\ell^\angle}
		=\sum_{j=0}^\infty c_{\ell,j}\ip{\tilde{\chi}_k^\angle}{\tilde{\chi_j}^\angle}
		=\sum_{j=0}^\infty c_{\ell,j}\delta_{kj}=c_{\ell,k}
	\end{equation}
	and hence
	\begin{align*}
		\mathcal{C}^\angle_{k\ell}
		=\ip{\chi_k^\angle}{z\sum_{j=0}^\infty\ip{\tilde{\chi}_j^\angle}{\chi_\ell^\angle}\tilde{\chi}_j^\angle}
		=\sum_{j=0}^\infty \ip{\chi_k^\angle}{z\tilde{\chi}_j^\angle}\ip{\tilde{\chi}_j^\angle}{\chi_\ell^\angle}
		=\sum_{j=0}^\infty \mathcal{L}^\angle_{kj}\mathcal{M}^\angle_{j\ell}
		=(\mathcal{L}^\angle\mathcal{M}^\angle)_{k\ell}
	\end{align*}
	which proves (\ref{eq_rcmv_LM}). The proof of (\ref{eq_rcmv_alt_LM}) follows a similar argument.
\end{proof}

On a side note, we can express $\mathcal{C}^\angle$ and $\tilde{\mathcal{C}}^\angle$ in terms of the standard $\mathcal{L}$ and $\mathcal{M}$:
\begin{align}
	\mathcal{C}^\angle&=R\mathcal{L}\mathcal{M}R^{-1},\\
	\tilde{\mathcal{C}}^\angle&=R\mathcal{M}\mathcal{L}R^{-1}.
\end{align}

\begin{theorem}[$\Theta$-factorisation] \label{thm_LM_theta_rcmv}
	Let $\mu$ be a nontrivial probability measure on $\partial\mathbb{D}$ and $\mathcal{C}(\mu)$, $\tilde{\mathcal{C}}(\mu)$ be the CMV matrix and alternate CMV matrix as defined in (\ref{eq_cmv_matrix_def}) and (\ref{eq_cmv_alt_matrix_def}), respectively. For each $n\in\mathbb{N}_0$, define
	\begin{equation} \label{eq_theta_n_rcmv}
		\Theta_n^\angle=\begin{bmatrix}
			\overline{\alpha_n} & \rho_n \\ \overline{\rho_n} & -\alpha_n
		\end{bmatrix}.
	\end{equation}
	If $\mathcal{L}^\angle(\mu)$ and $\mathcal{M}^\angle(\mu)$ are defined as in (\ref{eq_L_rcmv}) and (\ref{eq_M_rcmv}), respectively, then
	\begin{gather}
		\label{eq_L_rcmv_theta}
		\mathcal{L}^\angle=\bigoplus_{n=0}^\infty \Theta_{2n}^\angle
		=\diag{\Theta_0^\angle,\Theta_2^\angle,\Theta_4^\angle,\cdots}
		=\begin{bmatrix}
			\Theta_0^\angle & & & 0 \\ & \Theta_2^\angle & & \\ & & \Theta_4^\angle & \\ 0 & & & \ddots
		\end{bmatrix},\\
		\label{eq_M_rcmv_theta}
		\mathcal{M}^\angle=1\oplus\bigoplus_{n=1}^\infty\Theta_{2n-1}^\angle
		=\diag{1,\Theta_1^\angle,\Theta_3^\angle,\cdots}
		=\begin{bmatrix}
			1 & & & 0 \\ & \Theta_1^\angle & & \\ & & \Theta_3^\angle & \\ 0 & & & \ddots
		\end{bmatrix},
	\end{gather}
	where $1$ denotes the $1\times 1$ block with entry $1$.
\end{theorem}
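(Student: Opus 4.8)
The plan is to compute the entries $\mathcal{L}_{k\ell}^\angle=\ip{\chi_k^\angle}{z\tilde{\chi}_\ell^\angle}$ and $\mathcal{M}_{k\ell}^\angle=\ip{\tilde{\chi}_k^\angle}{\chi_\ell^\angle}$ directly, exactly as in the computation of $\mathcal{C}^\angle$ and $\tilde{\mathcal{C}}^\angle$, and to read off the asserted block structure. First I would substitute the basis identities of Propositions \ref{prop_rcmv_basis_ropuc} and \ref{prop_rcmv_alt_basis_opuc}, writing each of $\chi_k^\angle,\tilde{\chi}_\ell^\angle$ as a monomial $z^{-m}$ times a rotated OPUC or its reverse. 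The crucial manipulation is that on $\partial\mathbb{D}$ one has $\overline{z^{-1}}=z$, so the negative powers of $z$ in the conjugated first argument combine with those in the second argument and with the extra factor $z$, collapsing each entry to one of the rotated-OPUC inner products already tabulated in Proposition \ref{prop_ropuc_ip_general} and equations (\ref{eq_ropuc_ip_j_z.k})--(\ref{eq_ropuc_ip_j*_z.k}). I would organise this by the parity of $k$ and $\ell$.

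For the nonzero (within-block) entries this yields exactly the $\Theta^\angle$ blocks. For example $\mathcal{L}_{2n,2n}^\angle=\ip{\varphi_{2n}^{\angle,*}}{z\varphi_{2n}^\angle}=\overline{\alpha_{2n}}$ and $\mathcal{L}_{2n+1,2n+1}^\angle=\ip{\varphi_{2n+1}^\angle}{\varphi_{2n+1}^{\angle,*}}=-\alpha_{2n}$ give the diagonal of $\Theta_{2n}^\angle$, while $\mathcal{L}_{2n,2n+1}^\angle=\ip{\varphi_{2n}^{\angle,*}}{\varphi_{2n+1}^{\angle,*}}=\rho_{2n}$ by (\ref{eq_ropuc_ip_j*k*}) and $\mathcal{L}_{2n+1,2n}^\angle=\ip{\varphi_{2n+1}^\angle}{z\varphi_{2n}^\angle}=\overline{\rho_{2n}}$ by (\ref{eq_ropuc_ip_j_z.k}) give its two off-diagonal positions. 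The point to watch is precisely this asymmetric appearance of $\rho_{2n}$ above the diagonal versus $\overline{\rho_{2n}}$ below it: it is forced by which argument of the inner product carries the conjugation, and it is the only genuine difference from the real case. The blocks $\Theta_{2n-1}^\angle$ of $\mathcal{M}^\angle$ are obtained identically, the leading $1\times 1$ block coming from $\chi_0^\angle=\tilde{\chi}_0^\angle=1$, so that $\mathcal{M}_{00}^\angle=\ip{1}{1}=1$.

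The off-block vanishing is the step needing the most bookkeeping, and I expect it to be the main obstacle --- not because it is deep, but because one must dispose of every index pair $(k,\ell)$ lying outside the blocks. I would dispatch these by the same reduction: each such entry collapses either to $\ip{\varphi_j^\angle}{\varphi_k^\angle}=\delta_{jk}$ (orthonormality, Proposition \ref{prop_ropuc_orthonormal}), to one of the vanishing cases of (\ref{eq_ropuc_ip_j*k})--(\ref{eq_ropuc_ip_j*_z.k}), or to the pairing of a reverse polynomial $\varphi_j^{\angle,*}$ against a polynomial supported on $\{z,\dots,z^{j}\}$, which is zero by Theorem \ref{thm_opuc_reverse_orthogonal_to_z_z^n}.

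A cleaner route, which I would use both to cross-check and to streamline the vanishing, is to observe directly from $\chi_n^\angle=\left(\prod_{j=0}^{n-1}\zeta_j\right)\chi_n$ and $\tilde{\chi}_n^\angle=\left(\prod_{j=0}^{n-1}\zeta_j\right)\tilde{\chi}_n$ together with conjugate-linearity that
\begin{equation*}
	\mathcal{L}_{k\ell}^\angle=\overline{\left(\prod_{j=0}^{k-1}\zeta_j\right)}\left(\prod_{j=0}^{\ell-1}\zeta_j\right)\mathcal{L}_{k\ell},\qquad \mathcal{M}_{k\ell}^\angle=\overline{\left(\prod_{j=0}^{k-1}\zeta_j\right)}\left(\prod_{j=0}^{\ell-1}\zeta_j\right)\mathcal{M}_{k\ell},
\end{equation*}
so that every entry of the rotated operator is a unimodular multiple of the corresponding standard entry. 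Because the prefactors have modulus one, $\mathcal{L}^\angle$ and $\mathcal{M}^\angle$ inherit verbatim the zero pattern of the standard $\mathcal{L},\mathcal{M}$, which are block-diagonal by the $\mathcal{L}\mathcal{M}$-factorisation recalled in the introduction; this kills all off-block entries at once. Evaluating the surviving prefactors on the block diagonal --- the diagonal positions keep factors $\overline{\zeta_0\cdots\zeta_{k-1}}\,(\zeta_0\cdots\zeta_{k-1})=1$, while the two off-diagonal positions of the $n$-th block acquire the conjugate phases $\zeta_n$ and $\overline{\zeta_n}$ --- turns the real entry $|\rho_n|$ into $\rho_n=|\rho_n|\zeta_n$ above the diagonal and into $\overline{\rho_n}$ below it, reproducing $\Theta_n^\angle$ exactly.
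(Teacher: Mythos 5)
Your proposal is correct, and both of its routes differ somewhat from the paper's own argument. The paper does not compute $\mathcal{L}^\angle_{k\ell}$ entry by entry from the inner-product tables; instead it substitutes the rotated forward and backward Szeg\H{o} recursions (\ref{eq_szego_recursion_ropuc_rewrite}) and (\ref{eq_inverse_szego_recursion_ropuc_reverse_rewrite}) into the basis identities of Propositions \ref{prop_rcmv_basis_ropuc} and \ref{prop_rcmv_alt_basis_opuc} to obtain the two-term expansions $z\tilde{\chi}_{2n}^\angle=\overline{\alpha_{2n}}\chi_{2n}^\angle+\overline{\rho_{2n}}\chi_{2n+1}^\angle$ and $z\tilde{\chi}_{2n+1}^\angle=\rho_{2n}\chi_{2n}^\angle-\alpha_{2n}\chi_{2n+1}^\angle$ (and the analogous pair for $\mathcal{M}^\angle$); since $\{\chi_n^\angle\}$ is orthonormal, this delivers the block entries \emph{and} the off-block vanishing in a single stroke, which is the bookkeeping your first route worries about. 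Your first route is still sound --- the entries you list agree with (\ref{eq_ropuc_ip_j_z.k}), (\ref{eq_ropuc_ip_jk*}), (\ref{eq_ropuc_ip_j*k*}) and (\ref{eq_ropuc_ip_j*_z.k}), including the crucial asymmetry $\rho_{2n}$ versus $\overline{\rho_{2n}}$ --- but it is the least economical of the three. Your second route, $\mathcal{L}^\angle_{k\ell}=\overline{\bigl(\prod_{j=0}^{k-1}\zeta_j\bigr)}\bigl(\prod_{j=0}^{\ell-1}\zeta_j\bigr)\mathcal{L}_{k\ell}$ and likewise for $\mathcal{M}^\angle$, is the cleanest: it is exactly the conjugation by $R=\diag{1,\zeta_0,\zeta_0\zeta_1,\cdots}$ that the paper only records as a side remark after the theorem, and it inherits the zero pattern and converts $|\rho_n|$ into $\rho_n$ above and $\overline{\rho_n}$ below the diagonal precisely as you describe. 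What it buys is brevity; what it costs is that it rests on the standard (non-rotated) $\Theta$-factorisation as an external input, whereas the paper's recursion argument is self-contained within the rotated framework.
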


\begin{proof}
	Substituting (\ref{eq_szego_recursion_ropuc_rewrite}) into (\ref{eq_rcmv_alt_basis_ropuc_even}) and then invoking (\ref{eq_rcmv_basis_ropuc_odd}) and (\ref{eq_rcmv_basis_ropuc_even}) gives
	\begin{align*}
		z\tilde{\chi}_{2n}^\angle(z)
		&=\overline{\alpha_{2n}}\chi_{2n}^\angle(z)+\overline{\rho_{2n}}\chi_{2n+1}^\angle(z)
	\end{align*}
	and hence
	\begin{align*}
		\mathcal{L}^\angle_{2n,2n}&=\ip{\chi_{2n}^\angle}{z\tilde{\chi}_{2n}^\angle}=\overline{\alpha_{2n}},\\
		\mathcal{L}^\angle_{2n+1,2n}&=\ip{\chi_{2n+1}^\angle}{z\tilde{\chi}_{2n}^\angle}=\overline{\rho_{2n}},
	\end{align*}
	thus giving, respectively, the diagonal entries for the even rows and the entries directly below them, which corresponds to the left column of $\Theta_{2n}^\angle$. On the other hand, applying (\ref{eq_inverse_szego_recursion_ropuc_reverse_rewrite}) to (\ref{eq_rcmv_alt_basis_ropuc_odd}) together with (\ref{eq_rcmv_basis_ropuc_odd}) and (\ref{eq_rcmv_basis_ropuc_even}) gives
	\begin{align*}
		z\tilde{\chi}_{2n+1}^\angle(z)
		&=\rho_{2n}\chi_{2n}^\angle(z)-\alpha_{2n}\chi_{2n+1}^\angle(z)
	\end{align*}
	and hence
	\begin{align*}
		\mathcal{L}^\angle_{2n+1,2n+1}&=\ip{\chi_{2n+1}^\angle}{z\tilde{\chi}_{2n+1}^\angle}=-\alpha_{2n},\\
		\mathcal{L}^\angle_{2n,2n+1}&=\ip{\chi_{2n}^\angle}{z\tilde{\chi}_{2n+1}^\angle}=\rho_{2n},
	\end{align*}
	thus giving, respectively, the diagonal entries for the odd rows of $\mathcal{L}^\angle$ and the entries directly above them, which corresponds to the right column of $\Theta_{2n}^\angle$. By the block and tridiagonal structure of $\mathcal{L}^\angle$, the proof of (\ref{eq_L_rcmv_theta}) is complete.
	
	To prove (\ref{eq_M_rcmv_theta}), we apply (\ref{eq_inverse_szego_recursion_ropuc_reverse_rewrite}) to (\ref{eq_rcmv_basis_ropuc_even}) together with (\ref{eq_rcmv_alt_basis_ropuc_odd}) and (\ref{eq_rcmv_alt_basis_ropuc_even}):
	\begin{align*}
		\chi_{2n}^\angle(z)
		&=-\alpha_{2n-1}\tilde{\chi}_{2n}^\angle(z)+\rho_{2n-1}\tilde{\chi}_{2n-1}^\angle(z)
	\end{align*}
	and hence
	\begin{align*}
		\mathcal{M}^\angle_{2n,2n}&=\ip{\tilde{\chi}_{2n}}{\chi_{2n}}=-\alpha_{2n-1},\\
		\mathcal{M}^\angle_{2n-1,2n}&=\ip{\tilde{\chi}_{2n-1}}{\chi_{2n}}=\rho_{2n-1},
	\end{align*}
	giving, respectively, the diagonal entries for the even rows of $\mathcal{M}^\angle$ and the entries directly above them for all $n\geq 1$, which corresponds to the left column of $\Theta_{2n-1}^\angle$. When $n=0$, we obtain the first $1\times 1$ block with entry $-\alpha_{-1}=1$. 
	On the other hand, applying (\ref{eq_szego_recursion_ropuc_rewrite}) to (\ref{eq_rcmv_basis_ropuc_odd}) together with (\ref{eq_rcmv_alt_basis_ropuc_odd}) and (\ref{eq_rcmv_alt_basis_ropuc_even}) yields
	\begin{align*}
		\chi_{2n-1}^\angle(z)
		&=\overline{\alpha_{2n-1}}\tilde{\chi}_{2n-1}(z)+\overline{\rho_{2n-1}}\tilde{\chi}_{2n}(z)
	\end{align*}
	and hence
	\begin{align*}
		\mathcal{M}^\angle_{2n-1,2n-1}&=\ip{\tilde{\chi}_{2n-1}}{\chi_{2n-1}}=\overline{\alpha_{2n-1}},\\
		\mathcal{M}^\angle_{2n,2n-1}&=\ip{\tilde{\chi}_{2n}}{\chi_{2n-1}}=\overline{\rho_{2n-1}},
	\end{align*}
	which give, respectively, the diagonal entries for the odd rows of $\mathcal{M}^\angle$ and the entries directly below them, which corresponds to the left column of $\Theta_{2n-1}^\angle$. By the block and tridiagonal structure of $\mathcal{M}^\angle$, this completes the proof of (\ref{eq_M_rcmv_theta}).
\end{proof}

\begin{example}
	Again consider the probability measure $\mu$ on $\partial\mathbb{D}$ given by
	$$d\mu=(1-\cos\theta)\dfrac{d\theta}{2\pi}.$$
	The standard CMV matrix of $\mu$ is
	\begin{equation}
		\mathcal{C}^\angle=\begin{bmatrix}
			-\frac{1}{2} & -\frac{\sqrt{3}}{6} & \frac{\sqrt{6}}{3} & 0 & 0 & \cdots\\
			\frac{\sqrt{3}}{2} & -\frac{1}{6} & \frac{\sqrt{2}}{3} & 0 & 0 & \cdots \\
			0 & -\frac{\sqrt{2}}{6} & -\frac{1}{12} & -\frac{\sqrt{15}}{20} & \frac{3\sqrt{10}}{10} & \cdots\\
			0 & \frac{\sqrt{30}}{6} & \frac{\sqrt{15}}{12} & -\frac{1}{20} & \frac{\sqrt{6}}{10} & \cdots\\
			0 & 0 & 0 & -\frac{\sqrt{6}}{15} & -\frac{1}{30} & \cdots \\
			\vdots & \vdots & \vdots & \vdots & \vdots & \ddots
		\end{bmatrix},
	\end{equation}
	which is generated by the standard OPUCs $\{\varphi_n\}_{n=0}^\infty$ given by
	\begin{equation}
		\varphi_n(z)=\left(\dfrac{2}{(n+1)(n+2)}\right)^{\frac{1}{2}}\sum_{k=0}^n(k+1)z^k.
	\end{equation}
	On the other hand, the rotated CMV matrix and rotated alternate CMV matrix of $\mu$ associated to the constant sequence $\left\{\zeta_n\right\}_{n=0}^\infty$ defined by
	$$\zeta_n=e^{i\frac{\pi}{4}}=\dfrac{1+i}{\sqrt{2}}$$
	are given by
	\begin{align}
		\mathcal{C}^\angle&=
		\begin{bmatrix}
			-\frac{1}{2} & -\frac{\sqrt{3}}{6}\cdot\frac{1+i}{\sqrt{2}} & \frac{\sqrt{6}}{3}i & 0 & 0 & \cdots\\
			\frac{\sqrt{3}}{2}\cdot\frac{1-i}{\sqrt{2}} & -\frac{1}{6} & \frac{\sqrt{2}}{3}\cdot\frac{1+i}{\sqrt{2}} & 0 & 0 & \cdots \\
			0 & -\frac{\sqrt{2}}{6}\cdot\frac{1-i}{\sqrt{2}} & -\frac{1}{12} & -\frac{\sqrt{15}}{20}\cdot\frac{1+i}{\sqrt{2}} & \frac{3\sqrt{10}}{10}i & \cdots\\
			0 & -\frac{\sqrt{30}}{6}i & \frac{\sqrt{15}}{12}\cdot\frac{1-i}{\sqrt{2}} & -\frac{1}{20} & \frac{\sqrt{6}}{10}\cdot\frac{1+i}{\sqrt{2}} & \cdots\\
			0 & 0 & 0 & -\frac{\sqrt{6}}{15}\cdot\frac{1-i}{\sqrt{2}} & -\frac{1}{30} & \cdots \\
			\vdots & \vdots & \vdots & \vdots & \vdots & \ddots
		\end{bmatrix},\\
		\tilde{\mathcal{C}}^\angle&=
		\begin{bmatrix}
			-\frac{1}{2} & \frac{\sqrt{3}}{2}\cdot\frac{1+i}{\sqrt{2}} & 0 & 0 & 0 & \cdots\\
			-\frac{\sqrt{3}}{6}\cdot\frac{1-i}{\sqrt{2}} & -\frac{1}{6} & -\frac{1}{3\sqrt{2}}\cdot\frac{1+i}{\sqrt{2}} & \frac{\sqrt{30}}{6}i & 0 & \cdots \\
			-\frac{\sqrt{2}}{\sqrt{3}}i & \frac{\sqrt{2}}{3}\cdot\frac{1-i}{\sqrt{2}} & -\frac{1}{12} & \frac{\sqrt{15}}{12}\cdot\frac{1+i}{\sqrt{2}} & 0 & \cdots\\
			0 & 0 & -\frac{\sqrt{15}}{20}\cdot\frac{1-i}{\sqrt{2}} & -\frac{1}{20} & -\frac{\sqrt{6}}{15}\cdot\frac{1+i}{\sqrt{2}} & \cdots\\
			0 & 0 & -\frac{3\sqrt{10}}{10}i & \frac{\sqrt{6}}{10}\cdot\frac{1-i}{\sqrt{2}} & -\frac{1}{30} & \cdots \\
			\vdots & \vdots & \vdots & \vdots & \vdots & \ddots
		\end{bmatrix},
	\end{align}
	respectively, which are generated by the rotated OPUCs $\{\varphi_n^\angle\}_{n=0}^\infty$ given by
	\begin{equation}
		\varphi_n^\angle(z)=e^{i\frac{n\pi}{4}}\left(\dfrac{2}{(n+1)(n+2)}\right)^{\frac{1}{2}}\sum_{k=0}^n(k+1)z^k.
	\end{equation}
	Moreover, (\ref{eq_rcmv_onesided_alt_unitary_equivalence_transpose}) implies that $\left(\mathcal{C}^\angle\right)^T$ and $\tilde{\mathcal{C}}^\angle$ are unitarily conjugate to each other by
	\begin{equation}
		\left(\mathcal{C}^\angle\right)^T=Q\tilde{\mathcal{C}}^\angle Q^{-1}, \qquad Q=\diag{1,e^{i\frac{2\pi}{4}},e^{i\frac{4\pi}{4}},e^{i\frac{6\pi}{4}},\cdots}.
	\end{equation}
	Note that in this particular case, we have $\left(\mathcal{C}^\angle\right)^*=\tilde{\mathcal{C}}$, where $^*$ denotes the conjugate transpose, due to the reason that the Verblunsky coefficients $\{\alpha_n(\mu)\}_{n=0}^\infty\subseteq\mathbb{R}$ which gives $\overline{\alpha_n}=\alpha_n$ for all $n\in\mathbb{N}_0$. Otherwise, it is easy to see that if there exists some $n\in\mathbb{N}_0$ such that $\alpha_n\notin\mathbb{R}$, then we have $\left(\mathcal{C}^\angle\right)^*\neq\tilde{\mathcal{C}}$.
\end{example}

For each $n\in\mathbb{N}_0$, if we define
\begin{equation}
	S_n=\begin{bmatrix}
		1 & 0 \\ 0 & \zeta_n^{-1}
	\end{bmatrix},
\end{equation}
then one may easily show that $\Theta_n^\angle$ and $\Theta_n$ are related by
\begin{equation}
	\Theta_n^\angle=S_n\Theta_n S_n^{-1}.
\end{equation}

\subsection{Rotated Extended CMV Operators}

The rotated $\mathcal{L}\mathcal{M}$-factorisation of the rotated CMV matrices can be used to define the \textit{rotated extended CMV matrices} in a similar manner to the standard case:

\begin{definition} \label{def_rcmv_extended}
	Let $\{\alpha_n\}_{n\in\mathbb{Z}}$ and $\{\rho_n\}_{n\in\mathbb{Z}}$ be sequences in $\overline{\mathbb{D}}$ satisfying $$|\alpha_n|^2+|\rho_n|^2=1$$ for all $n\in\mathbb{Z}$. Define
	\begin{equation} \label{eq_rTheta_extended}
		\Theta_n^\angle=\begin{bmatrix}
			\overline{\alpha_n} & \rho_n \\ \overline{\rho_n} & -\alpha_n
		\end{bmatrix}
	\end{equation}
	for all $n\in\mathbb{Z}$ and the operators $\tilde{\mathcal{L}}^\angle$ and $\tilde{\mathcal{M}}^\angle$ on $\ell^2(\mathbb{Z})$ by
	\begin{align}
		\label{eq_L_rcmv_extended}
		\tilde{\mathcal{L}}^\angle&=\bigoplus_{k=-\infty}^\infty \Theta_{2k}^\angle,\\
		\label{eq_M_rcmv_extended}
		\tilde{\mathcal{M}}^\angle&=\bigoplus_{k=-\infty}^\infty \Theta_{2k-1}^\angle.
	\end{align}
	The \textit{rotated extended CMV operator} $\mathcal{E}^\angle\left(\{\alpha_n,\rho_n\}_{n=0}^\infty\right)$ and \textit{rotated extended alternate CMV operator} $\tilde{\mathcal{E}}^\angle\left(\{\alpha_n,\rho_n\}_{n=0}^\infty\right)$ on $\ell^2(\mathbb{Z})$ are defined, respectively, by
	\begin{align}
		\label{eq_rcmv_extended}
		\mathcal{E}^\angle&=\tilde{\mathcal{L}}^\angle\tilde{\mathcal{M}}^\angle,\\
		\label{eq_rcmv_extended_alt}
		\tilde{\mathcal{E}}^\angle&=\tilde{\mathcal{M}}^\angle\tilde{\mathcal{L}}^\angle.
	\end{align} 
\end{definition}

One may show that $\mathcal{E}^\angle\left(\{\alpha_n,\rho_n\}_{n=0}^\infty\right)$ is given explicitly by
\begin{equation}\label{eq_rcmv_matrix_extended}
	\begin{bmatrix}
		\ddots & \vdots & \vdots & \vdots & \vdots & \vdots & \vdots & \vdots\\
		\hdots & -\overline{\alpha_{-1}}\alpha_{-2} & -\alpha_{-2}\rho_{-1} & 0 & 0 & 0 & 0 & \hdots\\
		\hdots & \overline{\alpha_0\rho_{-1}} & \boxed{-\overline{\alpha_0}\alpha_{-1}} & \overline{\alpha_1}\rho_0 & \rho_0\rho_1 & 0 & 0 & \hdots\\
		\hdots & \overline{\rho_0\rho_{-1}} & -\alpha_{-1}\overline{\rho_0} & -\overline{\alpha_1}\alpha_0 & -\alpha_0\rho_1 & 0 & 0 & \hdots \\
		\hdots & 0 & 0 & \overline{\alpha_2\rho_1} & -\overline{\alpha_2}\alpha_1 & \overline{\alpha_3}\rho_2 & \rho_3\rho_2 & \hdots\\
		\hdots & 0 & 0 & \overline{\rho_2\rho_1} & -\alpha_1\overline{\rho_2} & -\overline{\alpha_3}\alpha_2 & -\alpha_2\rho_3 & \hdots\\
		\hdots & 0 & 0 & 0 & 0 & \overline{\alpha_4\rho_3} & -\overline{\alpha_4}\alpha_3 & \hdots \\
		\vdots & \vdots & \vdots & \vdots & \vdots & \vdots & \vdots & \ddots
	\end{bmatrix}.
\end{equation}
Similarly, $\tilde{\mathcal{E}}^\angle\left(\{\alpha_n,\rho_n\}_{n\in\mathbb{Z}}\right)$ is given by
\begin{equation}\label{eq_rcmv_alt_matrix_extended}
	\begin{bmatrix}
		\ddots & \vdots & \vdots & \vdots & \vdots & \vdots & \vdots & \vdots\\
		\hdots & -\overline{\alpha_{-1}}\alpha_{-2} & \overline{\alpha_0}\rho_{-1} & \rho_0\rho_{-1} & 0 & 0 & 0 & \hdots\\
		\hdots & -\alpha_{-2}\overline{\rho_{-1}} & \boxed{-\overline{\alpha_0}\alpha_{-1}} & -\alpha_{-1}\rho_0 & 0 & 0 & 0 & \hdots\\
		\hdots & 0 & \overline{\alpha_1}\overline{\rho_{0}} & -\overline{\alpha_1}\alpha_0 & \overline{\alpha_2}\rho_1 & \rho_2\rho_1 & 0 & \hdots \\
		\hdots & 0 & \overline{\rho_0\rho_1} & -\alpha_0\overline{\rho_1} & -\overline{\alpha_2}\alpha_1 & -\alpha_1\rho_2 & 0 & \hdots\\
		\hdots & 0 & 0 & 0 & \overline{\alpha_3}\overline{\rho_2} & -\overline{\alpha_3}\alpha_2 & \overline{\alpha_4}\rho_3 & \hdots\\
		\hdots & 0 & 0 & 0 & \overline{\rho_3\rho_2} & -\alpha_2\overline{\rho_3} & -\overline{\alpha_4}\alpha_3 & \hdots \\
		\vdots & \vdots & \vdots & \vdots & \vdots & \vdots & \vdots & \ddots
	\end{bmatrix}.
\end{equation}
Note that the $(0,0)^th$ entries of $\mathcal{E}^\angle$ and $\tilde{\mathcal{E}}^\angle$ are given by $-\overline{\alpha_0}\alpha_{-1}$, which are boxed in (\ref{eq_rcmv_matrix_extended}) and (\ref{eq_rcmv_alt_matrix_extended}). Similar to the standard case, we have $\mathcal{E}^\angle_{k\ell}=\mathcal{C}^\angle_{k\ell}$ and $\tilde{\mathcal{E}}^\angle_{k\ell}=\tilde{\mathcal{C}}^\angle_{k\ell}$ for all $k,\ell\geq 1$.

Analogous to the $\ell^2(\mathbb{N}_0)$ case, one may verify that $\mathcal{E}^\angle\left(\{\alpha_n,\rho_n\}_{n\in\mathbb{Z}}\right)$ is unitarily equivalent to $\mathcal{E}\left(\{\alpha_n,|\rho_n|\}_{n\in\mathbb{Z}}\right)$ by
\begin{equation} \label{eq_rcmv_extended_unitary_equivalence}
	\mathcal{E}\left(\{\alpha_n,|\rho_n|\}_{n\in\mathbb{Z}}\right)
	=\tilde{R}\mathcal{E}^\angle\left(\{\alpha_n,\rho_n\}_{n\in\mathbb{Z}}\right)\tilde{R}^{-1},
\end{equation}
where
\begin{equation} \label{eq_rcmv_extended_unitary_equivalence_R_tilde}
	\tilde{R} =\diag{\cdots,\zeta_{-3}^{-1}\zeta_{-2}^{-1}\zeta_{-1}^{-1},\zeta_{-2}^{-1}\zeta_{-1}^{-1},\zeta_{-1}^{-1},\boxed{1},\zeta_0,\zeta_0\zeta_1,\zeta_0\zeta_1\zeta_2,\cdots},
\end{equation}
where the (0,0)-th entry of $\tilde{R}$ is boxed and the sequence $\{\zeta_n\}_{n\in\mathbb{Z}}$ is defined by
\begin{equation} 
	\zeta_n=\begin{cases}
		\dfrac{\rho_n}{|\rho_n|} & \rho_n\neq 0; \\ 0 & \rho_n=0.
	\end{cases}
\end{equation}
This is the exact statement proved by Cedzich, Fillman, Li, Ong and Zhou \cite[Corollary 2.3]{Cedzich2023}.

Similarly, one can verify that $\tilde{\mathcal{E}}^\angle\left(\{\alpha_n,\rho_n\}_{n\in\mathbb{Z}}\right)$ and $\tilde{\mathcal{E}}\left(\{\alpha_n,|\rho_n|\}_{n\in\mathbb{Z}}\right)$ are unitarily equivalent by the same $\tilde{R}$ given by (\ref{eq_rcmv_extended_unitary_equivalence_R_tilde}):
\begin{equation} \label{eq_rcmv_extended_alt_unitary_equivalence}
	\tilde{\mathcal{E}}\left(\{\alpha_n,\rho_n\}_{n\in\mathbb{Z}}\right)
	=\tilde{R}\tilde{\mathcal{E}}^\angle\left(\{\alpha_n,\rho_n\}_{n\in\mathbb{Z}}\right)\tilde{R}^{-1},
\end{equation}
which gives an analogue of (\ref{eq_rcmv_onesided_alt_unitary_equivalence}). Moreover, $\tilde{\mathcal{E}}^\angle$ is also unitarily equivalent to $\left(\mathcal{E}^\angle\right)^T$ by
\begin{equation}
	\left(\mathcal{E}^\angle\right)^T 
	=\tilde{R}^2\tilde{\mathcal{E}}^\angle\tilde{R}^{-2},
\end{equation}
thus giving an analogue of (\ref{eq_rcmv_onesided_alt_unitary_equivalence_transpose}).

\subsection{Rotated Second Kind Polynomials}

Similar to the rotated OPUCs, given a probability measure $\mu$ on $\partial\mathbb{D}$ and a sequence $\{\zeta_n\}_{n=0}^\infty\subseteq\partial\mathbb{D}$, we can also define the \textit{rotated second kind polynomials} $\{\psi_n^\angle(z,\mu)\}_{n=0}^\infty$ by
\begin{equation} \label{eq_ropuc_second_kind}
	\psi_n^\angle(z)=\left(\prod_{j=0}^{n-1}\zeta_j\right)\psi_n(z),\\
\end{equation}
and the \textit{rotated reverse orthonormal polynomials} $\{\psi_n^{\angle,*}(z,\mu)\}_{n=0}^\infty$ by
\begin{equation} \label{eq_ropuc_second_kind_reverse}
	\psi_n^{\angle,*}(z)=\left(\prod_{j=0}^{n-1}\zeta_j\right)\psi_n^*(z).
\end{equation}

While we define the second kind polynomials $\{\psi_n^\angle\}_{n=0}^\infty$ by rotating the $\psi_n$'s, one may easily verify that our definition is equivalent to directly assigning the boundary condition $\lambda=-1$ to the rotated OPUCs. 

\begin{proposition}
	The rotated second kind polynomials $\{\psi_n^\angle\}_{n=0}^\infty$ with boundary condition $\lambda$ are related to the canonical orthonormal polynomials $\{\varphi_n^\angle\}_{n=0}^\infty$ by
	\begin{equation}
		\label{eq_ropuc_with_boundary_condition_bc}
		\psi_n^\angle(z,\mu)=\varphi_n^\angle(z,\mu_{\lambda=-1}),
	\end{equation}
	whereas their reverse polynomials $\{\psi_n^{\angle,*}\}_{n=0}^\infty$ and $\{\varphi_n^{\angle,*}\}_{n=0}^\infty$ are related by
	\begin{equation}
		\label{eq_ropuc_reverse_with_boundary_condition_bc}
		\psi_n^{\angle,*}(z,\mu)=\varphi_n^{\angle,*}(z,\mu_{\lambda=-1}).
	\end{equation}
\end{proposition}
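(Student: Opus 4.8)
The plan is to reduce both identities to their non-rotated counterparts, which follow immediately from the definition of the second kind polynomials, and then simply attach the rotation factor. First I would establish the non-rotated relation $\psi_n(z,\mu)=\varphi_n(z,\mu_{\lambda=-1})$. By definition, $\mu_{\lambda=-1}=\mu_{-1}$ is precisely the second kind measure (\ref{eq_second_kind_measure}), so that $\Psi_n(z,\mu)=\Phi_n(z,\mu_{-1})$ by (\ref{eq_opuc_second_kind}). Dividing both sides by the common norm $\norm{\Psi_n(\cdot,\mu)}=\norm{\Phi_n(\cdot,\mu_{-1})}$ and invoking the orthonormalization formula (\ref{eq_opuc_orthonormal}) gives $\psi_n(z,\mu)=\varphi_n(z,\mu_{-1})=\varphi_n(z,\mu_{\lambda=-1})$.

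Once this is in hand, (\ref{eq_ropuc_with_boundary_condition_bc}) follows by multiplying through by the rotation factor $\prod_{j=0}^{n-1}\zeta_j$. Explicitly, by the definition of the rotated second kind polynomials (\ref{eq_ropuc_second_kind}) followed by the non-rotated relation,
\[
	\psi_n^\angle(z,\mu)=\left(\prod_{j=0}^{n-1}\zeta_j\right)\psi_n(z,\mu)=\left(\prod_{j=0}^{n-1}\zeta_j\right)\varphi_n(z,\mu_{\lambda=-1}),
\]
and the right-hand side is precisely $\varphi_n^\angle(z,\mu_{\lambda=-1})$ by (\ref{eq_ropuc_def}). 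The crucial point is that the same fixed sequence $\{\zeta_n\}_{n=0}^\infty$ rotates both $\psi_n$ and $\varphi_n$, so the rotation factors match verbatim; no re-derivation of the $\zeta_n$ from the Verblunsky coefficients of $\mu_{-1}$ is required.

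The reverse identity (\ref{eq_ropuc_reverse_with_boundary_condition_bc}) is handled identically. Here I would first observe that the Szeg\H{o} dual map $R_n$ of (\ref{eq_opuc_reverse_map}) acts on a polynomial without reference to the underlying measure, so applying $R_n$ to the non-rotated relation yields $\psi_n^*(z,\mu)=\varphi_n^*(z,\mu_{\lambda=-1})$. Multiplying by $\prod_{j=0}^{n-1}\zeta_j$ and using (\ref{eq_ropuc_second_kind_reverse}) together with (\ref{eq_ropuc_reverse_def}) then gives the claim.

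I do not anticipate any genuine obstacle: the entire argument is a chain of definitions. The only point demanding a moment of care is the consistency of the rotation sequence $\{\zeta_n\}_{n=0}^\infty$ across the two measures $\mu$ and $\mu_{\lambda=-1}$, but this is automatic since the definition of the rotated second kind polynomials explicitly reuses the same sequence.
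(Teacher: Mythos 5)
Your proposal is correct and follows essentially the same route as the paper: both identities are obtained by unwinding the definitions (\ref{eq_ropuc_second_kind}), (\ref{eq_opuc_second_kind}) and (\ref{eq_ropuc_def}), with the reverse case reduced to the first. The only cosmetic difference is that the paper derives (\ref{eq_ropuc_reverse_with_boundary_condition_bc}) by applying (\ref{eq_ropuc_reverse_ropuc}) to the already-rotated identity, whereas you apply the Szeg\H{o} dual map to the non-rotated relation first and then rotate; the two are equivalent.
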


\begin{proof}
	To prove (\ref{eq_ropuc_with_boundary_condition_bc}), we simply invoke (\ref{eq_ropuc_second_kind}), (\ref{eq_opuc_second_kind}) and (\ref{eq_ropuc_def}).
	Equation (\ref{eq_ropuc_reverse_with_boundary_condition_bc}) follows by directly applying (\ref{eq_ropuc_reverse_ropuc}) to (\ref{eq_ropuc_with_boundary_condition_bc}).
\end{proof}

\begin{proposition}[Forward Szeg\H{o} recursion for rotated second kind polynomials] \label{prop_ropuc_with_boundary_condition_szego_recursion}
	Let $\mu$ be a nontrivial probability measure on $\partial\mathbb{D}$ with Verblunsky coefficients $\{\alpha_n\}_{n=0}^\infty\subseteq\mathbb{D}$ and $\{\psi_n(z,\mu)\}_{n=0}^\infty$ be the rotated second kind polynomials given by (\ref{eq_ropuc_second_kind}).
	Then
	\begin{align} 
		\label{eq_szego_recursion_ropuc_second_kind}
		\psi_{n+1}^\angle(z)&=\overline{\rho_n}^{-1}\left(z\psi_n^\angle(z)+\overline{\alpha_n}\psi_n^{\angle,*}(z)\right),\\
		\label{eq_szego_recursion_ropuc_second_kind_reverse}
		-\psi_{n+1}^{\angle,*}(z)&=\overline{\rho_n}^{-1}\left(-\psi_n^{\angle,*}(z)-\alpha_nz\psi_n^\angle(z)\right).
	\end{align}
\end{proposition}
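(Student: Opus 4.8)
The plan is to reduce the statement to results already established for the rotated OPUCs, by treating the rotated second kind polynomials as the rotated OPUCs of the second kind measure $\mu_{-1}$. The cleanest route is to invoke the relations $\psi_n^\angle(z,\mu)=\varphi_n^\angle(z,\mu_{-1})$ and $\psi_n^{\angle,*}(z,\mu)=\varphi_n^{\angle,*}(z,\mu_{-1})$ from (\ref{eq_ropuc_with_boundary_condition_bc}) and (\ref{eq_ropuc_reverse_with_boundary_condition_bc}), and then to apply the forward Szeg\H{o} recursion for rotated OPUCs (Proposition \ref{prop_ropuc_szego_recursion}) to the measure $\mu_{-1}$ in place of $\mu$.

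First I would record that the Verblunsky coefficients of $\mu_{-1}$ are $\alpha_n(\mu_{-1})=-\alpha_n(\mu)$ by the definition of the second kind measure, while $|\alpha_n(\mu_{-1})|=|\alpha_n(\mu)|$ forces $|\rho_n|$, and hence the complex $\rho_n=|\rho_n|\zeta_n$ from (\ref{eq_rho_complex}), to be unchanged when passing from $\mu$ to $\mu_{-1}$. This is the key observation that lets me reuse the same $\rho_n$ (and thus the same prefactor $\overline{\rho_n}^{-1}$) in the recursion without any modification.

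Then, applying (\ref{eq_szego_recursion_ropuc}) and (\ref{eq_szego_recursion_ropuc_reverse}) verbatim to the rotated OPUCs of $\mu_{-1}$, with every occurrence of $\alpha_n$ replaced by $-\alpha_n$ (and hence $\overline{\alpha_n}$ by $-\overline{\alpha_n}$), the two minus signs appearing in the rotated Szeg\H{o} recursion flip to plus signs. Substituting (\ref{eq_ropuc_with_boundary_condition_bc}) and (\ref{eq_ropuc_reverse_with_boundary_condition_bc}) then yields $\psi_{n+1}^\angle(z)=\overline{\rho_n}^{-1}\left(z\psi_n^\angle(z)+\overline{\alpha_n}\psi_n^{\angle,*}(z)\right)$ and $\psi_{n+1}^{\angle,*}(z)=\overline{\rho_n}^{-1}\left(\psi_n^{\angle,*}(z)+\alpha_nz\psi_n^\angle(z)\right)$. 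The first is exactly (\ref{eq_szego_recursion_ropuc_second_kind}); multiplying the second through by $-1$ produces (\ref{eq_szego_recursion_ropuc_second_kind_reverse}) in the stated sign-balanced form.

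As an alternative I could bypass $\mu_{-1}$ altogether and rotate the standard second kind recursion of Section 2.2 directly: substituting that recursion into the definitions (\ref{eq_ropuc_second_kind}) and (\ref{eq_ropuc_second_kind_reverse}), and using the identity $\zeta_n|\rho_n|^{-1}=\overline{\rho_n}^{-1}$ (which converts the real prefactor into the complex-conjugate prefactor), gives the same two identities. There is no genuine obstacle here; the only point requiring care is the $\zeta_n$ bookkeeping, since the product $\prod_{j=0}^{n}\zeta_j$ attached to $\psi_{n+1}$ carries one extra factor $\zeta_n$ compared with the products attached to $\psi_n$ and $\psi_n^*$, and it is exactly this leftover $\zeta_n$ that, together with the real $|\rho_n|^{-1}$, assembles into $\overline{\rho_n}^{-1}$.
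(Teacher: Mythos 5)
Your proposal is correct and matches the paper's own proof in both of its guises: the paper likewise frames the result as the rotated forward Szeg\H{o} recursion for $\{\varphi_n^\angle(z,\mu_{\lambda=-1})\}_{n=0}^\infty$ (your main route), and carries out the computation by applying the rotation definitions (\ref{eq_ropuc_second_kind}) and (\ref{eq_ropuc_second_kind_reverse}) to the rows of (\ref{eq_szegorecursion_orthonormal_lambda}) (your alternative route). Your added remarks --- that $|\rho_n|$ and hence $\rho_n=|\rho_n|\zeta_n$ are unchanged under $\mu\mapsto\mu_{-1}$, and that the leftover $\zeta_n$ combines with $|\rho_n|^{-1}$ to give $\overline{\rho_n}^{-1}$ --- are exactly the bookkeeping the paper leaves implicit.
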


\begin{proof}
	This is essentially the \hyperref[prop_ropuc_szego_recursion]{rotated forward Szeg\H{o} recursion} for the rotated OPUCs $\{\varphi_n^\angle(z,\mu_{\lambda=-1})\}_{n=0}^\infty$. In particular, (\ref{eq_szego_recursion_ropuc_second_kind}) and (\ref{eq_szego_recursion_ropuc_second_kind_reverse}) are obtained by applying (\ref{eq_ropuc_second_kind}) and (\ref{eq_ropuc_second_kind_reverse}) to the first and second rows of (\ref{eq_szegorecursion_orthonormal_lambda}), respectively.
\end{proof}

Rewriting in matrix form, (\ref{eq_szego_recursion_ropuc_second_kind}) and (\ref{eq_szego_recursion_ropuc_second_kind_reverse}) become
\begin{equation} \label{eq_szego_recursion_ropuc_second_kind_matrix_form}
	\begin{bmatrix}
		\psi_{n+1}^\angle(z) \\ -\psi_{n+1}^{\angle,*}(z)
	\end{bmatrix}=A_n^\angle(z)\begin{bmatrix}
		\psi_{n}^\angle(z) \\ -\psi_{n}^{\angle,*}(z)
	\end{bmatrix},
\end{equation}
where $A_n^\angle(z)$ is the rotated Szeg\H{o} transfer matrix given by (\ref{eq_szego_recursion_ropuc_transition_matrix}).

Again using the fact that $A_n^\angle$ is invertible whenever $\{\alpha_n(\mu)\}_{n=0}^\infty\subseteq\mathbb{D}$, one can obtain the backward Szeg\H{o} recursion for the rotated OPUCs with boundary condition:

\begin{corollary}[Backward Szeg\H{o} recursion for rotated second kind polynomials] \label{prop_ropuc_with_boundary_condition_inverse_szego_recursion}
	Let $\mu$ be a nontrivial probability measure on $\partial\mathbb{D}$ with Verblunsky coefficients $\{\alpha_n\}_{n=0}^\infty\subseteq\mathbb{D}$ and $\{\psi_n(z,\mu)\}_{n=0}^\infty$ be the rotated second kind polynomials given by (\ref{eq_ropuc_second_kind}).
	Then
	\begin{align} 
		\label{eq_inverse_szego_recursion_ropuc_second_kind}
		z\psi_{n}^\angle(z)&=\rho_n^{-1}\left(\psi_{n+1}^\angle(z)-\overline{\alpha_n}\psi_{n+1}^{\angle,*}(z)\right),\\
		\label{eq_inverse_szego_recursion_ropuc_second_kind_reverse}
		-\psi_{n}^{\angle,*}(z)&=\rho_n^{-1}\left(-\psi_{n+1}^{\angle,*}(z)+\alpha_n\psi_{n+1}^{\angle}(z)\right).
	\end{align}
\end{corollary}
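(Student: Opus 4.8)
The plan is to mirror exactly the argument used to establish the backward Szeg\H{o} recursion for the rotated OPUCs in Corollary \ref{prop_ropuc_inverse_szego_recursion}, since the rotated second kind polynomials obey the same forward recursion, already packaged in the matrix form (\ref{eq_szego_recursion_ropuc_second_kind_matrix_form}) with the identical rotated Szeg\H{o} transfer matrix $A_n^\angle(z)$ from (\ref{eq_szego_recursion_ropuc_transition_matrix}). The entire corollary therefore reduces to inverting this matrix and applying the inverse to both sides of (\ref{eq_szego_recursion_ropuc_second_kind_matrix_form}).

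First I would recall why $A_n^\angle(z)$ is invertible: since $z\in\partial\mathbb{D}$ forces $z\neq 0$ and $\alpha_n\in\mathbb{D}$ forces $\rho_n\neq 0$, the determinant $\det A_n^\angle(z)=z\rho_n$ is nonzero. The inverse was already computed in the proof of Corollary \ref{prop_ropuc_inverse_szego_recursion} to be
\[
	\left(A_n^\angle\right)^{-1}(z)=\dfrac{1}{z\rho_n}\begin{bmatrix} 1 & \overline{\alpha_n} \\ \alpha_n z & z \end{bmatrix}.
\]
Left-multiplying both sides of (\ref{eq_szego_recursion_ropuc_second_kind_matrix_form}) by this inverse then yields
\[
	\begin{bmatrix} \psi_{n}^\angle(z) \\ -\psi_{n}^{\angle,*}(z) \end{bmatrix}
	=\dfrac{1}{z\rho_n}\begin{bmatrix} 1 & \overline{\alpha_n} \\ \alpha_n z & z \end{bmatrix}\begin{bmatrix} \psi_{n+1}^\angle(z) \\ -\psi_{n+1}^{\angle,*}(z) \end{bmatrix}.
\]

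Reading off the first component gives $\psi_n^\angle(z)=\frac{1}{z\rho_n}\bigl(\psi_{n+1}^\angle(z)-\overline{\alpha_n}\psi_{n+1}^{\angle,*}(z)\bigr)$, and clearing the factor of $z$ recovers (\ref{eq_inverse_szego_recursion_ropuc_second_kind}); the second component, after cancelling a common factor of $z$, recovers (\ref{eq_inverse_szego_recursion_ropuc_second_kind_reverse}). I expect no genuine obstacle here: the computation is routine $2\times 2$ matrix multiplication, and the only delicate point---invertibility of $A_n^\angle(z)$---is inherited verbatim from the OPUC case, depending solely on $\alpha_n\in\mathbb{D}$ and $z\neq 0$. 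The sole bookkeeping subtlety worth flagging is the sign convention built into the second row of the vector $\begin{bmatrix}\psi_n^\angle \\ -\psi_n^{\angle,*}\end{bmatrix}$, inherited from the second kind initial condition $u_0=\begin{bmatrix}1\\-1\end{bmatrix}$, which is precisely what produces the $+\alpha_n\psi_{n+1}^\angle$ term (rather than a minus sign) in (\ref{eq_inverse_szego_recursion_ropuc_second_kind_reverse}).
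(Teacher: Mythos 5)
Your proposal is correct and coincides with the paper's proof, which consists precisely of left-multiplying (\ref{eq_szego_recursion_ropuc_second_kind_matrix_form}) by $\left(A_n^\angle\right)^{-1}(z)$ as computed in the proof of Corollary \ref{prop_ropuc_inverse_szego_recursion}. The component-wise reading you describe, including the sign bookkeeping in the second row, matches the intended argument exactly.
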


\begin{proof}
	Simply multiply $\left(A_n^\angle\right)^{-1}$ to the left of both sides of (\ref{eq_szego_recursion_ropuc_second_kind_matrix_form}).
\end{proof}

Hence the rotated polynomials $\begin{bmatrix}
	\varphi_n^\angle(z) \\ \varphi_n^{\angle,*}(z)
\end{bmatrix}$ and $\begin{bmatrix}
	\psi_n^\angle(z) \\ \psi_n^{\angle,*}(z)
\end{bmatrix}$ are solutions of the \textit{rotated Szeg\H{o} difference equation}
\begin{equation} \label{eq_rotated_szego_difference}
	u_{n+1}(z)=A_n^\angle(z)u_n(z)=\dfrac{1}{\overline{\rho_n}}\begin{bmatrix}
	z & -\overline{\alpha_n} \\ -\alpha_n z & 1
\end{bmatrix}u_n(z)
\end{equation} 
with initial conditions $\begin{bmatrix}
	1 \\ 1
\end{bmatrix}$ and $\begin{bmatrix}
	1 \\ -1
\end{bmatrix}$, respectively.

Again one may more generally define rotated OPUCs $\left\{\left(\varphi_n^\lambda\right)^\angle\right\}_{n=0}^\infty$ with boundary condition $\lambda$ in a similar fashion, then by an identical argument, prove their Szeg\H{o} recursions, and show that
\begin{equation}
	u_n(z)=\begin{bmatrix}
		\left(\varphi_n^\lambda\right)^\angle(z) \\ \overline{\lambda}\left(\varphi_n^\lambda\right)^{\angle,*}(z)
	\end{bmatrix}
\end{equation}
is a solution of
(\ref{eq_rotated_szego_difference}) with initial condition
\begin{equation}
	u_0(z)=\begin{bmatrix}
		1 \\ \overline{\lambda}
	\end{bmatrix}.
\end{equation}


We now want to establish the rotated analogues of (\ref{eq_opuc_with_boundary_condition_orthonormal_z^n}):
\begin{proposition}
	For any $n\in\mathbb{N}_0$, we have
	\begin{equation} \label{eq_ropuc_with_boundary_condition_z^n}
		\psi_n^{\angle,*}(z)\varphi_n^\angle(z)+\varphi_n^{\angle,*}(z)\psi_n^\angle(z)=2z^n\prod_{j=0}^{n-1}\zeta_j^2
	\end{equation}
	and hence
	\begin{gather}
		\Re{\overline{\psi_n^\angle(z)}\varphi_n^\angle(z)}=1,\\ \left|\psi_n(z)\right|\left|\varphi_n(z)\right|\geq 1.
	\end{gather}
\end{proposition}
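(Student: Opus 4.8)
The plan is to reduce both assertions to their standard (non-rotated) counterparts, exploiting only that each rotated polynomial differs from its standard version by the common unimodular scalar $\prod_{j=0}^{n-1}\zeta_j$.

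First I would establish the displayed polynomial identity. Substituting the definitions (\ref{eq_ropuc_def}), (\ref{eq_ropuc_reverse_def}), (\ref{eq_ropuc_second_kind}) and (\ref{eq_ropuc_second_kind_reverse}) into its left-hand side, each of the two products $\psi_n^{\angle,*}\varphi_n^\angle$ and $\varphi_n^{\angle,*}\psi_n^\angle$ acquires the same prefactor $\left(\prod_{j=0}^{n-1}\zeta_j\right)^2=\prod_{j=0}^{n-1}\zeta_j^2$. Factoring it out leaves precisely $\psi_n^*(z)\varphi_n(z)+\varphi_n^*(z)\psi_n(z)$, which equals $2z^n$ by the standard identity (\ref{eq_opuc_with_boundary_condition_orthonormal_z^n}). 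This proves (\ref{eq_ropuc_with_boundary_condition_z^n}) as a polynomial identity valid for every $z\in\mathbb{C}$.

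For the formula $\Re{\overline{\psi_n^\angle(z)}\varphi_n^\angle(z)}=1$ I would restrict to $z\in\partial\mathbb{D}$, where $1/\overline{z}=z$. The reverse relation (\ref{eq_ropuc_reverse_ropuc}) then gives $\varphi_n^{\angle,*}(z)=\left(\prod_{j=0}^{n-1}\zeta_j^2\right)z^n\overline{\varphi_n^\angle(z)}$, and the analogous relation for the second kind polynomials (which holds because $\psi_n^\angle(\cdot,\mu)=\varphi_n^\angle(\cdot,\mu_{\lambda=-1})$) gives $\psi_n^{\angle,*}(z)=\left(\prod_{j=0}^{n-1}\zeta_j^2\right)z^n\overline{\psi_n^\angle(z)}$. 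Substituting both into the left side of (\ref{eq_ropuc_with_boundary_condition_z^n}) turns it into $\left(\prod_{j=0}^{n-1}\zeta_j^2\right)z^n\bigl(\overline{\psi_n^\angle(z)}\varphi_n^\angle(z)+\overline{\varphi_n^\angle(z)}\psi_n^\angle(z)\bigr)$, and the bracket equals $2\Re{\overline{\psi_n^\angle(z)}\varphi_n^\angle(z)}$ since its two summands are complex conjugates of one another. Comparing with the right-hand side $2z^n\prod_{j=0}^{n-1}\zeta_j^2$ and cancelling the nonzero factor $2z^n\prod_{j=0}^{n-1}\zeta_j^2$ yields the claim.

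The final inequality is then immediate: because $|\zeta_j|=1$ we have $|\psi_n^\angle(z)|=|\psi_n(z)|$ and $|\varphi_n^\angle(z)|=|\varphi_n(z)|$, so applying $|\Re{w}|\leq|w|$ with $w=\overline{\psi_n^\angle(z)}\varphi_n^\angle(z)$ gives $1\leq|\psi_n(z)||\varphi_n(z)|$. There is no genuine obstacle; the only point demanding care is the passage to $z\in\partial\mathbb{D}$, where the reverse polynomials may legitimately be replaced by conjugates of the rotated polynomials via (\ref{eq_ropuc_reverse_ropuc}).
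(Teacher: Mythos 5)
Your proposal is correct and follows essentially the same route as the paper: factor the common unimodular prefactor $\prod_{j=0}^{n-1}\zeta_j^2$ out of the left-hand side and invoke the standard identity (\ref{eq_opuc_with_boundary_condition_orthonormal_z^n}), then substitute the reverse relation (\ref{eq_ropuc_reverse_ropuc}) on $\partial\mathbb{D}$ to extract the real part and deduce the inequality. Your explicit remark that the $\psi$-version of (\ref{eq_ropuc_reverse_ropuc}) holds because $\psi_n^\angle(\cdot,\mu)=\varphi_n^\angle(\cdot,\mu_{\lambda=-1})$, and that the real-part statement lives on $\partial\mathbb{D}$, is a welcome clarification of points the paper leaves implicit.
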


\begin{proof}
	Applying (\ref{eq_ropuc_def}) and (\ref{eq_ropuc_reverse_def}) to the left hand side of (\ref{eq_ropuc_with_boundary_condition_z^n}) gives
	\begin{align*}
		\psi_n^{\angle,*}(z)\varphi_n^\angle(z)+\varphi_n^{\angle,*}(z)\psi_n^\angle(z)
		&=\left(\psi_n^*(z)\varphi_n(z)+\varphi_n^*(z)\psi_n(z)\right)\prod_{j=0}^{n-1}\zeta_j^2
		=2z^n\prod_{j=0}^{n-1}\zeta_j^2.
	\end{align*}
	Substituting (\ref{eq_ropuc_reverse_ropuc}) into (\ref{eq_ropuc_with_boundary_condition_z^n}) gives
	\begin{align*}
		2z^n\prod_{j=0}^{n-1}\zeta_j^2
		&=z^n\prod_{j=0}^{n-1}\zeta_j^2\left(\overline{\psi_n^\angle(z)}\varphi_n^\angle(z)+\overline{\overline{\psi_n^\angle(z)}\varphi_n^\angle(z)}\right)
		=\left(2z^n\prod_{j=0}^{n-1}\zeta_j^2\right)\Re{\overline{\psi_n^\angle(z)}\varphi_n^\angle(z)}
	\end{align*}
	so $\Re{\overline{\psi_n^\angle(z)}\varphi_n^\angle(z)}=1$ and hence $\left|\psi_n(z)\right|\left|\varphi_n(z)\right|\geq 1$.
\end{proof}

That is, while we generally have
$$\psi_n^{\angle,*}(z)\varphi_n^\angle(z)+\varphi_n^{\angle,*}(z)\psi_n^\angle(z)\neq 2z^n$$
unless $\zeta_j=1$ for all $j\in\{0,1,\cdots,n-1\}$, we still have $\Re{\overline{\psi_n^\angle(z)}\varphi_n^\angle(z)}=1$ and $\left|\psi_n(z)\right|\left|\varphi_n(z)\right|\geq 1$, which are also true for the non-rotated case.

One may also easily show that the mixed Christoffel-Darboux formula is also invariant under rotation of polynomials:
\begin{align} 
	\label{eq_ropuc_cd_mixed_opuc_cd_mixed}
	\sum_{k=0}^n \overline{\varphi_k^\angle(\xi)}\psi_k^\angle(z)
	&=\sum_{k=0}^n \overline{\varphi_k(\xi)}\psi_k(z)\\
	\label{eq_opuc_cd_mixed_ropuc}
	&=\dfrac{2-\overline{\varphi_{n+1}^*(\xi)}\psi_{n+1}^*(z)-\overline{\varphi_{n+1}(\xi)}\psi_{n+1}(z)}{1-\overline{\xi}z}\\
	\label{eq_ropuc_cd_mixed}
	&=\dfrac{2-\overline{\varphi_{n+1}^{\angle,*}(\xi)}\psi_{n+1}^{\angle,*}(z)-\overline{\varphi_{n+1}^\angle(\xi)}\psi_{n+1}^\angle(z)}{1-\overline{\xi}z}.
\end{align}
Hence we make an important observation: rotating the OPUCs causes the Szeg\H{o} recursion formulae and their CMV matrix representations to differ by certain complex conjugates, but does not change their Christoffel-Darboux formulae.

\begin{proposition}
	For any $n\in\mathbb{N}_0$, we have
	\begin{align}
		\label{eq_ropuc_second_kind_caratheodory}
		\psi_n^\angle(z)&=\int_{\partial\mathbb{D}}\left(\varphi_n^\angle(\xi)-\varphi_n^\angle(z)\right)\dfrac{\xi+z}{\xi-z}\,d\mu(\xi),\\
		\label{eq_ropuc_reverse_second_kind_caratheodory}
		\psi_n^{\angle,*}(z)&=\int_{\partial\mathbb{D}}\left(\varphi_n^{\angle,*}(z)-\varphi_n^{\angle,*}(\xi)\right)\dfrac{\xi+z}{\xi-z}\,d\mu(\xi).
	\end{align}
\end{proposition}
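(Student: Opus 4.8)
The plan is to deduce both identities directly from their non-rotated counterparts (\ref{eq_opuc_second_kind_caratheodory}) and (\ref{eq_opuc_reverse_second_kind_caratheodory}), exploiting the fact that the rotation factor attached to the second kind polynomials is the \emph{same} as the one attached to the orthonormal polynomials. Concretely, I would set $P_n=\prod_{j=0}^{n-1}\zeta_j$, a fixed complex number of modulus one that depends neither on $z$ nor on the integration variable $\xi$, and observe that every object appearing in the two target formulas is obtained from its standard version by multiplication by this single constant.

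First I would recall from (\ref{eq_ropuc_second_kind}) and (\ref{eq_ropuc_def}) that $\psi_n^\angle(z)=P_n\psi_n(z)$ and $\varphi_n^\angle(w)=P_n\varphi_n(w)$ for every argument $w$, so multiplying the standard identity (\ref{eq_opuc_second_kind_caratheodory}) through by $P_n$ gives
\begin{align*}
	P_n\psi_n(z)
	&=\int_{\partial\mathbb{D}}\bigl(P_n\varphi_n(\xi)-P_n\varphi_n(z)\bigr)\dfrac{\xi+z}{\xi-z}\,d\mu(\xi).
\end{align*}
Since $P_n$ is independent of $\xi$, it passes freely through the integral sign; the left-hand side is then $\psi_n^\angle(z)$ and the two terms in the integrand become $\varphi_n^\angle(\xi)$ and $\varphi_n^\angle(z)$ by (\ref{eq_ropuc_def}), which is precisely (\ref{eq_ropuc_second_kind_caratheodory}). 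The reverse identity follows by the identical manipulation applied to (\ref{eq_opuc_reverse_second_kind_caratheodory}), now invoking the definitions (\ref{eq_ropuc_second_kind_reverse}) and (\ref{eq_ropuc_reverse_def}), namely $\psi_n^{\angle,*}(z)=P_n\psi_n^*(z)$ and $\varphi_n^{\angle,*}(w)=P_n\varphi_n^*(w)$, to recover (\ref{eq_ropuc_reverse_second_kind_caratheodory}).

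I do not expect a genuine analytic obstacle here; the only point that warrants care is the bookkeeping of rotation factors. One might worry that, as in the reverse relation (\ref{eq_ropuc_reverse_ropuc}), a squared or conjugated factor such as $\zeta_j^2$ could intrude and spoil the clean correspondence. The reason it does not is exactly that the rotated second kind polynomials and the rotated orthonormal polynomials are both defined by rotation through the \emph{same} single factor $P_n=\prod_{j=0}^{n-1}\zeta_j$ (neither its square nor its conjugate), so $P_n$ cancels consistently on both sides and both inside and outside the integral. This structural feature is what makes these two integral representations, unlike the Szeg\H{o} recursions and the CMV matrix entries, completely insensitive to the rotation.
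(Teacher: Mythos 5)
Your proposal is correct and is essentially the paper's own proof: both simply multiply the standard integral representations (\ref{eq_opuc_second_kind_caratheodory}) and (\ref{eq_opuc_reverse_second_kind_caratheodory}) by the constant unimodular factor $\prod_{j=0}^{n-1}\zeta_j$, pull it through the integral, and recognize the rotated polynomials via (\ref{eq_ropuc_second_kind}), (\ref{eq_ropuc_def}), (\ref{eq_ropuc_second_kind_reverse}) and (\ref{eq_ropuc_reverse_def}). Your remark that no squared or conjugated $\zeta_j$ factors intrude is a fair observation; the paper additionally notes that the reverse identity can alternatively be derived from (\ref{eq_ropuc_reverse_ropuc}) together with the first identity, but this is not a substantive difference.
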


\begin{proof}
	Invoking (\ref{eq_ropuc_def}) and (\ref{eq_opuc_second_kind_caratheodory}) gives
	\begin{align*}
		\psi_n^\angle(z)&=\left(\prod_{j=0}^{n-1}\zeta_j\right)\psi_n(z)\\
		&=\int_{\partial\mathbb{D}}\left(\left(\prod_{j=0}^{n-1}\zeta_j\right)\varphi_n(\xi)-\left(\prod_{j=0}^{n-1}\zeta_j\right)\varphi_n(z)\right)\dfrac{\xi+z}{\xi-z}\,d\mu(\xi)\\
		&=\int_{\partial\mathbb{D}}\left(\varphi_n^\angle(\xi)-\varphi_n^\angle(z)\right)\dfrac{\xi+z}{\xi-z}\,d\mu(\xi)
	\end{align*}
	which proves (\ref{eq_ropuc_second_kind_caratheodory}). One can then prove (\ref{eq_ropuc_reverse_second_kind_caratheodory}) by invoking (\ref{eq_ropuc_reverse_def}) and (\ref{eq_opuc_reverse_second_kind_caratheodory}) following a similar calculation, or via (\ref{eq_ropuc_reverse_ropuc}) and (\ref{eq_ropuc_second_kind_caratheodory}).
\end{proof}

Again one can express (\ref{eq_ropuc_second_kind_caratheodory}) and (\ref{eq_ropuc_reverse_second_kind_caratheodory}) in terms of the Carath\'{e}odory function $F$ of $\mu$:
\begin{align}
	\label{eq_ropuc_second_kind_caratheodory_rewrite}
	\psi_n^\angle(z)+F(z)\varphi_n^\angle(z)&=\int_{\partial\mathbb{D}}\varphi_n^\angle(\xi)\dfrac{\xi+z}{\xi-z}\,d\mu(\xi),\\
	\label{eq_ropuc_reverse_second_kind_caratheodory_rewrite}
	-\psi_n^{\angle,*}(z)+F(z)\varphi_n^{\angle,*}(z)&=\int_{\partial\mathbb{D}}\varphi_n^{\angle,*}(\xi)\dfrac{\xi+z}{\xi-z}\,d\mu(\xi).
\end{align}

\begin{theorem}[Rotated Geronimo-Golinskii-Nevai] \label{thm_geronimo_golinskii_nevai_rot}
	Let $\mu$ be a probability measure on $\partial\mathbb{D}$, $z\in\mathbb{D}$ and $r\in\mathbb{C}$. Then the sequence
	\begin{equation}
		\Biggl\{\begin{bmatrix}
			\psi_n^\angle(z,\mu) \\ -\psi_n^{\angle,*}(z,\mu)
		\end{bmatrix}+r\begin{bmatrix}
			\varphi_n^\angle(z,\mu) \\ \varphi_n^{\angle,*}(z,\mu)
		\end{bmatrix}\Biggr\}_{n=0}^\infty\in\ell^2\left(\mathbb{N}_0,\mathbb{C}^2\right)
	\end{equation}
	if and only if $r=F(z)$, where $F$ is the Carath\'{e}odory function of $\mu$.
\end{theorem}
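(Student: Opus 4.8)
The plan is to reduce this statement entirely to the unrotated Geronimo--Golinskii--Nevai theorem (Theorem \ref{thm_geronimo_golinskii_nevai}), exploiting the fact that all four rotated polynomials $\varphi_n^\angle$, $\varphi_n^{\angle,*}$, $\psi_n^\angle$, $\psi_n^{\angle,*}$ are obtained from their unrotated counterparts by multiplication by the \emph{same} unimodular scalar $\di\prod_{j=0}^{n-1}\zeta_j$, as recorded in (\ref{eq_ropuc_def}), (\ref{eq_ropuc_reverse_def}), (\ref{eq_ropuc_second_kind}) and (\ref{eq_ropuc_second_kind_reverse}). Because the rotation acts by the identical factor on every entry, it factors out of the whole $\mathbb{C}^2$-valued vector.

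First I would factor this common scalar out, writing
\begin{equation*}
	\begin{bmatrix} \psi_n^\angle(z) \\ -\psi_n^{\angle,*}(z) \end{bmatrix}+r\begin{bmatrix} \varphi_n^\angle(z) \\ \varphi_n^{\angle,*}(z) \end{bmatrix}=\left(\prod_{j=0}^{n-1}\zeta_j\right)\left(\begin{bmatrix} \psi_n(z) \\ -\psi_n^*(z) \end{bmatrix}+r\begin{bmatrix} \varphi_n(z) \\ \varphi_n^*(z) \end{bmatrix}\right).
\end{equation*}
Since $\{\zeta_n\}_{n=0}^\infty\subseteq\partial\mathbb{D}$, the prefactor satisfies $\left|\di\prod_{j=0}^{n-1}\zeta_j\right|=1$, so taking the $\mathbb{C}^2$-norm of both sides yields, for every $n\in\mathbb{N}_0$,
\begin{equation*}
	\norm{\begin{bmatrix} \psi_n^\angle(z) \\ -\psi_n^{\angle,*}(z) \end{bmatrix}+r\begin{bmatrix} \varphi_n^\angle(z) \\ \varphi_n^{\angle,*}(z) \end{bmatrix}}=\norm{\begin{bmatrix} \psi_n(z) \\ -\psi_n^*(z) \end{bmatrix}+r\begin{bmatrix} \varphi_n(z) \\ \varphi_n^*(z) \end{bmatrix}}.
\end{equation*}
Summing over $n$, the two $\ell^2(\mathbb{N}_0,\mathbb{C}^2)$-norms coincide term by term, so the rotated sequence lies in $\ell^2(\mathbb{N}_0,\mathbb{C}^2)$ if and only if the unrotated sequence does.

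By Theorem \ref{thm_geronimo_golinskii_nevai} the latter happens precisely when $r=F(z)$, where $F$ is the Carath\'eodory function of $\mu$; since the rotation alters only the polynomials and not the measure $\mu$, this $F$ is exactly the one appearing in the statement, completing the argument. In contrast to the Golinskii--Nevai proof of the unrotated case, which deploys the Christoffel--Darboux formulae, here there is essentially no obstacle: the single point to verify is that the rotation factor is common to all four entries and unimodular, after which the claim is an immediate corollary. This is entirely consistent with the earlier observation in the paper that rotation leaves the (mixed) Christoffel--Darboux formulae (\ref{eq_ropuc_cd}) and (\ref{eq_ropuc_cd_mixed}) invariant, so no new analytic input is needed beyond the norm-preservation identity above.
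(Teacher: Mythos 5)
Your proof is correct, and it is in fact more economical than the one in the paper. The single identity you isolate --- that all four polynomials $\varphi_n^\angle$, $\varphi_n^{\angle,*}$, $\psi_n^\angle$, $\psi_n^{\angle,*}$ carry the \emph{same} unimodular prefactor $\prod_{j=0}^{n-1}\zeta_j$, so each $\mathbb{C}^2$-valued term of the rotated sequence is the corresponding unrotated term times a unimodular scalar --- immediately gives that the rotated sequence is square-summable for a given $r$ if and only if the unrotated one is, and both directions of the claim then follow from both directions of Theorem \ref{thm_geronimo_golinskii_nevai}. The paper uses exactly this unimodularity observation for the ``if'' direction, but for the ``only if'' direction it instead re-runs the Golinskii--Nevai contradiction argument inside the rotated setting: assuming the rotated sequence lies in $\ell^2$ for some $r\neq F(z)$, it forms the linear combination yielding $(r-F(z))$ times the second-kind vector sequence, concludes $\{\psi_n^{\angle,*}(z)\}_{n=0}^\infty\in\ell^2(\mathbb{N}_0)$, and derives a contradiction from the lower bound $\left|\psi_n^{\angle,*}(z)\right|^2\geq 1-|z|^2$ supplied by the (rotation-invariant) Christoffel--Darboux formula. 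Your reduction buys brevity and makes transparent that no new analytic input is required beyond norm preservation; the paper's longer converse buys a self-contained illustration that the rotated Christoffel--Darboux machinery functions exactly as in the unrotated case, which is one of the recurring themes of that section. Both arguments are valid.
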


\begin{proof}
	Suppose that $r=F(z)$. By Theorem \ref{thm_geronimo_golinskii_nevai}, we have
	\[
		\sum_{n=0}^\infty \left|\psi_n(z,\mu)+F(z)\varphi_n(z,\mu)\right|^2<\infty,\qquad
		\sum_{n=0}^\infty \left|-\psi_n^*(z,\mu)+F(z)\varphi_n^*(z,\mu)\right|^2<\infty.
	\]
	Then for any $z\in\mathbb{D}$, we have
	\begin{align*}
		\sum_{n=0}^\infty \left|\psi_n^\angle(z,\mu)+F(z)\varphi_n^\angle(z,\mu)\right|^2
		&=\sum_{n=0}^\infty \left|\prod_{j=0}^{n-1}\zeta_j\psi_n(z,\mu)+F(z)\prod_{j=0}^{n-1}\zeta_j\varphi_n(z,\mu)\right|^2\\
		&=\sum_{n=0}^\infty \left|\psi_n(z,\mu)+F(z)\varphi_n(z,\mu)\right|^2<\infty.
	\end{align*}
	Similarly, we also have
	\begin{align*}
		\sum_{n=0}^\infty \left|-\psi_n^{\angle,*}(z,\mu)+F(z)\varphi_n^{\angle,*}(z,\mu)\right|^2
		&=\sum_{n=0}^\infty \left|-\prod_{j=0}^{n-1}\zeta_j\psi_n^*(z,\mu)+F(z)\prod_{j=0}^{n-1}\zeta_j\varphi_n^*(z,\mu)\right|^2\\
		&=\sum_{n=0}^\infty \left|-\psi_n^*(z,\mu)+F(z)\varphi_n^*(z,\mu)\right|^2<\infty.
	\end{align*}
	Hence we have proved that
	\begin{align*}
		\Biggl\{\begin{bmatrix}
			\psi_n^\angle(z,\mu) \\ -\psi_n^{\angle,*}(z,\mu)
		\end{bmatrix}+F(z)\begin{bmatrix}
			\varphi_n^\angle(z,\mu) \\ \varphi_n^{\angle,*}(z,\mu)
		\end{bmatrix}\Biggr\}_{n=0}^\infty
		\in\ell^2\left(\mathbb{N}_0,\mathbb{C}^2\right).
	\end{align*}
	Conversely, suppose that 
	$$\Biggl\{\begin{bmatrix}
		\psi_n^\angle(z,\mu) \\ -\psi_n^{\angle,*}(z,\mu)
	\end{bmatrix}+r\begin{bmatrix}
		\varphi_n^\angle(z,\mu) \\ \varphi_n^{\angle,*}(z,\mu)
	\end{bmatrix}\Biggr\}_{n=0}^\infty
	\in\ell^2\left(\mathbb{N}_0,\mathbb{C}^2\right).$$
	Then the linear combination
	\begin{gather*}
		r\Biggl\{\begin{bmatrix}
			\psi_n^\angle(z,\mu) \\ -\psi_n^{\angle,*}(z,\mu)
		\end{bmatrix}+F(z)\begin{bmatrix}
			\varphi_n^\angle(z,\mu) \\ \varphi_n^{\angle,*}(z,\mu)
		\end{bmatrix}\Biggr\}_{n=0}^\infty-F(z)\Biggl\{\begin{bmatrix}
			\psi_n^\angle(z,\mu) \\ -\psi_n^{\angle,*}(z,\mu)
		\end{bmatrix}+r\begin{bmatrix}
			\varphi_n^\angle(z,\mu) \\ \varphi_n^{\angle,*}(z,\mu)
		\end{bmatrix}\Biggr\}_{n=0}^\infty\\
		=(r-F(z))\Biggl\{\begin{bmatrix}
			\psi_n^\angle(z,\mu) \\ -\psi_n^{\angle,*}(z,\mu)
		\end{bmatrix}\Biggr\}_{n=0}^\infty\in\ell^2\left(\mathbb{N}_0,\mathbb{C}^2\right).
	\end{gather*}
	Suppose towards a contradiction that $r\neq F(z)$. Then $r-F(z)\neq 0$, so we must have
	$$\Biggl\{\begin{bmatrix}
		\psi_n^\angle(z,\mu) \\ -\psi_n^{\angle,*}(z,\mu)
	\end{bmatrix}\Biggr\}_{n=0}^\infty
	\in\ell^2\left(\mathbb{N}_0,\mathbb{C}^2\right).$$
	In particular, we must have $\left\{\psi_n^{\angle,*}(z,\mu)\right\}_{n=0}^\infty\in\ell^2(\mathbb{N}_0)$. On the other hand, taking (\ref{eq_ropuc_cd}) with $\xi=z$, replacing $(\varphi_j^\angle,\varphi_j^{\angle,*})$'s with $(\psi_j^\angle,\psi_j^{\angle,*})$'s and $n$ with $n-1$ gives
	\begin{align*}
		(1-|z|^2)\sum_{k=0}^{n-1}\left|\psi_k^\angle(z)\right|^2
		=\left|\psi_{n}^{\angle,*}(z)\right|^2-\left|\psi_{n}^\angle(z)\right|^2
	\end{align*}
	and hence
	\begin{align*}
		\left|\psi_{n}^{\angle,*}(z)\right|^2
		=(1-|z|^2)\sum_{k=0}^{n}\left|\psi_k^\angle(z)\right|^2
		\geq (1-|z|^2)\left|\psi_0^\angle(z)\right|^2 = 1-|z|^2.
	\end{align*}
	Since $\di\sum_{n=0}^\infty \left(1-|z|^2\right)$ diverges, by the comparison test, $\di\sum_{n=0}^\infty\left|\psi_{n}^{\angle,*}(z)\right|^2$ must also diverge, which contradicts $\left\{\psi_n^{\angle,*}(z,\mu)\right\}_{n=0}^\infty\in\ell^2(\mathbb{N}_0)$. Hence we must have $r=F(z)$.
\end{proof}

One may also use the invariance of the Christoffel-Darboux formulae under rotation of polynomials, and repeat the arguments used in \cite[pp. 231-234]{Simon2005a} to prove Theorem \ref{thm_geronimo_golinskii_nevai_rot}.

\subsection{Gesztesy-Zinchenko Formulation for Rotated CMV Operators}

We prove analogues of Theorems \ref{thm_cmv_extended_GZ_lem2.2}, \ref{thm_cmv_extended_GZ_lem2.3_right} and \ref{thm_cmv_extended_GZ_lem2.3_left}. In the process, we derive a rotated version of the Gesztesy-Zinchenko transfer matrix, and hence compute rotated versions of Tables \ref{table_GZ_cmv_even} and \ref{table_GZ_cmv_odd}.

\begin{theorem} \label{thm_rcmv_extended_GZ_lem2.2}
	Let $z\in\mathbb{C}-\{0\}$ and $f=\{f_n(z)\}_{n\in\mathbb{Z}}$, $g=\{g_n(z)\}_{n\in\mathbb{Z}}$ be sequences of complex functions. Define the unitary operator $\mathcal{U}^\angle$ on $\ell^2(\mathbb{Z})^2$ by
	\begin{equation}
		\mathcal{U}^\angle=\begin{bmatrix}
			\mathcal{E}^\angle & 0 \\ 0 & \tilde{\mathcal{E}}^\angle
		\end{bmatrix}
		=\begin{bmatrix}
			\tilde{\mathcal{L}}^\angle\tilde{\mathcal{M}}^\angle & 0 \\ 
			0 & \tilde{\mathcal{M}}^\angle\tilde{\mathcal{L}}^\angle
		\end{bmatrix}
		=\begin{bmatrix}
			0 & \tilde{\mathcal{L}}^\angle \\ \tilde{\mathcal{M}}^\angle & 0
		\end{bmatrix}^2.
	\end{equation}
	Then the following statements are equivalent:
	\begin{enumerate}
		\item $(\mathcal{E}^\angle f)(z)=zf(z)$ and $(\tilde{\mathcal{M}}^\angle f)(z)=zg(z)$;
		
		\item $(\tilde{\mathcal{E}}^\angle g)(z)=zg(z)$ and $(\tilde{\mathcal{L}}^\angle g)(z)=f(z)$;
		
		\item $(\tilde{\mathcal{M}}^\angle f)(z)=zg(z)$ and $(\tilde{\mathcal{L}}^\angle g)(z)=f(z)$;
		
		\item $\mathcal{U}^\angle\begin{bmatrix}
			f(z) \\ g(z)
		\end{bmatrix}=z\begin{bmatrix}
			f(z) \\ g(z)
		\end{bmatrix}$ and $(\tilde{\mathcal{M}}^\angle f)(z)=zg(z)$;
		
		\item $\mathcal{U}^\angle\begin{bmatrix}
			f(z) \\ g(z)
		\end{bmatrix}=z\begin{bmatrix}
			f(z) \\ g(z)
		\end{bmatrix}$ and $(\tilde{\mathcal{L}}^\angle g)(z)=f(z)$;
		
		\item for each $n\in\mathbb{Z}$,
		\begin{equation} \label{eq_GZ_transition_rcmv}
			\begin{bmatrix}
				f_{n+1}(z) \\ g_{n+1}(z)
			\end{bmatrix}=T_{n+1}^\angle(z)\begin{bmatrix}
				f_n(z) \\ g_n(z)
			\end{bmatrix},
		\end{equation}
		where
		\begin{equation} \label{eq_Tn_GZ_rCMV}
			T_n^\angle(z)=\begin{cases}
				\dfrac{1}{\rho_n}\begin{bmatrix}
					-\overline{\alpha_n} & z \\ z^{-1} & -\alpha_n
				\end{bmatrix} & n\text{ even};\\
				~\\
				\dfrac{1}{\rho_n}\begin{bmatrix}
					-\alpha_n & 1 \\ 1 & -\overline{\alpha_n}
				\end{bmatrix} & n\text{ odd}.
			\end{cases}
		\end{equation}
	\end{enumerate}
\end{theorem}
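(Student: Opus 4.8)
The plan is to transplant the architecture of the proof of Theorem~\ref{thm_cmv_extended_GZ_lem2.2}, replacing every appeal to transpose symmetry by the explicit factorisations (\ref{eq_rcmv_extended}) and (\ref{eq_rcmv_extended_alt}). First I would record the single structural fact that drives everything: because $|\alpha_n|^2+|\rho_n|^2=1$, a one-line computation gives $(\Theta_n^\angle)^*\Theta_n^\angle=I$ for the block in (\ref{eq_rTheta_extended}), so each $\Theta_n^\angle$ is unitary and therefore $\tilde{\mathcal{L}}^\angle$ and $\tilde{\mathcal{M}}^\angle$ are unitary, in particular invertible, on $\ell^2(\mathbb{Z})$. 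This is the only place the constraint $|\alpha_n|^2+|\rho_n|^2=1$ enters structurally, and it is the same identity that will later collapse the off-diagonal terms in the transfer computation.

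Next I would establish the equivalences $(1)\Leftrightarrow(2)\Leftrightarrow(3)\Leftrightarrow(4)\Leftrightarrow(5)$ with statement $(3)$ as the \emph{hub}, using only $\mathcal{E}^\angle=\tilde{\mathcal{L}}^\angle\tilde{\mathcal{M}}^\angle$, $\tilde{\mathcal{E}}^\angle=\tilde{\mathcal{M}}^\angle\tilde{\mathcal{L}}^\angle$, invertibility of $\tilde{\mathcal{L}}^\angle$, and $z\neq 0$. Assuming $(3)$, applying $\tilde{\mathcal{L}}^\angle$ to $\tilde{\mathcal{M}}^\angle f=zg$ gives $\mathcal{E}^\angle f=z\tilde{\mathcal{L}}^\angle g=zf$, which is $(1)$; conversely $(1)$ gives $z\tilde{\mathcal{L}}^\angle g=\mathcal{E}^\angle f=zf$, and cancelling $z$ returns $\tilde{\mathcal{L}}^\angle g=f$. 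The equivalence with $(2)$ is the genuinely new point: instead of transposing, one reads $\tilde{\mathcal{E}}^\angle g=\tilde{\mathcal{M}}^\angle(\tilde{\mathcal{L}}^\angle g)=\tilde{\mathcal{M}}^\angle f=zg$ directly off (\ref{eq_rcmv_extended_alt}), which is exactly where the failure of $\tilde{\mathcal{C}}^\angle=(\mathcal{C}^\angle)^T$ would otherwise obstruct the classical argument. Statements $(4)$ and $(5)$ then follow once $\tilde{\mathcal{E}}^\angle g=zg$ is in hand, since the eigenvalue equation for $\mathcal{U}^\angle$ merely packages $\mathcal{E}^\angle f=zf$ together with $\tilde{\mathcal{E}}^\angle g=zg$.

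The remaining and most delicate step is $(3)\Leftrightarrow(6)$. Here I would expand the two operator equations coordinatewise through the block decompositions (\ref{eq_L_rcmv_extended}) and (\ref{eq_M_rcmv_extended}): the two rows of $\Theta_{2k}^\angle$ turn $\tilde{\mathcal{L}}^\angle g=f$ into a scalar pair relating $f_{2k},f_{2k+1},g_{2k},g_{2k+1}$, while the two rows of $\Theta_{2k-1}^\angle$ turn $\tilde{\mathcal{M}}^\angle f=zg$ into a pair relating $f_{2k-1},f_{2k},g_{2k-1},g_{2k}$. Solving each pair for the higher-indexed coordinates and invoking $\overline{\rho_n}+|\alpha_n|^2\rho_n^{-1}=\rho_n^{-1}$ to simplify produces the one-step map $(f_n,g_n)\mapsto(f_{n+1},g_{n+1})$, whose two parity alternatives match the even and odd cases of $T_{n+1}^\angle$ in (\ref{eq_Tn_GZ_rCMV}) according to whether the governing block comes from $\tilde{\mathcal{L}}^\angle$ or $\tilde{\mathcal{M}}^\angle$. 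For the converse I would note that $\det T_n^\angle=-\overline{\rho_n}/\rho_n\neq 0$, so each step is invertible (and its inverse carries $\overline{\rho_n}$ in place of $\rho_n$, matching the remark preceding the theorem); this lets the local recursions (\ref{eq_GZ_transition_rcmv}) be reassembled step by step into the two block equations of $(3)$.

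The hard part is precisely this $(3)\Leftrightarrow(6)$ bookkeeping: unlike the unrotated case one must track scrupulously which off-diagonal entries carry $\rho_n$ and which carry $\overline{\rho_n}$, align the parity of the governing block with the correct case of (\ref{eq_Tn_GZ_rCMV}), and verify that after the cancellation $\overline{\rho_n}+|\alpha_n|^2\rho_n^{-1}=\rho_n^{-1}$ the surviving $2\times 2$ matrix is exactly the stated one for each parity. The secondary subtlety, already used in the equivalence with $(2)$, is that $\tilde{\mathcal{E}}^\angle\neq(\mathcal{E}^\angle)^T$, so the ``dual'' half of the argument must be powered by the factorisation $\tilde{\mathcal{E}}^\angle=\tilde{\mathcal{M}}^\angle\tilde{\mathcal{L}}^\angle$ rather than by duality; once that single substitution is made, the remaining implications are formally identical to the proof of Theorem~\ref{thm_cmv_extended_GZ_lem2.2}.
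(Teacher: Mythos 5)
Your proposal is correct and follows essentially the same route as the paper: it uses statement (3) as the hub, powers the equivalence with (2) by the factorisation $\tilde{\mathcal{E}}^\angle=\tilde{\mathcal{M}}^\angle\tilde{\mathcal{L}}^\angle$ in place of the lost transpose symmetry, and handles $(3)\Leftrightarrow(6)$ by the same parity-split coordinatewise expansion of the $\Theta^\angle$-blocks, with the cancellation $\overline{\rho_n}+|\alpha_n|^2\rho_n^{-1}=\rho_n^{-1}$ playing the role of the paper's substitution of $|\rho_{n+1}|^2$. The only cosmetic difference is that the paper writes the $(3)\Leftrightarrow(6)$ step as a single chain of equivalent scalar systems rather than a forward solve followed by an appeal to $\det T_n^\angle\neq 0$.
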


\begin{proof}
	Since $z\in\mathbb{C}-\{0\}$, we can rewrite $(\tilde{\mathcal{M}}^\angle f)(z)=zg(z)$ as
	\begin{equation}
		g(z)=\dfrac{1}{z}\left(\tilde{\mathcal{M}}^\angle f\right)(z).
	\end{equation}
	Suppose (1) holds. Using the fact that $\mathcal{E}^\angle=\tilde{\mathcal{L}}^\angle\tilde{\mathcal{M}}^\angle$, we obtain
	\allowdisplaybreaks
	\begin{align*}
		\left(\tilde{\mathcal{L}}^\angle g\right)(z)
		=\tilde{\mathcal{L}}^\angle\left(\dfrac{1}{z}\tilde{\mathcal{M}}^\angle f\right)(z)
		=\dfrac{1}{z}\left(\tilde{\mathcal{L}}^\angle\tilde{\mathcal{M}}^\angle f\right)(z)
		=\dfrac{1}{z}\tilde{\mathcal{E}}^\angle f(z)
		=\dfrac{1}{z}\cdot zf(z)
		=f(z)
	\end{align*}
	thus proving (3). Now suppose (3) holds. Again using the fact that $\mathcal{E}^\angle=\tilde{\mathcal{L}}^\angle\tilde{\mathcal{M}}^\angle$ gives
	\begin{align*}
		\left(\mathcal{E}^\angle f\right)(z)
		=\left(\tilde{\mathcal{L}}^\angle\tilde{\mathcal{M}}^\angle f\right)(z)
		=\tilde{\mathcal{L}}^\angle\left(\tilde{\mathcal{M}}^\angle f\right)(z)
		=\tilde{\mathcal{L}}^\angle\left(zg(z)\right)
		=z\left(\tilde{\mathcal{L}}^\angle g\right)(z)
		=zf(z)
	\end{align*}
	thus proving the equivalence of (1) and (3). On the other hand, using $\tilde{\mathcal{E}}^\angle=\tilde{\mathcal{M}}^\angle\tilde{\mathcal{L}}^\angle$ gives
	\begin{align*}
		\left(\tilde{\mathcal{E}}^\angle g\right)(z)
		=\left(\tilde{\mathcal{M}}^\angle\tilde{\mathcal{L}}^\angle g\right)(z)
		=\tilde{\mathcal{M}}^\angle\left(\tilde{\mathcal{L}}^\angle g\right)(z)
		=\tilde{\mathcal{M}}^\angle f(z)
		=zg(z)
	\end{align*}
	which proves (2). If we suppose (2) holds instead, again using $\tilde{\mathcal{E}}^\angle=\tilde{\mathcal{M}}^\angle\tilde{\mathcal{L}}^\angle$ gives
	\begin{align*}
		\left(\tilde{\mathcal{M}}^\angle f\right)(z)
		=\tilde{\mathcal{M}}^\angle\left(f(z)\right)
		=\tilde{\mathcal{M}}^\angle\left(\tilde{\mathcal{L}}^\angle g\right)(z)
		=\left(\tilde{\mathcal{E}}^\angle g\right)(z)
		=zg(z)
	\end{align*}
	thus proving the equivalence of (2) and (3). Now since (3) implies both (1) and (2), we have
	\begin{align} \label{eq_U_rotated_eigenvalue_prob}
		\mathcal{U}^\angle\begin{bmatrix}
			f(z) \\ g(z)
		\end{bmatrix}=
		\begin{bmatrix}
			\mathcal{E}^\angle & 0 \\ 0 & \tilde{\mathcal{E}}^\angle
		\end{bmatrix}\begin{bmatrix}
			f(z) \\ g(z)
		\end{bmatrix}
		=\begin{bmatrix}
			(\mathcal{E}^\angle f)(z) \\ (\tilde{\mathcal{E}}^\angle g)(z)
		\end{bmatrix}
		=\begin{bmatrix}
			zf(z) \\ zg(z)
		\end{bmatrix}
		=z\begin{bmatrix}
			f(z) \\ g(z)
		\end{bmatrix}
	\end{align}
	which proves that (3) implies (4) and (5). Conversely, (\ref{eq_U_rotated_eigenvalue_prob})
	holds if and only if $(\mathcal{E}^\angle f)(z)=zf(z)$ and $(\tilde{\mathcal{E}}^\angle g)(z)=zg(z)$. Hence if (4) is true, then (1) is true and hence (3) is true, thus proving the equivalence of (4) and (3). Similarly, we also have (5) implies (2) and hence implies (3), which proves the equivalence of (5) and (3). Now we are left to prove the equivalence of (3) and (6).
	
	When $n\in\mathbb{Z}$ is even, then the following statements are equivalent:
	\begin{gather}
		\begin{bmatrix}
			f_{n+1}(z) \\ g_{n+1}(z)
		\end{bmatrix}=T_{n+1}^\angle(z)\begin{bmatrix}
			f_n(z) \\ g_n(z)
		\end{bmatrix},\\
		\begin{cases}
			\rho_{n+1}f_{n+1}(z)=-\overline{\alpha_{n+1}}f_n(z)+zg_n(z);\\
			\rho_{n+1}g_{n+1}(z)=z^{-1}f_n(z)-\alpha_{n+1}g_n(z),
		\end{cases}\\
		\begin{cases}
			zg_n(z)=\overline{\alpha_{n+1}}f_n(z)+\rho_{n+1}f_{n+1}(z);\\
			z\rho_{n+1}g_{n+1}(z)=f_n(z)-\alpha_{n+1}zg_n(z),
		\end{cases}\\
		\begin{cases}
			zg_n(z)=\overline{\alpha_{n+1}}f_n(z)+\rho_{n+1}f_{n+1}(z);\\
			z\rho_{n+1}g_{n+1}(z)=|\rho_{n+1}|^2f_n(z)-\alpha_{n+1}\rho_{n+1}f_{n+1}(z),
		\end{cases}\\
		\begin{cases}
			zg_n(z)=\overline{\alpha_{n+1}}f_n(z)+\rho_{n+1}f_{n+1}(z);\\
			zg_{n+1}(z)=\overline{\rho_{n+1}}f_n(z)-\alpha_{n+1}f_{n+1}(z),
		\end{cases}\\
		z\begin{bmatrix}
			g_n(z) \\ g_{n+1}(z)
		\end{bmatrix}=\Theta_{n+1}^\angle\begin{bmatrix}
			f_n(z) \\ f_{n+1}(z)
		\end{bmatrix},\\
		\left(\tilde{\mathcal{M}}^\angle f\right)(z)=zg(z).
	\end{gather}
	When $n\in\mathbb{Z}$ is odd, then the following statements are equivalent:
	\begin{gather}
		\begin{bmatrix}
			f_{n+1}(z) \\ g_{n+1}(z)
		\end{bmatrix}=T_{n+1}^\angle(z)\begin{bmatrix}
			f_n(z) \\ g_n(z)
		\end{bmatrix},\\
		\begin{cases}
			\rho_{n+1}g_{n+1}(z)=f_n(z)-\overline{\alpha_{n+1}}g_n(z);\\
			\rho_{n+1}f_{n+1}(z)=-\alpha_{n+1}f_n(z)+g_n(z),
		\end{cases}\\
		\begin{cases}
			\rho_{n+1}g_{n+1}(z)=f_n(z)-\overline{\alpha_{n+1}}g_n(z);\\
			\rho_{n+1}f_{n+1}(z)=|\rho_{n+1}|^2g_n(z)-\alpha_{n+1}\rho_{n+1}g_{n+1}(z),
		\end{cases}\\
		\begin{cases}
			f_n(z)=\overline{\alpha_{n+1}}g_n(z)+\rho_{n+1}g_{n+1}(z);\\
			f_{n+1}(z)=\overline{\rho_{n+1}}g_n(z)-\alpha_{n+1}g_{n+1}(z),
		\end{cases}\\
		\begin{bmatrix}
			f_n(z) \\ f_{n+1}(z)
		\end{bmatrix}=\Theta_{n+1}^\angle\begin{bmatrix}
			g_n(z) \\ g_{n+1}(z)
		\end{bmatrix},\\
		\left(\tilde{\mathcal{L}}^\angle g\right)(z)=f(z).
	\end{gather}
	Combining both cases completes the proof of the equivalence of (3) and (6).
\end{proof}
In other words, we simply replace all operators with their rotated counterparts. A key observation is that since $\tilde{\mathcal{E}}\neq\left(\mathcal{E}^\angle\right)^T$ in general, we can not directly replace $\mathcal{E}^T$ by $\left(\mathcal{E}^\angle\right)^T$, so we lose transpose symmetry in (2). However by observing that $\tilde{\mathcal{E}}=\mathcal{E}^T=\tilde{\mathcal{M}}\tilde{\mathcal{L}}$, we can still draw the analogy with $\tilde{\mathcal{E}}^\angle=\tilde{\mathcal{M}}^\angle\tilde{\mathcal{L}}^\angle$, despite the loss of transpose symmetry. The rotated Gesztesy-Zinchenko transfer matrix $T_n^\angle$ also looks exactly like the non-rotated case given by (\ref{eq_GZ_transition_cmv}, except the real $\rho_n$'s are now replaced by complex $\rho_n$'s.

Again if there exists $K\in\mathbb{Z}$ such that $\alpha_K=e^{is}\in\partial\mathbb{D}$ for some $s\in(-\pi,\pi]$, then $\mathcal{E}^\angle$ is a direct sum
\begin{equation}
	\mathcal{E}^\angle=\mathcal{E}^\angle_{K^-}(s)\oplus\mathcal{E}^\angle_{K^+}(s),
\end{equation}
where $\mathcal{E}^\angle_{K^-}(s)$ and $\mathcal{E}^\angle_{K^+}(s)$ are operators on $\ell^2(\mathbb{Z}\cap\left(-\infty,K]\right)$ and $\ell^2\left(\mathbb{Z}\cap[K+1,\infty)\right)$, respectively. One may check that $\mathcal{E}^\angle_{K^+}$ looks almost like $\mathcal{C}^\angle$, whereas $\mathcal{E}^\angle_{K^-}$ also looks almost like $\mathcal{C}^\angle$ but extending in the opposite direction. Hence following a similar argument to Theorem \ref{thm_LM_rcmv}, one may show that $\mathcal{E}^\angle_{K^\pm}(s)$ also admit $\mathcal{L}\mathcal{M}$-decompositions
\begin{equation} \label{eq_rcmv_split}
	\mathcal{E}^\angle_{K^\pm}(s)=\mathcal{L}^\angle_{K^\pm}(s)\mathcal{M}^\angle_{K^\pm}(s).
\end{equation}
Again for the special case $s=\pi$, we denote $\mathcal{E}^\angle_{K^\pm}(\pi)$ by $\mathcal{E}^\angle_{K^\pm}$, and also define the operators $\mathcal{L}^\angle_{K^\pm}$ and $\mathcal{M}^\angle_{K^\pm}$ similarly. One may also simply verify that $\mathcal{E}^\angle_{K^+}(\pi)$ looks exactly like $\mathcal{C}^\angle$. 

If $K$ is even, then $\mathcal{L}^\angle_{K^+}$, $\mathcal{M}^\angle_{K^+}$, $\mathcal{L}^\angle_{K^-}$ and $\mathcal{M}^\angle_{K^-}$ are given by
\begin{align}
	\label{eq_rcmv_split_L+_even}
	\mathcal{L}^\angle_{K^+}&=\Theta^\angle_{K+1}\oplus\Theta^\angle_{K+3}\oplus\Theta^\angle_{K+5}\oplus\cdots,\\
	\label{eq_rcmv_split_M+_even}
	\mathcal{M}^\angle_{K^+}&=1\oplus\Theta^\angle_{K+2}\oplus\Theta^\angle_{K+4}\oplus\cdots, \\
	\label{eq_rcmv_split_L-_even}
	\mathcal{L}^\angle_{K^-}&=\cdots\oplus\Theta^\angle_{K-4}\oplus\Theta^\angle_{K-2}\oplus(-1), \\
	\label{eq_rcmv_split_M-_even}
	\mathcal{M}^\angle_{K^-}&=\cdots\oplus\Theta^\angle_{K-5}\oplus\Theta^\angle_{K-3}\oplus\Theta^\angle_{K-1},
\end{align}
where $\Theta^\angle_n$ is given by (\ref{eq_rTheta_extended}). If $K$ is odd, then they are given by
\begin{align}
	\label{eq_rcmv_split_L+_odd}
	\mathcal{L}^\angle_{K^+}&=1\oplus\Theta^\angle_{K+2}\oplus\Theta^\angle_{K+4}\oplus\cdots, \\
	\label{eq_rcmv_split_M+_odd}
	\mathcal{M}^\angle_{K^+}&=\Theta^\angle_{K+1}\oplus\Theta^\angle_{K+3}\oplus\Theta^\angle_{K+5}\oplus\cdots,\\
	\label{eq_rcmv_split_L-_odd}
	\mathcal{L}^\angle_{K^-}&=\cdots\oplus\Theta^\angle_{K-5}\oplus\Theta^\angle_{K-3}\oplus\Theta^\angle_{K-1},\\
	\label{eq_rcmv_split_M-_odd}
	\mathcal{M}^\angle_{K^-}&=\cdots\oplus\Theta^\angle_{K-4}\oplus\Theta^\angle_{K-2}\oplus(-1).
\end{align}

Now we are ready to prove analogues of Theorem \ref{thm_rcmv_extended_GZ_lem2.2} for the half-lattice operators $\mathcal{E}^\angle_{K^-}$ and $\mathcal{E}^\angle_{K^+}$:

\begin{theorem} \label{thm_rcmv_extended_GZ_lem2.3_right}
	Let $z\in\mathbb{C}-\{0\}$ and $f_{K^+}=\{f_n(z)\}_{n=K+1}^\infty$, $g_{K^+}=\{g_n(z)\}_{n=K+1}^\infty$ be sequences of complex functions. Define the operator $\mathcal{U}^\angle_{K^+}$ on $\ell^2(\mathbb{Z}\cap[K+1,\infty))^2$ by
	\begin{equation}
		\mathcal{U}^\angle_{K^+}=\begin{bmatrix}
			\mathcal{E}^\angle_{K^+} & 0 \\ 0 & \tilde{\mathcal{E}}^\angle_{K^+}
		\end{bmatrix}
		=\begin{bmatrix}
			\tilde{\mathcal{L}}^\angle_{K^+}\tilde{\mathcal{M}}^\angle_{K^+} & 0 \\ 
			0 & \tilde{\mathcal{M}}^\angle_{K^+}\tilde{\mathcal{L}}^\angle_{K^+}
		\end{bmatrix}
		=\begin{bmatrix}
			0 & \tilde{\mathcal{L}}^\angle_{K^+} \\ \tilde{\mathcal{M}}^\angle_{K^+} & 0
		\end{bmatrix}^2,
	\end{equation}
	where $K\in\mathbb{Z}$ is such that $\alpha_K\in\partial\mathbb{D}$. Then the following statements are equivalent:
	\begin{enumerate}
		\item $(\mathcal{E}^\angle_{K^+} f_{K^+})(z)=zf_{K^+}(z)$ and $(\tilde{\mathcal{M}}^\angle_{K^+} f_{K^+})(z)=zg_{K^+}(z)$;
		
		\item $(\tilde{\mathcal{E}}^\angle_{K^+} g_{K^+})(z)=zg_{K^+}(z)$ and $(\tilde{\mathcal{L}}^\angle_{K^+} g_{K^+})(z)=f_{K^+}(z)$;
		
		\item $(\tilde{\mathcal{M}}^\angle_{K^+} f_{K^+})(z)=zg_{K^+}(z)$ and $(\tilde{\mathcal{L}}^\angle_{K^+} g_{K^+})(z)=f_{K^+}(z)$;
		
		\item $\mathcal{U}^\angle_{K^+}\begin{bmatrix}
			f_{K^+}(z) \\ g_{K^+}(z)
		\end{bmatrix}=z\begin{bmatrix}
			f_{K^+}(z) \\ g_{K^+}(z)
		\end{bmatrix}$ and $(\tilde{\mathcal{M}}_{K^+}^\angle f_{K^+})(z)=zg_{K^+}(z)$;
		
		\item $\mathcal{U}^\angle_{K^+}\begin{bmatrix}
			f_{K^+}(z) \\ g_{K^+}(z)
		\end{bmatrix}=z\begin{bmatrix}
			f_{K^+}(z) \\ g_{K^+}(z)
		\end{bmatrix}$ and $(\tilde{\mathcal{L}}_{K^+}^\angle g_{K^+})(z)=f_{K^+}(z)$;
		
		\item for each $n\in\mathbb{Z}\cap[K+1,\infty)$,
		\begin{equation} \label{eq_GZ_transition_rcmv_+}
			\begin{bmatrix}
				f_{n+1}(z) \\ g_{n+1}(z)
			\end{bmatrix}=T_{n+1}^\angle(z)\begin{bmatrix}
				f_n(z) \\ g_n(z)
			\end{bmatrix},
		\end{equation}
		where $T_n^\angle(z)$ is given by (\ref{eq_Tn_GZ_rCMV}) and
		\[f_{K+1}(z)=\begin{cases}
			zg_{K+1}(z) & K \text{ even}; \\ g_{K+1}(z) & K \text{ odd}.
		\end{cases}\]
	\end{enumerate}
\end{theorem}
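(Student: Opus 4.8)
The plan is to mirror the proof of Theorem~\ref{thm_rcmv_extended_GZ_lem2.2} line by line, replacing the full-lattice operators by their half-lattice restrictions and paying extra attention to the leftmost $1\times 1$ block of the factorisation, which is what forces the boundary condition at $n=K+1$ in statement~(6).

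First I would record the half-lattice factorisations $\mathcal{E}^\angle_{K^+}=\tilde{\mathcal{L}}^\angle_{K^+}\tilde{\mathcal{M}}^\angle_{K^+}$ and $\tilde{\mathcal{E}}^\angle_{K^+}=\tilde{\mathcal{M}}^\angle_{K^+}\tilde{\mathcal{L}}^\angle_{K^+}$, which hold by (\ref{eq_rcmv_split}) together with the explicit block forms (\ref{eq_rcmv_split_L+_even})--(\ref{eq_rcmv_split_M-_odd}). Granting these, the implications (1)$\Leftrightarrow$(3), (2)$\Leftrightarrow$(3), (4)$\Leftrightarrow$(3) and (5)$\Leftrightarrow$(3) are obtained by exactly the operator manipulations of Theorem~\ref{thm_rcmv_extended_GZ_lem2.2}: since $z\neq 0$ we set $g_{K^+}=z^{-1}\tilde{\mathcal{M}}^\angle_{K^+}f_{K^+}$, substitute into whichever factorisation is needed, and read off the claimed identity. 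As in the full-lattice case, the essential structural remark is that we never transpose $\mathcal{E}^\angle_{K^+}$ (which would fail since $\overline{\rho_n}\neq\rho_n$); the symmetric partner is supplied throughout by $\tilde{\mathcal{E}}^\angle_{K^+}=\tilde{\mathcal{M}}^\angle_{K^+}\tilde{\mathcal{L}}^\angle_{K^+}$.

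The heart of the argument is (3)$\Leftrightarrow$(6). For each interior index $n\geq K+1$ I would rerun the even/odd computation from the proof of Theorem~\ref{thm_rcmv_extended_GZ_lem2.2}, clearing $\rho_{n+1}$ and using $|\rho_{n+1}|^2=\rho_{n+1}\overline{\rho_{n+1}}$, to show that the transfer step $\begin{bmatrix} f_{n+1} \\ g_{n+1}\end{bmatrix}=T_{n+1}^\angle\begin{bmatrix} f_n \\ g_n\end{bmatrix}$ is equivalent to the single $\Theta^\angle_{n+1}$-block row of $\tilde{\mathcal{M}}^\angle_{K^+}f_{K^+}=zg_{K^+}$ or of $\tilde{\mathcal{L}}^\angle_{K^+}g_{K^+}=f_{K^+}$, according to the parity of $n$. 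Letting $n$ range over $[K+1,\infty)$ sweeps out precisely the $2\times 2$ blocks appearing in (\ref{eq_rcmv_split_L+_even})--(\ref{eq_rcmv_split_M-_odd}), so the interior rows of (3) are equivalent to (6) restricted to $n\geq K+1$.

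The main obstacle, and the only genuinely new feature relative to Theorem~\ref{thm_rcmv_extended_GZ_lem2.2}, is the leftmost $1\times 1$ block, which no transfer step can produce and which must instead be converted into the scalar boundary condition. When $K$ is even the block forms place the entry $(1)$ at the top of $\tilde{\mathcal{M}}^\angle_{K^+}$, so the first component of $\tilde{\mathcal{M}}^\angle_{K^+}f_{K^+}=zg_{K^+}$ reads $f_{K+1}=zg_{K+1}$; when $K$ is odd the $(1)$ sits at the top of $\tilde{\mathcal{L}}^\angle_{K^+}$, so the first component of $\tilde{\mathcal{L}}^\angle_{K^+}g_{K^+}=f_{K^+}$ reads $f_{K+1}=g_{K+1}$. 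These are exactly the two cases in (6). The delicate point requiring care is the parity bookkeeping: unlike the full lattice, the parity of the block carried by $\tilde{\mathcal{L}}^\angle_{K^+}$ versus $\tilde{\mathcal{M}}^\angle_{K^+}$ flips with the parity of $K$, so one must check case by case that starting the recursion at $n=K+1$ interleaves correctly with the block placement in (\ref{eq_rcmv_split_L+_even})--(\ref{eq_rcmv_split_M-_odd}), that the $1\times 1$ block lands exactly at the edge, and that every $\Theta^\angle$-block is consumed by one transfer step with none omitted or double-counted. Once this is verified, combining the interior equivalence with the edge identity yields (3)$\Leftrightarrow$(6) and closes the proof.
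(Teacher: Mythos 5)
Your proposal is correct and follows essentially the same route as the paper's proof: the equivalences (1)--(5) and the interior transfer steps are carried over verbatim from Theorem \ref{thm_rcmv_extended_GZ_lem2.2}, and the boundary condition at $n=K+1$ is read off from the leading $1\times 1$ block of $\tilde{\mathcal{M}}^\angle_{K^+}$ (for $K$ even) or $\tilde{\mathcal{L}}^\angle_{K^+}$ (for $K$ odd), exactly as in (\ref{eq_rcmv_split_M+_even}) and (\ref{eq_rcmv_split_L+_odd}). Your parity bookkeeping for the placement of the scalar block agrees with the paper's.
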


\begin{proof}
	The proof of the equivalence of (1), (2), (3), (4) and (5) follows a similar argument used in the proof of the equivalence of (1), (2), (3), (4) and (5) in Theorem \ref{thm_rcmv_extended_GZ_lem2.2}. The first part of the proof of the equivalence of (3) and (6) is also similar to the case in Theorem \ref{thm_rcmv_extended_GZ_lem2.2}. We are left to prove the representation of $f_K$. If $K$ is even, then $(\tilde{\mathcal{M}}^\angle_{K^+} f_{K^+})(z)=zg_{K^+}(z)$ and (\ref{eq_rcmv_split_M+_even}) imply that $$f_{K+1}(z)=zg_{K+1}(z).$$
	If $K$ is odd, then $(\tilde{\mathcal{L}}^\angle_{K^+} g_{K^+})(z)=f_{K^+}(z)$ and (\ref{eq_rcmv_split_L+_odd}) imply that \[g_{K+1}(z)=f_{K+1}(z). \qedhere\]
\end{proof}

Note that the proof of (6) implies (3) does not depend on the representation of $f_{K+1}$.

\begin{theorem} \label{thm_rcmv_extended_GZ_lem2.3_left}
	Let $z\in\mathbb{C}-\{0\}$ and $f_{K^-}=\{f_n(z)\}_{n=-\infty}^K$, $g_{K^-}=\{g_n(z)\}_{n=-\infty}^K$ be sequences of complex functions. Define the operator $\mathcal{U}^\angle_{K^-}$ on $\ell^2(\mathbb{Z}\cap(-\infty,K])^2$ by
	\begin{equation}
		\mathcal{U}^\angle_{K^-}=\begin{bmatrix}
			\mathcal{E}^\angle_{K^-} & 0 \\ 0 & \tilde{\mathcal{E}}^\angle_{K^-}
		\end{bmatrix}
		=\begin{bmatrix}
			\tilde{\mathcal{L}}^\angle_{K^-}\tilde{\mathcal{M}}^\angle_{K^-} & 0 \\ 
			0 & \tilde{\mathcal{M}}^\angle_{K^-}\tilde{\mathcal{L}}^\angle_{K^-}
		\end{bmatrix}
		=\begin{bmatrix}
			0 & \tilde{\mathcal{L}}^\angle_{K^-} \\ \tilde{\mathcal{M}}^\angle_{K^-} & 0
		\end{bmatrix}^2,
	\end{equation}
	where $K\in\mathbb{Z}$ is such that $\alpha_K\in\partial\mathbb{D}$. Then the following statements are equivalent:
	\begin{enumerate}
		\item $(\mathcal{E}^\angle_{K^-} f_{K^-})(z)=zf_{K^-}(z)$ and $(\tilde{\mathcal{M}}^\angle_{K^-} f_{K^-})(z)=zg_{K^-}(z)$;
		
		\item $(\tilde{\mathcal{E}}^\angle_{K^-} g_{K^-})(z)=zg_{K^-}(z)$ and $(\tilde{\mathcal{L}}^\angle_{K^+-} g_{K^-})(z)=f_{K^-}(z)$;
		
		\item $(\tilde{\mathcal{M}}^\angle_{K^-} f_{K^-})(z)=zg_{K^-}(z)$ and $(\tilde{\mathcal{L}}^\angle_{K^-} g_{K^-})(z)=f_{K^-}(z)$;
		
		\item $\mathcal{U}^\angle_{K^-}\begin{bmatrix}
			f_{K^-}(z) \\ g_{K^-}(z)
		\end{bmatrix}=z\begin{bmatrix}
			f_{K^-}(z) \\ g_{K^-}(z)
		\end{bmatrix}$ and $(\tilde{\mathcal{M}}_{K^-}^\angle f_{K^-})(z)=zg_{K^-}(z)$;
		
		\item $\mathcal{U}^\angle_{K^-}\begin{bmatrix}
			f_{K^-}(z) \\ g_{K^-}(z)
		\end{bmatrix}=z\begin{bmatrix}
			f_{K^-}(z) \\ g_{K^-}(z)
		\end{bmatrix}$ and $(\tilde{\mathcal{L}}_{K^-}^\angle g_{K^-})(z)=f_{K^-}(z)$;
		
		\item for each $n\in\mathbb{Z}\cap(-\infty,K]$,
		\begin{equation} \label{eq_GZ_transition_rcmv_-}
			\begin{bmatrix}
				f_{n-1}(z) \\ g_{n-1}(z)
			\end{bmatrix}=\left(T_{n}^\angle\right)^{-1}(z)\begin{bmatrix}
				f_n(z) \\ g_n(z)
			\end{bmatrix},
		\end{equation}
		where $T_n^\angle(z)$ is given by (\ref{eq_Tn_GZ_rCMV}) and
		\[f_K(z)=\begin{cases}
			-g_K(z) & K \text{ even}; \\ -zg_K(z) & K \text{ odd}.
		\end{cases}\]
	\end{enumerate}
\end{theorem}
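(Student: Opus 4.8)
The plan is to follow the template already set by the proofs of Theorem \ref{thm_rcmv_extended_GZ_lem2.2} and Theorem \ref{thm_rcmv_extended_GZ_lem2.3_right}, adapting it to the left half-lattice. For the equivalence of (1)--(5) I would repeat the purely algebraic argument from Theorem \ref{thm_rcmv_extended_GZ_lem2.2} almost verbatim: since $z\neq 0$, the relation $(\tilde{\mathcal{M}}^\angle_{K^-}f_{K^-})(z)=zg_{K^-}(z)$ may be solved for $g_{K^-}$, after which the factorisations $\mathcal{E}^\angle_{K^-}=\tilde{\mathcal{L}}^\angle_{K^-}\tilde{\mathcal{M}}^\angle_{K^-}$ and $\tilde{\mathcal{E}}^\angle_{K^-}=\tilde{\mathcal{M}}^\angle_{K^-}\tilde{\mathcal{L}}^\angle_{K^-}$ let one pass among the five conditions exactly as before. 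Nothing in that chain uses the full lattice; it only uses that $\mathcal{U}^\angle_{K^-}$ is block-diagonal with $\mathcal{E}^\angle_{K^-}$ and $\tilde{\mathcal{E}}^\angle_{K^-}$, which share $\tilde{\mathcal{L}}^\angle_{K^-}$ and $\tilde{\mathcal{M}}^\angle_{K^-}$ in opposite orders.

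For the equivalence of (3) and (6) at interior indices I would again reduce the transfer relation (\ref{eq_GZ_transition_rcmv_-}) to the $2\times 2$ block identities involving $\Theta^\angle_n$, as in Theorem \ref{thm_rcmv_extended_GZ_lem2.2}. The only formal change is that the recursion now propagates downward, so one reads the block identities off $\left(T_n^\angle\right)^{-1}$ rather than $T_n^\angle$; since $\alpha_n\in\mathbb{D}$ for every interior $n<K$ we have $\rho_n\neq 0$, so each such $T_n^\angle$ is invertible, with $\left(T_n^\angle\right)^{-1}$ carrying $\overline{\rho_n}$ in place of $\rho_n$, and the resulting relations are precisely the inverses of those in the full-lattice proof. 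Collecting the even-indexed blocks yields $(\tilde{\mathcal{L}}^\angle_{K^-}g_{K^-})(z)=f_{K^-}(z)$ and the odd-indexed blocks yield $(\tilde{\mathcal{M}}^\angle_{K^-}f_{K^-})(z)=zg_{K^-}(z)$, which is statement (3). As in Theorem \ref{thm_rcmv_extended_GZ_lem2.3_right}, the implication (6)$\Rightarrow$(3) does not use the boundary value of $f_K$.

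What is genuinely new is the boundary identity for $f_K$, which I would extract from the degenerate $1\times 1$ block at index $K$. When $K$ is even this block lives in $\mathcal{L}^\angle_{K^-}$ and equals $-1$ by (\ref{eq_rcmv_split_L-_even}); reading the $K$-th component of $(\tilde{\mathcal{L}}^\angle_{K^-}g_{K^-})(z)=f_{K^-}(z)$ gives $-g_K(z)=f_K(z)$. When $K$ is odd the $-1$ block instead sits in $\mathcal{M}^\angle_{K^-}$ by (\ref{eq_rcmv_split_M-_odd}), and the $K$-th component of $(\tilde{\mathcal{M}}^\angle_{K^-}f_{K^-})(z)=zg_{K^-}(z)$ gives $-f_K(z)=zg_K(z)$, that is $f_K(z)=-zg_K(z)$. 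These are the two asserted cases, and they agree with the initial vectors recorded in (\ref{eq_GZ_cmv_transition_init_fg_-_even}) and (\ref{eq_GZ_cmv_transition_init_fg_-_odd}), which serve as a useful cross-check.

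The step I expect to be the main obstacle is the boundary bookkeeping rather than any single computation. One must check that the truncation is consistent: the transfer recursion (\ref{eq_GZ_transition_rcmv_-}) is only meaningful with invertible $T_n^\angle$, so it should be understood for interior $n\le K-1$, the degenerate $T_K^\angle$ (with $\rho_K=0$, since $\alpha_K\in\partial\mathbb{D}$) being replaced by the boundary identity above. The parity-dependent choice of which factor carries the $-1$ block is the exact mirror image of the $f_{K+1}$ computation in Theorem \ref{thm_rcmv_extended_GZ_lem2.3_right}, but the reversal of direction together with the swapped roles of the two parities makes it easy to misplace a sign or a power of $z$; I would therefore pin down both cases by substituting the explicit $^-$ initial conditions and propagating one step in each direction before declaring the block identities settled.
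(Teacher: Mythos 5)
Your proposal is correct and follows essentially the same route as the paper: the equivalence of (1)--(5) and the interior part of (3)$\Leftrightarrow$(6) are deferred to the arguments of Theorems \ref{thm_rcmv_extended_GZ_lem2.2} and \ref{thm_rcmv_extended_GZ_lem2.3_right} (with $\left(T_n^\angle\right)^{-1}$ carrying $\overline{\rho_n}$ in place of $\rho_n$), and the only new content is the boundary identity for $f_K$, which you extract, exactly as the paper does, from the $(-1)$ block of $\mathcal{L}^\angle_{K^-}$ in (\ref{eq_rcmv_split_L-_even}) when $K$ is even and of $\mathcal{M}^\angle_{K^-}$ in (\ref{eq_rcmv_split_M-_odd}) when $K$ is odd, yielding $f_K(z)=-g_K(z)$ and $f_K(z)=-zg_K(z)$ respectively. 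Your additional remarks on the degeneracy of $T_K^\angle$ when $\rho_K=0$ and the cross-check against the initial conditions are sound and, if anything, slightly more careful than the paper's own presentation.
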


\begin{proof}
	Similar to Theorem \ref{thm_rcmv_extended_GZ_lem2.3_right}, where we compute
	\[\det T_n^\angle(z)=\rho_n^{-1}\left(|\alpha_n|^2-1\right)=\rho_n^{-1}\left(-|\rho_n|^2\right)=-\overline{\rho_n}\]
	for all $n\in\mathbb{Z}$ and hence
	\begin{equation} \label{eq_Tn_inverse_GZ_rCMV}
		\left(T_n^\angle(z)\right)^{-1}=\begin{cases}
			\dfrac{1}{\overline{\rho_n}}\begin{bmatrix}
				\alpha_n & z \\ z^{-1} & \overline{\alpha_n}
			\end{bmatrix} & n\text{ even};\\
			~\\
			\dfrac{1}{\overline{\rho_n}}\begin{bmatrix}
				\overline{\alpha_n} & 1 \\ 1 & \alpha_n
			\end{bmatrix} & n\text{ odd},
		\end{cases}
	\end{equation}
	so again we only prove the representation of $f_K$. If $K$ is even, then $(\tilde{\mathcal{L}}^\angle_{K^-} g_{K^-})(z)=f_{K^-}(z)$ and (\ref{eq_rcmv_split_L-_even}) imply that \[g_K(z)=-f_K(z).\] 
	If $K$ is odd, then $(\tilde{\mathcal{M}}^\angle_{K^-} f_{K^-})(z)=zg_{K^-}(z)$ and (\ref{eq_rcmv_split_M-_odd}) imply that \[f_K(z)=-zg_K(z). \qedhere\]
\end{proof}
We notice that unlike the non-rotated case (\ref{eq_GZ_transition_cmv}), the inverse of the rotated Gesztesy-Zinchenko transfer matrix carry $\overline{\rho_n}$'s instead of $\rho_n$'s, which is again a direct consequence of $\overline{\rho_n}\neq\rho_n$ in $\mathbb{D}$.

\bigskip

Again we suppose $z\in\mathbb{C}-\{0\}$, $\left\{\begin{bmatrix}
	f_n^{+,\angle}(z) \\ g_n^{+,\angle}(z)
\end{bmatrix}\right\}_{n=K}^\infty$ and $\left\{\begin{bmatrix}
	p_n^{+,\angle}(z) \\ q_n^{+,\angle}(z)
\end{bmatrix}\right\}_{n=K}^\infty$ are linearly independent solutions of (\ref{eq_GZ_transition_rcmv_+}) satisfying the initial conditions
\begin{align}
	\label{eq_GZ_transition_init_fg_pq_+_even}
	\begin{bmatrix} 
		f_K^{+,\angle}(z) \\ g_K^{+,\angle}(z)
	\end{bmatrix}=\begin{bmatrix}
		z \\ 1
	\end{bmatrix},\qquad
	\begin{bmatrix}
		p_K^{+,\angle}(z) \\ q_K^{+,\angle}(z)
	\end{bmatrix}=\begin{bmatrix}
		z \\ -1
	\end{bmatrix}
\end{align}
when $K$ is even, and
\begin{align}
	\label{eq_GZ_transition_init_fg_pq_+_odd}
	\begin{bmatrix}
		f_K^{+,\angle}(z) \\ g_K^{+,\angle}(z)
	\end{bmatrix}=\begin{bmatrix}
		1 \\ 1
	\end{bmatrix},\qquad
	\begin{bmatrix}
		p_K^{+,\angle}(z) \\ q_K^{+,\angle}(z)
	\end{bmatrix}=\begin{bmatrix}
		-1 \\ 1
	\end{bmatrix}
\end{align}
when $K$ is odd. Similarly, also suppose $\left\{\begin{bmatrix}
	f_n^{-,\angle}(z) \\ g_n^{-,\angle}(z)
\end{bmatrix}\right\}_{n=-\infty}^K$ and $\left\{\begin{bmatrix}
	p_n^{-,\angle}(z) \\ q_n^{-,\angle}(z)
\end{bmatrix}\right\}_{n=-\infty}^K$ are linearly independent solutions of (\ref{eq_GZ_transition_rcmv_-}) satisfying the initial conditions
\begin{align}
	\label{eq_GZ_transition_init_fg_pq_-_even}
	\begin{bmatrix} 
		f_K^{-,\angle}(z) \\ g_K^{-,\angle}(z)
	\end{bmatrix}=\begin{bmatrix}
		1 \\ -1
	\end{bmatrix},\qquad
	\begin{bmatrix}
		p_K^{-,\angle}(z) \\ q_K^{-,\angle}(z)
	\end{bmatrix}=\begin{bmatrix}
		1 \\ 1
	\end{bmatrix}
\end{align}
when $K$ is even, and
\begin{align}
	\label{eq_GZ_transition_init_fg_pq_-_odd}
	\begin{bmatrix}
		f_K^{-,\angle}(z) \\ g_K^{-,\angle}(z)
	\end{bmatrix}=\begin{bmatrix}
		-z \\ 1
	\end{bmatrix},\qquad
	\begin{bmatrix}
		p_K^{-,\angle}(z) \\ q_K^{-,\angle}(z)
	\end{bmatrix}=\begin{bmatrix}
		z \\ 1
	\end{bmatrix}
\end{align}
when $K$ is odd. 

These choices of initial conditions are explained in a similar fashion as in the non-rotated case. In particular, the choices are identical to the non-rotated case because the relationship between $f_K$ and $g_K$ are identical in both cases.

\bigskip

Again one may use (\ref{eq_GZ_transition_rcmv}) to extend the sequences $\left\{\begin{bmatrix}
	f_n^{+,\angle}(z) \\ g_n^{+,\angle}(z)
\end{bmatrix}\right\}_{n=K}^\infty$ and $\left\{\begin{bmatrix}
	p_n^{+,\angle}(z) \\ q_n^{+,\angle}(z)
\end{bmatrix}\right\}_{n=K}^\infty$ to all $n<K$, and similarly extend $\left\{\begin{bmatrix}
	f_n^{-,\angle}(z) \\ g_n^{-,\angle}(z)
\end{bmatrix}\right\}_{n=-\infty}^K$ and $\left\{\begin{bmatrix}
	p_n^{-,\angle}(z) \\ q_n^{-,\angle}(z)
\end{bmatrix}\right\}_{n=-\infty}^K$ to all $n>K$. 

\bigskip

In particular, one can compute Tables \ref{table_GZ_rcmv_even} and \ref{table_GZ_rcmv_odd}, which looks exactly like Tables \ref{table_GZ_cmv_even} and \ref{table_GZ_cmv_odd}, except the obvious fact that $\rho_n$'s are now complex. The key observation in the rotated case is that traversing to a higher index involves division by $\rho_n$'s, whereas traversing to a lower index involves division by $\overline{\rho_n}$'s; whereas in the non-rotated case, traversing in either direction both involve division by real $\rho_n$'s.

While the initial conditions for the rotated and non-rotated cases are identical, applying the Gesztesy-Zinchenko transfer matrix would rotate the other terms in the sequences $\left\{\begin{bmatrix}
	f_n^{\pm,\angle}(z) \\ g_n^{\pm,\angle}(z)
\end{bmatrix}\right\}_n$ and $\left\{\begin{bmatrix}
	p_n^{\pm,\angle}(z) \\ q_n^{\pm,\angle}(z)
\end{bmatrix}\right\}_n$ in comparison to the non-rotated case, thus justifying our use of the superscript $^\angle$ in these sequences.  

\begin{table}[ht]
	\centering
	\begin{tabular}{||c||c|c|c||}
		\hhline{|=|=|=|=|} & & & \\ [-2ex]
		$n$ & $K-1$ & $K$ even & $K+1$\\ [0.5ex]
		\hhline{||=||=|=|=|} & & & \\ [-1ex]
		$\begin{bmatrix}
			f_n^{+,\angle}(z) \\ g_n^{+,\angle}(z)
		\end{bmatrix}$ & $\dfrac{1}{\overline{\rho_K}}\begin{bmatrix}
			\left(1+\alpha_K\right)z \\ 1+\overline{\alpha_K}
		\end{bmatrix}$ & $\begin{bmatrix}
			z \\ 1
		\end{bmatrix}$ & $\dfrac{1}{\rho_{K+1}}\begin{bmatrix}
			1-\alpha_{K+1}z \\ z-\overline{\alpha_{K+1}}
		\end{bmatrix}$ \\ [3ex]
		\hline & & & \\ [-1ex]
		$\begin{bmatrix}
			p_n^{+,\angle}(z) \\ q_n^{+,\angle}(z)
		\end{bmatrix}$ & $\dfrac{1}{\overline{\rho_K}}\begin{bmatrix}
			\left(-1+\alpha_K\right)z \\ 1-\overline{\alpha_K}
		\end{bmatrix}$ & $\begin{bmatrix}
			z \\ -1
		\end{bmatrix}$ & $\dfrac{1}{\rho_{K+1}}\begin{bmatrix}
			-1-\alpha_{K+1}z \\ z+\overline{\alpha_{K+1}}
		\end{bmatrix}$ \\ [3ex]
		\hline & & & \\ [-1ex]
		$\begin{bmatrix}
			f_n^{-,\angle}(z) \\ g_n^{-,\angle}(z)
		\end{bmatrix}$ & $\dfrac{1}{\overline{\rho_K}}\begin{bmatrix}
			\alpha_K-z \\ z^{-1}-\overline{\alpha_K}
		\end{bmatrix}$ & $\begin{bmatrix}
			1 \\ -1
		\end{bmatrix}$ & $\dfrac{1}{\rho_{K+1}}\begin{bmatrix}
			-1-\alpha_{K+1} \\ 1+\overline{\alpha_{K+1}}
		\end{bmatrix}$ \\ [3ex]
		\hline & & & \\ [-1ex]
		$\begin{bmatrix}
			p_n^{-,\angle}(z) \\ q_n^{-,\angle}(z)
		\end{bmatrix}$ & $\dfrac{1}{\overline{\rho_K}}\begin{bmatrix}
			\alpha_K+z \\ z^{-1}+\overline{\alpha_K}
		\end{bmatrix}$ & $\begin{bmatrix}
			1 \\ 1
		\end{bmatrix}$ & $\dfrac{1}{\rho_{K+1}}\begin{bmatrix}
			1-\alpha_{K+1} \\ 1-\overline{\alpha_{K+1}}
		\end{bmatrix}$ \\ [3ex]
		\hhline{||=||=|=|=|}
	\end{tabular}
	\caption{The sequences $\begin{bmatrix}
			f_n^{\pm,\angle}(z) \\ g_n^{\pm,\angle}(z)
		\end{bmatrix}$ and $\begin{bmatrix}
			p_n^{\pm,\angle}(z) \\ q_n^{\pm,\angle}(z)
		\end{bmatrix}$ for $n=K-1,K+1$, $K$ even} \label{table_GZ_rcmv_even}
\end{table}

\begin{table}[ht]
	\centering
	\begin{tabular}{||c||c|c|c||}
		\hhline{|=|=|=|=|} & & & \\ [-2ex]
		$n$ & $K-1$ & $K$ odd & $K+1$\\ [0.5ex]
		\hhline{||=||=|=|=|} & & & \\ [-1ex]
		$\begin{bmatrix}
			f_n^{+,\angle}(z) \\ g_n^{+,\angle}(z)
		\end{bmatrix}$ & $\dfrac{1}{\overline{\rho_K}}\begin{bmatrix}
			1+\overline{\alpha_K} \\ 1+\alpha_K
		\end{bmatrix}$ & $\begin{bmatrix}
			1 \\ 1
		\end{bmatrix}$ & $\dfrac{1}{\rho_{K+1}}\begin{bmatrix}
			z-\overline{\alpha_{K+1}} \\ z^{-1}-\alpha_{K+1}
		\end{bmatrix}$ \\ [3ex]
		\hline & & & \\ [-1ex]
		$\begin{bmatrix}
			p_n^{+,\angle}(z) \\ q_n^{+,\angle}(z)
		\end{bmatrix}$ & $\dfrac{1}{\overline{\rho_K}}\begin{bmatrix}
			1-\overline{\alpha_K} \\ -1+\alpha_K
		\end{bmatrix}$ & $\begin{bmatrix}
			-1 \\ 1
		\end{bmatrix}$ & $\dfrac{1}{\rho_{K+1}}\begin{bmatrix}
			z+\overline{\alpha_{K+1}} \\ -z^{-1}-\alpha_{K+1}
		\end{bmatrix}$ \\ [3ex]
		\hline & & & \\ [-1ex]
		$\begin{bmatrix}
			f_n^{-,\angle}(z) \\ g_n^{-,\angle}(z)
		\end{bmatrix}$ & $\dfrac{1}{\overline{\rho_K}}\begin{bmatrix}
			1-\overline{\alpha_K}z \\ \alpha_K-z
		\end{bmatrix}$ & $\begin{bmatrix}
			-z \\ 1
		\end{bmatrix}$ & $\dfrac{1}{\rho_{K+1}}\begin{bmatrix}
			\left(1+\overline{\alpha_{K+1}}\right)z \\ -1-\alpha_{K+1}
		\end{bmatrix}$ \\ [3ex]
		\hline & & & \\ [-1ex]
		$\begin{bmatrix}
			p_n^{-,\angle}(z) \\ q_n^{-,\angle}(z)
		\end{bmatrix}$ & $\dfrac{1}{\overline{\rho_K}}\begin{bmatrix}
			1+\overline{\alpha_K}z \\ z+\alpha_K
		\end{bmatrix}$ & $\begin{bmatrix}
			z \\ 1
		\end{bmatrix}$ & $\dfrac{1}{\rho_{K+1}}\begin{bmatrix}
			\left(1-\overline{\alpha_{K+1}}\right)z \\ 1-\alpha_{K+1}
		\end{bmatrix}$ \\ [3ex]
		\hhline{||=||=|=|=|}
	\end{tabular}
	\caption{The sequences $\begin{bmatrix}
			f_n^{\pm,\angle}(z) \\ g_n^{\pm,\angle}(z)
		\end{bmatrix}$ and $\begin{bmatrix}
			p_n^{\pm,\angle}(z) \\ q_n^{\pm,\angle}(z)
		\end{bmatrix}$ for $n=K-1,K+1$, $K$ odd} \label{table_GZ_rcmv_odd}
\end{table}

\vfill

\bibliography{rCMV_OPUC_Ryan.bib}

\end{document}